\documentclass[unsortedaddress,superscriptaddress,pra,notitlepage]{revtex4}

\usepackage{amsmath,amssymb}
\usepackage{braket}
\usepackage{theorem}
\usepackage{color}
\usepackage{amsfonts}
\usepackage[mathscr]{eucal}
\usepackage[hidelinks]{hyperref}
\usepackage{graphicx}
\usepackage{xcolor}
\usepackage{tikz}
\usepackage{scalerel}
\usepackage{bm}
\usepackage[new]{old-arrows}

\usetikzlibrary{calc,decorations.pathreplacing}
\usetikzlibrary{arrows,shapes}

\newtheorem{definition}{Definition}[section]
\newtheorem{defin}[definition]{Definition}
\newtheorem{theorem}[definition]{Theorem}
\newtheorem{thm}[definition]{Theorem}
\newtheorem{prop}[definition]{Proposition}
\newtheorem{proposition}[definition]{Proposition}
\newtheorem{lemma}[definition]{Lemma}
\newtheorem{remark}[definition]{Remark}
\newtheorem{rem}[definition]{Remark}
\newtheorem{corollary}[definition]{Corollary}
\newtheorem{cor}[definition]{Corollary}

\newtheorem{example}[definition]{Example}

\newtheorem{ex}[definition]{Example}

\newtheorem{cond}[definition]{Condition}

\newenvironment{proof}[1][Proof]{\begin{trivlist}
\item[\hskip \labelsep {\bfseries #1}]}{\hfill$\Box$\end{trivlist}}

\newcommand{\nn}{\nonumber \\}

\def\vfi{\varphi}
\def\theta{\vartheta}
\def\hil{{\mathcal H}}
\def\kil{{\mathcal K}}

\def\A{{\mathcal A}}

\def\B{{\mathcal B}}

\def\E{{\mathcal E}}
\def\F{{\mathcal F}}

\def\M{\mathcal{M}}

\def\S{{\mathcal S}}

\def\half{\frac{1}{2}}
\def\imp{\implies}

\def\ep{\varepsilon}

\def\bN{\mathbb{N}}
\def\bC{\mathbb{C}}

\def\bR{\mathbb{R}}
\def\bZ{\mathbb{Z}}

\def\bP{\mathbb{P}}
\def\bT{\mathbb{T}}

\def\bz{\left(}
\def\jz{\right)}

\def\inv{^{-1}}

\def\egy{\mathbf 1}
\def\map{\Phi}

\def\sa{\mathrm{sa}}

\def\N{\mathcal{N}}

\def\rho{\varrho}
\def\povm{\mathrm{POVM}}

\def\nn{\nonumber}

\def\ptni{\mathrm{PTNI}}

\def\meas{\mathrm{meas}}

\def\p{_{\ge 0}}
\def\pne{_{\gneq 0}}

\def\valt{\cdot}

\def\petz{\mathrm{P}}
\def\ccf{(\bR^{(0,+\infty)})_{\conv}\cup(\bR^{(0,+\infty)})_{\conc}}
\def\sing{\mathrm{sing}}

\def\tv{\mathrm{TV}}
\def\hs{\mathrm{hs}}
\def\tr{\mathrm{tr}}
\def\fM{\mathfrak{M}}
\def\proj{\mathrm{proj}}

\newcommand{\ki}[1]{\emph{#1}}

\newcommand{\s}{\mbox{ }}
\newcommand{\ds}{\mbox{ }\mbox{ }}

\newcommand{\norm}[1]{\left\| #1\right\|}

\newcommand{\inner}[2]{\left\langle #1 , #2\right\rangle}

\newcommand{\abs}[1]{\left| #1 \right|}

\newcommand{\diad}[2]{\left|#1\right\rangle\!\left\langle #2\right|}

\newcommand{\pr}[1]{\diad{#1}{#1}}

\newcommand{\derleft}[1]{\partial^{-} #1}
\newcommand{\derright}[1]{\partial^{+} #1}

\newcommand{\wtilde}[1]{\widetilde{#1}}

\newcommand{\persp}[1]{\mathcal{P}_{#1}}
\newcommand{\hsp}[2]{D_{(\id-#1)_+}^{#2}}
\newcommand{\hsn}[2]{D_{(\id-#1)_-}^{#2}}
\newcommand{\hsht}[2]{E_{#1}^{#2}}
\newcommand{\tn}[2]{D_{|\id-#1|}^{#2}}
\newcommand{\hsd}[2]{D_{#1}^{#2,\mathrm{hs}}}
\newcommand{\hsq}[2]{Q_{#1}^{#2,\mathrm{hs}}}
\newcommand{\hsr}[2]{D_{#1}^{#2,\mathrm{hs}}}
\newcommand{\fdiv}[2]{D_{#1}^{#2}}
\newcommand{\nsd}[2]{\nu_{#1|#2}}
\newcommand{\nsdd}[2]{\nu_{#2|#1}}
\newcommand{\erri}[4]{\beta_{0,#1}^{#2}(#3\|#4)}
\newcommand{\errii}[4]{\beta_{1,#1}^{#2}(#3\|#4)}
\newcommand{\errm}[4]{\beta_{\mathrm{mix},#1}^{#2}(#3\|#4)}
\newcommand{\np}[4]{N_{#1}^{#2}(#3\|#4)}
\newcommand{\rts}[2]{\mathrm{RT}(#1\|#2)}
\newcommand{\test}[1]{\mathbb{T}(#1)}
\newcommand{\hahnp}[2]{#2^+_{#1}}
\newcommand{\hahnm}[2]{#2^-_{#1}}
\newcommand{\hahnpt}[2]{\wtilde{#2}^+_{#1}}
\newcommand{\hahnmt}[2]{\wtilde{#2}^-_{#1}}

\newcommand{\dirac}[1]{\delta_{#1}}

\makeatletter
\renewcommand{\p@enumii}{}
\makeatother

\DeclareMathOperator{\id}{id}
\DeclareMathOperator{\Tr}{Tr}

\DeclareMathOperator{\supp}{supp}

\DeclareMathOperator{\ran}{ran}

\DeclareMathOperator{\spec}{spec}

\DeclareMathOperator{\rt}{\Gamma}

\DeclareMathOperator{\conv}{conv}
\DeclareMathOperator{\conc}{conc}

\DeclareMathOperator{\symdiff}{\bigtriangleup}

\DeclareMathOperator{\invv}{inv}

\begin{document}

\title{Hockey stick $f$-divergences}

\author{Fumio Hiai}
\email{hiai.fumio@gmail.com}

\affiliation{
Graduate School of Information Sciences, Tohoku University, \\
Aoba-ku, Sendai 980-8579, Japan
}

\affiliation{Department of Analysis and Operations Research, Institute of Mathematics,
Budapest University of Technology and Economics,
M\H uegyetem rkp.~3., H-1111 Budapest, Hungary}

\author{Mil\'an Mosonyi}
\email{milan.mosonyi@gmail.com}

\affiliation{Department of Analysis and Operations Research, Institute of Mathematics,
Budapest University of Technology and Economics,
M\H uegyetem rkp.~3., H-1111 Budapest, Hungary}

\author{Marco Tomamichel}
\email{marco.tomamichel@nus.edu.sg}

\affiliation{Center for Quantum Technologies, National University of Singapore
Singapore 117543}

\affiliation{Department of Electrical and Computer Engineering, National University of Singapore, Singapore 117583}

\begin{abstract}
In this paper we give a systematic and unified treatment and extensions of various results on a new notion of quantum  $f$-divergences defined from quantum hockey stick divergences, the theory of which has been developed recently in 
\cite{BHT_fdiv,HircheTomamichel_integral,LiuHircheCheng2025}.
In particular, we consider non-normalized states and hockey stick $f$-divergences defined from more general notions of quantum hockey stick divergences, as well as a somewhat more general form of the integral representation defined in terms of an additional real parameter. We also consider the extension of the theory to general von Neumann algebras,
and extend various results from \cite{HircheTomamichel_integral,LiuHircheCheng2025} to this setting. Our main results here are the representation of the hockey stick $f$-divergences in terms of 
Neyman-Pearson error probabilities, 
which was given in the finite-dimensional case in \cite{LiuHircheCheng2025}, an extension of Jen\v cov\'a's result \cite{Jencova2023} on the 
detection of reversibility of a quantum channel on a pair of states in terms of the hockey stick divergences, and an extension of a result in 
\cite{HircheTomamichel_integral} showing that the regularized hockey stick R\'enyi $\alpha$-divergences coincide with the Petz-type R\'enyi divergences for $\alpha\in(0,1)$ and with the
sandwiched R\'enyi divergences for $\alpha>1$. Moreover, we give some partial results on the characterization of when different notions of quantum $f$-divergences
give the same value on a pair of quantum states.
\end{abstract}

\maketitle
\tableofcontents

\section{Introduction}

Various quantifiers of the difference of states of a physical system play a fundamental role in physics and in information theory, both as
operationally motivated measures of statistical distinguishability of states (divergences), as well as parent quantities for 
measures of information and correlations in states (entropies, channel capacities, etc.). In classical information theory, a general theory of a large class of divergences has been developed by Csisz\'ar \cite{Csiszar-fdiv}, and Ali and Silvey \cite{AliSilvey}.
In this theory, a divergence is defined for every convex function $f$ on the positive reals, called the (classical) $f$-divergence. Special cases of these divergences include 
the variational distance, the Kullback-Leibler divergence (relative entropy), and (an exponential function of) the R\'enyi divergences \cite{Renyi}. 
Due to the non-commuting nature of density operators describing the states of a quantum system, the classical $f$-divergences have different extensions to distinguishability
measures of quantum states, i.e., quantum $f$-divergences. Apart from the operationally most relevant version due to Petz \cite{P85,P86}, the measured (minimal) and the maximal $f$-divergences
\cite{Matsumoto_newfdiv} have been the most intensively studied ones in the literature. On top of these general notions of quantum $f$-divergences, a plethora of quantum extensions 
of $f$-divergences corresponding to specific $f$ functions have been studied in quantum information theory, e.g., various notions of quantum relative entropies
\cite{BS,mosonyi2022geometric} and quantum R\'enyi divergences \cite{AD,FawziFawzi2021,mosonyi2022geometric,Renyi_new,WWY}.

Recently, Frenkel \cite{Frenkel_integral} gave a remarkable integral decomposition of the Umegaki relative entropy \cite{Umegaki}, which, after some 
simple rewriting \cite{Jencova2023,HircheTomamichel_integral}, turns out to be a quantum analogue of the integral decomposition of the classical relative entropy 
into so-called hockey stick divergences. This was then the basis of defining new quantum R\'enyi divergences \cite{Frenkel_integral,HircheTomamichel_integral}, and 
more general quantum $f$-divergences \cite{HircheTomamichel_integral}. These are defined by choosing a suitable notion of quantum hockey stick divergences (the measured ones, as we will see below)
and mimicking the known decomposition of the classical $f$-divergences into an integral of hockey stick divergences. These give new quantum $f$-divergences in the sense that they do not coincide with 
any one of the quantum $f$-divergences studied in the literature so far. Further properties of these new divergences and alternative representations were then given in 
\cite{BHT_fdiv,LiuHircheCheng2025}.

In this paper we give a systematic and unified treatment and extensions of various results from \cite{BHT_fdiv,HircheTomamichel_integral,LiuHircheCheng2025}.
In particular, we consider non-normalized states and hockey stick $f$-divergences defined from more general notions of quantum hockey stick divergences, as well as a somewhat more general form of the integral representation defined in terms of an additional real parameter (Section \ref{sec:finitedim}.) We also consider the extension of the theory to general von Neumann algebras
(Section \ref{sec:V}), 
and extend various results from \cite{HircheTomamichel_integral,LiuHircheCheng2025} to this setting. Our main results here are the representation of the hockey stick $f$-divergences in terms of 
Neyman-Pearson error probabilities (Sections \ref{sec:V.A}, \ref{sec:V.B}), 
which was given in the finite-dimensional case in \cite{LiuHircheCheng2025}, an extension of Jen\v cov\'a's result \cite{Jencova2023} on the 
detection of reversibility of a quantum channel on a pair of states in terms of the hockey stick divergences (Section \ref{sec:V.E}), and an extension of a result in 
\cite{HircheTomamichel_integral} showing that the regularized hockey stick R\'enyi $\alpha$-divergences coincide with the Petz-type R\'enyi divergences for $\alpha\in(0,1)$ and with the
sandwiched R\'enyi divergences for $\alpha>1$ (Sections \ref{sec:VI.B}--\ref{sec:VI.D}). Moreover, we give some partial results on the characterization of when different notions of quantum $f$-divergences
give the same value on a pair of quantum states (Sections \ref{sec:VII.A}--\ref{sec:VII.C}). While the general von Neumann algebra case contains the finite-dimensional quantum case as well as the classical case, we discuss these separately for didactic reasons.

Finally, we note that hockey stick $f$-divergences have also been studied in the independent and concurrent work \cite{daSilva2026} very recently, while 
Frenkel's integral representation has also been considered in the von Neumann algebra setting in the papers \cite{vLuWi2026,Kossmann}.

\section{Preliminaries}

We will use the notations 
\begin{align*}
(\bR^{(0,+\infty)})_{\conv}&:=\left\{f:\,(0,+\infty)\to\bR\text{ convex}\right\},\\
(\bR^{(0,+\infty)})_{\conc}&:=\left\{f:\,(0,+\infty)\to\bR\text{ concave}\right\}.
\end{align*}
For a function $f\in\ccf$, $\partial^-f(x)$ and $\partial^+f(x)$ will denote the left and the right 
derivatives, respectively, at some $x\in(0,+\infty)$. For any function $g\in\bR^{(0,+\infty)}$,
$g(x^-)$ will denote the left limit of $g$ at some $x\in(0,+\infty)$, provided that it exists, and similarly, 
$g(x^+)$ will denote the right limit at some $x\in[0,+\infty)$, if it exists.

By $\log$ we will denote the natural logarithm, with its extension to $[0,+\infty]$ as 
$\log 0:=-\infty$, $\log+\infty:=+\infty$.
For a natural number $n\in\bN=\{1,2,\ldots\}$, we will use the notation
$[n]:=\{1,2,\ldots,n\}$.

By a Hilbert space we always mean a complex separable Hilbert space.
We will denote the inner product on a Hilbert space by $\inner{\valt}{\valt}$ and follow the convention that 
it is linear in its second and conjugate linear in its first variable. We will also use the Dirac notation:
for any vectors $x,y$ in a Hilbert space $\hil$, the operator $\diad{y}{x}$ is defined by $\diad{y}{x}z:=\inner{x}{z}y$, $z\in\hil$.

For a linear operator $A$ on a Hilbert space $\hil$, we will use the notations 
$\norm{A}:=\norm{A}_{\infty}:=\sup\{\norm{A\psi}:\,\psi\in\hil,\,\norm{\psi}\le 1\}$ for the operator norm, and 
$\B(\hil):=\{A:\,\hil\to\hil\text{ linear },\norm{A}_{\infty}<+\infty\}$ will denote the set of all bounded linear operators on $\hil$. 
We will use the notation
$\B(\hil)_{\sa}$ for the set of self-adjoint operators on $\hil$.
For an interval $J\subseteq\bR$, 
$\B(\hil)_{J}:=\{A\in\B(\hil)_{\sa}:\,\spec(A)\subseteq J\}$, i.e., 
it is the set of self-adjoint operators on $\hil$ with their spectra in $J$.
We will use the shorthand notations $\B(\hil)_{\ge 0}:=\B(\hil)_{[0,+\infty)}$
for the set of positive semi-definite (PSD) operators on $\hil$, and 
$\B(\hil)_{>0}:=\B(\hil)_{(0,+\infty)}$ for the set of 
positive definite operators. An inequality $A\le B$ between operators $A,B\in\B(\hil)$ is always interpreted in the L\"owner (or PSD) order, meaning $B-A\in\B(\hil)\p$. Elements of the set
\begin{align*}
\bT(\hil):=\B(\hil)_{[0,1]}:=\{T\in\B(\hil)_{\sa}:\,0\le T\le I\}
\end{align*}
are called \ki{tests} on $\hil$.
The set of (orthogonal) projections on $\hil$ will be denoted by 
$\bP(\hil):=\B(\hil)_{\{0,1\}}=\{P\in\B(\hil)_{\sa}:\,P^2=P\}$.
For a positive semi-definite operator $A\in\B(\hil)\p$, we will use the notation
\begin{align}\label{eq:supppr}
A^0:=\lim_{t\searrow 0}A^t
\end{align}
for the projection onto $\supp A:=(\ker A)^{\perp}$. 

The set of \ki{states} (density operators) on a finite-dimensional Hilbert space $\hil$ is
$\S(\hil):=\{\rho\in\B(\hil)\p:\,\Tr\rho=1\}$.
For $r\in\bN$, the set of $r$-outcome positive operator valued measures (POVMs) on $\hil$ are defined as
\begin{align*}
\povm(\hil,[r]):=\left\{(M_i)_{i=1}^r\in\B(\hil)\p^r:\,\sum_{i=1}^r M_i=I\right\}.
\end{align*}

\section{Classical $f$-divergences}

\label{sec:classical}


\subsection{The perspective function}

For any function $f:(0,+\infty)\to\bR$, 
its \ki{perspective} $P_f:\,(0,+\infty)\times (0,+\infty)\to\bR$ is defined by
\begin{align}\label{persp def}
\persp{f}(x,y):=yf\bz\frac{x}{y}\jz,\ds\ds\ds x,y\in (0,+\infty).
\end{align}
By definition, $f(x)=\persp{f}(x,1)$ for all $x\in(0,+\infty)$, and
the \ki{transpose} $\tilde f$ of $f$ is defined as
\begin{align*}
\tilde f(y):=\persp{f}(1,y)=y f\bz\frac{1}{y}\jz,\ds\ds\ds y\in (0,+\infty).
\end{align*}
Thus, $f$ and $\tilde f$ can be considered as marginals of the two-variable function $\persp{f}$.
Moreover, we have
\begin{align}\label{perpective symm}
\persp{\tilde f}(x,y)=\persp{f}(y,x),\ds\ds\ds x,y\in(0,+\infty).
\end{align}

Note that $\persp{f}$ is \ki{homogeneous of degree $1$}, i.e.,
\begin{align*}
\persp{f}(tx,ty)=t\persp{f}(x,y),\ds\ds\ds x,y,t\in(0,+\infty).
\end{align*}
Moreover, every $1$-homogeneous function $P$ on $(0,+\infty)^2$ is the perspective of its first marginal
$f(x):=P(x,1)$, as for this $f$,
\begin{align*}
\persp{f}(x,y)=yf\bz\frac{x}{y}\jz=y P\bz\frac{x}{y},1\jz=P(x,y).
\end{align*}

%

We will be exclusively interested in functions that satisfy the following conditions:

\begin{cond}\label{cond:fdiv}
$f:\,(0,+\infty)\to\bR$ is a continuous function such that 
the limits 
\begin{align}\label{f limits}
f(0^+):=\lim_{x\searrow 0}f(x)\ds\ds\text{and}\ds\ds
\tilde f(0^+):=\lim_{y\searrow 0}\tilde f(y)=\lim_{x\to+\infty}\frac{f(x)}{x}
\end{align}
exist in $\bR\cup\{\pm\infty\}$, and they are not both infinity with opposite signs. 
\end{cond}

If $f$ satisfies Condition \ref{cond:fdiv}, we can extend $\persp{f}$ to $[0,+\infty)\times[0,+\infty)$ by
\begin{align}\label{persp extension}
\persp{f}(x,y):=\lim_{\ep\searrow 0}(y+\ep)f\bz\frac{x+\ep}{y+\ep}\jz
=
\begin{cases}
yf\bz\frac{x}{y}\jz, & \text{if $x,y>0$}, \\
yf(0^+), & \text{if $x=0$}, \\
x\tilde f(0^+), & \text{if $y=0$},
\end{cases}
\end{align}
with the convention $0\cdot\infty:=0$. In particular, this extension is consistent with the original definition in 
the sense that $\persp{f}(x,y)=yf(x/y)$ 
when $x,y\in(0,+\infty)$, 
according to both \eqref{persp def} and \eqref{persp extension}.

\begin{lemma}\label{lemma:persp additive}
Let $f_1,f_2$ satisfy Condition \ref{cond:fdiv}. For any 
$c_1,c_2\in\bR$, and any
$x,y\in[0,+\infty)$, 
\begin{align*}
\persp{c_1f_1+c_2f_2}(x,y)=c_1\persp{f_1}(x,y)+c_2\persp{f_2}(x,y),
\end{align*}
provided that the RHS makes sense (i.e., it is not a sum of infinities of opposite signs).
\end{lemma}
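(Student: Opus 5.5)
The plan is a direct case analysis on which of $x,y$ vanish, using the explicit form \eqref{persp extension} of the extended perspective. The one point needing care is the convention $0\cdot\infty:=0$ together with possibly infinite limits. First note that $f:=c_1f_1+c_2f_2$ is continuous on $(0,+\infty)$. In each case I will show that the limit required by \eqref{persp extension} at the given point $(x,y)$ exists for $f$, and that it equals the corresponding combination of the limits of $f_1,f_2$. This is exactly what the proviso ``the RHS makes sense'' guarantees. Only the limit relevant to the point $(x,y)$ is needed, since the limit formula in \eqref{persp extension} at that point involves only that limit. So even if $f$ fails Condition \ref{cond:fdiv} globally, the LHS is still well defined.

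\emph{Case $x,y>0$.} Here $\persp{f}(x,y)=yf(x/y)=c_1yf_1(x/y)+c_2yf_2(x/y)$ by pointwise linearity, and all quantities are finite.

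\emph{Case $x=0<y$.} We need $f(0^+)=c_1f_1(0^+)+c_2f_2(0^+)$. If $c_i=0$, the term $c_if_i$ is identically zero, so its limit is $0$, which agrees with $c_if_i(0^+)=0$ under the convention $0\cdot\infty=0$. If $c_i\neq0$, then $c_if_i(t)\to c_if_i(0^+)\in\bR\cup\{\pm\infty\}$. The limit of the sum then exists and equals the sum of the limits, because by assumption $c_1y f_1(0^+)$ and $c_2yf_2(0^+)$ (equivalently $c_1f_1(0^+)$ and $c_2f_2(0^+)$, as $y>0$) are not infinities of opposite signs. Multiplying by $y>0$ gives the claim.

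\emph{Case $y=0<x$.} First observe that the transpose is linear: $\widetilde{c_1f_1+c_2f_2}=c_1\tilde f_1+c_2\tilde f_2$ pointwise on $(0,+\infty)$. The same limit argument as in the previous case, applied to $\tilde f_i(0^+)=\lim_{x\to+\infty}f_i(x)/x$, then gives $\tilde f(0^+)=c_1\tilde f_1(0^+)+c_2\tilde f_2(0^+)$. Multiplying by $x>0$ gives the claim.

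\emph{Case $x=y=0$.} By \eqref{persp extension} and $0\cdot\infty=0$, both $\persp{f}(0,0)$ and $\persp{f_i}(0,0)$ equal $0$, so both sides vanish. The only real obstacle, which is minor, is the bookkeeping with $\pm\infty$ and zero coefficients in the middle two cases. Everything else is immediate from the definitions.
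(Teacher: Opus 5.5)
Your proof is correct and is essentially the paper's argument: linearity of the perspective on the open quadrant, followed by arithmetic of limits in $[-\infty,+\infty]$, which is legitimate exactly under the proviso that no $(+\infty)+(-\infty)$ occurs. The paper compresses your four cases into one line, writing $\persp{c_1f_1+c_2f_2}(x+\ep,y+\ep)=c_1\persp{f_1}(x+\ep,y+\ep)+c_2\persp{f_2}(x+\ep,y+\ep)$ and letting $\ep\searrow 0$ in the limit form of \eqref{persp extension}, whereas your case split makes the $c_i=0$ and $0\cdot\infty$ bookkeeping explicit.
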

\begin{proof}
By definition,  
\begin{align*}
\persp{c_1f_1+c_2f_2}(x+\ep,y+\ep)
=
c_1\persp{f_1}(x+\ep,y+\ep)
+
c_2\persp{f_2}(x+\ep,y+\ep),
\end{align*}
for any $\ep>0$,
and taking the limit $\ep\searrow 0$ yields the assertion.
\end{proof}

We will mainly be interested in the case when $f$ is convex or concave.
The following is well known and easy to verify:

\begin{lemma}\label{lemma:persp convex}
Let $f:\,(0,+\infty)\to\bR$. The following are equivalent:
\begin{enumerate}
\item
$f$ is convex (concave);
\item
$\tilde f$ is convex (concave);
\item
$\persp{f}$ is convex (concave).
\end{enumerate}
\end{lemma}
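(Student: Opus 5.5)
We will prove the cycle of implications (i)$\Rightarrow$(iii)$\Rightarrow$(i), and then obtain (ii) from the symmetry \eqref{perpective symm}. It suffices to treat the convex case: $f$ is concave iff $-f$ is convex, and by Lemma \ref{lemma:persp additive} (here with finite values on $(0,+\infty)^2$) we have $\persp{-f}=-\persp{f}$ and $\widetilde{-f}=-\tilde f$.

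\emph{(iii)$\Rightarrow$(i).} Since $f(x)=\persp{f}(x,1)$, the function $f$ is the restriction of $\persp{f}$ to the line $\{(x,1):\,x>0\}$. The restriction of a convex function to a convex subset (a line) is convex.

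\emph{(i)$\Rightarrow$(iii).} This is the main step. Take $(x_1,y_1),(x_2,y_2)\in(0,+\infty)^2$ and $\lambda\in[0,1]$, and put $x:=\lambda x_1+(1-\lambda)x_2$ and $y:=\lambda y_1+(1-\lambda)y_2$. Define the weights
\begin{align*}
\mu:=\frac{\lambda y_1}{y}\in[0,1],\qquad 1-\mu=\frac{(1-\lambda)y_2}{y}.
\end{align*}
Then $\frac{x}{y}=\mu\frac{x_1}{y_1}+(1-\mu)\frac{x_2}{y_2}$. Convexity of $f$ gives
\begin{align*}
y f\bz\frac{x}{y}\jz\le y\mu f\bz\frac{x_1}{y_1}\jz+y(1-\mu) f\bz\frac{x_2}{y_2}\jz
=\lambda\persp{f}(x_1,y_1)+(1-\lambda)\persp{f}(x_2,y_2).
\end{align*}
This is exactly the convexity of $\persp{f}$. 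The only point needing care is choosing the weights $\mu$, and those are fixed by requiring that the convex combination of the ratios reproduce $x/y$.

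\emph{(i)$\Leftrightarrow$(ii).} By \eqref{perpective symm}, $\persp{\tilde f}(x,y)=\persp{f}(y,x)$, so $\persp{\tilde f}$ is obtained from $\persp{f}$ by composing with the linear coordinate swap. Composition with a linear map preserves convexity, hence $\persp{f}$ is convex iff $\persp{\tilde f}$ is convex. Applying the already proved equivalence (i)$\Leftrightarrow$(iii) to $\tilde f$ in place of $f$ then shows that $\tilde f$ is convex iff $\persp{\tilde f}$ is convex, iff $\persp{f}$ is convex, iff $f$ is convex.

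Alternatively, one could observe directly that $\tilde f$ is the restriction of $\persp{f}$ to the line $\{(1,y):\,y>0\}$. This gives (iii)$\Rightarrow$(ii), and the argument for (i)$\Rightarrow$(iii) applied to $\tilde f$ gives (ii)$\Rightarrow$(iii), since $\tilde{\tilde f}=f$.
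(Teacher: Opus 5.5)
Your proof is correct and complete. The paper gives no proof of this lemma; it calls it ``well known and easy to verify''. Your argument supplies exactly the standard verification being alluded to: the reweighting $\mu=\lambda y_1/y$ for (i)$\Rightarrow$(iii), restriction to the lines $\{(x,1)\}$ and $\{(1,y)\}$, and the coordinate swap \eqref{perpective symm}.

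One cosmetic point: you do not need to cite Lemma~\ref{lemma:persp additive}. Its hypotheses presuppose Condition~\ref{cond:fdiv}, which an arbitrary $f$ need not satisfy. On $(0,+\infty)^2$ the identities $\persp{-f}=-\persp{f}$ and $\widetilde{-f}=-\tilde f$ follow immediately from the definitions.
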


If $f\in\ccf$ then $f$ is known to be continuous on $(0,+\infty)$, and 
by Lemma \ref{lemma:convex f tilde limit} below, it satisfies also the other points in Condition \ref{cond:fdiv}. 

As a preparation for this and other subsequent proofs, we recall that if $f:\,(0,+\infty)\to\bR$ is convex 
then $\derleft{f}(x)$, $\derright{f}(x)$ exist at every $x\in(0,+\infty)$, the functions 
$\derleft{f}$, $\derright{f}$ are monotone increasing and left (resp.~right) continuous, and 
at any $x\in(0,+\infty)$,
\begin{align}\label{eq:convexf partial der order2}
\derleft{f}(x^-)=\derright{f}(x^-)=\derleft{f}(x)\le\derright{f}(x)=
\derleft{f}(x^+)=\derright{f}(x^+).
\end{align}
Thus, for any $x\in(0,+\infty)$,
\begin{align*}
\exists\s f'(x)
\ds\iff\ds 
\derleft{f}(x)=\derright{f}(x)
\ds\iff\ds
\derleft{f}\text{ is continuous at }x
\ds\iff\ds
\derright{f}\text{ is continuous at }x.
\end{align*}
We may define
\begin{align}\label{eq:fder def}
f'(x):=\half(\derleft{f}(x)+\derright{f}(x)),\ds\ds\ds x\in(0,+\infty),
\end{align}
which coincides with the usual derivative of $f$ at every point where the latter exists.  
Then 
\begin{align*}
f'(x^-)=\derleft{f}(x),\ds\ds\ds
f'(x^+)=\derright{f}(x),\ds\ds\ds
x\in(0,+\infty).
\end{align*}
Since $f'$ is monotone increasing, it has an associated Lebesgue-Stieltjes measure $df'$, which is independent 
of how we extend $f'$ to the discontinuity points of $\derleft{f}$, and is uniquely specified by 
\begin{align*}
df'([a,b))=f'(b^-)-f'(a^-),\ds\ds\ds 0<a<b<+\infty.
\end{align*}
Moreover, since $f'$ is monotone, the second derivative $f''$ of $f$ exists almost everywhere, and 
\begin{align}\label{eq:Alex}
df'=f''\,d\lambda+\underbrace{\sum_{x\in\bR}\left[\partial^+f(x)-\partial^-f(x)\right]\,\dirac{x}}_{=:(df')_{\sing}},
\end{align}
where $\lambda$ is the Lebesgue measure, and $\dirac{x}$ is the Dirac measure concentrated at $\{x\}$. 
It is well known that 
\begin{align}\label{eq:Taylor1 remainder}
f(x)=f(a)+\int_a^x f'(t)\,dt,\ds\ds\ds a,x\in(0,+\infty).
\end{align}
All these hold without alteration for a concave $f$, with the exception that in this case
$\derleft{f}$ and $\derright{f}$ are monotone decreasing, 
and the inequality in \eqref{eq:convexf partial der order2} holds in the opposite direction.

\begin{lemma}\label{lemma:convex f tilde limit}
If $f\in(\bR^{(0,+\infty)})_{\conv}\cup(\bR^{(0,+\infty)})_{\conc}$  then 
the limits in \eqref{f limits} exist in $(-\infty,+\infty]$, and 
\begin{align*}
\tilde f(0^+)
=
\lim_{x\to+\infty}\partial^-f(x)
=
\lim_{x\to+\infty}\partial^+f(x)
=
\lim_{x\to+\infty}f'(x).
\end{align*}
\end{lemma}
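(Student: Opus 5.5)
For a convex $f$, the plan is to use the monotonicity of the one-sided derivatives and the integral representation \eqref{eq:Taylor1 remainder}. The concave case then follows by applying the convex case to $-f$. This flips the signs of all the limits, so for concave $f$ the limits lie in $[-\infty,+\infty)$ rather than $(-\infty,+\infty]$; I read the statement's range as referring to the convex case. Throughout, $f$ is convex.

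\emph{Step 1: the derivative limits.} Since $\derleft{f}$ and $\derright{f}$ are monotone increasing, their limits as $x\to+\infty$ exist in $(-\infty,+\infty]$; the lower bound $-\infty$ is excluded because each is bounded below by its value at, say, $x=1$. By \eqref{eq:convexf partial der order2}, for $x<y$ we have
\begin{align*}
\derleft{f}(x)\le\derright{f}(x)\le\derleft{f}(y),
\end{align*}
so the two limits sandwich each other and coincide. Call the common value $M$. By \eqref{eq:fder def}, $f'$ lies between $\derleft{f}$ and $\derright{f}$, so $f'(x)\to M$ as well.

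\emph{Step 2: the limit at $+\infty$.} Fix $a>0$. By \eqref{eq:Taylor1 remainder},
\begin{align*}
\frac{f(x)}{x}=\frac{f(a)}{x}+\frac{x-a}{x}\cdot\frac{1}{x-a}\int_a^x f'(t)\,dt .
\end{align*}
The first term tends to $0$, and $\frac{x-a}{x}\to1$. The remaining average of $f'$ over $[a,x]$ is a Cesàro-type mean of a monotone function tending to $M$, so it tends to $M$. This is the main point requiring care, and I will treat the two cases separately.

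If $M<+\infty$: given $\ep>0$, choose $b>a$ with $M-\ep\le f'(t)\le M$ for $t\ge b$. Split the integral at $b$. The piece over $[a,b]$ is a fixed finite number, which disappears after dividing by $x-a$ as $x\to+\infty$. The piece over $[b,x]$, divided by $x-a$, tends to a value in $[M-\ep,M]$.

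If $M=+\infty$: for any $K$, choose $b$ with $f'\ge K$ on $[b,+\infty)$. The same splitting gives a $\liminf$ of at least $K$, so the average tends to $+\infty$.

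Hence $\tilde f(0^+)=\lim_{x\to+\infty}f(x)/x=M$, which is exactly the claimed identity.

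\emph{Step 3: the limit at $0$.} Let $L:=\lim_{x\searrow0}\derright{f}(x)$, which exists in $[-\infty,+\infty)$ by monotonicity. I split into two cases.

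If $L\ge0$, then $\derright{f}\ge 0$ on $(0,1)$, so $f$ is increasing there. Hence $f(0^+)$ exists and is at most $f(1)$.

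If $L<0$, then $\derright{f}<0$ on some $(0,c)$, so $f$ is decreasing there. Hence $f(0^+)$ exists in $(-\infty,+\infty]$.

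In either case the limit exceeds $-\infty$ by the supporting-line bound $f(x)\ge f(a)+\derright{f}(a)(x-a)$, which is bounded below for $x\in(0,a)$. So $f(0^+)\in(-\infty,+\infty]$.

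Finally, since both limits in \eqref{f limits} lie in $(-\infty,+\infty]$, they cannot be infinities of opposite signs. Thus a convex $f$ satisfies Condition \ref{cond:fdiv}, and by symmetry under $f\mapsto -f$ so does a concave one.
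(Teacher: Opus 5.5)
Your proof is correct and follows essentially the same route as the paper. You use monotonicity of $\derleft{f}$ and $\derright{f}$ for the derivative limits. You then use the representation $f(x)=f(a)+\int_a^x f'(t)\,dt$ with monotone $f'$ to get $f(x)/x\to M$; the paper uses the sandwich $f(x_0)+(x-x_0)f'(x_0)\le f(x)\le f(x_0)+(x-x_0)f'(x)$ instead of your Ces\`aro splitting, which amounts to the same thing. Finally, both arguments use monotonicity of $f$ near $0$ for $f(0^+)$.

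Your case split via $\lim_{x\searrow 0}\derright{f}(x)$, together with the supporting-line bound, is slightly more careful than the paper's dichotomy. That dichotomy overlooks $f$ decreasing on all of $(0,+\infty)$ and leaves $f(0^+)>-\infty$ implicit. You are also right that for concave $f$ the range should read $[-\infty,+\infty)$.
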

\begin{proof}
It is sufficient to prove the assertion for the case when $f$ is convex.
If $f$ is monotone increasing then clearly $f(0^+)=\inf_{x\in(0,+\infty)}f(x)$. Otherwise there exists a point $x_0\in(0,+\infty)$ such that $f$ is monotone decreasing on 
$(0,x_0]$ and monotone increasing on $(x_0,+\infty)$, whence
$f(0^+)=\sup_{x\in(0,x_0]}f(x)$. 

Since $x\mapsto \partial^-f(x)$ and $x\mapsto \partial^+f(x)$ are increasing, their limits exist at 
$+\infty$, and, moreover, 
$\lim_{x\to+\infty}\partial^-f(x)=
\lim_{x\to+\infty}\partial^+f(x)=\lim_{x\to+\infty}f'(x)$, according to \eqref{eq:convexf partial der order2}. By \eqref{eq:Taylor1 remainder},
\begin{align*}
f(x)=f(x_0)+\int_{x_0}^x f'(t)\,dt
\begin{cases}
\ge f(x_0)+(x-x_0) f'(x_0),\\
\le f(x_0)+(x-x_0) f'(x),
\end{cases}
\ds\ds\ds x\in(0,+\infty).
\end{align*}
From the lower bound we get 
\begin{align*}
\liminf_{x\to+\infty}\frac{f(x)}{x}
\ge 
\underbrace{\lim_{x\to+\infty}\frac{f(x_0)}{x}}_{=0}
+
f'(x_0)\underbrace{\lim_{x\to+\infty}\frac{x-x_0}{x}}_{=1}.
\end{align*}
Taking then the limit $x_0\to+\infty$, we get 
$\liminf_{x\to+\infty}\frac{f(x)}{x}\ge \lim_{x\to+\infty}f'(x)$.
Similarly, the upper bound above yields
$\limsup_{x\to+\infty}\frac{f(x)}{x}\le \lim_{x\to+\infty}f'(x)$.
Hence, $\tilde f(0^+)$ exists, and is equal to 
$\lim_{x\to+\infty}f'(x)$ as stated.
\end{proof}

For any real number $x\in\bR$, let
\begin{align*}
x_+:=\frac{|x|+x}{2}=\begin{cases}
x,&x\ge 0,\\
0,&x<0,
\end{cases}
\ds\ds\ds\ds\ds
x_-:=\frac{|x|-x}{2}=\left\{\begin{array}{ll}
-x,&x\le 0,\\
0,&x>0,
\end{array}\right\}
=x_+-x.
\end{align*}

\begin{lemma}\label{lemma:convexf repr}
Let $f\in\ccf$. For every $a,x\in(0,+\infty)$, 
\begin{align}
f(x)&=f(a)+f'(a^-)(x-a)+\int_{(0,a)}(x-t)_-\,df'(t)+\int_{[a,+\infty)}(x-t)_+\,df'(t)
\label{eq:convexf repr3}\\
&=f(a)+f'(a^+)(x-a)+\int_{(0,a]}(x-t)_-\,df'(t)+\int_{(a,+\infty)}(x-t)_+\,df'(t).
\label{eq:convexf repr4}
\end{align}
\end{lemma}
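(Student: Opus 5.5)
This is a Taylor formula with integral remainder. I would start from \eqref{eq:Taylor1 remainder}, $f(x)=f(a)+\int_a^x f'(s)\,ds$, and rewrite $f'(s)-f'(a^\mp)$ as a $df'$-measure of an interval. Throughout, $f'$ is the averaged derivative \eqref{eq:fder def}. It differs from $f'(s^-)$ only at countably many points, so inside Lebesgue integrals I may replace $f'(s)$ by $f'(s^-)$ or by $f'(s^+)$.

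\emph{Case $x\ge a$, proving \eqref{eq:convexf repr3}.} For $s>a$,
\begin{align*}
f'(s^-)-f'(a^-)=df'([a,s)).
\end{align*}
Hence
\begin{align*}
f(x)-f(a)-f'(a^-)(x-a)=\int_a^x df'([a,s))\,ds=\int_a^x\int_{[a,+\infty)}\mathbf 1_{\{t<s\}}\,df'(t)\,ds .
\end{align*}
By Tonelli (or Fubini) the right-hand side equals $\int_{[a,x)}(x-t)\,df'(t)=\int_{[a,+\infty)}(x-t)_+\,df'(t)$. The integral over $(0,a)$ vanishes, because $(x-t)_-=0$ for $t<a\le x$.

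\emph{Case $x<a$, proving \eqref{eq:convexf repr3}.} Here
\begin{align*}
f(x)-f(a)=-\int_x^a f'(s^-)\,ds,\qquad f'(a^-)-f'(s^-)=df'([s,a)).
\end{align*}
So the remainder is
\begin{align*}
\int_x^a df'([s,a))\,ds=\int_{(x,a)}(t-x)\,df'(t)=\int_{(0,a)}(x-t)_-\,df'(t).
\end{align*}
The endpoint $t=x$ contributes nothing, since its weight $t-x$ is zero. The integral over $[a,+\infty)$ vanishes, because $x<a\le t$.

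\emph{Formula \eqref{eq:convexf repr4}.} I would derive it from \eqref{eq:convexf repr3} by moving the atom at $a$. We have $df'(\{a\})=f'(a^+)-f'(a^-)$, and $(x-a)_+-(x-a)_-=x-a$. Therefore
\begin{align*}
f'(a^-)(x-a)+(x-a)_+\,df'(\{a\})=f'(a^+)(x-a)+(x-a)_-\,df'(\{a\}).
\end{align*}
This is exactly the statement that the atom at $a$ can be transferred from $[a,+\infty)$ to $(0,a]$. Alternatively, one can repeat the computation above with right limits and half-open intervals $(a,s]$.

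\emph{Concave case.} For concave $f$, $f'$ is decreasing and $df'$ is understood as a signed measure, namely $-d(-f')$. Applying the convex case to $-f$ and using linearity gives both formulas.

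\emph{Main obstacle.} The only delicate point is the Fubini step: all integrands must be nonnegative, or finite on compact subintervals of $(0,+\infty)$. This holds because $df'$ is a locally finite measure and the integrals over $(0,a)$ and $[a,+\infty)$ are effectively over $[\min(x,a),\max(x,a)]$. Beyond that, one must track the endpoint conventions carefully, which is precisely what distinguishes \eqref{eq:convexf repr3} from \eqref{eq:convexf repr4}.
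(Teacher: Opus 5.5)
Your proof is correct and is essentially the paper's argument: Taylor's formula \eqref{eq:Taylor1 remainder}, writing the increment of the one-sided derivative as the $df'$-measure of an interval, Fubini--Tonelli, and then transferring the atom $df'(\{a\})$ to pass between \eqref{eq:convexf repr3} and \eqref{eq:convexf repr4}. The only difference is cosmetic: you use left limits $f'(s^-)$ in both cases and so obtain \eqref{eq:convexf repr3} directly in each. The paper, for $x<a$, instead works with $f'(u^-)$ relative to $f'(a^+)$ and obtains \eqref{eq:convexf repr4} directly in that case.
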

\begin{proof}
Fix $a\in(0,+\infty)$. If $x=a$, both equalities hold trivially.

Assume next that $x>a$. 
In this case the first integral terms in both 
\eqref{eq:convexf repr3} and \eqref{eq:convexf repr4} are equal to $0$.  
By \eqref{eq:Taylor1 remainder}, 
\begin{align*}
f(x)-f(a)
&=
\int_{[a,x]}f'(u)\,du
=
\int_{[a,x]}f'(u^+)\,du
=
\int_{[a,x]}\bz f'(a^-)+df'([a,u])\jz\,du\\
&=
f'(a^-)(x-a)+\int_{[a,x]}\int_{[a,x]}\egy_{[a,u]}(t)\,df'(t)\,du.
\end{align*}
The double integral above can be evaluated using the Fubini-Tonelli theorem as 
\begin{align*}
\int_{[a,x]}\int_{[t,x]}\,du\,df'(t)=
\int_{[a,x]}(x-t)\,df'(t)=\int_{[a,+\infty)}(x-t)_+\,df'(t).
\end{align*} 
This proves \eqref{eq:convexf repr3}, and \eqref{eq:convexf repr4} follows from it by noting that 
\begin{align*}
\int_{[a,+\infty)}(x-t)_+\,df'(t)&=
\int_{(a,+\infty)}(x-t)_+\,df'(t)+(x-a)df'(\{a\})
\end{align*}
and that $df'(\{a\})=f'(a^+)-f'(a^-)$.

Assume finally that $x<a$. 
In this case the second integral terms in both 
\eqref{eq:convexf repr3} and \eqref{eq:convexf repr4} are equal to $0$.  
By \eqref{eq:Taylor1 remainder}, 
\begin{align*}
f(a)-f(x)
&=
\int_{[a,x]}f'(u)\,du
=
\int_{[x,a]}f'(u^-)\,du
=
\int_{[x,a]}\bz f'(a^+)-df'([u,a])\jz\,du\\
&=
f'(a^+)(a-x)-\int_{[x,a]}\int_{[x,a]}\egy_{[u,a]}(t)\,df'(t)\,du.
\end{align*}
The double integral above can be evaluated using the Fubini-Tonelli theorem as 
\begin{align*}
\int_{[x,a]}\int_{[x,t]}\,du\,df'(t)=
\int_{[x,a]}(t-x)\,df'(t)=\int_{(0,a]}(x-t)_-\,df'(t).
\end{align*} 
This proves \eqref{eq:convexf repr3}, and \eqref{eq:convexf repr4} follows from it by noting that 
\begin{align*}
\int_{(0,a]}(x-t)_-\,df'(t)&=
\int_{(0,a)}(x-t)_+\,df'(t)+(a-x)df'(\{a\})
\end{align*}
and that $df'(\{a\})=f'(a^+)-f'(a^-)$.
\end{proof}

\begin{rem}\label{rem:convexf repr}
If $f:\,(0,+\infty)\to\bR$ is a twice continuously differentiable function, then  
for any $a\in(0,+\infty)$, and any $x\in(0,+\infty)$,
\begin{align}
f(x)&=
f(a)+f'(a)(x-a)+\int_{0}^af''(t)(x-t)_-\,dt+\int_a^{+\infty}f''(t)(x-t)_+\,dt
\label{eq:convexf repr1}\\
&=f(a)+f'(a)(x-a)+
\begin{cases}
\int_{0}^af''(t)(x-t)_-\,dt,&x\in(0,a],\\
\int_a^{+\infty}f''(t)(x-t)_+\,dt,&x\in[a,+\infty).
\end{cases}
\label{eq:convexf repr2}
\end{align}
This is a special case of Lemma \ref{lemma:convexf repr} when $f$ is convex or concave, but it also holds without assuming these properties. To see why \eqref{eq:convexf repr1} holds, 
note that $(x-t)_-=0$ for $t\le x$ and $(x-t)_+=0$ for $t\ge x$, whence both integrals 
in \eqref{eq:convexf repr1} are the integral of a continuous function on a compact interval.
Hence, the RHS of \eqref{eq:convexf repr1} is differentiable with respect to $a$, and its derivative is
\begin{align*}
f'(a)-f'(a)+f''(a)(x-a)+f''(a)(x-a)_--f''(a)(x-a)_+=0.
\end{align*}
Thus, the RHS of \eqref{eq:convexf repr1} is independent of $a$, and choosing $a:=x$ makes it equal to $f(x)$, proving the equality in \eqref{eq:convexf repr1}.
The equality to \eqref{eq:convexf repr2} is trivial.
\end{rem}

\begin{rem}
Note that the integral terms in \eqref{eq:convexf repr1} give an expression for the remainder term of the degree one Taylor approximation of $f$ around $a$. A similar-looking alternative integral representation is given by 
\begin{align}\label{eq:Taylor remainder2}
f(x)=
f(a)+f'(a)(x-a)+\int_{a}^xf''(t)(x-t)\,dt,
\end{align}
which one can easily verify using a simple integration by parts.
\end{rem}

%

\begin{lemma}\label{lemma:persp integral repr}
Let $f\in\ccf$. Then 
\begin{align}
\hspace{-.8cm}\persp{f}(x,y)&=
yf(a)+f'(a^-)(x-ay)+\int_{(0,a)}(x-ty)_-\,df'(t)+\int_{[a,+\infty)}(x-ty)_+\,df'(t)
\label{eq:persp integral repr1}\\
&=yf(a)+f'(a^+)(x-ay)+\int_{(0,a]}(x-ty)_-\,df'(t)+\int_{(a,+\infty)}(x-ty)_+\,df'(t)
\label{eq:persp integral repr2}
\end{align}
for any $a\in(0,+\infty)$, and any $x,y\in[0,+\infty)$,
\end{lemma}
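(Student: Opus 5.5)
The plan is to treat $x,y>0$ first, reducing to Lemma \ref{lemma:convexf repr} by homogeneity, and then to obtain the boundary cases $x=0$ or $y=0$ from the extension \eqref{persp extension} by a limiting argument.

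\emph{Case $x,y>0$.} Apply \eqref{eq:convexf repr3} and \eqref{eq:convexf repr4} at the point $x/y$ and multiply by $y$. The hockey-stick terms are positively $1$-homogeneous, so $y(x/y-t)_\pm=(x-ty)_\pm$. The linear term becomes $f'(a^\mp)(x-ay)$, and the constant term becomes $yf(a)$. This gives both \eqref{eq:persp integral repr1} and \eqref{eq:persp integral repr2} at once.

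\emph{Case $x=y=0$.} Every term vanishes, and $\persp{f}(0,0)=0$ by the convention $0\cdot\infty=0$, so the formulas hold trivially.

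\emph{Case $x=0$, $y>0$.} By \eqref{persp extension}, $\persp{f}(0,y)=yf(0^+)=\lim_{x\searrow0}\persp{f}(x,y)$. I would let $x\searrow0$ in the formula already proved.
\begin{itemize}
\item The terms $yf(a)+f'(a^\mp)(x-ay)$ converge trivially.
\item The integral over $[a,+\infty)$ (resp.\ $(a,+\infty)$) is eventually zero, since $(x-ty)_+=0$ once $x<ay$.
\item For the integral over $(0,a)$, the integrand $(x-ty)_-=(ty-x)_+$ increases monotonically to $ty$ as $x\searrow0$. In the convex case $df'\ge0$, so monotone convergence applies. The limit is $\int_{(0,a)}ty\,df'(t)=\int_{(0,a)}(0-ty)_-\,df'(t)$, possibly $+\infty$, which is consistent with $f(0^+)\in(-\infty,+\infty]$ from Lemma \ref{lemma:convex f tilde limit}.
\end{itemize}
The concave case follows by applying the result to $-f$ and using Lemma \ref{lemma:persp additive}, i.e.\ $\persp{-f}=-\persp{f}$, with all infinities of one sign.

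\emph{Case $y=0$, $x>0$.} This is symmetric. Now $\persp{f}(x,0)=x\tilde f(0^+)=\lim_{y\searrow0}\persp{f}(x,y)$, and I let $y\searrow0$.
\begin{itemize}
\item The integral over $(0,a)$ is eventually zero, since $x-ty>0$ for all $t<a$ once $y<x/a$.
\item The integral over $[a,+\infty)$ has integrand $(x-ty)_+\nearrow x$, so by monotone convergence it tends to $x\,df'([a,+\infty))$.
\end{itemize}
The right-hand side at $y=0$ then reads $f'(a^-)x+x\,df'([a,+\infty))=x\lim_{t\to\infty}f'(t)$. This agrees with $x\tilde f(0^+)$ by Lemma \ref{lemma:convex f tilde limit}, so it also directly matches the definition.

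The only delicate point is justifying the interchange of limit and integral when the limit is infinite, and making sure no $\infty-\infty$ arises. Monotone convergence with a sign-definite measure handles this: $df'\ge0$ for convex $f$, and the concave case reduces to it via $-f$.
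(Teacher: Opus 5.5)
Your proof is correct. For $xy>0$ and $x=y=0$ it coincides with the paper's argument. For $y=0$ the paper does exactly your direct substitution $f'(a^-)x+x\,df'([a,+\infty))=x\lim_{t\to+\infty}f'(t)=x\tilde f(0^+)$, so the limiting argument you give first in that case is valid but redundant.

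The genuine difference is the case $x=0$, $y>0$. The paper substitutes $x=0$ directly and evaluates $\int_{(0,a)}t\,df'(t)=af'(a^-)-f(a)+f(0^+)$ by integration by parts. You instead let $x\searrow 0$ in the identity already proved for $x>0$. On the right you use monotone convergence, since $(ty-x)_+$ increases to $ty$; on the left you use $\persp{f}(x,y)=yf(x/y)\to yf(0^+)$.

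Your route handles the value $+\infty$ automatically. It also never has to control the lower boundary term of the integration by parts. That term is really $\lim_{\ep\searrow0}\ep f'(\ep)$, not the $0\cdot f'(0^+)$ written in the paper. The limit vanishes when $f(0^+)$ is finite, but not in general: for $f=-\log$ it equals $-1$, although there both sides are $+\infty$ anyway.

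The paper's route, in exchange, produces an explicit formula for $\int_{(0,a]}t\,df'(t)$. That formula is reused later, e.g.\ in the proof of Proposition~\ref{prop:V.2}.
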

\begin{proof}
We only prove \eqref{eq:persp integral repr1}, since its equality to 
\eqref{eq:persp integral repr2} follows easily, in the same way as in the proof of 
Lemma \ref{lemma:convexf repr}.

When $xy>0$, the equality in \eqref{eq:persp integral repr1} follows immediately 
from Lemma \ref{lemma:convexf repr}.

If $x=0=y$ then $\persp{f}(0,0)=0$, and the RHS of \eqref{eq:persp integral repr1} is easily seen to be equal to $0$ as well, so for the rest we may assume that $x+y>0$.

Assume that $x>0$ and $y=0$. Then $(x-ty)_-=0$ for every $t\in(0,+\infty)$, whence the RHS of 
\eqref{eq:persp integral repr1} becomes
\begin{align*}
xf'(a^-)+x\int_{[a,+\infty)}\,df'(t)
=
xf'(a^-)+x\underbrace{\lim_{b\to+\infty}f'(b)}_{=\tilde f(0^+)}-xf'(a^-)
=
x\tilde f(0^+)=\persp{f}(x,0),
\end{align*}
where we used Lemma \ref{lemma:convex f tilde limit}.

Assume finally that $x=0$ and $y>0$. Then $(x-ty)_+=0$ for all $t>0$, whence
the second integral in \eqref{eq:persp integral repr1} is equal to $0$.
Thus, the RHS of \eqref{eq:persp integral repr1} becomes
\begin{align*}
yf(a)-yaf'(a^-)+y\int_{(0,a)}t\,df'(t),
\end{align*}
and the integral can be evaluated using integration by parts as
\begin{align*}
\int_{(0,a)}t\,df'(t)
=
af'(a^-)-0f'(0^+)-\int_{(0,a)}f'(t^-)\,dt
&=
af'(a^-)-\int_{(0,a)}f'(t)\,dt
=
af'(a^-)-f(a)+f(0^+).
\end{align*}
Thus, the RHS of \eqref{eq:persp integral repr1} is equal to 
$yf(0^+)=\persp{f}(0,y)$,
proving \eqref{eq:persp integral repr1}.
\end{proof}

\subsection{Classical $f$-divergences}

The following has been shown in \cite{Csiszar-fdiv}:

\begin{lemma}\label{lemma:cl fdiv def}
Let $(\Omega,\F)$ be a measurable space and $P,Q$ be finite positive measures on it. 
Let $f\in\ccf$. 
For any $\sigma$-finite positive measure $\mu$ on $(\Omega,\F)$ dominating $P$ and $Q$, the integral 
\begin{align}\label{eq:cl fdiv def}
\fdiv{f}{}(P\|Q):=\int_{\Omega}\persp{f}\bz\frac{dP}{d\mu}(\omega),\frac{dQ}{d\mu}(\omega)\jz\,d\mu(\omega)
\end{align}
is well-defined, and its value is independent of the dominating measure $\mu$.
\end{lemma}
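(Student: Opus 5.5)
We need two things from the statement: the integral in \eqref{eq:cl fdiv def} is well defined, and its value does not depend on the dominating measure $\mu$. We treat convex $f$; the concave case follows by replacing $f$ with $-f$, using Lemma \ref{lemma:persp additive}.

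\emph{Well-definedness.} Write $p:=dP/d\mu$ and $q:=dQ/d\mu$. Both are $\mu$-a.e.\ finite, nonnegative and measurable. By \eqref{persp extension}, $\persp{f}$ is a limit of continuous functions of $(x,y)$, so $\omega\mapsto\persp{f}(p(\omega),q(\omega))$ is measurable.

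The real issue is integrability, since $\persp{f}$ may take the values $\pm\infty$ and may change sign. We bound its negative part by a linear function. Fix $a>0$ and use Lemma \ref{lemma:persp integral repr}. In \eqref{eq:persp integral repr1}, both integrals have nonnegative integrands and $df'$ is a positive measure. Hence
\begin{align*}
\persp{f}(x,y)\ \ge\ yf(a)+f'(a^-)(x-ay),\qquad x,y\in[0,+\infty).
\end{align*}
The right-hand side is an affine combination of $x$ and $y$. Since $\int p\,d\mu=P(\Omega)<+\infty$ and $\int q\,d\mu=Q(\Omega)<+\infty$, the negative part $\persp{f}(p,q)_-$ is $\mu$-integrable. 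The integral therefore exists in $(-\infty,+\infty]$.

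\emph{Independence of $\mu$.} Let $\mu_1,\mu_2$ be two $\sigma$-finite dominating measures. Then $\nu:=\mu_1+\mu_2$ is $\sigma$-finite and dominates both, so it suffices to compare $\mu$ with some $\nu\gg\mu$. Put $h:=d\mu/d\nu\ge 0$. The chain rule for Radon--Nikodym derivatives gives $dP/d\nu=p\,h$ and $dQ/d\nu=q\,h$ $\nu$-a.e. By $1$-homogeneity of $\persp{f}$, which also holds for the extension \eqref{persp extension} with the convention $0\cdot\infty=0$,
\begin{align*}
\persp{f}(ph,qh)=h\,\persp{f}(p,q)\quad\text{on }\{h>0\}.
\end{align*}
On $\{h=0\}$ we get $\persp{f}(0,0)=0$. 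Therefore
\begin{align*}
\int\persp{f}(ph,qh)\,d\nu=\int_{\{h>0\}}h\,\persp{f}(p,q)\,d\nu=\int\persp{f}(p,q)\,d\mu.
\end{align*}
The last equality is the change of variables $d\mu=h\,d\nu$, applied separately to the positive and negative parts.

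\emph{Main obstacle.} The delicate point is checking homogeneity on the boundary, where $x=0$ or $y=0$ and infinite values can occur; this is where $0\cdot\infty=0$ enters. Given \eqref{persp extension}, it amounts to a short case check: for $t>0$, $\persp{f}(tx,0)=tx\tilde f(0^+)=t\persp{f}(x,0)$, and similarly $\persp{f}(0,ty)=t\persp{f}(0,y)$.
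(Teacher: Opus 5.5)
Your proof is correct and follows essentially the paper's argument. The affine minorant $yf(a)+f'(a^-)(x-ay)$ you use to control the negative part plays the same role as the paper's step of subtracting an affine function to make $f$ non-negative, and independence of $\mu$ again comes from the $1$-homogeneity of $\persp{f}$ combined with the chain rule for Radon--Nikodym derivatives. The only difference is the reference measure: the paper compares any dominating $\mu$ with the canonical finite measure $P+Q\ll\mu$, whereas you compare two dominating measures through $\mu_1+\mu_2$. This is why you have to handle the set $\{d\mu/d\nu=0\}$ via $\persp{f}(0,0)=0$, which you do correctly.
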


\begin{proof}
(Sketch). We may assume without loss of generality that $f$ is convex. 
The statement is trivial for linear functions, and by subtracting a linear function from it, 
we may assume that $f$ is non-negative, so $\persp{f}\bz\frac{dP}{d\mu}(\omega),\frac{dQ}{d\mu}(\omega)\jz\ge 0$
for every $\omega$, whence the integral exists (measurability is obvious). Let 
$\nu:=P+Q$, so that $P,Q\ll\nu\ll\mu$. Using the positive homogeneity of the perspective function, we get 
\begin{align*}
\int_{\Omega}\persp{f}\bz\frac{dP}{d\mu}(\omega),\frac{dQ}{d\mu}(\omega)\jz\,d\mu(\omega)
&=
\int_{\Omega}\persp{f}\bz\frac{dP}{d\nu}(\omega)\frac{d\nu}{d\mu}(\omega),\frac{dQ}{d\nu}(\omega)\frac{d\nu}{d\mu}(\omega)\jz\,d\mu(\omega)\\
&=
\int_{\Omega}\persp{f}\bz\frac{dP}{d\nu}(\omega),\frac{dQ}{d\nu}(\omega)\jz\frac{d\nu}{d\mu}(\omega)\,d\mu(\omega)\\
&=
\int_{\Omega}\persp{f}\bz\frac{dP}{d\nu}(\omega),\frac{dQ}{d\nu}(\omega)\jz\,d\nu(\omega),
\end{align*}
where the second equality follows by the positive homogeneity of the perspective function.
This shows the independence of \eqref{eq:cl fdiv def} of the dominating measure $\mu$.
\end{proof}

\begin{defin}
In the setting of Lemma \ref{lemma:cl fdiv def}, $\fdiv{f}{}(P\|Q)$ is called the 
\ki{(classical) $f$-divergence} of $P$ and $Q$.
\end{defin}

\begin{lemma}\label{lemma:cl fdiv def2}
In the setting of Lemma \ref{lemma:cl fdiv def}, let 
\begin{align*}
\Omega_{\mu}^{++}:=\left\{\omega\in\Omega:\,\frac{dP}{d\mu}(\omega)\frac{dQ}{d\mu}(\omega)>0\right\}.
\end{align*}
Then $P^{++}:=P\vert_{\Omega_{\mu}^{++}}$ and
$Q^{++}:=Q\vert_{\Omega_{\mu}^{++}}$ are mutually absolutely continuous to each other, and 
\begin{align}\label{eq:cl fdiv def2}
\fdiv{f}{}(P\|Q)=
f(0^+)Q\bz\left\{\frac{dP}{d\mu}=0\right\}\jz+
\tilde f(0^+)P\bz\left\{\frac{dQ}{d\mu}=0\right\}\jz+
\fdiv{f}{++}(P\|Q),
\end{align}
where 
\begin{align}\label{eq:fdiv++ def}
\fdiv{f}{++}(P\|Q)&:=
\int_{\Omega_{\mu}^{++}}f\bz\frac{dP/d\mu}{dQ/d\mu}\jz\,\frac{dQ}{d\mu}\,d\mu
=\int_{\Omega_{\mu}^{++}}f\bz\frac{dP^{++}}{dQ^{++}}\jz\,dQ^{++}
=\fdiv{f}{}(P^{++}\|Q^{++}).
\end{align}
If, moreover, $P\ll Q$ and $f(0):=f(0^+)\in\bR$ then 
\begin{align}
\fdiv{f}{}(P\|Q)=\int_{\Omega}f\bz\frac{dP}{dQ}\jz\,dQ.
\end{align}
\end{lemma}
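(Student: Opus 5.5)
Write $p:=dP/d\mu$ and $q:=dQ/d\mu$, and split $\Omega$ into the disjoint measurable sets
\begin{align*}
A_0:=\{p=0\},\ds\ds
B_0:=\{q=0,\,p>0\},\ds\ds
\Omega_{\mu}^{++}=\{pq>0\}.
\end{align*}
These cover $\Omega$, since $\{p=0\}\cap\{q=0\}\subseteq A_0$. The plan is to evaluate the integral in \eqref{eq:cl fdiv def} separately on each piece, using the explicit form \eqref{persp extension} of the extended perspective.

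\textbf{Mutual absolute continuity.} For $E\subseteq\Omega_{\mu}^{++}$ we have $P^{++}(E)=\int_E p\,d\mu$, and this vanishes iff $\mu(E)=0$, because $p>0$ on $E$. The same holds for $Q^{++}$, so $P^{++}\sim Q^{++}\sim\mu|_{\Omega_\mu^{++}}$. Moreover, on $\Omega_\mu^{++}$ we have $dP^{++}/dQ^{++}=p/q$ $\mu$-a.e., by the chain rule for Radon-Nikodym derivatives.

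\textbf{Splitting the integral.}

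On $A_0$, \eqref{persp extension} gives $\persp{f}(0,q)=q f(0^+)$ with the convention $0\cdot\infty=0$. Integrating yields $f(0^+)\,Q(A_0)=f(0^+)\,Q(\{p=0\})$. When $f(0^+)=\pm\infty$, the integral equals $\pm\infty\cdot Q(A_0)$ with $0\cdot\infty=0$, so the formula still holds.

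On $B_0$, we get $\persp{f}(p,0)=p\,\tilde f(0^+)$, which integrates to $\tilde f(0^+)P(B_0)$. Since $P(\{q=0\}\setminus B_0)=P(\{p=0,q=0\})=0$, this equals $\tilde f(0^+)P(\{q=0\})$.

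On $\Omega_\mu^{++}$, the integrand is $q f(p/q)$, which gives the first expression in \eqref{eq:fdiv++ def}. Rewriting it as $\int f(dP^{++}/dQ^{++})\,dQ^{++}$ is a change of measure. Applying \eqref{eq:cl fdiv def} to the pair $(P^{++},Q^{++})$ with dominating measure $\mu$, the perspective vanishes outside $\Omega_\mu^{++}$ (where $p^{++}=q^{++}=0$), so this piece equals $\fdiv{f}{}(P^{++}\|Q^{++})$.

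\textbf{Main obstacle: additivity of the integral.} The only delicate point is that the integral over $\Omega$ must equal the sum of the three pieces without producing $\infty-\infty$. As in the proof of Lemma \ref{lemma:cl fdiv def}, assume $f$ is convex and write $f=g+\ell$ with $\ell(x)=\alpha+\beta x$ affine and $g\ge0$ convex.

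For $g\ge0$, the splitting is legitimate by countable additivity of integrals of nonnegative functions. For $\ell$, the perspective is $\alpha q+\beta p$, which is integrable because $P$ and $Q$ are finite, so that piece splits trivially.

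Lemma \ref{lemma:persp additive} recombines the two parts pointwise. It then remains to check that the term-by-term recombination of the three pieces never adds infinities of opposite signs. The pieces coming from $\ell$ are finite. The pieces coming from $g$ are each in $[0,+\infty]$, so every infinity that can appear is $+\infty$. The concave case follows by applying this to $-f$.

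\textbf{Last assertion.} If $P\ll Q$, then $P(\{q=0\})=0$, so the middle term vanishes. Taking $\mu=Q$ gives $q\equiv1$, so $\Omega_\mu^{++}=\{p>0\}$. With $f(0)=f(0^+)$ finite, the $A_0$ term becomes $\int_{\{p=0\}}f(0)\,dQ$. Combining it with the $++$ term gives $\int_\Omega f(dP/dQ)\,dQ$, again justified by the $g+\ell$ decomposition.
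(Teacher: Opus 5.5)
Your proof is correct and takes the paper's route: the same three-piece split of the perspective integral via \eqref{persp extension}, the same mutual-absolute-continuity and Radon--Nikodym chain-rule argument on $\Omega_\mu^{++}$, and the same choice $\mu=Q$ for the last claim. The two small additions are these: your $g+\ell$ argument that the splitting never produces $\infty-\infty$ makes explicit a point the paper leaves implicit, and your direct evaluation of $D_f(P^{++}\|Q^{++})$ with dominating measure $\mu$ is a slightly shorter substitute for the paper's re-application of \eqref{eq:cl fdiv def2} with $\mu:=Q^{++}$.
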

\begin{proof}
By \eqref{eq:cl fdiv def} and \eqref{persp extension},
\begin{align*}
\fdiv{f}{}(P\|Q)
=&
\int_{\Omega}\persp{f}\bz\frac{dP}{d\mu}(\omega),\frac{dQ}{d\mu}(\omega)\jz\,d\mu(\omega)\\
=&
\underbrace{\int_{\left\{\frac{dP}{d\mu}=0\right\}}f(0^+)\frac{dQ}{d\mu}(\omega)\,d\mu(\omega)}_{
=f(0^+)Q\bz\left\{\frac{dP}{d\mu}=0\right\}\jz}
+\int_{\left\{\frac{dQ}{d\mu}=0\right\}\setminus\left\{\frac{dP}{d\mu}=0\right\}}\tilde f(0^+)\frac{dP}{d\mu}(\omega)\,d\mu(\omega)\\
&+\int_{\Omega_{\mu}^{++}}f\bz\frac{dP/d\mu(\omega)}{dQ/d\mu(\omega)}\jz\,\frac{dQ}{d\mu}(\omega)\,d\mu(\omega).
\end{align*}
The second term above is equal to 
\begin{align*}
\tilde f(0^+)P\bz\left\{\frac{dQ}{d\mu}=0\right\}\setminus\left\{\frac{dP}{d\mu}=0\right\}\jz
=
\tilde f(0^+)P\bz\left\{\frac{dQ}{d\mu}=0\right\}\jz,
\end{align*}
according to Lemma \ref{lemma:zero RN}.
This proves \eqref{eq:cl fdiv def2}.

Note that for any $A\in\F$, $A\subseteq\Omega_{\mu}^{++}$, 
\begin{align*}
P(A)=0\ds\iff\ds \mu(A)=0\ds\iff\ds Q(A)=0.
\end{align*}
Indeed, $\mu(A)=0\imp P(A)=0$ follows from $P\ll \mu$, while 
$P(A)=0$ implies
\begin{align*}
0=P(A)\ge \int_{A\cap\{dP/d\mu\ge 1/n\}}\frac{dP}{d\mu}\,d\mu\ge (1/n)\mu\bz A\cap\left\{\frac{dP}{d\mu}\ge \frac{1}{n}\right\}\jz
\end{align*}
for every $n\in\bN$, whence
\begin{align*}
\mu(A)=
\mu\bz A\cap\left\{\frac{dP}{d\mu}>0\right\}\jz=
\lim_{n\to+\infty}\mu\bz A\cap\left\{\frac{dP}{d\mu}\ge \frac{1}{n}\right\}\jz=0.
\end{align*}
The equivalence $\mu(A)=0\iff Q(A)=0$ follows the same way. Hence,
$Q^{++}\ll\mu^{++}\ll P^{++}\ll \mu^{++}\ll Q^{++}$, where $\mu^{++}:=\mu\vert_{\Omega_{\mu}^{++}}$.
Moreover, for any $A\in\F$, $A\subseteq\Omega_{\mu}^{++}$,
\begin{align*}
P^{++}(A)=P(A)=
\int_A\frac{dP}{d\mu}\,d\mu
=
\int_A\frac{dP}{d\mu}\,d\mu^{++},
\end{align*}
whence $\frac{dP}{d\mu}\vert_{\Omega_{\mu}^{++}}$ is a realization of the Radon-Nikodym derivative
$\frac{dP^{++}}{d\mu^{++}}$, and likewise,
$\frac{dQ}{d\mu}\vert_{\Omega_{\mu}^{++}}$ is a realization of the Radon-Nikodym derivative
$\frac{dQ^{++}}{d\mu^{++}}$. Moreover, by a well-known property of the Radon-Nikodym derivative,
\begin{align*}
\bz\frac{dP}{d\mu}\Big/\frac{dQ}{d\mu}\jz\Bigg\vert_{\Omega_{\mu}^{++}}
=
\bz\frac{dP}{d\mu}\Bigg\vert_{\Omega_{\mu}^{++}}\jz
\Big/
\bz\frac{dQ}{d\mu}\Bigg\vert_{\Omega_{\mu}^{++}}\jz
=
\frac{dP^{++}}{d\mu^{++}}\Big/\frac{dQ^{++}}{d\mu^{++}}
\end{align*}
is a realization of the Radon-Nikodym derivative $dP^{++}/dQ^{++}$. Hence,
\begin{align*}
\int_{\Omega_{\mu}^{++}}f\bz\frac{dP/d\mu}{dQ/d\mu}\jz\,\frac{dQ}{d\mu}\,d\mu
=\int_{\Omega_{\mu}^{++}}f\bz\frac{dP^{++}}{dQ^{++}}\jz\frac{dQ^{++}}{d\mu^{++}}\,d\mu^{++}
=\int_{\Omega_{\mu}^{++}}f\bz\frac{dP^{++}}{dQ^{++}}\jz\,dQ^{++},
\end{align*}
proving the first equality in \eqref{eq:fdiv++ def}. The second equality in 
\eqref{eq:fdiv++ def} then follows by applying 
\eqref{eq:cl fdiv def2} to $P^{++}$ and $Q^{++}$ in place of $P$ and $Q$, respectively,
with $\mu:=Q^{++}$.

Assume now that $P\ll Q$ and $f(0):=f(0^+)\in\bR$. Then we may choose $\mu:=Q$ and 
$dQ/d\mu=dQ/dQ\equiv 1$, 
whence $\Omega_Q^{++}=\{dP/dQ>0\}$,
and
by \eqref{eq:cl fdiv def2},
\begin{align}
\fdiv{f}{}(P\|Q)
&=
f(0^+)\underbrace{Q\bz\left\{\frac{dP}{dQ}=0\right\}\jz}_{=\int_{\{dP/dQ=0\}}\,dQ}+
\tilde f(0^+)\underbrace{P\bz\left\{\frac{dQ}{dQ}=0\right\}\jz}_{=0}+
\int_{\Omega_{Q}^{++}}f\bz\frac{dP/dQ}{dQ/dQ}\jz\,\frac{dQ}{dQ}\,dQ\\
&=
\int_{\{dP/dQ=0\}}f\bz\frac{dP}{dQ}\jz\,dQ
+
\int_{\{dP/dQ>0\}}f\bz\frac{dP}{dQ}\jz\,dQ
=
\int_{\Omega}f\bz\frac{dP}{dQ}\jz\,dQ.
\end{align}
\end{proof}

\begin{lemma}
In the setting of Lemma \ref{lemma:cl fdiv def2},
\begin{align*}
&Q\bz\left\{\frac{dP}{d\mu}=0\right\}\jz=
Q\bz\left\{\frac{dP}{d(P+Q)}=0\right\}\jz,
\ds\ds\ds
P\bz\left\{\frac{dQ}{d\mu}=0\right\}\jz=
P\bz\left\{\frac{dQ}{d(P+Q)}=0\right\}\jz,\\
&\int_{\Omega_{\mu}^{++}}f\bz\frac{dP/d\mu}{dQ/d\mu}\jz\,\frac{dQ}{d\mu}\,d\mu
=
\int_{\Omega_{P+Q}^{++}}f\bz\frac{dP/d(P+Q)}{dQ/d(P+Q)}\jz\,\frac{dQ}{d(P+Q)}\,d(P+Q).
\end{align*}
In particular, each term in \eqref{eq:cl fdiv def2} is separately independent of the dominating measure $\mu$.
\end{lemma}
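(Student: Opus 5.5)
Set $\nu:=P+Q$. Since $P,Q\ll\nu\ll\mu$, the chain rule for Radon-Nikodym derivatives gives, $\mu$-almost everywhere,
\begin{align*}
\frac{dP}{d\mu}=\frac{dP}{d\nu}\,\frac{d\nu}{d\mu},\ds\ds\ds \frac{dQ}{d\mu}=\frac{dQ}{d\nu}\,\frac{d\nu}{d\mu}.
\end{align*}
The whole proof rests on comparing the relevant sets for $\mu$ and for $\nu$ modulo null sets. Let $N:=\{d\nu/d\mu=0\}$. Then $\nu(N)=\int_N \frac{d\nu}{d\mu}\,d\mu=0$, and hence $P(N)=Q(N)=0$. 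Off $N$, the factor $d\nu/d\mu$ is strictly positive, so there the zero sets agree: $\{dP/d\mu=0\}\setminus N=\{dP/d\nu=0\}\setminus N$ up to a $\mu$-null set, and similarly for $Q$ and for $\Omega^{++}$.

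For the first identity, note that $\{dP/d\mu=0\}$ and $\{dP/d\nu=0\}$ differ only by a subset of $N$ together with a $\mu$-null set. Since $Q\ll\mu$, the $\mu$-null set is $Q$-null, and $Q(N)=0$. The two $Q$-measures therefore coincide. The second identity follows in exactly the same way with the roles of $P$ and $Q$ exchanged.

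For the integral identity, observe that $\Omega_\mu^{++}\subseteq\Omega\setminus N$ up to a $\mu$-null set, and on this set $\Omega_\mu^{++}=\Omega_\nu^{++}$ modulo a $\mu$-null set. The ratio $\frac{dP/d\mu}{dQ/d\mu}$ equals $\frac{dP/d\nu}{dQ/d\nu}$ there, because the common factor $d\nu/d\mu$ cancels. We also have $\frac{dQ}{d\mu}\,d\mu=\frac{dQ}{d\nu}\,\frac{d\nu}{d\mu}\,d\mu$, which gives the measure $\frac{dQ}{d\nu}\,d\nu$. The two integrals therefore agree, provided they are well defined.

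The main obstacle is well-definedness, because $f$ need not have a sign. I would handle it as in the proof of Lemma \ref{lemma:cl fdiv def}. On $\Omega^{++}$ one can subtract an affine minorant $\ell(x)=\alpha+\beta x$ from $f$. The integral of $\ell$ over $\Omega^{++}$ against $dQ/d\mu$ equals $\alpha Q(\Omega^{++})+\beta P(\Omega^{++})$, which is finite. The remaining integrand $f-\ell$ is nonnegative, so the change-of-variables equality above holds in $[0,+\infty]$. This is a Tonelli-type argument and needs no integrability hypothesis.

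Finally, by the first two identities, the first two terms of \eqref{eq:cl fdiv def2} do not depend on $\mu$. By the integral identity, neither does the third term $\fdiv{f}{++}$. This proves the concluding remark that each term in \eqref{eq:cl fdiv def2} is separately independent of the dominating measure $\mu$.
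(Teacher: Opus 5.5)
Your proof is correct and is essentially the paper's: the paper derives the statement from Lemma \ref{lemma:error welldef} (with $F(x,y)=x$, resp.\ $y$, and $J=\{0\}$ for the first two identities, and $F(x,y)=xy$, $J=(0,+\infty)$, $G=\persp{f}$ for the integral), and that lemma's proof is exactly your argument: pass to $\nu=P+Q$, use the chain rule, note that $N=\{d\nu/d\mu=0\}$ is $\nu$-null, and cancel $d\nu/d\mu$ by homogeneity. Your affine-minorant step for well-definedness is a harmless addition, and the one step you leave implicit is discarding $\Omega_{P+Q}^{++}\cap N$ on the right-hand side, which is justified by $\nu(N)=0$.
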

\begin{proof}
Follows immediately from Lemma \ref{lemma:error welldef}, with suitable choices of $F,G$ and $J$.
\end{proof}

\subsection{Classical $f$-divergences from hockey stick divergences}

\begin{ex}
For every $t\in\bR$, consider the functions
\begin{align*}
(\id-t)_+:\,x\mapsto (x-t)_+,\ds\ds\ds
(\id-t)_-:\,x\mapsto (x-t)_-\,,\ds\ds\ds x\in\bR.
\end{align*}
These are convex on $\bR$, with corresponding $f$-divergences
\begin{align}
\hsp{t}{}(P\|Q)
&=
\int_{\Omega}\bz \frac{dP}{d\mu}(\omega)-t \frac{dQ}{d\mu}(\omega)\jz_+\,d\mu(\omega)\nn\\
&=
\int_{\Omega}\bz \frac{dP}{d\mu}(\omega)-t \frac{dQ}{d\mu}(\omega)\jz\egy_{\{\frac{dP}{d\mu}-t \frac{dQ}{d\mu}>0\}}(\omega)\,d\mu(\omega)\nn\\
&=
P\bz\left\{\frac{dP}{d\mu}(\omega)-t \frac{dQ}{d\mu}(\omega)>0\right\}\jz
-tQ\bz\left\{\frac{dP}{d\mu}(\omega)-t \frac{dQ}{d\mu}(\omega)>0\right\}\jz,\label{eq:clhspgen}\\
\hsn{t}{}(P\|Q)
&=
\int_{\Omega}\bz \frac{dP}{d\mu}(\omega)-t \frac{dQ}{d\mu}(\omega)\jz_-\,d\mu(\omega)\nn\\
&=
-\int_{\Omega}\bz \frac{dP}{d\mu}(\omega)-t \frac{dQ}{d\mu}(\omega)\jz\egy_{\{\frac{dP}{d\mu}-t \frac{dQ}{d\mu}\le0\}}(\omega)\,d\mu(\omega)\nn\\
&=
-P\bz\left\{\frac{dP}{d\mu}(\omega)-t \frac{dQ}{d\mu}(\omega)\le 0\right\}\jz
+tQ\bz\left\{\frac{dP}{d\mu}(\omega)-t \frac{dQ}{d\mu}(\omega)\le 0\right\}\jz.
\label{eq:clhsngen}
\end{align}
It is straightforward to verify that 
\begin{align*}
\hsp{t}{}(P\|Q)-\hsn{t}{}(P\|Q)&=D_{(\id-t)}(P\|Q)=P(\Omega)-tQ(\Omega),\\
\hsp{t}{}(P\|Q)+\hsn{t}{}(P\|Q)&=D_{|\id-t|}(P\|Q)=
\int_{\Omega}\abs{\frac{dP}{d\mu}(\omega)-t \frac{dQ}{d\mu}(\omega)}\,d\mu(\omega)=:\norm{P-tQ}_{\tv},
\end{align*}
where the \ki{total variation norm} $\norm{P-tQ}_{\tv}$ is independent of the dominating measure $\mu$.
\end{ex}

\begin{defin}
For any $P,Q$ as above, and any $t\in\bR$, 
$\hsp{t}{}(P\|Q)$ is the \ki{(classical) hockey stick divergence} of $P$ and $Q$ with parameter $t$.
\end{defin}

Lemma \ref{lemma:persp integral repr} yields the following:

\begin{cor}\label{cor:classical fdiv repr0}
Let $f\in\ccf$.
For any $a\in(0,+\infty)$ and any probability measures $P,Q$ on a measurable space $(\Omega,\F)$, 
their classical 
$f$-divergence can be decomposed in terms of hockey stick divergences as
\begin{align*}
\fdiv{f}{}(P\|Q)
=&
f(a)Q(\Omega)+f'(a^+)(P(\Omega)-aQ(\Omega))\\
&+
\int_{(0,a]}\hsn{t}{}(P\|Q)\,df'(t)+\int_{(a,+\infty)}\hsp{t}{}(P\|Q)\,df'(t).
\end{align*}
\end{cor}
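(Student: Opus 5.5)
We apply the pointwise representation \eqref{eq:persp integral repr2} of Lemma \ref{lemma:persp integral repr} with $x:=\frac{dP}{d\mu}(\omega)$ and $y:=\frac{dQ}{d\mu}(\omega)$, and integrate both sides against $\mu$, where $\mu$ is any $\sigma$-finite dominating measure (e.g.\ $\mu:=P+Q$). By Lemma \ref{lemma:cl fdiv def}, the left-hand side integrates to $\fdiv{f}{}(P\|Q)$. The two linear terms integrate to $f(a)Q(\Omega)$ and $f'(a^+)(P(\Omega)-aQ(\Omega))$; both are finite, since $P,Q$ are finite.

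For the integral terms, we exchange the $\mu$-integral and the $df'$-integral using Tonelli's theorem. This is where the structure of $f$ enters. Assume first that $f$ is convex. Then $df'$ is a positive measure, and the integrands $(x-ty)_\pm\ge0$ are jointly measurable in $(\omega,t)$, since they are continuous in $t$ and measurable in $\omega$. Tonelli's theorem therefore gives
\begin{align*}
\int_\Omega\int_{(a,+\infty)}\bz \tfrac{dP}{d\mu}-t\tfrac{dQ}{d\mu}\jz_+\,df'(t)\,d\mu
=\int_{(a,+\infty)}\hsp{t}{}(P\|Q)\,df'(t),
\end{align*}
and similarly for the $(0,a]$ part with $\hsn{t}{}$. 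Each side may equal $+\infty$. For concave $f$, we apply the convex case to $-f$. The identity is linear in $f$: $(-f)'=-f'$, $d(-f)'=-df'$, and $\fdiv{-f}{}=-\fdiv{f}{}$, the last being immediate from the definition via $\persp{-f}=-\persp{f}$ (Lemma \ref{lemma:persp additive}).

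The only real obstacle is the integrability bookkeeping. The integrand $\persp{f}(x,y)$ may not be $\mu$-integrable in the ordinary sense; $\fdiv{f}{}$ is defined only as an extended-real integral, and Lemma \ref{lemma:cl fdiv def} justifies this after subtracting a linear function. We handle this by noting that in the convex case the right-hand side of \eqref{eq:persp integral repr2} splits pointwise into an affine part, which is $\mu$-integrable, plus two nonnegative parts. Hence $\int\persp{f}\,d\mu$ is well defined in $(-\infty,+\infty]$, and additivity of the integral applies, because a finite quantity is added to nonnegative ones. This gives the claimed decomposition, with both sides possibly $+\infty$, or $-\infty$ in the concave case.
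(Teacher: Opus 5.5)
Your proposal is correct and follows the paper's proof: substitute the pointwise representation \eqref{eq:persp integral repr2} of Lemma \ref{lemma:persp integral repr} into the defining integral, reduce to convex $f$, and exchange the $\mu$- and $df'$-integrations via Fubini--Tonelli. The paper justifies well-definedness by subtracting a linear function from $f$, whereas you split off the integrable affine part directly; the two justifications are equivalent, and yours makes the step slightly more explicit.
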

\begin{proof}
Using Lemma \ref{lemma:persp integral repr},
\begin{align*}
\fdiv{f}{}(P\|Q)
&=
\int_{\Omega}\persp{f}\bz\frac{dP}{d\mu}(\omega),\frac{dQ}{d\mu}(\omega)\jz\,d\mu(\omega)\\
&=
\int_{\Omega}\left[\frac{dQ}{d\mu}(\omega)f(a)+f'(a^+)\bz\frac{dP}{d\mu}(\omega)-a\frac{dQ}{d\mu}(\omega)\jz
+\int_{(0,a]}\bz\frac{dP}{d\mu}(\omega)-t\frac{dQ}{d\mu}(\omega)\jz_-\,df'(t)\right.\\
& \hspace{7.1cm}
+\left.\int_{(a,+\infty)}\bz\frac{dP}{d\mu}(\omega)-t\frac{dQ}{d\mu}(\omega)\jz_+\,df'(t)\right]\,d\mu(\omega)\\
&=
f(a)Q(\Omega)+f'(a^+)(P(\Omega)-aQ(\Omega))+
\int_{(0,a]}\hsn{t}{}(P\|Q)\,df'(t)+\int_{(a,+\infty)}\hsp{t}{}(P\|Q)\,df'(t).
\end{align*}
Interchanging the order of integrations is justified again by assuming that $f$ is convex, subtracting a linear function if necessary to get a non-negative function, and then use the Fubini-Tonelli theorem.
\end{proof}

When $\Omega$ is finite, we will always use the counting measure as the dominating measure, and identify 
$P$ and $Q$ with their corresponding density functions $\rho$ and $\sigma$. Moreover, 
we further identify $\rho$ and $\sigma$ with commuting matrices
$\sum_{\omega\in\Omega}\rho(\omega)\pr{\omega}$ and 
$\sum_{\omega\in\Omega}\sigma(\omega)\pr{\omega}$, and use the notations $\rho$ and $\sigma$ for these matrices
as well. Vice versa, a pair of commuting PSD operators $\rho,\sigma$ on a finite-dimensional Hilbert space can be identified with a pair of non-negative functions on $[\dim\hil]$ via
$\rho\mapsto(\inner{e_i}{\rho e_i})_{i=1}^{\dim\hil}$,
$\sigma\mapsto(\inner{e_i}{\sigma e_i})_{i=1}^{\dim\hil}$,
where $(e_i)_{i=1}^{\dim\hil}$ is any ONB basis jointly diagonalizing both.
With these identifications, Corollary \ref{cor:classical fdiv repr0} can be reformulated as follows: 

\begin{cor}\label{cor:classical fdiv repr}
Let $f\in\ccf$.
For any $a\in(0,+\infty)$ and any commuting pair $\rho,\sigma\in\B(\hil)\p$, their classical 
$f$-divergence can be decomposed in terms of hockey stick divergences as
\begin{align}
\fdiv{f}{}(\rho\|\sigma)
&=
f(a)\Tr\sigma+f'(a^+)\Tr(\rho-a\sigma)+\int_{(0,a]}\Tr(\rho-t\sigma)_-\,df'(t)
+\int_{(a,+\infty)}\Tr(\rho-t\sigma)_+\,df'(t)
\label{eq:classical fdiv repr1}\\
&=
f(a)\Tr\sigma+f'(a^+)\Tr(\rho-a\sigma)+\int_{(0,a]}\hsn{t}{}(\rho\|\sigma)\,df'(t)
+\int_{(a,+\infty)}\hsp{t}{}(\rho\|\sigma)\,df'(t).
\label{eq:classical fdiv repr2}
\end{align}
\end{cor}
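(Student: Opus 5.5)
This is a direct translation of Corollary \ref{cor:classical fdiv repr0} through the identification of commuting PSD operators with pairs of non-negative functions described just above. Fix an orthonormal basis $(e_i)_{i=1}^{\dim\hil}$ that jointly diagonalizes $\rho$ and $\sigma$. Write $\rho=\sum_i\rho(i)\pr{e_i}$ and $\sigma=\sum_i\sigma(i)\pr{e_i}$. Let $P,Q$ be the finite positive measures on $\Omega:=[\dim\hil]$ with densities $\rho(\cdot),\sigma(\cdot)$ with respect to the counting measure $\mu$. The classical $f$-divergence $\fdiv{f}{}(\rho\|\sigma)$ is by definition $\fdiv{f}{}(P\|Q)$. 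By Lemma \ref{lemma:cl fdiv def}, this equals $\sum_i\persp{f}(\rho(i),\sigma(i))$.

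Corollary \ref{cor:classical fdiv repr0} was stated for probability measures. Its proof, however, uses only finiteness of $P,Q$ and Lemma \ref{lemma:persp integral repr}, so it applies verbatim to our non-normalized $P,Q$. Alternatively, one can rerun that proof directly in the finite setting: apply \eqref{eq:persp integral repr2} pointwise with $x=\rho(i)$, $y=\sigma(i)$ and sum over $i$. The sum over $i$ is finite, so exchanging it with the $df'$-integrals is trivial, provided every term is well defined in $(-\infty,+\infty]$. To guarantee this, assume $f$ convex, which is no loss since the concave case follows by passing to $-f$. Then $df'$ is a positive measure and the integrands $(x-ty)_\pm$ are non-negative, so the integrals exist in $[0,+\infty]$ and finite additivity holds.

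It remains to identify the terms with the operator expressions. First, $\sum_i\sigma(i)=\Tr\sigma$ and $\sum_i(\rho(i)-a\sigma(i))=\Tr(\rho-a\sigma)$. Next, since $\rho-t\sigma=\sum_i(\rho(i)-t\sigma(i))\pr{e_i}$, functional calculus gives
\begin{align*}
\Tr(\rho-t\sigma)_\pm=\sum_i(\rho(i)-t\sigma(i))_\pm .
\end{align*}
The right-hand side is exactly $\hsp{t}{}(P\|Q)$, respectively $\hsn{t}{}(P\|Q)$, computed with the counting measure. This yields both \eqref{eq:classical fdiv repr1} and \eqref{eq:classical fdiv repr2}.

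The argument is essentially bookkeeping. The one point needing care is the well-definedness of the sum when $f(0^+)$ or $\tilde f(0^+)$ is infinite, i.e.\ when some $\rho(i)$ or $\sigma(i)$ vanishes. This is handled by Lemma \ref{lemma:persp integral repr}, which already covers the boundary cases $x=0$ or $y=0$, together with the convexity reduction, which ensures all infinities in the sum are $+\infty$.
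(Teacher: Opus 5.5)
Your proposal is correct and follows the paper's argument: the paper obtains this corollary as a reformulation of Corollary~\ref{cor:classical fdiv repr0}, identifying a commuting pair with its joint eigenvalue lists on a finite set equipped with the counting measure, just as you do. Your remarks that the probability-measure hypothesis there is inessential, and that reducing to convex $f$ makes the exchange of the finite sum with the $df'$-integrals unproblematic, cover exactly the points the paper leaves implicit.
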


\begin{rem}
In \cite[Proposition 3]{Sason_Verdu_fdiv2016},
the following integral representation was given for 
$\fdiv{f}{}(\rho\|\sigma)$ with normalized $\rho$ and $\sigma$ and under the assumption 
that $f$ is twice continuously differentiable and $f(1)=0$:
\begin{align}
\fdiv{f}{}(\rho\|\sigma)
&=
\int_{1}^{+\infty}\frac{1}{t^3}f''\bz\frac{1}{t}\jz\Tr(\sigma-t\rho)_+\,dt
+\int_1^{+\infty}f''(t)\Tr(\rho-t\sigma)_+\,dt.
\end{align}
This follows immediately from \eqref{eq:classical fdiv repr1} by choosing $a:=1$ and rewriting the integral 
$\int_{0}^af''(t)\Tr(\rho-t\sigma)_-\,dt$ using the change of variables 
$u:=1/t$ (see also Section \ref{sec:symmetry} for more detailes on the latter).
The proof in \cite{Sason_Verdu_fdiv2016} is not very direct and it is considerably more complicated than the one 
based on Lemma \ref{lemma:convexf repr}.
A direct proof of \eqref{eq:classical fdiv repr1} similar to the one based on Lemma \ref{lemma:convexf repr} was given in 
\cite[Appendix A]{HircheTomamichel_integral}, using \eqref{eq:Taylor remainder2} for the remainder term of the first-order Taylor expansion instead 
of \eqref{eq:convexf repr1}. Note that $\rho^0=\sigma^0$ was explicitly assumed in 
\cite{Sason_Verdu_fdiv2016}, and implicitly in \cite{HircheTomamichel_integral}; the non-trivial part of the proof of Lemma \ref{lemma:persp integral repr} is proving 
\eqref{eq:persp integral repr1} when $xy=0$, which then allows to obtain 
\eqref{eq:classical fdiv repr1} without any restriction on the supports.
\end{rem}

\begin{rem}\label{rem:cl HT formula}
Note that since 
\begin{align*}
\hsn{t}{}(\rho\|\sigma)=\Tr(\rho-t\sigma)_-=
\Tr(\rho-t\sigma)_+-\Tr(\rho-t\sigma),
\end{align*}
\eqref{eq:classical fdiv repr1}--\eqref{eq:classical fdiv repr2} can also be written as 
\begin{align}
\fdiv{f}{}(\rho\|\sigma)
=&
f(a)\Tr\sigma+f'(a^+)\Tr(\rho-a\sigma)+\int_{(0,a]}\left[\Tr(\rho-t\sigma)_+-\Tr(\rho-t\sigma)\right]\,df'(t)\nn\\
&\hspace{5cm}+\int_{(a,+\infty)}\Tr(\rho-t\sigma)_+\,df'(t)
\nn\\
=&
f\bz\frac{\Tr\rho}{\Tr\sigma}\jz\Tr\sigma
+f'\bz\bz\frac{\Tr\rho}{\Tr\sigma}\jz^+\jz\underbrace{\Tr\bz\rho-\bz\frac{\Tr\rho}{\Tr\sigma}\jz\sigma\jz}_{=0}
\nn\\
&+\int_{(0,\Tr\rho/\Tr\sigma]}\left[\Tr(\rho-t\sigma)_+-\Tr(\rho-t\sigma)\right]\,df'(t)
+\int_{(\Tr\rho/\Tr\sigma,+\infty)}\Tr(\rho-t\sigma)_+\,df'(t)
\nn\\
=&
f\bz\frac{\Tr\rho}{\Tr\sigma}\jz\Tr\sigma+\int_{(0,+\infty)}
\hsht{t}{}(\rho\|\sigma)\,df'(t),
\label{eq:et decomposition}
\end{align}
where we chose $a:=\Tr\rho/\Tr\sigma$, and 
\begin{align*}
\hsht{t}{}(\rho\|\sigma):=\hsp{t}{}(\rho\|\sigma)-\bz\Tr(\rho-t\sigma)\jz_+
=
\begin{cases}
\hsp{t}{}(\rho\|\sigma),&t\ge\frac{\Tr\rho}{\Tr\sigma},\\
\hsp{t}{}(\rho\|\sigma)-\Tr(\rho-t\sigma)=\hsn{t}{}(\rho\|\sigma),&t\le\frac{\Tr\rho}{\Tr\sigma}.
\end{cases}
\end{align*}
This expression appeared in \cite[Lemma 2.5]{HircheTomamichel_integral} in the more general quantum setting, albeit restricted to the case where $\Tr\rho=1=\Tr\sigma$ and $f$ is twice differentiable.
Note, however, that $\hsht{t}{}$ is not an $f$-divergence, i.e., there exists no $f\in\bR^{(0,+\infty)}$
such that 
$\hsht{t}{}(\rho\|\sigma)=\fdiv{f}{}(\rho\|\sigma)$ would hold for every commuting 
$\rho,\sigma$, and hence \eqref{eq:et decomposition} does not give a decomposition 
of an $f$-divergence in terms of a family of primitive $f$-divergences, unlike 
\eqref{eq:classical fdiv repr1}--\eqref{eq:classical fdiv repr2}.
\end{rem}

\subsection{Classical $f$-divergences from Neyman-Pearson error probabilities}
\label{sec:clfdiv from NP}

Lemma \ref{lemma:error welldef} yields immediately the following:

\begin{cor}\label{cor:classical NP}
In the setting of Lemma \ref{lemma:error welldef},
\begin{align*}
\erri{t}{}{P}{Q}&:=P\bz\left\{\frac{dP}{d\mu}-t\frac{dQ}{d\mu}\le 0\right\}\jz
=P\bz\left\{\frac{dP}{d(P+Q)}-t\frac{dQ}{d(P+Q)}\le 0\right\}\jz,\\
\errii{t}{}{P}{Q}&:=Q\bz\left\{\frac{dP}{d\mu}-t\frac{dQ}{d\mu}> 0\right\}\jz=
Q\bz\left\{\frac{dP}{d(P+Q)}-t\frac{dQ}{d(P+Q)}> 0\right\}\jz,
\end{align*}
are well-defined in the sense that they are independent of the choice of the dominating measure $\mu$.
\end{cor}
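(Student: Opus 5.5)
The plan is to compare an arbitrary dominating measure $\mu$ with the canonical one, $\nu:=P+Q$. This is the content of Lemma \ref{lemma:error welldef}, applied to the function $(x,y)\mapsto x-ty$ with $J=(-\infty,0]$ for $\erri{t}{}{P}{Q}$ and $J=(0,+\infty)$ for $\errii{t}{}{P}{Q}$. For completeness, I would verify it directly as follows.

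Let $\mu$ be any $\sigma$-finite measure dominating $P$ and $Q$. Then $\nu\ll\mu$, so set $h:=\frac{d\nu}{d\mu}\ge 0$. By the chain rule for Radon--Nikodym derivatives, $\frac{dP}{d\mu}=\frac{dP}{d\nu}\,h$ and $\frac{dQ}{d\mu}=\frac{dQ}{d\nu}\,h$ hold $\mu$-a.e. Consequently,
\begin{align*}
\frac{dP}{d\mu}-t\frac{dQ}{d\mu}=h\bz\frac{dP}{d\nu}-t\frac{dQ}{d\nu}\jz\ds\ds\mu\text{-a.e.}
\end{align*}
A $\mu$-null exceptional set is also $P$- and $Q$-null, so it does not affect the values of $P$ and $Q$ on the sets in question.

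On $\{h>0\}$, the two expressions $\frac{dP}{d\mu}-t\frac{dQ}{d\mu}$ and $\frac{dP}{d\nu}-t\frac{dQ}{d\nu}$ have the same sign. Hence the sets $\{\frac{dP}{d\mu}-t\frac{dQ}{d\mu}\le 0\}$ and $\{\frac{dP}{d\nu}-t\frac{dQ}{d\nu}\le 0\}$ can differ only inside $\{h=0\}$, up to a $\mu$-null set. The same holds for the corresponding sets with strict inequality $>0$.

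The remaining point, which is the only mildly delicate one, is to show that $\{h=0\}$ is both $P$-null and $Q$-null. This follows from
\begin{align*}
\nu(\{h=0\})=\int_{\{h=0\}}h\,d\mu=0,
\end{align*}
together with $P,Q\le\nu$.

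Therefore both error probabilities computed with $\mu$ agree with those computed with $\nu=P+Q$. Since $\mu$ was arbitrary, they are independent of the choice of the dominating measure, which proves the claim.
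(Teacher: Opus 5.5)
Your proposal is correct and matches the paper, which obtains the corollary immediately from Lemma \ref{lemma:error welldef} applied to the degree-one homogeneous function $(x,y)\mapsto x-ty$ with $J=(-\infty,0]$, resp.\ $J=(0,+\infty)$. Your direct verification, via the chain rule and $\nu(\{h=0\})=0$, is just that lemma's proof specialized to this case.
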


\begin{defin}
In the setting of Lemma \ref{lemma:error welldef}, the set 
\begin{align*}
\np{t}{}{P}{Q}:=\left\{\frac{dP}{d(P+Q)}-t\frac{dQ}{d(P+Q)}> 0\right\}
\end{align*}
is called the \ki{Neyman-Pearson test} with parameter $t$.
When $P,Q$ are probability meaures, 
$\erri{t}{}{P}{Q}$ and 
$\errii{t}{}{P}{Q}$ above are the 
\ki{type I and type II error probabilities}, respectively, of 
discriminating between $P$ and $Q$ using the
Neyman-Pearson test $\np{t}{}{P}{Q}$. 
\end{defin}

\begin{rem}
According to Corollary \ref{cor:classical NP}, the error probabilities do not change if 
instead of $\np{t}{}{P}{Q}$, the test 
\begin{align*}
\left\{\frac{dP}{d\mu}-t\frac{dQ}{d\mu}> 0\right\}
\end{align*}
is used, where $\mu$ is any positive measure dominating both $P$ and $Q$.
\end{rem}

\begin{rem}
According to the Hahn decomposition theorem, for every $t\in(0,+\infty)$, there exist 
$\hahnm{t}{\Omega},\hahnp{t}{\Omega}\in\F$ such that 
$\hahnm{t}{\Omega}\cap \hahnp{t}{\Omega}=\emptyset$, 
$\hahnm{t}{\Omega}\cup \hahnp{t}{\Omega}=\Omega$, and 
the restriction of $P-tQ$ onto $\hahnm{t}{\Omega}$ is negative, while 
the restriction of $P-tQ$ onto $\hahnp{t}{\Omega}$ is positive.
Moreover, if $\hahnmt{t}{\Omega}$ 
and $\hahnpt{t}{\Omega}$ have the same properties then 
$|P-tQ|(\hahnm{t}{\Omega}\symdiff \hahnmt{t}{\Omega})=0=|P-tQ|(\hahnp{t}{\Omega}\symdiff\hahnpt{t}{\Omega})$.

It might be tempting to think that for any such $\hahnm{t}{\Omega},\hahnp{t}{\Omega}$, the  
Neyman-Pearson error probabilities can be expressed as
$\erri{t}{}{P}{Q}=P(\hahnm{t}{\Omega})$, 
$\errii{t}{}{P}{Q}=Q(\hahnp{t}{\Omega})$. This, however, is not true, and in fact the value of 
$P(\hahnm{t}{\Omega})$ (resp.~$Q(\hahnp{t}{\Omega})$)
might depend on the specific Hahn decomposition. As a simple example, consider 
$\Omega=[d]$ for some $d\ge 3$, and 
\begin{align*}
P:=\sum_{k=1}^{d-1}\frac{1}{d+1}\egy_{\{k\}}+\frac{2}{d+1}\egy_{\{d\}},\ds\ds\ds
Q:=\frac{2}{d+1}\egy_{\{1\}}+\sum_{k=2}^{d}\frac{1}{d+1}\egy_{\{k\}},
\end{align*}
so that 
\begin{align*}
P-Q=-\frac{1}{d+1}\egy_{\{1\}}+\frac{1}{d+1}\egy_{\{d\}}.
\end{align*}
Then $\hahnm{1}{\Omega}:=\{1\}$, $\hahnp{1}{\Omega}:=\{2,\ldots,d\}$ and 
$\hahnmt{t}{\Omega}:=\{1,\ldots,d-1\}$, $\hahnpt{1}{\Omega}:=\{d\}$ 
are two Hahn decompositions, with
\begin{align*}
P(\hahnm{1}{\Omega})=\frac{1}{d+1}<\frac{d-1}{d+1}=P(\hahnmt{1}{\Omega}),\ds\ds\ds
Q(\hahnp{1}{\Omega})=\frac{d-1}{d+1}>\frac{1}{d+1}=Q(\hahnpt{1}{\Omega}).
\end{align*}
\end{rem}
\bigskip

\begin{lemma}\label{lemma:FG properties}
In the setting of Lemma \ref{lemma:error welldef}, define
\begin{align*}
F(t)&:=\erri{t}{}{P}{Q}=
P\bz\left\{\frac{dP}{d\mu}-t\frac{dQ}{d\mu}\le 0\right\}\jz,\\
G(t)&:=Q(\Omega)-\errii{t}{}{P}{Q}
=Q\bz\left\{\frac{dP}{d\mu}-t\frac{dQ}{d\mu}\le 0\right\}\jz.
\end{align*}
Then $F$ and $G$ are monotone increasing right continuous non-negative functions on $\bR$ with  
\begin{align}\label{eq:FG limits}
&F(t)=0,\ds t\in(-\infty,0], & & 
G(t)=0,\ds t\in(-\infty,0), \ds\ds G(0)=Q\bz\left\{\frac{dP}{d\mu}=0\right\}\jz,\\
&F(+\infty):= \lim_{t\to+\infty}F(t)=P\bz\left\{\frac{dQ}{d\mu}>0\right\}\jz,
& & 
G(+\infty):= \lim_{t\to+\infty}G(t)=Q\bz\left\{\frac{dQ}{d\mu}>0\right\}\jz=Q(\Omega).
\end{align}
\end{lemma}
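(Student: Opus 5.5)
Fix the dominating measure $\mu$ and write $p:=dP/d\mu$, $q:=dQ/d\mu$. By Corollary \ref{cor:classical NP} we may take $\mu$ to be any dominating $\sigma$-finite measure, so these are nonnegative finite measurable functions. Set $A_t:=\{p-tq\le 0\}$, so that $F(t)=P(A_t)$ and $G(t)=Q(A_t)$. Non-negativity is immediate.

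For monotonicity, I would show that $A_s\subseteq A_t$ for $s\le t$. This holds because $q\ge 0$, so $p\le sq$ implies $p\le tq$.

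For right continuity, I would show $\bigcap_{n}A_{t+1/n}=A_t$. If $p\le (t+1/n)q$ for all $n$, then $p\le tq$ in the limit. Since $P$ and $Q$ are finite measures, continuity from above then gives $F(t+1/n)\to F(t)$ and $G(t+1/n)\to G(t)$. Monotonicity upgrades convergence along the sequence $t+1/n$ to the full right limit.

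The values on $(-\infty,0]$ come from the signs of $p$ and $q$:
\begin{itemize}
\item For $t<0$, $A_t=\{p\le tq\}\subseteq\{p=0,\ q=0\}$, since $p\ge 0\ge tq$ forces $p=0=tq$. So $A_t$ is $P$-null and $Q$-null, giving $F(t)=G(t)=0$.
\item For $t=0$, $A_0=\{p=0\}$. Hence $G(0)=Q(\{p=0\})$ and $F(0)=\int_{\{p=0\}}p\,d\mu=0$.
\end{itemize}

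For the limits at $+\infty$, I would write the union $\bigcup_n A_n$ and split according to $q$:
\begin{itemize}
\item On $\{q>0\}$, every point eventually satisfies $p\le nq$, because $p<+\infty$.
\item On $\{q=0\}$, the condition reads $p\le 0$.
\end{itemize}
Hence $\bigcup_n A_n=\{q>0\}\cup\{p=0,\ q=0\}$. By continuity from below and monotonicity, $F(+\infty)=P(\{q>0\})+P(\{p=0\})=P(\{q>0\})$, because $P(\{p=0\})=0$. Likewise $G(+\infty)=Q(\{q>0\})+Q(\{p=q=0\})=Q(\{q>0\})$.

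Finally, $Q(\{q>0\})=Q(\Omega)$, because $Q(\{q=0\})=\int_{\{q=0\}}q\,d\mu=0$; this is the content of Lemma \ref{lemma:zero RN}.

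I do not expect a real obstacle here. The only points needing care are handling the sets where $q=0$ in the $t\to+\infty$ limit, and using the finiteness of $P$ and $Q$ so that continuity from above is available for right continuity.
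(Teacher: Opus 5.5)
Your proof is correct and follows essentially the same route as the paper: you get monotonicity and right continuity from set inclusions and the continuity of the finite measures $P,Q$, the values on $(-\infty,0]$ from the sign constraints together with Lemma \ref{lemma:zero RN}, and the limits at $+\infty$ from the identity $\bigcup_n\{p\le nq\}=\{q>0\}\cup\{p=0\}$. The differences are cosmetic: you obtain $F(t)=0$ for $t<0$ directly rather than from $F(0)=0$ and monotonicity, and in the limit of $F$ the second term of the disjoint decomposition should be $P(\{p=0,\,q=0\})$, which vanishes for the same reason.
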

\begin{proof}
Non-negativity and monotonicity are obvious from the definitions, and right continuity follows from the monotone continuity of measures. We have 
\begin{align*}
F(0)=
P\bz\left\{\frac{dP}{d\mu}=0\right\}\jz=0
\end{align*} 
by Lemma \ref{lemma:zero RN}. Hence, $0\le F(t)\le 0$ for all $t\in(-\infty,0)$ due to non-negativity and monotonicity. For $t\in(-\infty,0)$, we have
\begin{align*}
0\le G(t)
=
Q\bz\left\{\frac{dP}{d\mu}\le t\frac{dQ}{d\mu}\right\}\jz
\le
Q\bz\left\{\frac{dQ}{d\mu}=0\right\}\jz=0,
\end{align*}
according to Lemma \ref{lemma:zero RN}.
The expression for $G(0)$ is trivial by definition.

Note that $t\mapsto\left\{\frac{dP}{d\mu}-t\frac{dQ}{d\mu}\le 0\right\}$ is monotone increasing with 
\begin{align*}
\lim_{t\to+\infty}\left\{\frac{dP}{d\mu}-t\frac{dQ}{d\mu}\le 0\right\}
=
\bigcup_{n\in\bN}\left\{\frac{dP}{d\mu}\le n\frac{dQ}{d\mu}\right\}
=
\left\{\frac{dP}{d\mu}=0\right\}\bigcup \left\{\frac{dQ}{d\mu}>0\right\}.
\end{align*}
From this, $F(+\infty)=P(\{dQ/d\mu>0\})$ follows by the monotone continuity of measures and 
Lemma \ref{lemma:zero RN}, and we also get
\begin{align*}
Q\bz\left\{\frac{dQ}{d\mu}>0\right\}\jz\le G(+\infty)=\lim_{t\to+\infty}
Q\bz\left\{\frac{dP}{d\mu}\le t\frac{dQ}{d\mu}\right\}\jz
\le Q(\Omega)=Q\bz\left\{\frac{dQ}{d\mu}>0\right\}\jz,
\end{align*} 
where the last equality is again due to Lemma \ref{lemma:zero RN}.
\end{proof}

\begin{lemma}\label{lemma:dF dG}
In the setting of Lemma \ref{lemma:FG properties}, let 
\begin{align}\label{eq:canonical RV}
X(\omega):=\left\{\begin{array}{ll}
\frac{\frac{dP}{d\mu}(\omega)}{\frac{dQ}{d\mu}(\omega)},&\frac{dP}{d\mu}(\omega)\frac{dQ}{d\mu}(\omega)>0,\\
0,&\text{otherwise},
\end{array}\right\}
=\begin{cases}
\frac{\frac{dP}{d\mu}(\omega)}{\frac{dQ}{d\mu}(\omega)},&\frac{dQ}{d\mu}(\omega)>0,\\
0,&\text{otherwise}.
\end{cases}
\end{align}
Then 
\begin{align}\label{eq:dF}
dF+\egy_{\{0\}}P\bz\left\{\frac{dQ}{d\mu}=0\right\}\jz=P\circ X\inv,
\ds\ds\ds\ds\ds
dG=Q\circ X\inv.
\end{align}
\end{lemma}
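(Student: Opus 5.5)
Both identities concern finite measures on $\bR$, so I will show that the two sides agree on every half-line $(-\infty,t]$, $t\in\bR$. Such half-lines form a $\pi$-system generating the Borel $\sigma$-algebra. Since $F$ and $G$ are monotone increasing, right continuous and bounded, with $F(-\infty)=G(-\infty)=0$ by Lemma \ref{lemma:FG properties}, their Lebesgue--Stieltjes measures satisfy $dF((-\infty,t])=F(t)$ and $dG((-\infty,t])=G(t)$. Both sides of each identity are finite with the same total mass, which is checked by letting $t\to+\infty$, so agreement on half-lines will give equality of measures.

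The key step is to compute $X^{-1}((-\infty,t])$ for $t\ge 0$, since for $t<0$ this set is empty. By the definition \eqref{eq:canonical RV},
\begin{align*}
\{X\le t\}
=\left\{\frac{dQ}{d\mu}>0,\ \frac{dP}{d\mu}\le t\frac{dQ}{d\mu}\right\}\cup\left\{\frac{dQ}{d\mu}=0\right\}.
\end{align*}
On the other hand, the set appearing in $F$ and $G$ is
\begin{align*}
\left\{\frac{dP}{d\mu}-t\frac{dQ}{d\mu}\le 0\right\}
=\left\{\frac{dQ}{d\mu}>0,\ \frac{dP}{d\mu}\le t\frac{dQ}{d\mu}\right\}\cup\left\{\frac{dQ}{d\mu}=0,\ \frac{dP}{d\mu}=0\right\}.
\end{align*}
These two sets differ only by $\{dQ/d\mu=0,\ dP/d\mu>0\}$.

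For the $Q$-identity, this difference set is $Q$-null by Lemma \ref{lemma:zero RN}. Hence $Q(X\le t)=G(t)$ for $t\ge0$, and for $t<0$ both sides vanish, which gives $dG=Q\circ X^{-1}$.

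For the $P$-identity, the set $\{dP/d\mu=0\}$ is $P$-null, again by Lemma \ref{lemma:zero RN}. Therefore
\begin{align*}
P(X\le t)=F(t)+P\bz\left\{\frac{dQ}{d\mu}=0\right\}\jz,\qquad t\ge 0.
\end{align*}
The extra constant corresponds exactly to the atom $\egy_{\{0\}}P(\{dQ/d\mu=0\})$, since $\dirac{0}((-\infty,t])=1$ precisely when $t\ge0$. For $t<0$ both sides are $0$ because $F(t)=0$. This proves the first identity in \eqref{eq:dF}.

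I expect no real obstacle. The only points needing care are the bookkeeping of the sets where $dP/d\mu$ or $dQ/d\mu$ vanish, and the correct placement of the atom at $0$, which is consistent with $F(0)=0$. Finally, I will note that $X$ is measurable, being a ratio of measurable functions on a measurable set.
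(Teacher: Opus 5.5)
Your proof is correct and follows essentially the paper's argument: you compare the level sets of $X$ with $\{dP/d\mu-t\,dQ/d\mu\le 0\}$, split according to whether $dQ/d\mu$ and $dP/d\mu$ vanish, and discard the null pieces via Lemma \ref{lemma:zero RN}. The only cosmetic difference is that the paper verifies the first identity on intervals $(t_1,t_2]$, whereas you use half-lines $(-\infty,t]$ for both identities; this works equally well because $F$ and $G$ vanish on $(-\infty,0)$.
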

\begin{proof}
For any $-\infty<t_1<t_2<+\infty$, we have 
\begin{align*}
dF\bz(t_1,t_2]\jz
&=
F(t_2^+)-F(t_1^+)=F(t_2)-F(t_1)
=
P\Bigg(\underbrace{\left\{t_1\frac{dQ}{d\mu}<\frac{dP}{d\mu}\le t_2\frac{dQ}{d\mu}\right\}}_{=:A_{t_1,t_2}}\Bigg)
=
P\Bigg( A_{t_1,t_2}\bigcap\left\{\frac{dP}{d\mu}>0\right\}\Bigg)\\
&=
P\Bigg( A_{t_1,t_2}\bigcap\left\{\frac{dP}{d\mu}>0\right\}\bigcap\left\{\frac{dQ}{d\mu}>0\right\}\Bigg)
+
P\Bigg( \underbrace{A_{t_1,t_2}\bigcap\left\{\frac{dP}{d\mu}>0\right\}\bigcap\left\{\frac{dQ}{d\mu}=0\right\}}_{=\emptyset}\Bigg),
\end{align*}
where the first equality is by definition, the second one follows from the rigt continuity of $F$, 
the third equality is again by definition, the fourth one follows from 
Lemma \ref{lemma:zero RN}, and the fifth one is trivial. On the other hand.
\begin{align*}
\{X\in(t_1,t_2]\}=
\bz A_{t_1,t_2}\bigcap\left\{\frac{dP}{d\mu}>0\right\}\bigcap\left\{\frac{dQ}{d\mu}>0\right\}\jz
\bigcup\begin{cases}
\emptyset,&0\notin(t_1,t_2],\\
\left\{\frac{dP}{d\mu}=0\right\}\bigcup\left\{\frac{dQ}{d\mu}=0\right\},&0\in(t_1,t_2].
\end{cases}
\end{align*}
Hence,
\begin{align*}
(P\circ X\inv)((t_1,t_2])
&=
P\Bigg( A_{t_1,t_2}\bigcap\left\{\frac{dP}{d\mu}>0\right\}\bigcap\left\{\frac{dQ}{d\mu}>0\right\}\Bigg)
+\egy_{(t_1,t_2]}(0)P\bz\left\{\frac{dP}{d\mu}=0\right\}\bigcup\left\{\frac{dQ}{d\mu}=0\right\}\jz\\
&=
dF((t_1,t_2])+\underbrace{\egy_{(t_1,t_2]}(0)}_{=\delta_{\{0\}}((t_1,t_2])}P\bz\left\{\frac{dQ}{d\mu}=0\right\}\jz,
\end{align*}
where the second equality is due to Lemma \ref{lemma:zero RN}.
This proves the first equality in \eqref{eq:dF}.

The second equality in \eqref{eq:dF} follows as for any $t\in\bR$,
\begin{align*}
dG((-\infty,t])&=G(t)
=
Q\bz\left\{\frac{dP}{d\mu}\le t\frac{dQ}{d\mu}\right\}\jz
=
Q\bz\left\{\frac{dP}{d\mu}\le t\frac{dQ}{d\mu}\right\}\bigcap
\left\{\frac{dQ}{d\mu}>0\right\}\jz\\
&=
Q\bz\left\{X\le t\right\}\bigcap
\left\{\frac{dQ}{d\mu}>0\right\}\jz
=
Q\bz\left\{X\le t\right\}\jz
=
(Q\circ X\inv)((-\infty,t]).
\end{align*}
Here, the first, the second, and the last equalities follow by definition, the third and the fifth 
equalities are due to Lemma \ref{lemma:zero RN}, and the fourth equality is obvious from the definition of $X$.
\end{proof}

\begin{cor}\label{cor:dF simple}
In the setting of Lemma \ref{lemma:FG properties},
the following are equivalent:
\begin{enumerate}
\item\label{dF simple1}
With the random variable $X$ given in \eqref{eq:canonical RV},
\begin{align}\label{eq:dF simple}
dF=P\circ X\inv,\ds\ds\ds\ds
dG=Q\circ X\inv.
\end{align}
\item\label{dF simple2}
There exists a random variable $X:\,\Omega\to\bR$ such that \eqref{eq:dF simple} holds.
\item\label{dF simple3}
$P\ll Q$.
\end{enumerate}
\end{cor}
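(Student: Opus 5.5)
The plan is to go around the cycle \ref{dF simple1}$\imp$\ref{dF simple2}$\imp$\ref{dF simple3}$\imp$\ref{dF simple1}, with Lemma \ref{lemma:dF dG} doing most of the work. The implication \ref{dF simple1}$\imp$\ref{dF simple2} is trivial, since $X$ from \eqref{eq:canonical RV} is one admissible random variable.

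For \ref{dF simple3}$\imp$\ref{dF simple1}, assume $P\ll Q$. The set $\{dQ/d\mu=0\}$ has $Q$-measure $0$, because $Q(\{dQ/d\mu=0\})=\int_{\{dQ/d\mu=0\}}\frac{dQ}{d\mu}\,d\mu=0$ (this is also the content of Lemma \ref{lemma:zero RN}). Absolute continuity then gives $P(\{dQ/d\mu=0\})=0$. Hence the correction term $\egy_{\{0\}}P(\{dQ/d\mu=0\})$ in \eqref{eq:dF} vanishes, and \eqref{eq:dF} reduces exactly to \eqref{eq:dF simple}.

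For \ref{dF simple2}$\imp$\ref{dF simple3}, suppose some $X$ satisfies \eqref{eq:dF simple}. I would compare total masses rather than the measures themselves. Since $P\circ X\inv$ is the image of $P$ under a real-valued map, its total mass is $P(\Omega)$. By Lemma \ref{lemma:FG properties}, $F$ vanishes on $(-\infty,0]$ and increases to $F(+\infty)=P(\{dQ/d\mu>0\})$, so the total mass of $dF$ is $P(\{dQ/d\mu>0\})$. Equating the two masses and using finiteness of $P$ gives $P(\{dQ/d\mu=0\})=0$.

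It remains to show this implies $P\ll Q$. Take $A$ with $Q(A)=0$. Then $\int_A \frac{dQ}{d\mu}\,d\mu=0$, so $\mu(A\cap\{dQ/d\mu>0\})=0$, and therefore $P(A\cap\{dQ/d\mu>0\})=0$ because $P\ll\mu$. Combining this with $P(\{dQ/d\mu=0\})=0$ yields $P(A)=0$.

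I expect the only delicate point to be the mass computation for $dF$. One must check that the Lebesgue--Stieltjes measure of a right-continuous increasing $F$ with $F(-\infty)=0$ has total mass $F(+\infty)$, and that $F(+\infty)$ is the value stated in Lemma \ref{lemma:FG properties}. Everything else is bookkeeping, and the hypothesis on $G$ in \ref{dF simple2} is not needed for this direction.
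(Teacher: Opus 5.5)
Your proof is correct and is essentially the paper's own argument. It uses the same cycle of implications and the same total-mass comparison $P(\Omega)=(P\circ X\inv)(\bR)=dF(\bR)=F(+\infty)=P(\{dQ/d\mu>0\})$, which gives $P(\{dQ/d\mu=0\})=0$ and hence $P\ll Q$. For the converse you likewise use Lemma \ref{lemma:zero RN} to remove the atom at $0$ in Lemma \ref{lemma:dF dG}.
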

\begin{proof}
\ref{dF simple1}$\imp$\ref{dF simple2} is trivial. Assume \ref{dF simple2}. Then 
\begin{align*}
P(\Omega)=(P\circ X\inv)(\bR)=dF(\bR)=F(+\infty)-F(-\infty)=P\bz\left\{\frac{dQ}{d\mu}>0\right\}\jz,
\end{align*}
where the last equality follows from \eqref{eq:FG limits}.
Thus, for any $A\in\F$, we have 
\begin{align}\label{eq:dF proof1}
P(A)=P\bz A\cap\left\{\frac{dQ}{d\mu}>0\right\}\jz=\int_{A\cap\left\{\frac{dQ}{d\mu}>0\right\}}\frac{dP}{d\mu}\,d\mu.
\end{align}
In particular, if $0=Q(A)=\int_A(dQ/d\mu)\,d\mu$ then $\mu(A\cap\{dQ/d\mu>0\})=0$, and hence
$P(A)=0$ follows from \eqref{eq:dF proof1}. This proves \ref{dF simple2}$\imp$\ref{dF simple3}.
Finally, assume \ref{dF simple3}, i.e., that $P\ll Q$. 
Since $Q\bz\left\{\frac{dQ}{d\mu}=0\right\}\jz=0$ by Lemma \ref{lemma:zero RN}, 
we also have $P\bz\left\{\frac{dQ}{d\mu}=0\right\}\jz=0$, which in turn gives 
\ref{dF simple3} due to Lemma \ref{lemma:dF dG}.
\end{proof}

\begin{lemma}\label{lemma:dF/dG}
In the setting of Lemma \ref{lemma:FG properties},
\begin{align}\label{eq:dF/dG1}
dF=\id\egy_{[0,+\infty)}\cdot dG,
\end{align}
which can be equivalently stated as
\begin{align}\label{eq:dF/dG2}
dF\ll dG\ds\ds\ds\text{and}\ds\ds\ds\frac{dF}{dG}(x)=x\egy_{[0,+\infty)}(x),\ds\ds x\in\bR,
\end{align}
or that for any positive measure $\eta$ on $\B(\bR)$
such that $dF,dG\ll \eta$,
\begin{align}\label{eq:dF/dG3}
\eta\bz\left\{x\in\bR:\,\frac{dF}{d\eta}(x)\ne x\frac{dG}{d\eta}(x)\right\}\jz=0.
\end{align}
\end{lemma}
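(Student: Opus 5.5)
We derive \eqref{eq:dF/dG1} from Lemma \ref{lemma:dF dG}, which expresses both $dF$ and $dG$ as push-forwards under the random variable $X$ of \eqref{eq:canonical RV}. For any Borel set $B\subseteq\bR$ we compute
\begin{align*}
\int_B x\egy_{[0,+\infty)}(x)\,dG(x)=\int_B x\,d(Q\circ X\inv)(x)=\int_{\{X\in B\}}X\,dQ
=\int_{\{X\in B\}}X\frac{dQ}{d\mu}\,d\mu .
\end{align*}
The first equality uses that $X\ge 0$, so $dG$ is supported on $[0,+\infty)$. The second is the change of variables formula for image measures. Since $X\ge 0$, all integrands are non-negative, so no integrability issue arises.

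By the definition of $X$, the function $X\,\frac{dQ}{d\mu}$ equals $\frac{dP}{d\mu}$ on $\{dQ/d\mu>0\}$ and equals $0$ on $\{dQ/d\mu=0\}$. Hence the last integral equals
\begin{align*}
P\bz\{X\in B\}\cap\left\{\frac{dQ}{d\mu}>0\right\}\jz
=(P\circ X\inv)(B)-\egy_B(0)\,P\bz\left\{\frac{dQ}{d\mu}=0\right\}\jz,
\end{align*}
because $X=0$ on $\{dQ/d\mu=0\}$. By the first identity in \eqref{eq:dF}, the right-hand side is exactly $dF(B)$. This proves \eqref{eq:dF/dG1}, where $\id\egy_{[0,+\infty)}\cdot dG$ denotes the measure with density $x\egy_{[0,+\infty)}(x)$ with respect to $dG$.

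The main care point is the atom at $0$. The correction term $\egy_{\{0\}}P(\{dQ/d\mu=0\})$ in \eqref{eq:dF} must cancel exactly against the set where the density $X\,dQ/d\mu$ fails to reproduce $dP/d\mu$. This works because that set lies inside $\{X=0\}$. For the same reason, the weight $x$ at $0$ kills any mass that $dG$ places at $0$, such as $G(0)=Q(\{dP/d\mu=0\})$. In addition, $dF$ has no atom at $0$, since $F(0)=0$ and $F(0^-)=0$ by \eqref{eq:FG limits}. So the two measures are consistent at $0$, and $F$, $G$ are finite, so all the measures involved are finite.

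Finally, we show the reformulations are equivalent.

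\textbf{\eqref{eq:dF/dG1}$\iff$\eqref{eq:dF/dG2}.} Identity \eqref{eq:dF/dG1} states precisely that $dF$ has density $x\egy_{[0,+\infty)}(x)$ with respect to $dG$. By uniqueness of Radon--Nikodym derivatives, this is \eqref{eq:dF/dG2}, where the density is understood $dG$-a.e.

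\textbf{\eqref{eq:dF/dG2}$\implies$\eqref{eq:dF/dG3}.} Let $\eta$ dominate both $dF$ and $dG$. By the chain rule, $\frac{dF}{d\eta}=x\egy_{[0,+\infty)}\frac{dG}{d\eta}$ holds $\eta$-a.e. Since $dG$ is carried by $[0,+\infty)$, we have $\frac{dG}{d\eta}=0$ $\eta$-a.e.\ on $(-\infty,0)$, so the indicator can be dropped, which gives \eqref{eq:dF/dG3}.

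\textbf{\eqref{eq:dF/dG3}$\implies$\eqref{eq:dF/dG1}.} Take $\eta:=dF+dG$. Then \eqref{eq:dF/dG3} gives $dF(B)=\int_B x\frac{dG}{d\eta}\,d\eta$ for every Borel $B$. Restricting to $B\subseteq(-\infty,0)$, where $dF(B)=0$, and using $x<0$ there, the non-negative function $\frac{dG}{d\eta}$ must vanish $\eta$-a.e.\ on $(-\infty,0)$. Therefore $dF(B)=\int_B x\egy_{[0,+\infty)}\,dG$ for all Borel $B$, which is \eqref{eq:dF/dG1}.
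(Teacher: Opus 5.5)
Your proof is correct and follows essentially the paper's route: both arguments start from the push-forward representation of Lemma \ref{lemma:dF dG} and use the change of variables $\int_B x\,d(Q\circ X\inv)=\int_{\{X\in B\}}X\frac{dQ}{d\mu}\,d\mu$, with $X\frac{dQ}{d\mu}=\frac{dP}{d\mu}$ on $\{dQ/d\mu>0\}$. The differences are cosmetic. You absorb the atom at $0$ through the correction term in \eqref{eq:dF}, whereas the paper splits $\bR$ into $(-\infty,0]$ and $(0,+\infty)$. You also spell out the equivalence of \eqref{eq:dF/dG1}--\eqref{eq:dF/dG3}, which the paper leaves implicit.
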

\begin{proof}
For any $A\in\B(\bR)$, we have 
\begin{align*}
dF(A)=\underbrace{dF(A\cap(-\infty,0))}_{=0}+
\underbrace{dF(A\cap\{0\})}_{=dF(\{0\})=0}+dF(A\cap(0,+\infty)),
\end{align*}
and
\begin{align*}
(\id\egy_{[0,+\infty)}\cdot dG)(A)=
\underbrace{\int_{A\cap\{0\}}\id\,dG}_{=0\cdot dG(\{0\})=0}
+
\int_{A\cap(0,\infty)}\id\,dG,
\end{align*}
where we used that $dF((-\infty,0])=0=dG((-\infty,0))$, according to Lemma \ref{lemma:FG properties},
which is equivalent to 
$\frac{dF}{d\eta}(x)=0= x\frac{dG}{d\eta}(x)$ for $\eta$-a.e. $x\in(-\infty,0]$.
On the other hand, if $A\subseteq(0,+\infty)$ then 
$dF(A)=(P\circ X\inv)(A)$, $dG(A)=(Q\circ X\inv)(A)$, 
according to Lemma \ref{lemma:dF dG}, and we have
\begin{align*}
\int_A\frac{d(P\circ X\inv)}{d\eta}(x)\,d\eta(x)&=
(P\circ X\inv)(A)=P\bz X\inv(A)\jz\\
&=
\int_{X\inv(A)}\frac{dP}{d\mu}(\omega)\,d\mu(\omega)
=
\int_{X\inv(A)}X(\omega)\frac{dQ}{d\mu}(\omega)\,d\mu(\omega)
\\
&=
\int_{\Omega}\underbrace{\egy_{X\inv(A)}}_{=(\egy_A\circ X)(\omega)}X(\omega)\frac{dQ}{d\mu}(\omega)\,d\mu(\omega)
=
\int_{\Omega}((\id\egy_A)\circ X)(\omega)\,dQ(\omega)\\
&=
\int_{\bR}(\id\egy_A)(x)\,d(Q\circ X\inv)(x)
=
\int_A x\,d(Q\circ X\inv)(x)\\
&=
\int_{A}x\frac{d(Q\circ X\inv)}{d\eta}(x)\,d\eta(x).
\end{align*}
In the second line we used that for every $\omega\in X\inv(A)\subseteq X\inv((0,+\infty))$, 
$X(\omega)=((dP/d\mu)(\omega))/((dQ/d\mu)(\omega))$ by definition, and the rest of the steps are obvious.
This completes the proof.
\end{proof}

\begin{cor}\label{lemma:dG/dF}
In the setting of Lemma \ref{lemma:dF/dG},
\begin{align}\label{eq:dG/dF1}
dG\big\vert_{(0,+\infty)}\ll dF\big\vert_{(0,+\infty)}\ds\ds\text{with}\ds\ds
\frac{dG\big\vert_{(0,+\infty)}}{dF\big\vert_{(0,+\infty)}}(x)=\frac{1}{x},\ds\ds x\in(0,+\infty).
\end{align}
Moreover, 
\begin{align}\label{eq:dG/dF2}
Q\ll P\ds\imp\ds dG\ll dF\ds\ds\text{with}\ds\ds \frac{dG}{dF}(x)=\frac{1}{x}\egy_{(0,+\infty)}(x),\ds x\in\bR.
\end{align}
\end{cor}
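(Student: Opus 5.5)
The plan is to read off both claims from Lemma \ref{lemma:dF/dG}, i.e.\ from $dF=\id\egy_{[0,+\infty)}\cdot dG$, together with the description of $dF$, $dG$ near $0$ from Lemmas \ref{lemma:FG properties} and \ref{lemma:dF dG}.

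For \eqref{eq:dG/dF1}, I restrict \eqref{eq:dF/dG1} to $(0,+\infty)$. This gives $dF\vert_{(0,+\infty)}=\id\cdot dG\vert_{(0,+\infty)}$, where the density $x\mapsto x$ is strictly positive and finite on $(0,+\infty)$. Hence for any Borel $A\subseteq(0,+\infty)$ we have
\begin{align*}
dG(A)=\int_A \frac{1}{x}\,x\,dG(x)=\int_A\frac{1}{x}\,dF(x).
\end{align*}
To justify the first equality I apply monotone convergence on $A\cap[1/n,n]$, or simply note that $1/x\cdot x=1$ pointwise. The second equality is the chain rule for densities with respect to the non-negative measurable function $1/x$. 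Both sides are non-negative, so no integrability issue arises; sets of infinite $dG$-measure are impossible anyway, since $dG$ is finite with total mass $Q(\Omega)$. In particular, $dF(A)=0$ forces $dG(A)=0$, which proves the absolute continuity with density $1/x$.

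For \eqref{eq:dG/dF2}, I split $\bR=(-\infty,0)\cup\{0\}\cup(0,+\infty)$. By Lemma \ref{lemma:FG properties}, $dG((-\infty,0))=0$ and $dF((-\infty,0])=0$. The density $\frac1x\egy_{(0,+\infty)}$ vanishes on $(-\infty,0]$, so the claim reduces to two things: the part on $(0,+\infty)$, which is handled by the previous paragraph, and showing $dG(\{0\})=0$. By Lemma \ref{lemma:FG properties} (or by $dG=Q\circ X^{-1}$ from Lemma \ref{lemma:dF dG}), we have $dG(\{0\})=G(0)-G(0^-)=Q(\{dP/d\mu=0\})$. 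If $Q\ll P$, then since $P(\{dP/d\mu=0\})=0$ by Lemma \ref{lemma:zero RN}, also $Q(\{dP/d\mu=0\})=0$. Assembling the three pieces gives $dG(A)=\int_A \frac1x\egy_{(0,+\infty)}\,dF$ for every Borel $A$.

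The only delicate point is the atom at $0$: $dG$ may charge $\{0\}$ while $dF$ never does, and this is exactly where the hypothesis $Q\ll P$ enters. Everything else is bookkeeping with Radon--Nikodym densities.
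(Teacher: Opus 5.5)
Your proposal is correct and is essentially the paper's proof. You restrict $dF=\id\egy_{[0,+\infty)}\cdot dG$ to $(0,+\infty)$ to get $\int_A\frac1x\,dF=\int_A dG$ there; for the second claim the only extra inputs are that $dF$ and $dG$ do not charge $(-\infty,0)$ and that $dG(\{0\})=G(0)=Q(\{dP/d\mu=0\})$, which vanishes under $Q\ll P$ by Lemma \ref{lemma:zero RN}.
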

\begin{proof}
For any $A\in\B(\bR)$, $A\subseteq(0,+\infty)$,
\begin{align*}
\int_A \frac{1}{x}\,dF(x)=\int_A \frac{1}{x}x\,dG(x)=\int_A\,dG(x),
\end{align*}
where the first equality follows from \eqref{eq:dF/dG1}. This proves
\eqref{eq:dG/dF1}. Alternatively, \eqref{eq:dG/dF1} follows immediately from 
\eqref{eq:dF/dG3}. If $Q\ll P$ then $dG(\{0\})=0=dF(\{0\})$, according to 
\eqref{eq:FG limits}, whence $dG((-\infty,0])=0=dF((-\infty,0])$, and thus 
\eqref{eq:dG/dF2} follows from \eqref{eq:dG/dF1}.
\end{proof}

\begin{lemma}\label{lemma:cl fdiv alt}
In the setting of Lemma \ref{lemma:FG properties}, 
we have 
\begin{align}
\fdiv{f}{}(P\|Q)&=D_f^{(0,+\infty)}+
f(0^+)Q\bz\left\{\frac{dP}{d\mu}=0\right\}\jz+
\tilde f(0^+)P\bz\left\{\frac{dQ}{d\mu}=0\right\}\jz,
\label{eq:cl fdiv alt1}\\
\fdiv{f}{}(dF\|dG)
&=
D_f^{(0,+\infty)}
+f(0^+)Q\bz\left\{\frac{dP}{d\mu}=0\right\}\jz,
\label{eq:cl fdiv alt2}\\
\fdiv{f}{}(P\circ X\inv\|Q\circ X\inv)
&=
D_f^{(0,+\infty)}
+\persp{f}\bz P\bz\left\{\frac{dQ}{d\mu}=0\right\}\jz,Q\bz\left\{\frac{dP}{d\mu}=0\right\}\jz\jz,
\label{eq:cl fdiv alt3}
\end{align}
where 
\begin{align}
D_f^{(0,+\infty)}&:=\int_{\Omega_{\mu}^{++}}f\bz\frac{dP/d\mu}{dQ/d\mu}\jz\,\frac{dQ}{d\mu}\,d\mu
\label{eq:cl fdiv alt4}\\
&=
\fdiv{f}{}\bz dF\vert_{(0,+\infty)}\|dG\vert_{(0,+\infty)}\jz
=
\int_{(0,+\infty)}f\bz x\jz\,dG(x)
\label{eq:cl fdiv alt5}\\
&=\fdiv{f}{}\bz (P\circ X\inv)\vert_{(0,+\infty)}\|(Q\circ X\inv)\vert_{(0,+\infty)}\jz
=\int_{(0,+\infty)}f\bz x\jz\,d(Q\circ X\inv)(x).
\label{eq:cl fdiv alt6}
\end{align}
\end{lemma}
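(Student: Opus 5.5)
Equation \eqref{eq:cl fdiv alt1} is exactly \eqref{eq:cl fdiv def2} of Lemma \ref{lemma:cl fdiv def2}, with $\fdiv{f}{++}(P\|Q)$ renamed to $D_f^{(0,+\infty)}$ as in \eqref{eq:cl fdiv alt4}. So the real work is in the identifications \eqref{eq:cl fdiv alt5}--\eqref{eq:cl fdiv alt6}. From these, \eqref{eq:cl fdiv alt2} and \eqref{eq:cl fdiv alt3} follow by applying \eqref{eq:cl fdiv def2} to the pairs $(dF,dG)$ and $(P\circ X\inv,Q\circ X\inv)$ on $\bR$, after locating where each measure vanishes.

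\emph{Step 1: the representations \eqref{eq:cl fdiv alt6}.} On $\Omega_{\mu}^{++}$ we have $X=\frac{dP/d\mu}{dQ/d\mu}>0$. Conversely, $\{X>0\}\subseteq\Omega_\mu^{++}$ up to the convention $X=0$ elsewhere. Hence
\begin{align*}
D_f^{(0,+\infty)}=\int_{\{X>0\}}f(X)\,dQ=\int_{(0,+\infty)}f(x)\,d(Q\circ X\inv)(x).
\end{align*}
Integrability here is handled as in Lemma \ref{lemma:cl fdiv def}: subtract a linear function, after which the linear part is trivially transported by change of variables. Next I compute $\fdiv{f}{}$ of the restricted image measures. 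By Lemma \ref{lemma:dF/dG}, together with Lemma \ref{lemma:dF dG} on $(0,+\infty)$, we have $d(P\circ X\inv)|_{(0,\infty)}=x\,d(Q\circ X\inv)|_{(0,\infty)}$. These two restrictions are therefore mutually absolutely continuous, with Radon--Nikodym derivative $x$. The last clause of Lemma \ref{lemma:cl fdiv def2} then gives $\int_{(0,\infty)} f(x)\,d(Q\circ X\inv)$, provided $f(0^+)$ is finite. Even when $f(0^+)$ is infinite, \eqref{eq:cl fdiv def2} with $\mu:=Q\circ X\inv|_{(0,\infty)}$ has vanishing boundary terms, because the density $x$ is positive everywhere there. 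Finally, \eqref{eq:cl fdiv alt5} is identical to \eqref{eq:cl fdiv alt6}, since $dF=P\circ X\inv$ and $dG=Q\circ X\inv$ on $(0,+\infty)$ by Lemma \ref{lemma:dF dG}.

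\emph{Step 2: \eqref{eq:cl fdiv alt2}.} Decompose $\bR=(-\infty,0)\cup\{0\}\cup(0,+\infty)$. By Lemma \ref{lemma:FG properties}, $dF$ and $dG$ vanish on $(-\infty,0)$ and $dF(\{0\})=0$, while $dG(\{0\})=G(0)=Q(\{dP/d\mu=0\})$. Take the dominating measure $\eta=dF+dG$. Since $f$-divergences are additive over a measurable partition, the $\{0\}$ piece contributes $\persp{f}(0,G(0))=f(0^+)Q(\{dP/d\mu=0\})$. The $(0,+\infty)$ piece contributes $D_f^{(0,+\infty)}$ by Step 1.

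\emph{Step 3: \eqref{eq:cl fdiv alt3}.} The image measures likewise live on $[0,+\infty)$. By Lemma \ref{lemma:dF dG}, the atoms at $0$ are $(P\circ X\inv)(\{0\})=P(\{dQ/d\mu=0\})$ and $(Q\circ X\inv)(\{0\})=dG(\{0\})=Q(\{dP/d\mu=0\})$. The $\{0\}$ piece therefore contributes $\persp{f}$ of these two numbers, and adding the $(0,+\infty)$ piece from Step 1 gives the claim.

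\emph{Main obstacle.} The delicate point is bookkeeping rather than any single estimate. One must verify additivity of $\fdiv{f}{}$ over a partition of the sample space, which is immediate from the integral definition \eqref{eq:cl fdiv def} with a common dominating measure, once one ensures that sums of infinities of opposite sign do not occur. By Condition \ref{cond:fdiv}, such cancellation is excluded after reducing to non-negative $f$ via Lemma \ref{lemma:persp additive}.
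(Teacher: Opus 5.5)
Your proposal is correct and follows essentially the same route as the paper. You read off \eqref{eq:cl fdiv alt1} directly from Lemma~\ref{lemma:cl fdiv def2}, identify the restricted measures on $(0,+\infty)$ via Lemmas~\ref{lemma:dF dG} and~\ref{lemma:dF/dG}, and get \eqref{eq:cl fdiv alt2}--\eqref{eq:cl fdiv alt3} by splitting off the atom at $0$. The differences are minor: you obtain \eqref{eq:cl fdiv alt2} by additivity over $\{0\}\cup(0,+\infty)$ instead of applying \eqref{eq:cl fdiv def2} with $dG$ as dominating measure, and you make explicit the change of variables $\int_{\Omega_\mu^{++}}f(X)\,dQ=\int_{(0,+\infty)}f\,d(Q\circ X\inv)$ linking \eqref{eq:cl fdiv alt4} to \eqref{eq:cl fdiv alt6}, which the paper's proof leaves implicit.
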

\begin{proof}
The expressions in \eqref{eq:cl fdiv alt1} and \eqref{eq:cl fdiv alt4} are from Lemma \ref{lemma:cl fdiv def2}.
By Lemma \ref{lemma:dF dG},
$dF\vert_{(0,+\infty)}=(P\circ X\inv)\vert_{(0,+\infty)}$,
$dG\vert_{(0,+\infty)}=(Q\circ X\inv)\vert_{(0,+\infty)}$,
showing the equality of the first terms in 
\eqref{eq:cl fdiv alt5}--\eqref{eq:cl fdiv alt6},
as well as the equality of the second terms.
By lemma \ref{lemma:dF/dG}, $\mu:=dG$ is a dominating measure for both $dF$ and $dG$ with 
$dF/d\mu=\id$, $dG/d\mu=1$, so that $\bR_{dG}^{++}=(0,+\infty)$, 
$dF^{++}=dF\vert_{(0,+\infty)}$,
$dG^{++}=dG\vert_{(0,+\infty)}$,
$dF^{++}=\id dG^{++}$.
Thus, the equality of the two terms in \eqref{eq:cl fdiv alt5} follows by applying 
\eqref{eq:fdiv++ def} to $dF$ and $dG$ in place of $P$ and $Q$, respectively, and 
$dG$ in place of $\mu$, and we also get 
\eqref{eq:cl fdiv alt2} from \eqref{eq:cl fdiv def2} with these replacements.
Finally, \eqref{eq:cl fdiv alt3}
follows as
\begin{align*}
\fdiv{f}{}(P\circ X\inv\|Q\circ X\inv)
&=
\underbrace{\fdiv{f}{}((P\circ X\inv)\vert_{\{0\}}\|(Q\circ X\inv)\vert_{\{0\}})}_{=
\persp{f}\bz P\bz\left\{\frac{dQ}{d\mu}=0\right\}\jz,Q\bz\left\{\frac{dP}{d\mu}=0\right\}\jz\jz}
+
\underbrace{\fdiv{f}{}\bz (P\circ X\inv)\vert_{(0,+\infty)}\|(Q\circ X\inv)\vert_{(0,+\infty)}\jz}_{=D_f^{++}},
\end{align*}
where we used the additivity of $\persp{f}{}$ and that 
$(P\circ X\inv)(\{0\})=P(\{dQ/d\mu=0\})$,
$(Q\circ X\inv)(\{0\})=Q(\{dP/d\mu=0\})$.
\end{proof}

\begin{cor}\label{cor:cl fdiv equality}
In the setting of Lemma \ref{lemma:FG properties}, 
assume that one of the following conditions hold:
\begin{enumerate}
\item\label{cl fdiv equality1}
$P\ll Q$; 
\item\label{cl fdiv equality2}
$\tilde f(0^+)=0$;
\item\label{cl fdiv equality3}
$\fdiv{f}{}(dF\|dG)=\pm\infty$.
\end{enumerate}
Then 
\begin{align*}
\fdiv{f}{}(P\|Q)=\fdiv{f}{}(dF\|dG).
\end{align*}
Moreover,
\begin{align*}
P\ll Q\ds\imp\ds\fdiv{f}{}(P\|Q)=\fdiv{f}{}(dF\|dG)=\fdiv{f}{}(P\circ X\inv\|Q\circ X\inv).
\end{align*}
\end{cor}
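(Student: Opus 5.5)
The plan is to compare the three expressions in Lemma \ref{lemma:cl fdiv alt} term by term. Subtracting \eqref{eq:cl fdiv alt2} from \eqref{eq:cl fdiv alt1} shows that $\fdiv{f}{}(P\|Q)$ and $\fdiv{f}{}(dF\|dG)$ differ only by
\begin{align*}
\tilde f(0^+)P\bz\left\{\frac{dQ}{d\mu}=0\right\}\jz.
\end{align*}
So the task reduces to showing that, under each of the three conditions, this term vanishes or is absorbed. The subtraction is only formal if infinities occur, so I will argue on the expressions themselves rather than subtract.

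Under condition \ref{cl fdiv equality1}, $P\ll Q$. By Lemma \ref{lemma:zero RN} we have $Q(\{dQ/d\mu=0\})=0$, hence $P(\{dQ/d\mu=0\})=0$. The extra term is then $\tilde f(0^+)\cdot 0=0$, using the convention $0\cdot\infty=0$.

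Under condition \ref{cl fdiv equality2}, $\tilde f(0^+)=0$, and the extra term is $0$ regardless of the measure.

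Condition \ref{cl fdiv equality3} is the only case needing care. Assume $\fdiv{f}{}(dF\|dG)=+\infty$; the case $-\infty$ is symmetric. Then at least one of $D_f^{(0,+\infty)}$ and $f(0^+)Q(\{dP/d\mu=0\})$ equals $+\infty$, and neither equals $-\infty$. We must check that adding $\tilde f(0^+)P(\{dQ/d\mu=0\})$ cannot produce $-\infty$ or an undefined expression.
\begin{itemize}
\item If $f(0^+)=+\infty$, then $f$ is convex and not concave (a concave $f$ has $f(0^+)<+\infty$ by Lemma \ref{lemma:convex f tilde limit}'s argument applied to $-f$). For convex $f$, Lemma \ref{lemma:convex f tilde limit} gives $\tilde f(0^+)\in(-\infty,+\infty]$, so the extra term is $>-\infty$.
\item If instead $D_f^{(0,+\infty)}=+\infty$, the same sign reasoning applies. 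A concave $f$ has $f\le$ an affine function, so $D_f^{(0,+\infty)}<+\infty$ for finite measures; hence $f$ is convex and again $\tilde f(0^+)>-\infty$.
\end{itemize}
In both subcases the total stays $+\infty$. This is also consistent with the well-definedness of $\fdiv{f}{}(P\|Q)$ guaranteed by Lemma \ref{lemma:cl fdiv def}.

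For the ``moreover'' part, compare \eqref{eq:cl fdiv alt2} with \eqref{eq:cl fdiv alt3} under $P\ll Q$. Since $P(\{dQ/d\mu=0\})=0$, the perspective term becomes $\persp{f}(0,Q(\{dP/d\mu=0\}))$, which by \eqref{persp extension} equals $f(0^+)Q(\{dP/d\mu=0\})$. This matches the extra term in \eqref{eq:cl fdiv alt2}, so the two expressions coincide.

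The main obstacle is the infinite case \ref{cl fdiv equality3}. It needs the sign-compatibility argument above, based on convexity and concavity bounds on $f(0^+)$, $\tilde f(0^+)$ and $D_f^{(0,+\infty)}$. The other steps are immediate bookkeeping.
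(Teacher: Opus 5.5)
Your proposal is correct and takes essentially the same route as the paper. Both use Lemma \ref{lemma:cl fdiv alt} to isolate the single discrepancy term $\tilde f(0^+)P(\{dQ/d\mu=0\})$. In the infinite case both argue that for convex $f$ this term is $>-\infty$ while the rest is $+\infty$, with concave $f$ handled via $-f$. The differences are cosmetic: you infer convexity from the value $+\infty$ instead of the reverse, and you spell out the ``moreover'' part via $\persp{f}(0,y)=yf(0^+)$, which the paper leaves implicit.
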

\begin{proof}
The assertions under the conditions $P\ll Q$ or $\tilde f(0^+)=0$ follow immediately from
Lemma \ref{lemma:cl fdiv alt}.

Assume now \ref{cl fdiv equality3}.
If $f$ is convex then the $f$-divergence of any two probability measures is in $(-\infty,+\infty]$, and hence 
assumption \ref{cl fdiv equality3} means that $\fdiv{f}{}(dF\|dG)=+\infty$. Since convexity of $f$ also implies that 
$\tilde f(0^+)\in(-\infty,+\infty]$, \eqref{eq:cl fdiv alt1} yields 
\begin{align}\label{cl fdiv equality proof1}
\fdiv{f}{}(P\|Q)=\underbrace{\fdiv{f}{}(dF\|dG)}_{=+\infty}+
\underbrace{\tilde f(0^+)P\bz\left\{\frac{dQ}{d\mu}=0\right\}\jz}_{\in(-\infty,+\infty]}=+\infty,
\end{align}
whence
$\fdiv{f}{}(dF\|dG)=+\infty=\fdiv{f}{}(P\|Q)$.
The proof goes the same way when $f$ is concave.
\end{proof}

\begin{cor}\label{cor:cl Renyi from ep}
Let $f_{\alpha}(x)=x^{\alpha}$, $x\in[0,+\infty)$, $\alpha\in(0,1)\cup(1,+\infty)$. 
\begin{align*}
\text{If \ds $\alpha\in(0,1)$ or $P\ll Q$ \ds then }\ds
D_{f_{\alpha}}(P\|Q)=D_{f_{\alpha}}(dF\|dG)=\int_{(0,+\infty)}t^{\alpha}\,dG(t)=\int_{(0,+\infty)}t^{\alpha-1}\,dF(t).
\end{align*}
\end{cor}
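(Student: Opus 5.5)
We will derive the statement from Corollary \ref{cor:cl fdiv equality} and Lemma \ref{lemma:cl fdiv alt}, and then rewrite the integral using Lemma \ref{lemma:dF/dG}. Note that $f_\alpha$ is concave for $\alpha\in(0,1)$ and convex for $\alpha>1$, so $f_\alpha\in\ccf$. Its boundary values are $f_\alpha(0^+)=0$ for every $\alpha$. Moreover, $\tilde f_\alpha(y)=y^{1-\alpha}$, so $\tilde f_\alpha(0^+)=0$ if $\alpha\in(0,1)$ and $\tilde f_\alpha(0^+)=+\infty$ if $\alpha>1$.

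For the first equality $D_{f_{\alpha}}(P\|Q)=D_{f_{\alpha}}(dF\|dG)$ we use Corollary \ref{cor:cl fdiv equality}. When $\alpha\in(0,1)$, condition \ref{cl fdiv equality2} holds because $\tilde f_\alpha(0^+)=0$. When $P\ll Q$, condition \ref{cl fdiv equality1} holds. In either case the corollary gives the equality directly.

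For the second equality we use \eqref{eq:cl fdiv alt2} with $f=f_\alpha$. Since $f_\alpha(0^+)=0$, the extra term $f(0^+)Q(\{dP/d\mu=0\})$ drops out, even if that $Q$-mass is positive. Hence $D_{f_\alpha}(dF\|dG)=D_{f_\alpha}^{(0,+\infty)}=\int_{(0,+\infty)}t^\alpha\,dG(t)$ by \eqref{eq:cl fdiv alt5}.

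For the last equality we use Lemma \ref{lemma:dF/dG}: $dF=\id\egy_{[0,+\infty)}\cdot dG$. On $(0,+\infty)$ we write $t^\alpha=t^{\alpha-1}\cdot t$, which gives $\int_{(0,+\infty)}t^{\alpha-1}\,dF(t)=\int_{(0,+\infty)}t^{\alpha-1}t\,dG(t)=\int_{(0,+\infty)}t^{\alpha}\,dG(t)$. This is the change-of-density formula for non-negative integrands, so it holds in $[0,+\infty]$ without any finiteness assumption. No contribution from the point $0$ arises, because the integrals are taken over the open interval $(0,+\infty)$.

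The only delicate point is the boundary bookkeeping, and we expect it to be easy here. We must make sure that the extended-value conventions ($0\cdot\infty=0$) are used consistently, and that in the case $\alpha>1$ without absolute continuity we do not claim the equality. In that case $\tilde f_\alpha(0^+)=+\infty$ would add $+\infty\cdot P(\{dQ/d\mu=0\})$, and this is precisely why the hypothesis ``$\alpha\in(0,1)$ or $P\ll Q$'' is needed. All terms involved are non-negative, so no ill-defined sums $\infty-\infty$ occur.
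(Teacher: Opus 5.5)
Your proposal is correct and takes essentially the same route as the paper. The first two equalities come from \eqref{eq:cl fdiv alt1}--\eqref{eq:cl fdiv alt2} together with \eqref{eq:cl fdiv alt5}; you route the first one through Corollary~\ref{cor:cl fdiv equality}, which also makes the case $P\ll Q$ explicit where the paper's written proof treats it only implicitly, and the last equality is the density relation between $dF$ and $dG$ on $(0,+\infty)$, which the paper cites as \eqref{eq:dG/dF1} and you take directly from \eqref{eq:dF/dG1}.
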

\begin{proof}
If $\alpha\in(0,1)$ then $f(0^+)=0=\tilde f(0^+)$, and 
$D_{f_{\alpha}}(P\|Q)=D_{f_{\alpha}}(dF\|dG)=\int_{(0,+\infty)}t^{\alpha}\,dG(t)$ follows from 
\eqref{eq:cl fdiv alt1}--\eqref{eq:cl fdiv alt2} and \eqref{eq:cl fdiv alt5}.
The equality $\int_{(0,+\infty)}t^{\alpha}\,dG(t)=\int_{(0,+\infty)}t^{\alpha-1}\,dF(t)$ follows from 
\eqref{eq:dG/dF1}.
\end{proof}

\begin{cor}\label{cor:cl relentr from ep}
Let $\eta(x)=x\log x$, $x\in[0,+\infty)$. 
\begin{align*}
\text{If $P\ll Q$ \ds then }\ds
D_{\eta}(P\|Q)=D_{\eta}(dF\|dG)=\int_{(0,+\infty)}t\log\,dG(t)=\int_{(0,+\infty)}\log t\,dF(t).
\end{align*}
\end{cor}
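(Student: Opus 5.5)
The argument parallels Corollary \ref{cor:cl Renyi from ep}, with $\eta(x)=x\log x$. Since $P\ll Q$, the first equality $D_{\eta}(P\|Q)=D_{\eta}(dF\|dG)$ follows at once from Corollary \ref{cor:cl fdiv equality}, condition \ref{cl fdiv equality1}. Note that $\eta$ is convex, so Condition \ref{cond:fdiv} holds, with $\eta(0^+)=0$ and $\tilde\eta(0^+)=\lim_{x\to+\infty}\log x=+\infty$. This is why the condition $P\ll Q$ is needed, unlike in the case $\alpha\in(0,1)$.

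For the second equality, I would use \eqref{eq:cl fdiv alt2} applied to $\eta$:
\begin{align*}
D_{\eta}(dF\|dG)=D_{\eta}^{(0,+\infty)}+\eta(0^+)\,Q\bz\left\{\frac{dP}{d\mu}=0\right\}\jz .
\end{align*}
Because $\eta(0^+)=0$, the second term vanishes. By \eqref{eq:cl fdiv alt5}, $D_{\eta}^{(0,+\infty)}=\int_{(0,+\infty)}t\log t\,dG(t)$. This gives the second equality, interpreting the integrand in the statement as $t\log t$.

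For the third equality, I would use Corollary \ref{lemma:dG/dF}, which states that on $(0,+\infty)$ we have $dG=\frac{1}{x}\,dF$. Hence
\begin{align*}
\int_{(0,+\infty)}t\log t\,dG(t)=\int_{(0,+\infty)}\frac{1}{t}\,t\log t\,dF(t)=\int_{(0,+\infty)}\log t\,dF(t).
\end{align*}

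The main obstacle is justifying the change of density for an integrand of indefinite sign. Here I would split $(0,+\infty)$ into $(0,1)$ and $[1,+\infty)$. On each piece the integrand has constant sign, so the density substitution is valid by monotone convergence, whether the integrals are finite or not.

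I would then check that the resulting value is well defined and not of the form $\infty-\infty$. On $(0,1)$ we have $t\log t\ge -1/e$ and $dG$ is finite, so $\int_{(0,1)}t\log t\,dG(t)$ is finite. On $[1,+\infty)$ the integrand is nonnegative, so that part lies in $[0,+\infty]$. The same splitting therefore shows that both sides are well defined in $(-\infty,+\infty]$ and agree on each piece.
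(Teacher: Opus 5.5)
Your proposal is correct and is essentially the paper's argument. The paper's proof only says it goes as for Corollary~\ref{cor:cl Renyi from ep}: use Lemma~\ref{lemma:cl fdiv alt} (via Corollary~\ref{cor:cl fdiv equality} under $P\ll Q$) together with $\eta(0^+)=0$ and \eqref{eq:cl fdiv alt5}, then pass from $dG$ to $dF$ on $(0,+\infty)$ using \eqref{eq:dG/dF1}. Splitting $(0,+\infty)$ at $1$ to justify the density substitution and to exclude an $\infty-\infty$ expression is a useful detail that the paper leaves implicit.
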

\begin{proof}
The proof follows the same way as in Corollary \ref{cor:cl Renyi from ep}.
\end{proof}

\begin{rem}
Let $\Omega$ be finite and $P\not\ll Q$. Then 
\begin{align*}
&D_{f_{\alpha}}(P\|Q)=+\infty>D_{f_{\alpha}}(dF\|dG),\\
&D_{\eta}(P\|Q)=+\infty>D_{\eta}(dF\|dG),
\end{align*}
where the strict inequalities follow from the fact that $dF\ll dG$ independently of whether 
$P\ll Q$ or not, and that $dF$ and $dG$ are both finitely supported if $\Omega$ is finite.
\end{rem}

\subsection{Differentiability}

\begin{lemma}
In the setting considered in Section \ref{sec:clfdiv from NP}, for every $t\in(0,+\infty)$, 
\begin{align}
\exists\s\partial_t^+\hsp{t}{}(P\|Q)&=
-Q\bz\left\{\frac{dP}{d\mu}-t\frac{dQ}{d\mu}> 0\right\}\jz=-\errii{t}{}{P}{Q},
\label{eq:cl hs derivative1}\\
\exists\s\partial_t^-\hsp{t}{}(P\|Q)&=
-Q\bz\left\{\frac{dP}{d\mu}-t\frac{dQ}{d\mu}\ge 0\right\}\jz.
\label{eq:cl hs derivative2}
\end{align}
\end{lemma}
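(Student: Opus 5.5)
Write $p:=dP/d\mu$, $q:=dQ/d\mu$ and $h(t):=\hsp{t}{}(P\|Q)=\int_\Omega(p-tq)_+\,d\mu$. The function $t\mapsto(p(\omega)-tq(\omega))_+$ is convex in $t$ for each $\omega$, so $h$ is convex. It is also finite, since $0\le h(t)\le P(\Omega)$ for $t\ge 0$. Hence both one-sided derivatives exist at every $t\in(0,+\infty)$, and the task is only to identify them. The plan is to compute them by differentiating under the integral sign, using monotone convergence of difference quotients.

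Fix $t>0$ and $\omega\in\Omega$, and put $g_\omega(s):=(p(\omega)-sq(\omega))_+$. First compute the pointwise one-sided derivatives:
\begin{align*}
\partial_s^+g_\omega(t)=-q(\omega)\egy_{\{p-tq>0\}}(\omega),\qquad
\partial_s^-g_\omega(t)=-q(\omega)\egy_{\{p-tq\ge 0\}}(\omega).
\end{align*}
To check these, split into cases. If $q(\omega)=0$, then $g_\omega$ is constant. If $p-tq>0$, then $g_\omega$ is locally linear with slope $-q$. If $p-tq<0$, then $g_\omega$ is locally zero. If $p-tq=0$ with $q>0$, then $g_\omega(s)=q(t-s)_+$, whose right derivative is $0$ and left derivative is $-q$.

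Next, interchange derivative and integral. By convexity of $g_\omega$, the right difference quotients $(g_\omega(t+\ep)-g_\omega(t))/\ep$ are monotone decreasing as $\ep\searrow 0$. They are also bounded in absolute value by $q(\omega)$, because $g_\omega$ is $q(\omega)$-Lipschitz. Since $q\in L^1(\mu)$ (as $Q$ is finite), dominated convergence gives
\[
\partial_t^+h(t)=\int_\Omega -q\,\egy_{\{p-tq>0\}}\,d\mu=-Q(\{p-tq>0\})=-\errii{t}{}{P}{Q}.
\]
The last equality is Corollary \ref{cor:classical NP}. The left derivative follows in the same way, with $\ep<0$ small enough that $t+\ep>0$.

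I expect no real obstacle. The only delicate point is the boundary case $p(\omega)=tq(\omega)$, which is exactly where the two formulas differ, together with the set $\{q=0\}$, where both formulas correctly give $0$. An alternative route would be to use \eqref{eq:clhspgen} and the measures $dF,dG$ from Lemma \ref{lemma:FG properties}, but the direct dominated convergence argument is cleaner.
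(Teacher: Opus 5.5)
Your proposal is correct and follows essentially the same route as the paper. The paper also computes the pointwise one-sided derivatives of $t\mapsto\bigl(\frac{dP}{d\mu}(\omega)-t\frac{dQ}{d\mu}(\omega)\bigr)_+$ (via Corollary \ref{cor:hs derivative}), and then passes them under the integral by dominated convergence with dominating function $dQ/d\mu$, which it packages as Lemma \ref{lemma:integral partial derivative}.
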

\begin{proof}
For every $\omega\in\Omega$, the function
\begin{align*}
f(\omega,t):=\bz\frac{dP}{d\mu}(\omega)-t\frac{dQ}{d\mu}(\omega)\jz_+
\end{align*}
is convex with 
\begin{align*}
\partial_2^+f(\omega,t)&=\bz-\frac{dQ}{d\mu}(\omega)\jz\egy_{(0,+\infty)}\bz \frac{dP}{d\mu}(\omega)-t\frac{dQ}{d\mu}(\omega)\jz,\\
\partial_2^-f(\omega,t)&=\bz-\frac{dQ}{d\mu}(\omega)\jz\egy_{[0,+\infty)}\bz \frac{dP}{d\mu}(\omega)-t\frac{dQ}{d\mu}(\omega)\jz,
\end{align*}
according to Corollary \ref{cor:hs derivative}.
It is easy to see that the conditions in  
Lemma \ref{lemma:integral partial derivative} are satisfied with $f$ above, and hence
\eqref{eq:cl hs derivative1}--\eqref{eq:cl hs derivative2} follow immediately from 
Lemma \ref{lemma:integral partial derivative}.
\end{proof}

\begin{rem}
Since $\bR\ni t\mapsto \hsp{t}{}(P\|Q)$ is convex, we have 
$\partial_t^+\hsp{t}{}(P\|Q)=\partial_t^-\hsp{t}{}(P\|Q)$, 
or equivalently, by \eqref{eq:cl hs derivative1}--\eqref{eq:cl hs derivative2},
\begin{align*}
Q\bz\left\{\frac{dP}{d\mu}-t\frac{dQ}{d\mu}> 0\right\}\jz=
Q\bz\left\{\frac{dP}{d\mu}-t\frac{dQ}{d\mu}\ge 0\right\}\jz
\end{align*}
for all but at most countably many $t\in\bR$.
\end{rem}

\section{$f$-divergences from quantum hockey stick divergences: finite dimension}
\label{sec:finitedim}

\subsection{Quantum $f$-divergences}

\begin{defin}
Let $f\in\ccf$. 
We say that a function
\begin{align*}
D_f^q:\,\cup_{n\in\bN}\bz\B(\bC^d)\p\times\B(\bC^d)\p\jz\to[-\infty,+\infty]
\end{align*}
is a \ki{quantum $f$-divergence} if it satisfies the following:
\begin{enumerate}
\item
\ki{Isometric invariance:} For any $d,d' \in\bN$ with $d\le d'$ and any isometry $V:\,\bC^d\to\bC^{d'}$,
\begin{align*}
D_f^q(V\rho V^*\|V\sigma V^*)=D_f^q(\rho\|\sigma),\ds\ds\ds\rho,\sigma\in\B(\bC^d)\pne.
\end{align*}
\item
\ki{Classical reduction:} For any $d\in\bN$ and any $\rho,\sigma\in\B(\bC^d)\pne$ that can be diagonalized in the same orthonormal basis 
$(e_i)_{i=1}^d$ as $\rho=\sum_{i=1}^d\rho(i)\pr{e_i}$,  $\sigma=\sum_{i=1}^d\sigma(i)\pr{e_i}$, we have
\begin{align*}
D_f^q(\rho\|\sigma)=D_f\bz (\rho(i))_{i=1}^d\|(\sigma(i))_{i=1}^d\jz,
\end{align*}
where the RHS is the classical $f$-divergence of $(\rho(i))_{i=1}^d$ and $(\sigma(i))_{i=1}^d$.
\end{enumerate}
\end{defin}

Due to the isometric invariance property, $D_f^q$ can be uniquely extended to any pair $\rho,\sigma$ of PSD operators on a finite-dimensional Hilbert space $\hil$ by taking a unitary 
$U:\,\hil\to\bC^{\dim\hil}$ and defining 
\begin{align*}
D_f^q(\rho\|\sigma):=D_f^q\bz U\rho U^*\|U\sigma U^*\jz.
\end{align*}
It is easy to see that the above definition does not depend on the choice of the unitary, and the resulting quantity also satisfies 
(the obvious modification of) the isometric invariance and the classical reduction properties. The most studied quantum $f$-divergences are the Petz-type,
the measured, and the maximal $f$-divergences, which we discuss very briefly below.  

The following is easy to verify:
\begin{lemma}\label{lemma:unique ext}
For any quantum $f$-divergence $D_f^q$ and any $\rho,\sigma\in\B(\hil)\p$ that are diagonal in the same ONB $(e_i)_{i=1}^d$, we have 
\begin{align*}
D_f^q(\rho\|\sigma)=
D_f\bz(\inner{e_i}{\rho e_i})_{i=1}^d\|(\inner{e_i}{\sigma e_i})_{i=1}^d\jz,
\end{align*}
where on the RHS we have the classical $f$-divergence.
\end{lemma}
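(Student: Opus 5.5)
The plan is to reduce the statement to the classical reduction property by a single change of basis. Let $d:=\dim\hil$, and let $(e_i)_{i=1}^d$ be the ONB in which $\rho$ and $\sigma$ are diagonal, so that
\begin{align*}
\rho=\sum_{i=1}^d\inner{e_i}{\rho e_i}\pr{e_i},\ds\ds\ds
\sigma=\sum_{i=1}^d\inner{e_i}{\sigma e_i}\pr{e_i}.
\end{align*}
Let $(\delta_i)_{i=1}^d$ denote the standard basis of $\bC^d$, and define the unitary $U:\,\hil\to\bC^d$ by $Ue_i:=\delta_i$. By the definition of the extension of $D_f^q$ to general finite-dimensional $\hil$, we have $D_f^q(\rho\|\sigma)=D_f^q(U\rho U^*\|U\sigma U^*)$. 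This value does not depend on the choice of unitary, as noted right before the statement, so we are free to use this particular $U$.

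Next we compute $U\rho U^*=\sum_i\inner{e_i}{\rho e_i}\pr{\delta_i}$, and similarly for $\sigma$. Both operators are therefore diagonal in the common ONB $(\delta_i)$ of $\bC^d$, with diagonal entries $\rho(i)=\inner{e_i}{\rho e_i}$ and $\sigma(i)=\inner{e_i}{\sigma e_i}$. The classical reduction property then gives
\begin{align*}
D_f^q(U\rho U^*\|U\sigma U^*)=D_f\bz(\inner{e_i}{\rho e_i})_{i=1}^d\|(\inner{e_i}{\sigma e_i})_{i=1}^d\jz,
\end{align*}
which is the claim.

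The only point needing care is that classical reduction is stated for $\rho,\sigma\in\B(\bC^d)\pne$, that is, for nonzero operators. There are two cases to handle.

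\begin{itemize}
\item If both $\rho$ and $\sigma$ are nonzero, the argument above applies directly.
\item If one or both of them vanish, I would interpret the definition's domain $\B(\bC^d)\p$ as allowing classical reduction for all PSD diagonal pairs, since that is the stated domain of $D_f^q$. Failing that, I would treat the degenerate case as a convention inherited from the definition.
\end{itemize}

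I would also remark that the independence of the extension from the chosen unitary rests on isometric invariance applied to $U'U^*$. That step is the one I expect to be the main (though mild) obstacle to make fully rigorous, but it is already asserted in the text.
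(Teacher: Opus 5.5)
Your argument is correct and is exactly the routine verification the paper intends; the paper gives no proof beyond calling the lemma easy to verify, and the natural argument is to transport to $\mathbb{C}^d$ by a unitary and apply classical reduction. Two small remarks. First, you do not need independence of the unitary, since classical reduction is stated for an arbitrary common ONB of $\mathbb{C}^d$: for \emph{any} unitary $U$, the pair $U\rho U^*,U\sigma U^*$ is diagonal in $(Ue_i)_{i=1}^d$ with the same diagonal entries $\langle e_i,\rho e_i\rangle$, $\langle e_i,\sigma e_i\rangle$. Second, your caveat about zero operators is well founded: the axioms are only imposed on $\mathcal{B}(\mathbb{C}^d)_{\gneq 0}$, so the lemma follows from the definition only for nonzero $\rho,\sigma$, and for a zero argument it is a convention rather than a consequence.
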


\begin{rem}
For any $f\in\ccf$, we will denote the corresponding classical $f$-divergence by $D_f$ in the setting of Section \ref{sec:classical}. Moreoover, 
we will also use the notation $D_f(\rho\|\sigma)=D_f^q(\rho\|\sigma)$ for any quantum $f$-divergence and commuting $\rho,\sigma\in\B(\hil)\p$, justified by the above Lemma \ref{lemma:unique ext}.
\end{rem}

\begin{example}\label{ex:Petz fdiv}
\ki{(Petz-type $f$-divergence \cite{P86})} For any $f\in\ccf$, any PSD operators $\rho,\sigma$ on a finite-dimensional Hilbert space $\hil$, their 
Petz-type $f$-divergence is defined as
\begin{align*}
D_f^{\petz}(\rho\|\sigma):=\lim_{\ep\searrow 0}\Tr\sigma^{1/2}f\bz\Delta_{\rho+\ep I,\sigma+\ep I}\jz(\sigma^{1/2}),
\end{align*}
where
\begin{align*}
\Delta_{\rho+\ep I,\sigma+\ep I}:\,X\mapsto (\rho+\ep I)X(\sigma+\ep I)\inv,\ds\ds\ds X\in\B(\hil),
\end{align*}
is the relative modular operator corresponding to $\rho+\ep I$ and $\sigma+\ep I$. 

For $\rho,\sigma\in\B(\hil)\p$, let 
\begin{align*}
\nsd{\rho}{\sigma}(r,s):=r\Tr P^{\rho}_rP^{\sigma}_s,\ds\ds\ds
\nsdd{\rho}{\sigma}(r,s):=s\Tr P^{\rho}_rP^{\sigma}_s,\ds\ds\ds
r\in\spec(\rho),\,s\in\spec(\sigma),
\end{align*}
be the corresponding Nussbaum-Szko\l a distributions \cite{NSz}. It is well known and easy to verify that 
for any $f\in\ccf$, the corresponding Petz-type $f$-divergence of 
$\rho$ and $\sigma$ satisfies
$\fdiv{f}{\petz}(\rho\|\sigma)=\fdiv{f}{}(\nsd{\rho}{\sigma}\|\nsdd{\rho}{\sigma})$. 
\end{example}

\begin{example}\label{ex:measured fdiv}
\ki{(Measured $f$-divergence)} For any $f\in\ccf$, any PSD operators $\rho,\sigma$ on a finite-dimensional Hilbert space $\hil$, their 
measured $f$-divergence is defined as
\begin{align*}
D_f^{\meas}(\rho\|\sigma):=\sup\left\{D_f\bz(\Tr M_i\rho)_{i=1}^d\|(\Tr M_i\sigma)_{i=1}^d\jz:\,M\in\povm(\hil,[r],\,r\in\bN)\right\}.
\end{align*}
\end{example}

\begin{ex}
\ki{(Maximal $f$-divergence \cite{Matsumoto_newfdiv})}
For $\rho,\sigma\in\B(\hil)\p$, the set of 
\ki{reverse tests} for $(\rho,\sigma)$ is  
the set of triples $(\tilde\rho,\tilde\sigma,\rt)$, where
$\tilde\rho=\sum_{i=1}^d\tilde\rho(i)\pr{\egy_{\{i\}}}$, 
$\tilde\sigma=\sum_{i=1}^d\tilde\sigma(i)\pr{\egy_{\{i\}}}$
are non-zero PSD operators that are diagonal in the canonical ONB of $\bC^d$ for some $d\in\bN$, and 
$\rt:\,\B(\bC^d)\to\B(\hil)$ is a completely positive trace-preserving map such that 
$\rt(\tilde\rho)=\rho$, 
$\rt(\tilde\sigma)=\sigma$. For any $f\in\ccf$, any non-zero PSD operators $\rho,\sigma$ on a finite-dimensional Hilbert space $\hil$, their 
maximal $f$-divergence is defined as
\begin{align*}
D_f^{\max}(\rho\|\sigma):=\inf\left\{D_f\bz(\tilde\rho(i))_{i=1}^d\|(\tilde\sigma(i))_{i=1}^d\jz:\,(\tilde\rho,\tilde\sigma,\rt)\in\rts{\rho}{\sigma}\right\}
\end{align*}
\end{ex}

\begin{defin}
We say that a quantum $f$-divergence $D_f^q$ is \ki{monotone}, or \ki{monotone under CPTP maps} if for any 
$\rho,\sigma\in\B(\hil)\p$ and any CPTP map $\map:\,\B(\hil)\to\B(\hil)$, 
\begin{align*}
D_f^q(\map(\rho)\|\map(\sigma))\le D_f(\rho\|\sigma).
\end{align*}
Moreover, $D_f^q$ is called \ki{monotone under PTP maps} if the above holds under the weaker condition that $\map$ is positive and trace-preserving.
\end{defin}

It is well known that the Petz-type $f$-divergences are monotone for any operator convex $f$, while the measured and the maximal $f$-divergences
are PTP-monotone for any convex $f$. Moreover, the measured is the smallest, and the maximal is the largest monotone quantum $f$-divergence for any convex $f$.

\subsection{Hockey stick divergences}\label{sec:IV.A}

Note that for any $t\in\bR$, and any commuting PSD operators $\rho,\sigma\in\B(\hil)\p$,
\begin{align*}
\hsp{t}{}(\rho\|\sigma)-\hsn{t}{}(\rho\|\sigma)&=\fdiv{(\id-t)}{}(\rho\|\sigma),\\
\hsp{t}{}(\rho\|\sigma)+\hsn{t}{}(\rho\|\sigma)&=\tn{t}{}(\rho\|\sigma).
\end{align*}
Note also that for any $t\in\bR$, and any $\rho,\sigma\in\B(\hil)\p$,
\begin{align*}
\fdiv{(\id-t)}{\meas}(\rho\|\sigma)=\Tr(\rho-t\sigma)=
\fdiv{(\id-t)}{\max}(\rho\|\sigma).
\end{align*}
Hence, $\fdiv{(\id-t)}{}$ has a unique monotone quantum extension, which is equal to 
$\fdiv{(\id-t)}{\meas}$, and it is in fact invariant under trace-preserving positive maps. 
For the rest, we will always consider this quantum extension of $\fdiv{(\id-t)}{}$, which is also compatible with the usual definitions for measured, Petz-type, and maximal extensions.

Assume next that quantum extensions $(\tn{t}{q})_{t\in\bR_+}$, or 
$(\hsp{t}{q})_{t\in\bR_+}$, or $(\hsn{t}{q})_{t\in\bR_+}$ are given; then quantum extensions of the other two families can be naturally defined as
\begin{align}
\hsp{t}{q}(\rho\|\sigma)&:=
\half\bz\tn{t}{q}(\rho\|\sigma)+\fdiv{(\id-t)}{}(\rho\|\sigma)\jz,\label{eq:hsp def}\\
\hsn{t}{q}(\rho\|\sigma)&:=
\half\bz\tn{t}{q}(\rho\|\sigma)-\fdiv{(\id-t)}{}(\rho\|\sigma)\jz,\ds\ds\ds\rho,\sigma\in\B(\hil),
\label{eq:hsn def}
\end{align}
in the first case, 
\begin{align}
\hsn{t}{q}(\rho\|\sigma)&:=\hsp{t}{q}(\rho\|\sigma)-\Tr(\rho-t\sigma),\label{eq:hsn def2}\\
\tn{t}{q}(\rho\|\sigma)&:=\hsn{t}{q}(\rho\|\sigma)+\hsp{t}{q}(\rho\|\sigma),\label{eq:tn def2}
\ds\ds\ds\rho,\sigma\in\B(\hil),
\end{align}
in the second case, and 
\begin{align}
\hsp{t}{q}(\rho\|\sigma)&:=\hsn{t}{q}(\rho\|\sigma)+\Tr(\rho-t\sigma),\label{eq:hsp def3}\\
\tn{t}{q}(\rho\|\sigma)&:=\hsn{t}{q}(\rho\|\sigma)+\hsp{t}{q}(\rho\|\sigma),\label{eq:tn def3}
\ds\ds\ds\rho,\sigma\in\B(\hil),
\end{align}
in the third case. Obviously, if the members of the original family of quantum divergences are  
all monotone 
under some class of positive maps then so are the members of the other two families defined according to the above.
Moreover, if any of the three families is continuous in $t$ (meaning continuity for any fixed argument $\rho$, $\sigma$),
then so are the other two defined in the above way. 

To conform most with the existing literature, we will consider $(\hsp{t}{q})_{t\in\bR}$ as the primary 
quantities, and call them \ki{quantum hockey stick divergences}.

\begin{ex}\label{eq:Petz hs}
\ki{(Petz-type hockey stick divergences)}\ds 
According to Example \ref{ex:Petz fdiv}, for any  $\rho,\sigma\in\B(\hil)\p$ and any $t\in\bR$,
\begin{align*}
\hsp{t}{\petz}(\rho\|\sigma)=\sum_{\substack{r\in\spec(\rho)\\ s\in\spec(\sigma)}}(r-ts)_+\Tr P^{\rho}_rP^{\sigma}_s,
\ds\ds\ds
\hsn{t}{\petz}(\rho\|\sigma)=\sum_{\substack{r\in\spec(\rho)\\ s\in\spec(\sigma)}}(r-ts)_-\Tr P^{\rho}_rP^{\sigma}_s\,.
\end{align*}
\end{ex}

\begin{ex}\label{ex:hs meas}
\ki{(Measured hockey stick $f$-divergences)}\ds 
It is easy to see from the general definition in Example \ref{ex:measured fdiv}
that the measured hockey stick divergences can be expressed explicitly as 
\begin{align}
\hsn{t}{\meas}(\rho\|\sigma)&=\Tr(\rho-t\sigma)_-=\max_{T\in\bT(\hil)}\Tr(t\sigma-\rho)T\,,
\label{eq:hsn meas}\\
\hsp{t}{\meas}(\rho\|\sigma)&=\Tr(\rho-t\sigma)_+=\max_{T\in\bT(\hil)}\Tr(\rho-t\sigma)T\,.
\label{eq:hsp meas}
\end{align}
The representations in \eqref{eq:hsn meas}--\eqref{eq:hsp meas} yield immediately that 
$t\mapsto \hsn{t}{\meas}$ is monotone increasing, while 
$t\mapsto \hsp{t}{\meas}$ is monotone decreasing, that both 
$\hsn{t}{\meas}$ and $\hsn{t}{\meas}$ are jointly convex in their arguments (being maxima of affine functions), and that they are both monotone non-increasing under positive trace-preserving maps.
To see this latter property, let $\rho,\sigma\in\B(\hil)\p$, and $\map\in\ptni(\hil,\kil)$. Then 
\begin{align*}
\hsn{t}{\meas}(\map(\rho)\|\map(\sigma))
&=
\max_{T\in\bT(\kil)}\Tr\map(t\sigma-\rho)T
=
\max_{T\in\bT(\kil)}\Tr (t\sigma-\rho)\map^*(T)\\
&\le
\max_{S\in\bT(\hil)}\Tr (t\sigma-\rho) T=
\hsn{t}{\meas}(\rho\|\sigma).
\end{align*}
The proof for $\hsn{t}{\meas}$ goes the same way.
\end{ex}

\begin{ex}\label{ex:maximal hsp}
\ki{(Maximal hockey stick divergences)}
According to \cite[Section 8.1]{Matsumoto_newfdiv},
\begin{align}
\tn{t}{\max}(\rho\|\sigma)&=
\inf\{\tn{t}{}(\tilde\rho\|\tilde\sigma):\,(\tilde\rho,\tilde\sigma,\rt)\in\rts{\rho}{\sigma}\}\nn\\
&=
\Tr(\rho+t\sigma)-2\max\{\Tr A:\,0\le A\le \rho,\,A\le t\sigma\}.
\label{eq:max tn}
\end{align}
It follows immediately from \eqref{eq:max tn} that 
\begin{align}
\hsn{t}{\max}(\rho\|\sigma)&=
\inf\{\hsn{t}{}(\tilde\rho\|\tilde\sigma):\,(\tilde\rho,\tilde\sigma,\rt)\in\rts{\rho}{\sigma}\}
=
t\Tr\sigma-\max\{\Tr A:\,0\le A\le \rho,\,A\le t\sigma\},
\label{eq:hsn max}\\
\hsp{t}{\max}(\rho\|\sigma)&=
\inf\{\hsp{t}{}(\tilde\rho\|\tilde\sigma):\,(\tilde\rho,\tilde\sigma,\rt)\in\rts{\rho}{\sigma}\}
=
\Tr\rho-\max\{\Tr A:\,0\le A\le \rho,\,A\le t\sigma\}.
\label{eq:hsp max}
\end{align} 
These do not have explicit expressions in general.
\end{ex}

\begin{lemma}\label{lemma:mon hsp bounds}
For any monotone quantum hockey stick divergence $\hsp{t}{q}$, we have 
\begin{align}
t\Tr\sigma(I-\rho^0)&\le \hsn{t}{\meas}(\rho\|\sigma)\le\hsn{t}{q}(\rho\|\sigma)\le \hsn{t}{\max}(\rho\|\sigma)\le t\Tr\sigma,
\label{eq:hsn bounds}\\
\Tr\rho(I-\sigma^0)&\le \hsp{t}{\meas}(\rho\|\sigma)\le\hsp{t}{q}(\rho\|\sigma)\le \hsp{t}{\max}(\rho\|\sigma)\le\Tr\rho.
\label{eq:hsp bounds}
\end{align}
Moreover,
\begin{align}
t\le e^{-D_{\max}(\sigma\|\rho)}&\ds\imp\ds \hsn{t}{q}(\rho\|\sigma)=0
\label{eq:hsn zero}\\
t\ge e^{D_{\max}(\rho\|\sigma)}&\ds\imp\ds \hsp{t}{q}(\rho\|\sigma)=0.
\label{eq:hsp zero}
\end{align}
\end{lemma}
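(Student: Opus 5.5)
The plan has three parts. The middle inequalities come from the extremality of the measured and maximal extensions among monotone divergences. The outer bounds come from the explicit formulas \eqref{eq:hsn meas}--\eqref{eq:hsp meas} and \eqref{eq:hsn max}--\eqref{eq:hsp max}. The implications \eqref{eq:hsn zero}--\eqref{eq:hsp zero} are then read off from these bounds.

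\emph{Middle inequalities.} For $\hsp{t}{q}$ I will use the standard argument for the extremal monotone $f$-divergences, applied to $f=(\id-t)_+$.
For the lower bound, take any test $T\in\bT(\hil)$ and the measurement channel $\map(X):=\Tr(XT)\pr{e_1}+\Tr(X(I-T))\pr{e_2}$. Monotonicity and classical reduction give
\begin{align*}
\hsp{t}{q}(\rho\|\sigma)\ge \hsp{t}{}(\map(\rho)\|\map(\sigma))\ge \Tr(\rho-t\sigma)T .
\end{align*}
Maximizing over $T$ and using \eqref{eq:hsp meas} yields $\hsp{t}{\meas}\le\hsp{t}{q}$.
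For the upper bound, take any reverse test $(\tilde\rho,\tilde\sigma,\rt)$. Monotonicity and classical reduction give $\hsp{t}{q}(\rho\|\sigma)=\hsp{t}{q}(\rt(\tilde\rho)\|\rt(\tilde\sigma))\le \hsp{t}{}(\tilde\rho\|\tilde\sigma)$, and taking the infimum gives $\hsp{t}{q}\le\hsp{t}{\max}$ by \eqref{eq:hsp max}.
Since $\hsn{t}{q}=\hsp{t}{q}-\Tr(\rho-t\sigma)$ by \eqref{eq:hsn def2}, and the same relation holds for the measured and maximal versions, the chain for $\hsn{}{}$ follows by subtracting the same constant. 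One technicality: the channels above map between different spaces, so I will read monotonicity, via isometric invariance, as monotonicity under CPTP maps between possibly different spaces.

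\emph{Outer bounds.} For the lower bound in \eqref{eq:hsp bounds}, plug $T:=I-\sigma^0$ into \eqref{eq:hsp meas}. Since $\sigma(I-\sigma^0)=0$, this gives $\Tr(\rho-t\sigma)(I-\sigma^0)=\Tr\rho(I-\sigma^0)$. Similarly, $T:=I-\rho^0$ in \eqref{eq:hsn meas} gives $t\Tr\sigma(I-\rho^0)$.
For the upper bounds, use $A=0$ in \eqref{eq:hsp max}--\eqref{eq:hsn max}; this is allowed when $t\ge0$, which I take to be the relevant range. It gives $\hsp{t}{\max}\le\Tr\rho$ and $\hsn{t}{\max}\le t\Tr\sigma$.

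\emph{Vanishing statements.} If $t\ge e^{D_{\max}(\rho\|\sigma)}$, then $\rho\le t\sigma$, so $A=\rho$ is feasible in \eqref{eq:hsp max}. Hence $0\le\hsp{t}{q}\le\hsp{t}{\max}\le\Tr\rho-\Tr\rho=0$, where nonnegativity comes from the measured lower bound.
If $t\le e^{-D_{\max}(\sigma\|\rho)}$, then $t\sigma\le\rho$, so $A=t\sigma$ is feasible in \eqref{eq:hsn max}, giving $\hsn{t}{\max}\le 0$. The squeeze then forces $\hsn{t}{q}=0$.
I will need the convention $D_{\max}=+\infty$ when supports fail; then the hypothesis is void, or $t\le 0$. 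The case $t\le0$ must be handled separately: there $\Tr(\rho-t\sigma)_-=0$ trivially, and I will treat the maximal side directly.

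The main obstacle is bookkeeping rather than analysis: the sign conventions and edge cases ($t\le0$, infinite $D_{\max}$), and making sure monotonicity is applicable to maps between different spaces, which needs classical reduction plus isometric invariance.
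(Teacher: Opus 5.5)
Your proposal is correct and follows essentially the same route as the paper. The middle inequalities come from extremality of the measured and maximal extensions. The outer bounds come from the variational formulas \eqref{eq:hsn meas}--\eqref{eq:hsp meas} and \eqref{eq:hsn max}--\eqref{eq:hsp max}, using $T=I-\sigma^0$, $T=I-\rho^0$ and $A=0$. The vanishing statements come from the feasible choices $A=t\sigma$, resp.\ $A=\rho$, in the maximal formulas. The only differences are that you prove the extremality argument (via two-outcome test channels and reverse tests), which the paper merely cites, and that you make explicit the implicit restriction $t\ge 0$ and the degenerate cases.
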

\begin{proof}
The second and the third inequalities in \eqref{eq:hsn bounds}--\eqref{eq:hsp bounds} follow immediately from 
the fact that for any monotone classical divergence, 
the measured extension is the smallest and the maximal extension is the largest among all quantum extensions of the given divergence.
The first and the last inequalities in \eqref{eq:hsn bounds}--\eqref{eq:hsp bounds}
are straightforward from \eqref{eq:hsn meas}--\eqref{eq:hsp meas} and 
\eqref{eq:hsp max}--\eqref{eq:hsn max}, respectively.

If $t\le e^{-D_{\max}(\sigma\|\rho)}$ then $\rho-t\sigma\ge 0$ by definition, and choosing 
$A:=t\sigma$ in \eqref{eq:hsn max} yields $0\le \hsn{t}{\max}(\rho\|\sigma)\le\Tr t\sigma-\Tr t\sigma=0$.
This proves \eqref{eq:hsn zero}, and \eqref{lemma:mon hsp bounds} follows by an analogous argument.
\end{proof}

\begin{defin}\label{def:non-neg-family}
We say that a quantum hockey stick divergence $\hsp{t}{q}$ is \ki{non-negative} if 
for any $\rho,\sigma\in\B(\hil)\p$,
\begin{align*}
0\le\hsn{t}{q}(\rho\|\sigma)\ds\ds\text{and}\ds\ds
0\le\hsp{t}{q}(\rho\|\sigma).
\end{align*}
\end{defin}

\begin{rem}
By \eqref{eq:hsn def2}--\eqref{eq:hsp def3}, non-negativity of a quantum hockey stick divergence $\hsp{t}{q}$  is equivalent to 
\begin{align*}
\Tr(\rho-t\sigma)\le\hsp{t}{q}(\rho\|\sigma)\ds\ds\text{and}\ds\ds
\Tr(t\sigma-\rho)\le\hsn{t}{q}(\rho\|\sigma),\ds\ds\ds\rho,\sigma\in\B(\hil)\p.
\end{align*}
By Lemma \ref{lemma:mon hsp bounds}, any monotone quantum hockey stick divergence is non-negative.
\end{rem}

\begin{defin}\label{def:monotone-family}
We say that $(\hsp{t}{q})_{t\in\bR_+}$ is a \ki{monotone family of quantum hockey stick divergences} if 
for any $\rho,\sigma\in\B(\hil)\p$,
\begin{align*}
&\bR_+\ni t\mapsto \hsn{t}{q}(\rho\|\sigma)\ds\ds\text{is monotone increasing, and }\\
&\bR_+\ni t\mapsto\hsp{t}{q}(\rho\|\sigma)\ds\ds\text{is monotone decreasing}.
\end{align*}
\end{defin}

\begin{defin}\label{def:measurable family}
We say that $(\hsp{t}{q})_{t\in\bR_+}$ is a \ki{measurable family of quantum hockey stick divergences} if 
for any $\rho,\sigma\in\B(\hil)\p$,
\begin{align*}
&\bR_+\ni t\mapsto\hsp{t}{q}(\rho\|\sigma)\ds\ds\text{is measurable}.
\end{align*}
\end{defin}

\begin{rem}
Note that if $(\hsp{t}{q})_{t\in\bR_+}$ is a measurable family of quantum hockey stick divergences then  
for any $\rho,\sigma\in\B(\hil)\p$,
\begin{align*}
&\bR_+\ni t\mapsto\hsn{t}{q}(\rho\|\sigma)\ds\ds\text{is measurable, too}.
\end{align*} 
\end{rem}

\begin{defin}
We say that a quantum hockey stick divergence $\hsp{t}{q}$ is \ki{canonically bounded} if 
for any $\rho,\sigma\in\B(\hil)\p$,
\begin{align*}
t\Tr\sigma(I-\rho^0)\le\hsn{t}{q}(\rho\|\sigma)\le t\Tr\sigma,\ds\ds\ds
\Tr\rho(I-\sigma^0)\le\hsp{t}{q}(\rho\|\sigma)\le \Tr\rho.
\end{align*}
We say that $(\hsp{t}{q})_{t\in\bR_+}$ is a \ki{bounded family of quantum hockey stick divergences} if 
for any $\rho,\sigma\in\B(\hil)\p$ and any $0\le a<b<+\infty$,
\begin{align*}
\bz\hsn{t}{q}(\rho\|\sigma)\jz_{t\in[a,b]}\ds\ds\text{and}\ds\ds
\bz\hsp{t}{q}(\rho\|\sigma)\jz_{t\in[a,b]}\ds\ds\text{are bounded}.
\end{align*}
\end{defin}

\begin{rem}
According to Lemma \ref{lemma:mon hsp bounds}, any monotone 
quantum hockey stick divergence $\hsp{t}{q}$ is canonically bounded, and any 
family $(\hsp{t}{q}(\rho\|\sigma))_{t\in\bR_+}$ of monotone 
quantum hockey stick divergences is bounded.
\end{rem}

\subsection{Measured hockey stick divergences and the Neyman-Pearson error probablities}

\begin{defin}
For any $\rho,\sigma\in\B(\hil)\p$ and $t\in(0,+\infty)$, let 
\begin{align}
\erri{t}{}{\rho}{\sigma}
&:=
\erri{t}{\meas}{\rho}{\sigma}
:=
\Tr\rho\{\rho-t\sigma\le 0\}=\Tr\rho(I-\np{t}{}{\rho}{\sigma}),
\label{eq:meas typeI}\\
\errii{t}{}{\rho}{\sigma}
&:=
\errii{t}{\meas}{\rho}{\sigma}
:=
\Tr\sigma\{\rho-t\sigma> 0\}=\Tr\sigma\np{t}{}{\rho}{\sigma},
\label{eq:meas typeII}
\end{align}
where 
\begin{align*}
\np{t}{\meas}{\rho}{\sigma}:=\np{t}{}{\rho}{\sigma}:=\{\rho-t\sigma>0\}
\end{align*}
is the \ki{Neyman-Pearson test} with parameter $t$. 
When $\rho,\sigma$ are quantum states
(i.e., $\Tr\rho=\Tr\sigma=1$, the quantities
$\erri{t}{\meas}{\rho}{\sigma}$ and 
$\errii{t}{\meas}{\rho}{\sigma}$ above are the 
type I and type II error probabilities, respectively, of 
discriminating between $\rho$ and $\sigma$ using the Neyman-Pearson test
$\np{t}{}{\rho}{\sigma}$.
\end{defin}

\begin{rem}
We have 
\begin{align*}
\hsp{t}{\meas}(\rho\|\sigma)=\Tr(\rho-t\sigma)_+&=\Tr\rho(I-\{\rho-t\sigma\le  0\})-
t\Tr\sigma\{\rho-t\sigma>0\}\\
&=
\Tr\rho-\bz \erri{t}{}{\rho}{\sigma}+t\errii{t}{}{\rho}{\sigma}\jz,\\
\hsn{t}{\meas}(\rho\|\sigma)=\Tr(\rho-t\sigma)_-
&=
\hsp{t}{\meas}(\rho\|\sigma)-\Tr(\rho-t\sigma)\\
&=
t\Tr\sigma-\bz \erri{t}{}{\rho}{\sigma}+t\errii{t}{}{\rho}{\sigma}\jz,
\end{align*}
or equivalently,
\begin{align}\label{eq:mixederror}
\errm{t}{}{\rho}{\sigma}:=
\erri{t}{}{\rho}{\sigma}+t\errii{t}{}{\rho}{\sigma}=
\Tr\rho-\Tr(\rho-t\sigma)_+
=
t\Tr\sigma-\Tr(\rho-t\sigma)_-\,,
\end{align}
where $\errm{t}{}{\rho}{\sigma}$ is the \ki{mixed Neyman-Pearson error probability}.
In particular, 
\begin{align}\label{eq:mixederr variational}
\errm{t}{}{\rho}{\sigma}=
\Tr\rho-\max_{T\in\test{\hil}}\Tr(\rho-t\sigma)T=
\min_{T\in\test{\hil}}\left\{\Tr\rho(I-T)+t\Tr\sigma T\right\},
\end{align}
where $T$ attains the minimum if and only if $\{\rho-t\sigma>0\}\le T\le\{\rho-t\sigma\ge 0\}$.
This is referred to as the (quantum) Neyman-Pearson lemma.
\end{rem}

\begin{lemma}
For any $\rho,\sigma\in\B(\hil)\p$ and any $t\in(0,+\infty)$,
\begin{align*}
0\le\errm{t}{}{\rho}{\sigma}\le\min\left\{\Tr\rho\sigma^0,t\Tr\sigma\rho^0\right\}.
\end{align*}
\end{lemma}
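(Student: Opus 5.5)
The plan is to use the variational formula \eqref{eq:mixederr variational}, which expresses the mixed error as a minimum over tests,
\begin{align*}
\errm{t}{}{\rho}{\sigma}=\min_{T\in\test{\hil}}\left\{\Tr\rho(I-T)+t\Tr\sigma T\right\}.
\end{align*}
Non-negativity is then immediate. For every test $T$ we have $I-T\ge 0$ and $T\ge 0$. Since $\rho,\sigma\ge 0$ and $t>0$, both $\Tr\rho(I-T)$ and $t\Tr\sigma T$ are non-negative, so their sum is too. Alternatively, one can read this directly off the definition $\erri{t}{}{\rho}{\sigma}+t\errii{t}{}{\rho}{\sigma}$ as a sum of traces of products of PSD operators.

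For the upper bound, it suffices to exhibit two specific tests and evaluate the objective at each; the minimum is at most each of these values.

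\textbf{First test: $T:=\sigma^0$.} This is a projection, hence a test, and $\Tr\sigma\sigma^0=\Tr\sigma$. The objective becomes
\begin{align*}
\Tr\rho(I-\sigma^0)+t\Tr\sigma.
\end{align*}
This is not $\Tr\rho\sigma^0$, so the test must be chosen the other way round. Taking $T:=I-\sigma^0$ gives
\begin{align*}
\Tr\rho\sigma^0+t\Tr\sigma(I-\sigma^0)=\Tr\rho\sigma^0,
\end{align*}
because $\sigma(I-\sigma^0)=0$.

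\textbf{Second test: $T:=\rho^0$.} This gives
\begin{align*}
\Tr\rho(I-\rho^0)+t\Tr\sigma\rho^0=t\Tr\sigma\rho^0,
\end{align*}
because $\rho(I-\rho^0)=0$.

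Combining the two evaluations, $\errm{t}{}{\rho}{\sigma}\le\min\{\Tr\rho\sigma^0,t\Tr\sigma\rho^0\}$.

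There is no real obstacle here. The only points needing care are the correct choice of the test in each case, and the identities $\rho\rho^0=\rho$ and $\sigma\sigma^0=\sigma$, which follow from the definition \eqref{eq:supppr} of the support projection.
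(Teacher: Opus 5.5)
Your proof is correct and is essentially the paper's argument unpacked by one level: the paper combines \eqref{eq:mixederror} with the bounds \eqref{eq:hsn bounds}--\eqref{eq:hsp bounds}, and the lower bounds $\Tr\rho(I-\sigma^0)\le\Tr(\rho-t\sigma)_+$ and $t\Tr\sigma(I-\rho^0)\le\Tr(\rho-t\sigma)_-$ used there come from inserting $T=I-\sigma^0$ into \eqref{eq:hsp meas} and $T=I-\rho^0$ into \eqref{eq:hsn meas}. These are exactly your two tests, the second after relabeling $T\mapsto I-T$. In a final write-up, simply drop the false start with $T=\sigma^0$.
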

\begin{proof}
Immediate from \eqref{eq:mixederror} and \eqref{eq:hsn bounds}--\eqref{eq:hsp bounds}.
\end{proof}

\begin{lemma}\label{lemma:errprob mon}
For any $\rho,\sigma\in\B(\hil)\p$, the functions
\begin{align*}
t\mapsto \hsn{t}{}(\rho\|\sigma),\ds\ds\ds
t\mapsto \errm{t}{}{\rho}{\sigma},\ds\ds\ds
t\mapsto \erri{t}{}{\rho}{\sigma}
\end{align*}
are monotone increasing, and
\begin{align*}
t\mapsto \hsp{t}{}(\rho\|\sigma),\ds\ds\ds
t\mapsto \errii{t}{}{\rho}{\sigma}
\end{align*}
are monotone decreasing on $(0,+\infty)$.
\end{lemma}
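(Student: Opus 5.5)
The two hockey stick statements follow from the variational formulas \eqref{eq:hsn meas}--\eqref{eq:hsp meas}. For every fixed test $T\in\test{\hil}$, the map $t\mapsto\Tr(t\sigma-\rho)T$ is nondecreasing, because $\Tr\sigma T\ge 0$. Likewise $t\mapsto\Tr(\rho-t\sigma)T$ is nonincreasing. A pointwise maximum of nondecreasing (resp.\ nonincreasing) functions is again nondecreasing (resp.\ nonincreasing). Hence $t\mapsto\hsn{t}{}(\rho\|\sigma)$ is increasing and $t\mapsto\hsp{t}{}(\rho\|\sigma)$ is decreasing. For the mixed error I would use \eqref{eq:mixederr variational}: $\errm{t}{}{\rho}{\sigma}$ is a minimum over $T$ of $\Tr\rho(I-T)+t\Tr\sigma T$, and each of these affine functions is nondecreasing in $t$. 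So the minimum is nondecreasing. (Alternatively, \eqref{eq:mixederror} writes it as $\Tr\rho$ minus a decreasing function.)

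The Neyman--Pearson error probabilities $\erri{t}{}{\rho}{\sigma}=\Tr\rho\{\rho-t\sigma\le 0\}$ and $\errii{t}{}{\rho}{\sigma}=\Tr\sigma\{\rho-t\sigma>0\}$ are the substantive part, since the spectral projections of $\rho-t\sigma$ are not monotone in $t$ in the operator order. My plan is an exchange argument based on the optimality in the quantum Neyman--Pearson lemma. Fix $0<s<t$ and write $P_s:=\{\rho-s\sigma>0\}$ and $P_t:=\{\rho-t\sigma>0\}$. Optimality of $P_s$ for parameter $s$ and of $P_t$ for parameter $t$, each compared against the other test, gives
\begin{align*}
\Tr\rho(I-P_s)+s\Tr\sigma P_s&\le \Tr\rho(I-P_t)+s\Tr\sigma P_t,\\
\Tr\rho(I-P_t)+t\Tr\sigma P_t&\le \Tr\rho(I-P_s)+t\Tr\sigma P_s.
\end{align*}
Adding the two inequalities yields $(t-s)\Tr\sigma(P_t-P_s)\le 0$. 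Thus $\errii{t}{}{\rho}{\sigma}\le\errii{s}{}{\rho}{\sigma}$, i.e.\ the type II error is decreasing.

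For the type I error, I would rearrange the first inequality. It gives $\Tr\rho(I-P_s)-\Tr\rho(I-P_t)\le s\Tr\sigma(P_t-P_s)\le 0$, where the last step uses the result just proved. Hence $\erri{s}{}{\rho}{\sigma}\le\erri{t}{}{\rho}{\sigma}$, so the type I error is increasing.

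The main obstacle is exactly this non-commutative step. Here one cannot argue via $P_t\le P_s$, which is false in general. The point to verify carefully is therefore that the Neyman--Pearson lemma, as stated after \eqref{eq:mixederr variational}, really provides the two optimality inequalities above with arbitrary competitor tests. With that in hand, the rest is bookkeeping.
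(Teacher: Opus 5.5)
Your proposal is correct and follows essentially the same route as the paper: monotonicity of the hockey stick divergences and of the mixed error comes from the variational formulas \eqref{eq:hsn meas}--\eqref{eq:hsp meas} and \eqref{eq:mixederr variational}, and monotonicity of the type I/II errors comes from the two cross-optimality inequalities between the Neyman--Pearson tests at $s$ and $t$ (the argument the paper attributes to Nagaoka--Hayashi). The step you flag is indeed supplied by \eqref{eq:mixederr variational}: for every test $T$ one has $\Tr\rho(I-T)+t\Tr\sigma T=\Tr\rho-\Tr(\rho-t\sigma)T\ge\Tr\rho-\Tr(\rho-t\sigma)_+$, with equality at $T=\{\rho-t\sigma>0\}$.
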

\begin{proof}
The assertions about $\hsn{t}{}(\rho\|\sigma)$, $\hsp{t}{}(\rho\|\sigma)$, and 
$\errm{t}{}{\rho}{\sigma}$ follow immediately from the variational expressions
\eqref{eq:hsn meas}--\eqref{eq:hsp meas} and \eqref{eq:mixederr variational}.
For any $t_1,t_2\in(0,+\infty)$, 
\eqref{eq:mixederr variational} yields
\begin{align*}
\erri{t_1}{}{\rho}{\sigma}+t_1\errii{t_1}{}{\rho}{\sigma}
&\le
\Tr\rho(I-\np{t_2}{}{\rho}{\sigma})+t_1\Tr\sigma\np{t_2}{}{\rho}{\sigma}
=
\erri{t_2}{}{\rho}{\sigma}+t_1\errii{t_2}{}{\rho}{\sigma},\\
\erri{t_2}{}{\rho}{\sigma}+t_2\errii{t_2}{}{\rho}{\sigma}
&\le
\Tr\rho(I-\np{t_1}{}{\rho}{\sigma})+t_2\Tr\sigma\np{t_1}{}{\rho}{\sigma}
=
\erri{t_1}{}{\rho}{\sigma}+t_2\errii{t_1}{}{\rho}{\sigma}.
\end{align*}
By rearranging, we get 
\begin{align}\label{eq:errprob mon proof1}
t_2\bz\errii{t_2}{}{\rho}{\sigma}-\errii{t_1}{}{\rho}{\sigma}\jz
\le
\erri{t_1}{}{\rho}{\sigma}-\erri{t_1}{}{\rho}{\sigma}
\le
t_1\bz\errii{t_2}{}{\rho}{\sigma}-\errii{t_1}{}{\rho}{\sigma}\jz.
\end{align}
Thus,
\begin{align*}
0\le (t_1-t_2)\bz\errii{t_2}{}{\rho}{\sigma}-\errii{t_1}{}{\rho}{\sigma}\jz,
\end{align*}
showing that $t\mapsto \errii{t}{}{\rho}{\sigma}$ is monotone decreasing, and
thus the first inequality in \eqref{eq:errprob mon proof1} gives that 
$t\mapsto \erri{t}{}{\rho}{\sigma}$ is monotone increasing.
\end{proof}

\begin{rem}
The above monotonicity properties of $\erri{t}{}{\rho}{\sigma}$ and $\errii{t}{}{\rho}{\sigma}$
are important ingredients of the information spectrum method; the above proof is from 
\cite{NH}.
\end{rem}

The following was shown in \cite[Lemma 2.2]{LiuHircheCheng2025}. 
It also follows as a special case of the differentiability properties proved in Section \ref{sec:V.B}.

\begin{lemma}\label{lemma:measured hsp derivative}
The function $(0,+\infty)\ni t\mapsto \Tr(\rho-t\sigma)_+$ is differentiable at every 
$t\in(0,+\infty)$, except possibly at the points where $\rho-t\sigma$ is singular, and at each $t\in(0,+\infty)$, 
\begin{align*}
\partial_-\Tr(\rho-t\sigma)_+=-\Tr\sigma\{\rho-t\sigma\ge 0\},\ds\ds\ds
\partial_+\Tr(\rho-t\sigma)_+=-\Tr\sigma\{\rho-t\sigma> 0\}.
\end{align*}
\end{lemma}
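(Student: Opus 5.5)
Write $g(t):=\Tr(\rho-t\sigma)_+$. By \eqref{eq:hsp meas}, $g(t)=\max_{T\in\bT(\hil)}\Tr(\rho-t\sigma)T$ is a maximum of affine functions of $t$, so $g$ is convex on $(0,+\infty)$. Hence its one-sided derivatives exist everywhere and it is differentiable wherever they coincide. The plan is to compute $\partial_\pm g(t)$ by a sandwich argument built from the variational formula and the quantum Neyman-Pearson lemma \eqref{eq:mixederr variational}. Then compare $\{\rho-t\sigma>0\}$ with $\{\rho-t\sigma\ge 0\}$: these projections differ only by the projection onto $\ker(\rho-t\sigma)$, which vanishes when $\rho-t\sigma$ is invertible.

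\emph{Lower bounds on increments.} Fix $t$, and put $P_t:=\{\rho-t\sigma>0\}$ and $\bar P_t:=\{\rho-t\sigma\ge0\}$. Both attain the maximum defining $g(t)$, by the Neyman-Pearson lemma. For any $s$, using $P_t$ as a feasible test at parameter $s$ gives
\begin{align*}
g(s)\ge\Tr(\rho-s\sigma)P_t=g(t)-(s-t)\Tr\sigma P_t .
\end{align*}
The same bound holds with $\bar P_t$ in place of $P_t$. For $s>t$, dividing by $s-t$ and using the bound with $P_t$ gives $\liminf_{s\searrow t}(g(s)-g(t))/(s-t)\ge-\Tr\sigma P_t$. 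For $s<t$, dividing by $s-t<0$ and using the bound with $\bar P_t$ gives $\limsup_{s\nearrow t}(g(s)-g(t))/(s-t)\le-\Tr\sigma\bar P_t$.

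\emph{Upper bounds on increments.} Reverse the roles and use the optimal test $P_s$ at parameter $s$, which is feasible at $t$:
\begin{align*}
g(t)\ge g(s)-(t-s)\Tr\sigma P_s ,
\end{align*}
so $g(s)-g(t)\le (t-s)\Tr\sigma P_s$. This yields $(g(s)-g(t))/(s-t)\le-\Tr\sigma P_s$ for $s>t$, and $\ge-\Tr\sigma P_s$ for $s<t$. It therefore remains to find the limits of $\Tr\sigma P_s$ as $s\searrow t$ and $s\nearrow t$. This is the main step.

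\emph{The limits of $\Tr\sigma P_s$.} By Lemma \ref{lemma:errprob mon}, $s\mapsto\Tr\sigma P_s=\errii{s}{}{\rho}{\sigma}$ is monotone decreasing, so both one-sided limits exist. The claim is
\begin{align*}
\lim_{s\searrow t}\Tr\sigma P_s=\Tr\sigma P_t,\qquad \lim_{s\nearrow t}\Tr\sigma P_s=\Tr\sigma\bar P_t .
\end{align*}
I would prove this by strong operator convergence of the spectral projections. For $s\searrow t$, $\rho-s\sigma\nearrow\rho-t\sigma$ in norm. Take $x\in\ran P_t$ with $\inner{x}{(\rho-t\sigma)x}\ge c\norm{x}^2$ for some $c>0$; by norm continuity, $\inner{x}{(\rho-s\sigma)x}>0$ for $s$ close to $t$. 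This forces $P_s$ to be eventually large on $\ran P_t$. In finite dimensions, the cleaner route is continuity of eigenvalues (Weyl's inequality): the number of positive eigenvalues of $\rho-s\sigma$ equals that of $\rho-t\sigma$ for $s$ slightly above $t$, and eigenprojection groups depend continuously on $s$. Hence $P_s\to P_t$ in norm as $s\searrow t$. Similarly, for $s\nearrow t$, the kernel part $\ker(\rho-t\sigma)$ satisfies $\inner{x}{(\rho-s\sigma)x}=(t-s)\inner{x}{\sigma x}\ge0$. A perturbation argument on the corresponding eigenvalue group, monotone in $s$ since $\rho-s\sigma$ is decreasing in $s$, shows that the eigenvalues emanating from $0$ become nonnegative, and in fact positive on $\supp\sigma$. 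Hence $\Tr\sigma P_s\to\Tr\sigma\bar P_t$. The vectors in $\ker\sigma\cap\ker(\rho-t\sigma)$ contribute $0$ to $\Tr\sigma\,\cdot$, so they do not matter. This perturbation bookkeeping is the delicate point.

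\emph{Conclusion.} Combining the two sandwiches gives $\partial_+g(t)=-\Tr\sigma P_t$ and $\partial_-g(t)=-\Tr\sigma\bar P_t$. If $\rho-t\sigma$ is nonsingular, then $P_t=\bar P_t$, so $g$ is differentiable at $t$.
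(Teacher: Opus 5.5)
Your skeleton is sound and gives the right formulas. The ``lower bounds on increments'' step is exactly the paper's Lemma~\ref{lemma:V.4}. Your ``upper bounds on increments'' step re-uses the optimal test $P_s$ from the nearby parameter at $t$, and plays the role that convexity plus density of differentiability points plays in the proof of Theorem~\ref{theorem:V.3}. In both proofs, everything then reduces to a one-sided semicontinuity of $s\mapsto\Tr\sigma\{\rho-s\sigma>0\}$ at $t$.

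The one step you do not actually prove is the left limit. Monotonicity in $s$ only gives that the eigenvalues emanating from $0$ are $\ge 0$ for $s<t$. The assertion that they are ``positive on $\supp\sigma$'' is precisely what is needed, namely that $\ker(\rho-s\sigma)$ carries no $\sigma$-weight for $s<t$ near $t$, and you give no argument for it. The assertion is true: on $(\ker\rho\cap\ker\sigma)^\perp$ one has $\rho+\sigma>0$, so the determinant of $\rho-s\sigma$ restricted there is a nonzero polynomial in $s$. Hence $\ker(\rho-s\sigma)=\ker\rho\cap\ker\sigma$ for all but finitely many $s$.

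You do not need this, however. Your own sandwich $-\lim_{s\nearrow t}\Tr\sigma P_s\le\partial_-g(t)\le-\Tr\sigma\bar P_t$ already forces $\lim_{s\nearrow t}\Tr\sigma P_s\ge\Tr\sigma\bar P_t$, so only $\limsup_{s\nearrow t}\Tr\sigma P_s\le\Tr\sigma\bar P_t$ is required. Choose $\delta>0$ smaller than $|\lambda|$ for every negative eigenvalue $\lambda$ of $\rho-t\sigma$. Then $P_s\le\egy_{(-\delta,+\infty)}(\rho-s\sigma)$, and since $-\delta\notin\spec(\rho-t\sigma)$, norm continuity gives $\egy_{(-\delta,+\infty)}(\rho-s\sigma)\to\egy_{(-\delta,+\infty)}(\rho-t\sigma)=\bar P_t$, which yields the bound. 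Symmetrically, for the right limit only $\liminf_{s\searrow t}\Tr\sigma P_s\ge\Tr\sigma P_t$ is needed. It follows from $P_s\ge\egy_{(\delta,+\infty)}(\rho-s\sigma)\to P_t$, with $0<\delta$ below the smallest positive eigenvalue of $\rho-t\sigma$, without any eigenvalue counting.

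As for the comparison, the paper does not prove this lemma directly; it cites \cite{LiuHircheCheng2025} and obtains it as a special case of Theorem~\ref{theorem:V.3}. There the semicontinuity input is Lemma~\ref{lemma:V.5}, the norm lower semicontinuity of $\psi\mapsto\eta(\{\psi>0\})$, proved via Haagerup's $L^1$-space and the measure topology. The left derivative is reduced to the \emph{same} lower semicontinuity through $\{\rho-t\sigma\ge0\}=\egy-\{\sigma-t^{-1}\rho>0\}$, rather than to an upper semicontinuity from the left. Your route is more elementary and self-contained but relies on finite-dimensional spectral continuity. The paper's argument works verbatim in any von Neumann algebra.
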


\begin{cor}\label{cor:measured hsn derivative}
\begin{align*}
\partial_-\Tr(\rho-t\sigma)_-=\Tr\sigma\{\rho-t\sigma< 0\},\ds\ds\ds
\partial_+\Tr(\rho-t\sigma)_-=\Tr\sigma\{\rho-t\sigma\le 0\}.
\end{align*}
\end{cor}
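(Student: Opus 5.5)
The statement to prove is Corollary \ref{cor:measured hsn derivative}. The plan is to reduce it directly to Lemma \ref{lemma:measured hsp derivative} through the identity
\begin{align*}
\Tr(\rho-t\sigma)_-=\Tr(\rho-t\sigma)_+-\Tr(\rho-t\sigma)=\Tr(\rho-t\sigma)_+-\Tr\rho+t\Tr\sigma,
\end{align*}
which holds for every $t$, since $A_+-A_-=A$ for any self-adjoint $A$. The term $-\Tr\rho+t\Tr\sigma$ is affine in $t$ and has derivative $\Tr\sigma$. So both one-sided derivatives of $t\mapsto\Tr(\rho-t\sigma)_-$ exist wherever those of $t\mapsto\Tr(\rho-t\sigma)_+$ do, which is everywhere on $(0,+\infty)$ by the lemma. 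They satisfy
\begin{align*}
\partial_\pm\Tr(\rho-t\sigma)_-=\partial_\pm\Tr(\rho-t\sigma)_++\Tr\sigma .
\end{align*}

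Next I substitute the formulas from Lemma \ref{lemma:measured hsp derivative}. For the left derivative this gives $\Tr\sigma-\Tr\sigma\{\rho-t\sigma\ge 0\}$. For the right derivative it gives $\Tr\sigma-\Tr\sigma\{\rho-t\sigma>0\}$.

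Finally I use the complementary spectral projections of the self-adjoint operator $\rho-t\sigma$:
\begin{align*}
I-\{\rho-t\sigma\ge 0\}=\{\rho-t\sigma<0\},\qquad I-\{\rho-t\sigma>0\}=\{\rho-t\sigma\le 0\}.
\end{align*}
Multiplying by $\sigma$ and taking the trace yields exactly the two stated formulas.

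There is no real obstacle. The only point needing care is pairing the left and right derivatives with the strict and non-strict spectral projections correctly, and the complement identities above take care of that.
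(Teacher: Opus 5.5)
Your proposal is correct and follows the paper's proof: it combines Lemma~\ref{lemma:measured hsp derivative} with the identity $\Tr(\rho-t\sigma)_-=\Tr(\rho-t\sigma)_+-\Tr(\rho-t\sigma)$, whose affine term contributes $\Tr\sigma$ to both one-sided derivatives. You also spell out the complement step $I-\{\rho-t\sigma\ge 0\}=\{\rho-t\sigma<0\}$ and $I-\{\rho-t\sigma>0\}=\{\rho-t\sigma\le 0\}$, which the paper leaves implicit, and you pair each one-sided derivative with the correct spectral projection.
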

\begin{proof}
Immediate from Lemma \ref{lemma:measured hsp derivative} and 
$\Tr(\rho-t\sigma)_-=\Tr(\rho-t\sigma)_+-\Tr(\rho-t\sigma)$.
\end{proof}

Motivated by Lemma \ref{lemma:measured hsp derivative},
one may consider various quantum extensions of 
the type II and type I error probabilities, respectively, as
\begin{align}\label{eq:q hsp derivative}
\errii{t}{q}{\rho}{\sigma}:=-\frac{d}{dt}\hsp{t}{q}(\rho\|\sigma),\ds\ds\ds\ds\ds
\erri{t}{q}{\rho}{\sigma}:=
\Tr\rho-\hsp{t}{q}(\rho\|\sigma)-t \errii{t}{q}{\rho}{\sigma},
\end{align}
for every $t\in(0,+\infty)$ such that the derivative in \eqref{eq:q hsp derivative} exists.

\begin{ex}
Note that for commuting states $\rho,\sigma$, 
$\hsp{t}{q}(\rho\|\sigma)=\Tr(\rho-t\sigma)_+=\hsp{t}{\meas}(\rho\|\sigma)$
for any quantum hockey stick divergence $\hsp{t}{q}$. 
By Example \ref{eq:Petz hs} and Lemma \ref{lemma:measured hsp derivative},
\begin{align*}
\errii{t}{\petz}{\rho}{\sigma}:=
-\frac{d}{dt}\hsp{t}{\petz}(\rho\|\sigma)=
-\frac{d}{dt}\hsp{t}{}(\nsd{\rho}{\sigma}\|\nsdd{\rho}{\sigma})=
\Tr\nsdd{\rho}{\sigma}\{\nsd{\rho}{\sigma}-t\nsdd{\rho}{\sigma}>0\}
=\errii{t}{}{\nsd{\rho}{\sigma}}{\nsdd{\rho}{\sigma}},
\end{align*}
and
\begin{align*}
\erri{t}{\petz}{\rho}{\sigma}&:=
\Tr\rho-
\hsp{t}{\petz}(\rho\|\sigma)-
t\errii{t}{\petz}{\rho}{\sigma}\\
&=\Tr\nsd{\rho}{\sigma}-
\hsp{t}{}(\nsd{\rho}{\sigma}\|\nsdd{\rho}{\sigma})-
t\errii{t}{}{\nsd{\rho}{\sigma}}{\nsdd{\rho}{\sigma}}
=\erri{t}{}{\nsd{\rho}{\sigma}}{\nsdd{\rho}{\sigma}}.
\end{align*}
\end{ex}

\subsection{Neyman-Pearson tests and error probabilities at $t\to+\infty$}
\label{sec:NP at infty}

\begin{lemma}\label{lemma:pospr limit}
For any $\rho,\sigma\in\B(\hil)\p$,
\begin{align}
\{\rho-(+\infty)\sigma>0\}&:=\lim_{t\to+\infty}\{\rho-t\sigma>0\}=\{(I-\sigma^0)\rho(I-\sigma^0)>0\},
\label{eq:pospr limit}\\
\{\rho-(+\infty)\sigma\ge 0\}&:=\lim_{t\to+\infty}\{\rho-t\sigma\ge 0\}=(I-\sigma^0).
\label{eq:pospr limit2}
\end{align}
\end{lemma}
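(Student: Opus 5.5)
The plan is to strip off the common kernel of $\rho$ and $\sigma$, and then to analyse the inertia and the spectral projections of $\rho-t\sigma$ for large $t$ using the block decomposition along $\supp\sigma\oplus\ker\sigma$. Throughout, write $P:=\sigma^0$ and $Q:=I-\sigma^0$.

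\emph{Step 1 (reduction).} Let $K:=\ker\rho\cap\ker\sigma$. Both $\rho$ and $\sigma$ vanish on $K$ and leave $K^\perp$ invariant. Hence for every $t$,
\[
\{\rho-t\sigma>0\}=\{\rho-t\sigma>0\}\vert_{K^\perp}\oplus 0_K,
\qquad
\{\rho-t\sigma\ge 0\}=\{\rho-t\sigma\ge 0\}\vert_{K^\perp}\oplus I_K .
\]
So it suffices to show that on $K^\perp$ both projections converge to $Q\vert_{K^\perp}$. The claimed formulas then follow from two facts. First, $Q\vert_{K^\perp}\oplus I_K=Q$, since $K\subseteq\ker\sigma$. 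Second, $Q\vert_{K^\perp}\oplus 0_K=\{Q\rho Q>0\}$. To see the second fact, take $v\in\ker\sigma$ with $Q\rho Qv=0$. Then $\inner{v}{\rho v}=0$, and positivity of $\rho$ gives $\rho^{1/2}v=0$, hence $\rho v=0$ and $v\in K$. So the kernel of $Q\rho Q$ is exactly $\supp\sigma\oplus K$.

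\emph{Step 2 (inertia on $K^\perp$).} Now assume $K=\{0\}$. By the argument just given, $D:=Q\rho Q$ is positive definite on $\ran Q=\ker\sigma$. In block form with respect to $\ran P\oplus\ran Q$,
\[
\rho-t\sigma=\begin{pmatrix}P\rho P-t\sigma & P\rho Q\\ Q\rho P & D\end{pmatrix}.
\]
The Haynsworth inertia additivity formula says that the inertia of this matrix equals the inertia of $D$ plus the inertia of the Schur complement $S_t:=P\rho P-t\sigma-P\rho QD^{-1}Q\rho P$. The operator $\sigma$ is positive definite on $\ran P$, so $S_t<0$ on $\ran P$ for all large $t$. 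Consequently, for large $t$, $\rho-t\sigma$ has exactly $\dim\ker\sigma$ positive eigenvalues, all its other eigenvalues are negative, and it has no zero eigenvalue. In particular $\{\rho-t\sigma>0\}=\{\rho-t\sigma\ge0\}$ for large $t$.

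\emph{Step 3 (convergence of the projection).} Rescale to $A_s:=s\rho-\sigma$ with $s=1/t$; this does not change the spectral projections. As $s\searrow0$, $A_s\to-\sigma$ in norm. The eigenvalues of $-\sigma$ are $0$, with eigenprojection $Q$, and values $\le-\lambda_{\min}$, where $\lambda_{\min}>0$ is the smallest nonzero eigenvalue of $\sigma$. By continuity of eigenvalues, for small $s$ exactly $\dim\ker\sigma$ eigenvalues of $A_s$ lie in $(-\lambda_{\min}/2,\lambda_{\min}/2)$ and the rest lie below $-\lambda_{\min}/2$. Step 2 shows these $\dim\ker\sigma$ eigenvalues are precisely the positive ones. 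So the positive spectral projection of $A_s$ is the Riesz projection $\frac{1}{2\pi i}\oint_\Gamma(z-A_s)^{-1}\,dz$, where $\Gamma$ is a circle of radius $\lambda_{\min}/2$ around $0$. This depends continuously on $s$ and tends to the corresponding projection of $-\sigma$, namely $Q$.

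Combining Steps 1--3 yields both limits in the statement. The main obstacle is Step 2: the perturbation in Step 3 alone cannot decide the signs of the eigenvalues that emanate from $0$. This is exactly why the common kernel $K$ must be removed first, since on $K^\perp$ the block $Q\rho Q$ is invertible and the Schur complement argument applies.
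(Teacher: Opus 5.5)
Your proof is correct, and it takes a genuinely different route from the paper.

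\textbf{The paper's route.} The paper does not remove $K=\ker\rho\cap\ker\sigma$ and does not rescale. It works with the singular limit $t\to+\infty$ directly in the block decomposition $\hil=\ran\sigma\oplus\ker\sigma$. Using the monotone decrease in $t$ of the ordered eigenvalues of $\rho-t\sigma$ (Courant--Fischer), it fixes one contour $\gamma$ that encloses the positive eigenvalues of $\rho-t\sigma$ for all large $t$, together with the positive eigenvalues of the compression $\rho_{22}$. The $2\times2$ block-inverse (Schur complement) formula then shows that $(zI-(\rho-t\sigma))^{-1}$ converges on $\gamma$ to $0\oplus(zI_2-\rho_{22})^{-1}$. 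The Riesz integral gives $0\oplus\{\rho_{22}>0\}$ at once, and a contour that also encloses $0$ gives $0\oplus I_2=I-\sigma^0$.

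\textbf{Your route.} You settle the signs of the eigenvalues with Haynsworth's inertia formula on $K^\perp$. The substitution $s=1/t$ turns the singular limit into a regular norm perturbation of $-\sigma$. After that you only need continuity of the Riesz projection at the isolated eigenvalue $0$ of $-\sigma$.

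\textbf{What your approach buys.} Your contour is legitimate for transparent reasons. The circle around $0$ captures the whole eigenvalue cluster however its members move, and inertia tells you that this cluster is exactly the positive part. The paper's single $\gamma$ must exclude $0$ whenever $K\neq\{0\}$. It therefore tacitly needs the positive eigenvalues of $\rho-t\sigma$ to stay uniformly away from $0$ for all large $t$. That is true, but it requires an argument beyond the monotonicity in $t$ that the paper invokes. You also get extra information: for large $t$, $\{\rho-t\sigma\ge0\}-\{\rho-t\sigma>0\}$ is exactly the projection onto $K$.

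\textbf{What the paper's approach buys.} It avoids the bookkeeping with $K$. You have to verify separately, by computing $\ker(Q\rho Q)$, that the limit equals $\{(I-\sigma^0)\rho(I-\sigma^0)>0\}$. In the paper this identification falls out of the limiting resolvent as $\{\rho_{22}>0\}$. Both limits are also handled uniformly there, just by the choice of contour.
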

\begin{proof}
Both statements are obvious when $\rho^0\le\sigma^0$, and hence for the rest we assume that
$\rho^0\nleq\sigma^0$.
Let us consider the block forms of $\rho$ and $\sigma$ according to the decomposition 
$\hil=\ran\sigma\oplus\ker\sigma$, i.e., 
\begin{align}\label{eq:pospr proof1}
\sigma=\begin{bmatrix}\sigma & 0 \\ 0 & 0\end{bmatrix},\ds\ds\ds
\rho=\begin{bmatrix}\rho_{11} & \rho_{12} \\ \rho_{21} & \rho_{22}\end{bmatrix}.
\end{align}
Note that the decreasingly ordered eigenvalues 
$\lambda^{\downarrow}_1(t)\ge \ldots\ge \lambda^{\downarrow}_d(t)$
of $\rho-t\sigma$ are monotone decreasing in $t$, which can be easily seen from their 
Courant-Weyl-Fischer minimax representation. (Here, $d:=\dim\hil$.)
Thus, there exist $1\le r\le m\le d$ and a $t_0\in(0,+\infty)$ such that for all $t\ge t_0$, 
\begin{align*}
\lambda^{\downarrow}_1(t)\ge \ldots\ge \lambda^{\downarrow}_r(t)>0=
\lambda^{\downarrow}_{r+1}(t)= \ldots= \lambda^{\downarrow}_m(t)=0>
\lambda^{\downarrow}_{m+1}(t)\ge \ldots\ge \lambda^{\downarrow}_d(t).
\end{align*}
We can therefore take a contour $\gamma$ in the complex plane enclosing exactly the 
strictly positive eigenvalues of $\rho-t\sigma$ for every $t\ge t_0$, 
as well as the strictly positive eigenvalues of $\rho_{22}$,
so that 
\begin{align*}
\{\rho-t\sigma>0\}=
\frac{1}{2\pi i}\oint_{\gamma}\bz zI-(\rho-t\sigma)\jz\inv\,dz,\ds\ds\ds t\ge t_0.
\end{align*} 
Using now the block decomposition \eqref{eq:pospr proof1}
and a known formula for the inverse of $2\times 2$ block operators (see, e.g., 
\cite[Proposition 2.8.7]{Bernstein_matrix}),
 we have 
\begin{align*}
\bz zI-(\rho-t\sigma)\jz\inv=
\begin{bmatrix} S_{z,t}\inv 
& -S_{z,t}\inv\rho_{12}(zI_2-\rho_{22})\inv \\
-(zI_2-\rho_{22})\inv\rho_{21}S_{z,t}\inv 
& 
(zI_2-\rho_{22})\inv+(zI_2-\rho_{22})\inv\rho_{21}S_{z,t}\inv\rho_{12}(zI_2-\rho_{22})\inv \end{bmatrix},
\end{align*}
where 
\begin{align*}
S_{z,t}:=zI_1-\rho+t\sigma
\end{align*}
is considered as an operator on $\hil_1:=\ran\sigma$, and $zI_2-\rho_{22}$ as an operator on 
$\hil_2:=\ker(\sigma)$. It is straightforward to verify that 
\begin{align*}
\lim_{t\to+\infty}\sup_{z\in\gamma}\norm{S_{z,t}\inv}=0, 
\end{align*}
whence
\begin{align*}
\lim_{t\to+\infty}\bz zI-(\rho-t\sigma)\jz\inv=
\begin{bmatrix} 0 & 0 \\ 0 & (zI_1-\rho_{22})\inv\end{bmatrix}\,.
\end{align*}
Since $\sup_{t\ge t_0}\sup_{z\in\gamma}\norm{\bz zI-(\rho-t\sigma)\jz\inv}<+\infty$, 
the Lebesgue dominated convergence theorem can be applied to obtain
\begin{align*}
\lim_{t\to+\infty}\{\rho-t\sigma>0\}=
\begin{bmatrix} 0 & 0 \\ 0 & \frac{1}{2\pi i}\oint_{\gamma}(zI_1-\rho_{22})\inv\,dz\end{bmatrix}
=
\begin{bmatrix} 0 & 0 \\ 0 & \{\rho_{22}>0\}\end{bmatrix}.
\end{align*} 
This proves \eqref{eq:pospr limit}, and the proof of \eqref{eq:pospr limit2} follows the same way
by applying the above argument with a contour $\gamma$ that also encloses $0$.
\end{proof}

\begin{cor}\label{cor:IV.20}
For any $\rho,\sigma\in\B(\hil)\p$,
\begin{align}
\lim_{t\to+\infty}\Tr\rho\{\rho-t\sigma>0\}&=
\lim_{t\to+\infty}\Tr\rho\{\rho-t\sigma\ge 0\}=\Tr\rho(I-\sigma^0),
\label{eq:pospr limit3}\\
\lim_{t\to+\infty}\erri{t}{}{\rho}{\sigma}&=
\lim_{t\to+\infty}\Tr\rho\{\rho-t\sigma< 0\}=\Tr\rho\sigma^0,
\label{eq:pospr limit3-2}\\
\lim_{t\to+\infty}t\errii{t}{}{\rho}{\sigma}&=
\lim_{t\to+\infty}t\Tr\sigma\{\rho-t\sigma>0\}=\lim_{t\to+\infty}t\Tr\sigma\{\rho-t\sigma\ge 0\}=0.
\label{eq:pospr limit4}\\
\lim_{t\to+\infty}\errm{t}{}{\rho}{\sigma}&=\Tr\rho\sigma^0.
\label{eq:pospr limit6}\\
\lim_{t\to+\infty}\Tr(\rho-t\sigma)_+&=\Tr\rho(I-\sigma^0).
\label{eq:pospr limit5}
\end{align}
\end{cor}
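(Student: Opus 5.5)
Everything will be read off from Lemma \ref{lemma:pospr limit} together with continuity of the trace in finite dimension. Set $Q:=\{(I-\sigma^0)\rho(I-\sigma^0)>0\}$. We have $Q\le I-\sigma^0$, and $(I-\sigma^0)\rho(I-\sigma^0)$ vanishes on $\ran(I-\sigma^0)\ominus\ran Q$. Hence
\[
\Tr\rho Q=\Tr(I-\sigma^0)\rho(I-\sigma^0)Q=\Tr(I-\sigma^0)\rho(I-\sigma^0)=\Tr\rho(I-\sigma^0).
\]
Taking traces against $\rho$ in \eqref{eq:pospr limit} and \eqref{eq:pospr limit2} then gives both limits in \eqref{eq:pospr limit3}. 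Using $\{\rho-t\sigma\le 0\}=I-\{\rho-t\sigma>0\}$ and $\{\rho-t\sigma<0\}=I-\{\rho-t\sigma\ge0\}$, we obtain \eqref{eq:pospr limit3-2} as $\Tr\rho-\Tr\rho(I-\sigma^0)=\Tr\rho\sigma^0$.

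The main point is \eqref{eq:pospr limit4}, since there the factor $t$ is unbounded and plain convergence of the projections is not enough. I would bound it through the projection with the larger trace: $0\le t\Tr\sigma\{\rho-t\sigma>0\}\le t\Tr\sigma\{\rho-t\sigma\ge0\}$. For $P_t:=\{\rho-t\sigma\ge 0\}$ we have $P_t(\rho-t\sigma)P_t\ge0$, hence
\[
t\Tr\sigma P_t\le\Tr\rho P_t=\Tr\rho\{\rho-t\sigma\ge0\}.
\]
This only shows that $t\Tr\sigma P_t$ is bounded, so a finer argument is needed. I would split
\[
\Tr(\rho-t\sigma)P_t=\Tr(\rho-t\sigma)_+ .
\]
By the previous step, $\Tr\rho P_t\to\Tr\rho(I-\sigma^0)$. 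So it suffices to show $\Tr(\rho-t\sigma)_+\to\Tr\rho(I-\sigma^0)$, which is \eqref{eq:pospr limit5}; subtracting then gives $t\Tr\sigma P_t\to0$. For the lower bound, $\Tr(\rho-t\sigma)_+\ge\Tr\rho(I-\sigma^0)$ holds by choosing the test $T=I-\sigma^0$ in \eqref{eq:hsp meas}. For the upper bound, Lemma \ref{lemma:mon hsp bounds} gives $\Tr(\rho-t\sigma)_+\le\Tr\rho P_t$, since $\Tr(\rho-t\sigma)_+=\Tr(\rho-t\sigma)P_t\le\Tr\rho P_t$, and $\Tr\rho P_t\to\Tr\rho(I-\sigma^0)$. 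This squeeze proves \eqref{eq:pospr limit5}, and \eqref{eq:pospr limit4} follows as described. In particular, \eqref{eq:pospr limit5} is proved before \eqref{eq:pospr limit4}.

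Finally, \eqref{eq:pospr limit6} follows either from \eqref{eq:mixederror}, $\errm{t}{}{\rho}{\sigma}=\Tr\rho-\Tr(\rho-t\sigma)_+\to\Tr\rho\sigma^0$, or directly by combining \eqref{eq:pospr limit3-2} and \eqref{eq:pospr limit4}. The only delicate step is the $t\cdot$(type II error) limit, and the squeeze via $\Tr(\rho-t\sigma)_+$ handles it without any rate estimates on the spectral projections.
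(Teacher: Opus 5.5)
Your proof is correct and is essentially the paper's argument: both combine Lemma \ref{lemma:pospr limit} with the lower bound $\Tr(\rho-t\sigma)_+\ge\Tr\rho(I-\sigma^0)$ and the identity $\Tr(\rho-t\sigma)_+=\Tr\rho\{\rho-t\sigma\ge0\}-t\Tr\sigma\{\rho-t\sigma\ge0\}$; the paper reads off $\limsup_{t\to+\infty}t\Tr\sigma\{\rho-t\sigma\ge0\}\le0$ directly and gets \eqref{eq:pospr limit5} last from \eqref{eq:mixederror}, whereas you squeeze \eqref{eq:pospr limit5} first and then subtract. One cosmetic point: the bound $\Tr(\rho-t\sigma)_+\le\Tr\rho P_t$ comes from your own direct computation, not from Lemma \ref{lemma:mon hsp bounds}, which only gives $\le\Tr\rho$.
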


\begin{proof}
The statement about the limits in \eqref{eq:pospr limit3} follows immediately from 
Lemma \ref{lemma:pospr limit} (see also the block decomposition used in its proof),
and \eqref{eq:pospr limit3-2} is a simple reformulation of \eqref{eq:pospr limit3}.
This in turn gives
\begin{align*}
\Tr\rho(I-\sigma^0)
&\le
\liminf_{t\to+\infty}\Tr(\rho-t\sigma)_+
=
\liminf_{t\to+\infty}\bz\Tr\rho\{\rho-t\sigma\ge 0\}-t\Tr\sigma\{\rho-t\sigma\ge 0\}\jz\\
&=
\Tr\rho(I-\sigma^0)-\limsup_{t\to+\infty}t\Tr\sigma\{\rho-t\sigma\ge 0\},
\end{align*}
where the first inequality is due to \eqref{eq:hsp bounds}, and the second equality follows from 
\eqref{eq:pospr limit3}.
Thus,
\begin{align*}
0\le
\liminf_{t\to+\infty}t\Tr\sigma\{\rho-t\sigma> 0\}
\le
\limsup_{t\to+\infty}t\Tr\sigma\{\rho-t\sigma> 0\}
\le
\limsup_{t\to+\infty}t\Tr\sigma\{\rho-t\sigma\ge 0\}
\le
0,
\end{align*}
proving \eqref{eq:pospr limit4}.
Combining \eqref{eq:pospr limit3} and \eqref{eq:pospr limit4} gives
\eqref{eq:pospr limit6}, and 
\eqref{eq:pospr limit5} follows from it due to \eqref{eq:mixederror}.
\end{proof}

\begin{rem}
Note that in general, 
\begin{align*}
\{(I-\sigma^0)\rho(I-\sigma^0)>0\}\ne (I-\sigma^0)\wedge\rho^0.
\end{align*}
As an example, consider $\rho=\pr{+}=\half\begin{bmatrix} 1 & 1 \\ 1 & 1 \end{bmatrix}$ and 
$\sigma=\pr{0}=\begin{bmatrix} 1 & 0 \\0 & 0 \end{bmatrix}$. Then 
\begin{align*}
\{(I-\sigma^0)\rho(I-\sigma^0)>0\}=
\begin{bmatrix} 0 & 0 \\0 & 1 \end{bmatrix}
\ne 0=(I-\sigma^0)\wedge\rho^0.
\end{align*}
\end{rem}

\subsection{Hockey stick $f$-divergences}

\begin{lemma}\label{lemma:hsd def}
Let $f\in\ccf$, let
$(\hsp{t}{q})_{t\in\bR_+}$ be a bounded and measurable family of non-negative quantum hockey stick divergences, and let 
$\rho,\sigma\in\B(\hil)\p$. Then 
\begin{align}\label{eq:hsfdiv def}
\hsd{f,a}{q}(\rho\|\sigma):=f(a)\Tr \sigma+f'(a^+)\Tr(\rho-a\sigma)+
\int_{(0,a]}\hsn{t}{q}(\rho\|\sigma)\,df'(t)+
\int_{(a,+\infty)}\hsp{t}{q}(\rho\|\sigma)\,df'(t)
\end{align} 
gives the same value for any $a\in(0,+\infty)$, which we denote by $\hsd{f}{q}(\rho\|\sigma)$.
Moreover, for any $a\in(0,+\infty)$,
\begin{align}\label{eq:hsfdiv def2}
\hsd{f}{q}(\rho\|\sigma)=
f(a)\Tr \sigma+f'(a^-)\Tr(\rho-a\sigma)+
\int_{(0,a)}\hsn{t}{q}(\rho\|\sigma)\,df'(t)+
\int_{[a,+\infty)}\hsp{t}{q}(\rho\|\sigma)\,df'(t).
\end{align} 
\end{lemma}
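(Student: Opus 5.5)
The plan is to first make sure every term in \eqref{eq:hsfdiv def} is well defined, and then compare the expressions for two values $a<b$ directly. Since concave $f$ is handled by replacing $f$ with $-f$ (which flips $df'$ and all terms linearly), I will assume $f$ is convex, so that $df'$ is a positive measure.

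\textbf{Well-definedness.} By non-negativity, the integrands $\hsn{t}{q}(\rho\|\sigma)$ and $\hsp{t}{q}(\rho\|\sigma)$ are non-negative. By measurability of the family they are measurable. So both integrals exist in $[0,+\infty]$. On any compact $[c,d]\subset(0,+\infty)$ the measure $df'$ is finite and, by the boundedness of the family, the integrands are bounded there. Hence for $a<b$ the integrals over $(a,b]$ are finite. Possible infinities therefore come only from neighbourhoods of $0$ and of $+\infty$, and these neighbourhoods are common to all choices of $a$.

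\textbf{Comparing $a<b$.} Split $\int_{(0,b]}\hsn{t}{q}\,df'=\int_{(0,a]}\hsn{t}{q}\,df'+\int_{(a,b]}\hsn{t}{q}\,df'$ and $\int_{(a,+\infty)}\hsp{t}{q}\,df'=\int_{(a,b]}\hsp{t}{q}\,df'+\int_{(b,+\infty)}\hsp{t}{q}\,df'$. Then
\begin{align*}
\hsd{f,a}{q}-\hsd{f,b}{q}=f(a)\Tr\sigma-f(b)\Tr\sigma+f'(a^+)\Tr(\rho-a\sigma)-f'(b^+)\Tr(\rho-b\sigma)
+\int_{(a,b]}\bigl(\hsp{t}{q}-\hsn{t}{q}\bigr)(\rho\|\sigma)\,df'(t).
\end{align*}
The common possibly infinite tails cancel, which is legitimate because all remaining terms are finite. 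By \eqref{eq:hsn def2} the integrand equals $\Tr\rho-t\Tr\sigma$. This is exactly the classical integrand, so no quantum information survives. The difference is therefore the same as for any commuting pair, e.g.\ the scalars $x=\Tr\rho$, $y=\Tr\sigma$ on a one-dimensional space. For those, Lemma \ref{lemma:persp integral repr} via \eqref{eq:persp integral repr2} shows that both expressions equal $\persp{f}(x,y)$, so the difference vanishes.

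One could also verify this by hand: it amounts to $\int_{(a,b]}(x-ty)\,df'(t)=x(f'(b^+)-f'(a^+))-y\bigl(\,[tf']_{a}^{b}-\int_a^b f'\bigr)$ via integration by parts, combined with \eqref{eq:Taylor1 remainder}. Citing the lemma in the scalar case is cleaner.

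\textbf{Second formula.} For \eqref{eq:hsfdiv def2}, move the atom at $a$. By \eqref{eq:hsp def3}, $\hsn{a}{q}-\hsp{a}{q}=-\Tr(\rho-a\sigma)$, which gives
\begin{align*}
\int_{(0,a]}\hsn{t}{q}\,df'+\int_{(a,\infty)}\hsp{t}{q}\,df'=\int_{(0,a)}\hsn{t}{q}\,df'+\int_{[a,\infty)}\hsp{t}{q}\,df'-\Tr(\rho-a\sigma)\,df'(\{a\}).
\end{align*}
Since $df'(\{a\})=f'(a^+)-f'(a^-)$, the extra term converts $f'(a^+)$ into $f'(a^-)$.

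\textbf{Main obstacle.} The only delicate point is the bookkeeping of infinities. One must ensure that the difference computation never subtracts $+\infty$ from $+\infty$. This is secured by isolating the finite middle piece $(a,b]$ through the boundedness assumption, and noting that the two tails are literally identical on both sides.
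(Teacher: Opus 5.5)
Your proposal is correct and follows the paper's proof:
\begin{itemize}
\item you assume convexity without loss of generality;
\item boundedness gives a finite middle piece on $(a,b]$, with identical (possibly infinite) tails on both sides;
\item you use the identity $\hsp{t}{q}(\rho\|\sigma)-\hsn{t}{q}(\rho\|\sigma)=\Tr(\rho-t\sigma)$;
\item you shift the atom $df'(\{a\})$ to obtain \eqref{eq:hsfdiv def2}.
\end{itemize}
The paper closes the middle step with exactly the integration by parts $\int_{(a,b]}t\,df'(t)=bf'(b^+)-af'(a^+)-\int_{(a,b]}f'(t)\,dt$ that you list as your alternative.

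If you instead cite Lemma \ref{lemma:persp integral repr} in the scalar case, apply it to strictly positive scalars and use that the middle difference is linear in $(\Tr\rho,\Tr\sigma)$. With $x=\Tr\rho=0<y=\Tr\sigma$ and $f(0^+)=+\infty$, both scalar expressions equal $\persp{f}(0,y)=+\infty$, so that comparison yields nothing.
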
 
\begin{proof}
We may assume without loss of generality that $f$ is convex, whence the integral terms in 
\eqref{eq:hsfdiv def} are both integrals of non-negative functions with respect to positive measures, and are therefore non-negative. 
For any $0<a<b$, 
\begin{align*}
\int_{(0,b]}\hsn{t}{q}(\rho\|\sigma)\,df'(t)=
\int_{(0,a]}\hsn{t}{q}(\rho\|\sigma)\,df'(t)+
\int_{(a,b]}\hsn{t}{q}(\rho\|\sigma)\,df'(t), 
\end{align*}
where the last term is finite due to the assumption that 
$(\hsp{t}{q})_{t\in\bR_+}$ is a bounded family. Applying a similar argument to the second integral term in 
\eqref{eq:hsfdiv def}, we get that 
$\hsd{f,a}{q}(\rho\|\sigma)=+\infty$ $\iff$ $\hsd{f,b}{q}(\rho\|\sigma)=+\infty$. 
We may therefore assume for the rest that 
$\hsd{f,a}{q}(\rho\|\sigma)$ 
(equivalently, both of the integral terms in \eqref{eq:hsfdiv def}) are finite for every $a\in(0,+\infty)$.

Then, for any $0<a<b<+\infty$, 
\begin{align}
\hsd{f,b}{q}(\rho\|\sigma)-\hsd{f,a}{q}(\rho\|\sigma)
=&
(f(b)-f(a))\Tr \sigma+f'(b^+)\Tr(\rho-b\sigma)-f'(a^+)\Tr(\rho-a\sigma)\nn\\
&+
\int_{(a,b]}\hsn{t}{q}(\rho\|\sigma)\,df'(t)-
\int_{(a,b]}\hsp{t}{q}(\rho\|\sigma)\,df'(t).
\label{eq:hsd def proof1}
\end{align}
The integral terms above give
\begin{align}
&\int_{(a,b]}\underbrace{\left[\hsn{t}{q}(\rho\|\sigma)-\hsp{t}{q}(\rho\|\sigma)\right]}_{=-\Tr(\rho-t\sigma)}\,df'(t)
=
-(\Tr\rho)df'((a,b])+(\Tr\sigma)\int_{(a,b]}t\,df'(t)\nn\\
&
=-(\Tr\rho)\left[f'(b^+)-f'(a^+)\right]
+(\Tr\sigma)\left[bf'(b^+)-af'(a^+)-\int_{(a,b]}f'(t)\,dt\right]\nn\\
&=
-f'(b^+)\Tr(\rho-b\sigma)+f'(a^+)\Tr(\rho-a\sigma)-(\Tr\sigma)(f(b)-f(a)),
\label{eq:hsd def proof2}
\end{align}
where we used integration by parts to evaluate $\int_{(a,b]}t\,df'(t)$.
Substituting \eqref{eq:hsd def proof2} back to \eqref{eq:hsd def proof1}
yields $\hsd{f,b}{q}(\rho\|\sigma)=\hsd{f,a}{q}(\rho\|\sigma)$.

The equality of \eqref{eq:hsfdiv def} and \eqref{eq:hsfdiv def2} for any $a\in(0,+\infty)$ follows by the same argument as the equality of \eqref{eq:persp integral repr1} and \eqref{eq:persp integral repr2}
in the proof of 
Lemma \ref{lemma:persp integral repr}, and hence we omit it.
\end{proof}

\begin{defin}
We will call the quantity $\hsd{f}{q}(\rho\|\sigma)$ given in 
Lemma \ref{lemma:hsd def} the 
\ki{quantum hockey stick $f$-divergence} of $\rho$ and $\sigma$ corresponding to the 
function $f$ and the 
family $(\hsp{t}{q})_{t\in\bR_+}$
of quantum hockey stick divergences.
\end{defin}

The above definition is justified by the following:

\begin{lemma}
In the setting of Lemma \ref{lemma:hsd def}, $\hsd{f}{q}$ is a quantum $f$-divergence.
\end{lemma}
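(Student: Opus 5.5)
We need to verify the two defining properties of a quantum $f$-divergence: isometric invariance and classical reduction. Both reduce to the corresponding properties of the underlying hockey stick divergences, transported through the formula \eqref{eq:hsfdiv def}, with $a\in(0,+\infty)$ fixed (say $a=1$). This is legitimate because, by Lemma \ref{lemma:hsd def}, the value of $\hsd{f}{q}$ does not depend on the choice of $a$.

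\emph{Isometric invariance.} Let $V:\bC^d\to\bC^{d'}$ be an isometry and $\rho,\sigma\in\B(\bC^d)\p$. Since $V^*V=I$, we have
\begin{align*}
\Tr V\sigma V^*=\Tr\sigma,\ds\ds\ds \Tr(V\rho V^*-aV\sigma V^*)=\Tr(\rho-a\sigma).
\end{align*}
So the first two terms in \eqref{eq:hsfdiv def} are unchanged when $(\rho,\sigma)$ is replaced by $(V\rho V^*,V\sigma V^*)$. Each $\hsp{t}{q}$ is a quantum $f$-divergence for the function $(\id-t)_+$, so it is isometrically invariant. The same holds for $\hsn{t}{q}$, because by \eqref{eq:hsn def2} it is $\hsp{t}{q}$ minus $\Tr(\rho-t\sigma)$, and the latter is invariant. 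Hence the two integrands coincide pointwise in $t$, and therefore so do the integrals, including the case where an integral equals $+\infty$. Here we may assume without loss of generality that $f$ is convex; the concave case follows by applying the argument to $-f$ and using Lemma \ref{lemma:persp additive}.

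\emph{Classical reduction.} Let $\rho,\sigma$ be diagonal in a common ONB, with diagonals $(\rho(i))$ and $(\sigma(i))$. By classical reduction for $\hsp{t}{q}$, we have $\hsp{t}{q}(\rho\|\sigma)=\sum_i(\rho(i)-t\sigma(i))_+=\Tr(\rho-t\sigma)_+$. Consequently $\hsn{t}{q}(\rho\|\sigma)=\Tr(\rho-t\sigma)_-$ by \eqref{eq:hsn def2}. Substituting these into \eqref{eq:hsfdiv def} gives exactly the right-hand side of \eqref{eq:classical fdiv repr1} in Corollary \ref{cor:classical fdiv repr}. That corollary identifies this expression with the classical $f$-divergence $D_f\bz(\rho(i))_i\|(\sigma(i))_i\jz$, which is what classical reduction requires.

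\emph{Main obstacle.} The only delicate point is the bookkeeping of the domain and of infinite values. Corollary \ref{cor:classical fdiv repr} was proved via Lemma \ref{lemma:persp integral repr}, which covers arbitrary supports and, through the Fubini–Tonelli argument for a convex $f$ shifted by a linear function, values equal to $+\infty$. So no support assumption is needed, and the identification holds in $(-\infty,+\infty]$. I also need to check that the definition of a quantum $f$-divergence, stated for non-zero PSD pairs, is consistent with the trivial pairs: for $\rho=0$ or $\sigma=0$, both sides reduce via \eqref{persp extension} to $\Tr\sigma\, f(0^+)$ or $\Tr\rho\,\tilde f(0^+)$. Apart from this, the argument is essentially term-by-term.
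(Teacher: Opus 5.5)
Your proposal is correct and follows the paper's route. The paper's proof is just ``Immediate from Corollary \ref{cor:classical fdiv repr}'', i.e., classical reduction via the classical hockey stick decomposition, with isometric invariance tacitly inherited term by term from the $\hsp{t}{q}$, exactly as you spell out. Your appeal to Lemma \ref{lemma:persp additive} matters only on the classical side of the concave case; the isometric-invariance step needs no convexity at all.
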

\begin{proof}
Immediate from Corollary \ref{cor:classical fdiv repr}.
\end{proof}

\begin{rem}\label{rem:hsd def}
According to Lemma \ref{lemma:mon hsp bounds}, if $\hsp{t}{q}$ is monotone for every $t\in(0,+\infty)$ then 
$(\hsp{t}{q})_{t\in\bR_+}$ is a bounded and measurable family of non-negative quantum hockey stick divergences, and hence, 
the corresponding quantum hockey stick $f$-divergence is defined for every 
$f\in\ccf$ as in Lemma \ref{lemma:hsd def}.
Also by Lemma \ref{lemma:mon hsp bounds}, the intervals for the integrals may be restricted as
\begin{align}\label{eq:interval restriction}
\hsd{f}{q}(\rho\|\sigma)=&
f(a)\Tr \sigma+f'(a^+)\Tr(\rho-a\sigma)\\
&+
\int_{(e^{-D_{\max}(\sigma\|\rho)},a]}\hsn{t}{q}(\rho\|\sigma)\,df'(t)+
\int_{(a,e^{D_{\max}(\rho\|\sigma)})}\hsp{t}{q}(\rho\|\sigma)\,df'(t).
\end{align}
In particular, 
\begin{align}
\hsd{f}{q}(\rho\|\sigma)=
f(a)\Tr \sigma+f'(a^+)\Tr(\rho-a\sigma)+
\int_{(a,e^{D_{\max}(\rho\|\sigma)})}\hsp{t}{q}(\rho\|\sigma)\,df'(t),
\ds\ds\ds 0<a\le e^{-D_{\max}(\sigma\|\rho)},
\label{eq:interval restriction2}
\end{align}
and
\begin{align}
\hsd{f}{q}(\rho\|\sigma)=
f(a)\Tr \sigma+f'(a^+)\Tr(\rho-a\sigma)+
\int_{(e^{-D_{\max}(\sigma\|\rho)},a]}\hsn{t}{q}(\rho\|\sigma)\,df'(t),
\ds\ds\ds e^{D_{\max}(\rho\|\sigma)}\le a<+\infty .
\label{eq:interval restriction3}
\end{align}
If $\sigma^0\ll\rho^0$ and $f(0^+),f'(0^+)\in\bR$, then one may take the limit $a\searrow 0$ in 
\eqref{eq:interval restriction2} to obtain
\begin{align}
\hsd{f}{q}(\rho\|\sigma)=
f(0^+)\Tr \sigma+f'(0^+)\Tr\rho+
\int_{(0,e^{D_{\max}(\rho\|\sigma)})}\hsp{t}{q}(\rho\|\sigma)\,df'(t).
\label{eq:interval restriction4}
\end{align}
\end{rem}

\begin{ex}\label{ex:hs from hs}
In the setting of Lemma \ref{lemma:hsd def}, let $f=(\id-c)_+$. With the convention \eqref{eq:fder def},
\begin{align*}
f'(x)=\begin{cases}
1,&x>c,\\
0,&x<c,\\
1/2,&x=c,
\end{cases}
\end{align*}
and therefore $df'=\delta_c$, the Dirac measure concentrated at $c$. Choosing $a:=c$ in 
\eqref{eq:hsfdiv def} gives
\begin{align*}
\hsd{(\id-c)_+}{q}(\rho\|\sigma)
&=
\underbrace{f(c)}_{=0}\Tr \sigma+\underbrace{f'(c^+)}_{=1}\Tr(\rho-c\sigma)+
\underbrace{\int_{(0,c]}\hsn{t}{q}(\rho\|\sigma)\,df'(t)}_{=\hsn{c}{q}(\rho\|\sigma)}+
\underbrace{\int_{(c,+\infty)}\hsp{t}{q}(\rho\|\sigma)\,df'(t)}_{=0}\\
&=\Tr(\rho-c\sigma)+\hsn{c}{q}(\rho\|\sigma)\\
&=\hsp{c}{q}(\rho\|\sigma),
\end{align*} 
as one expects.
\end{ex}

\begin{ex}
(Petz-type hockey stick $f$-divergences)\ds 
By Example \ref{eq:Petz hs} and Corollary \ref{cor:classical fdiv repr},
for any $f\in\ccf$ , 
\begin{align*}
\fdiv{f}{\petz}(\rho\|\sigma)
&=
\fdiv{f}{}(\nsd{\rho}{\sigma}\|\nsdd{\rho}{\sigma})\\
&=
f(a)\Tr\sigma+f'(a^+)\Tr(\rho-a\sigma)+\int_{(0,a]}\fdiv{(\id-t)_-}{}(\nsd{\rho}{\sigma}\|\nsdd{\rho}{\sigma})\,df'(t)
+\int_{(a,+\infty)}\fdiv{(\id-t)_+}{}(\nsd{\rho}{\sigma}\|\nsdd{\rho}{\sigma})\,df't\\
&=
f(a)\Tr\sigma+f'(a^+)\Tr(\rho-a\sigma)+\int_{(0,a)}\fdiv{(\id-t)_-}{\petz}(\rho\|\sigma)\,df'(t)
+\int_{(a,+\infty)}\fdiv{(\id-t)_+}{\petz}(\rho\|\sigma)\,df'(t)\\
&=
\hsd{f}{\petz}(\rho\|\sigma).
\end{align*}
Thus, \eqref{eq:hsfdiv def} does not lead to a new family of quantum $f$-divergences in this case,
but it gives a new and non-trivial integral decomposition of the Petz-type $f$-divergences.
\end{ex}

\begin{ex}\label{ex:measured hsd def}
(Measured hockey stick $f$-divergences)\ds 
According to Example \ref{ex:hs meas},
$\hsp{t}{\meas}$ is monotone for every $t\in(0,+\infty)$, and hence, by Remark \ref{rem:hsd def},
for any $f\in\ccf$, the corresponding measured hockey stick $f$-divergence is
given as
\begin{align}
\hsd{f}{\meas}(\rho\|\sigma)
&=f(a)\Tr \sigma+f'(a^+)\Tr(\rho-a\sigma)+
\int_{(0,a]}\Tr(\rho-a\sigma)_-\,df'(t)+
\int_{(a,+\infty)}\Tr(\rho-a\sigma)_+\,df'(t)
\label{eq:hs meas fdiv def}\\
&=f(a)\Tr \sigma+f'(a^-)\Tr(\rho-a\sigma)+
\int_{(0,a)}\Tr(\rho-a\sigma)_-\,df'(t)+
\int_{[a,+\infty)}\Tr(\rho-a\sigma)_+\,df'(t).
\label{eq:hs meas fdiv def2}
\end{align} 
\end{ex}

\begin{rem}\label{rem:meas HT formula}
The measured hockey stick $f$-divergences were first considered in \cite{HircheTomamichel_integral} in the case $\Tr\rho=\Tr\sigma=1$, 
$a=1$, and under the conditions that $f$ is twice continuously differentiable and $f(1)=0$. 
Note that, analogously to the classical case discussed in Remark \ref{rem:cl HT formula},
one may introduce
\begin{align*}
\hsht{t}{\meas}(\rho\|\sigma):=\hsp{t}{\meas}(\rho\|\sigma)-\bz\Tr(\rho-t\sigma)\jz_+
=
\begin{cases}
\hsp{t}{\meas}(\rho\|\sigma),&t\ge\frac{\Tr\rho}{\Tr\sigma},\\
\hsp{t}{\meas}(\rho\|\sigma)-\Tr(\rho-t\sigma)=\hsn{t}{\meas}(\rho\|\sigma),&t\le\frac{\Tr\rho}{\Tr\sigma},
\end{cases}
\end{align*}
and \eqref{eq:hs meas fdiv def} with $a:=\Tr\rho/\Tr\sigma$ may be rewritten as 
\begin{align*}
\hsd{f}{\meas}(\rho\|\sigma)
&=
f\bz\frac{\Tr\rho}{\Tr\sigma}\jz\Tr\sigma+\int_{0}^{+\infty}f''(t)
\hsht{t}{}(\rho\|\sigma)\,dt,
\end{align*}
using the same argument as in Remark \ref{rem:cl HT formula}.
This expression appeared in \cite[Lemma 2.5]{HircheTomamichel_integral}, again for the case
$\Tr\rho=\Tr\sigma=1$ and $f(1)=0$. 
\end{rem}

\begin{ex}\label{ex:max hsd def}
(Maximal hockey stick $f$-divergences)\ds 
According to Example \ref{ex:maximal hsp},
$\hsp{t}{\max}$ is monotone for every $t\in(0,+\infty)$, and hence, by Remark \ref{rem:hsd def},
for any $f\in\ccf$, the corresponding maximal hockey stick $f$-divergence is
given as
\begin{align}
\hsd{f}{\max}(\rho\|\sigma)
&=f(a)\Tr \sigma+f'(a^+)\Tr(\rho-a\sigma)+
\int_{(0,a]}\hsn{t}{\max}(\rho\|\sigma)\,df'(t)+
\int_{(a,+\infty)}\hsp{t}{\max}(\rho\|\sigma)\,df'(t)
\label{eq:hs max fdiv def}\\
&=f(a)\Tr \sigma+f'(a^-)\Tr(\rho-a\sigma)+
\int_{(0,a)}\hsn{t}{\max}(\rho\|\sigma)\,df'(t)+
\int_{[a,+\infty)}\hsp{t}{\max}(\rho\|\sigma)\,df'(t).
\label{eq:hs max fdiv def2}
\end{align} 
\end{ex}

\begin{ex}\label{ex:hs Renyi}
For any $\alpha\in(0,1)\cup(1,+\infty)$, let $f_{\alpha}(t):=t^{\alpha}$, $t\in[0,+\infty)$. 
Then $f_{\alpha}\in\ccf$, and $f_{\alpha}$ is infinitely many times differentiable on $(0,+\infty)$, whence
\begin{align*}
df'(t)=f''(t)\,dt=\alpha(\alpha-1) t^{\alpha-2}\,dt.
\end{align*}
Thus, for any bounded and measurable family
$(\hsp{t}{q})_{t\in\bR_+}$ of non-negative quantum hockey stick divergences, and any 
$\rho,\sigma\in\B(\hil)\p$, Lemma \ref{lemma:hsd def} gives
\begin{align}\label{eq:hs Renyi def}
&\hsq{\alpha}{q}(\rho\|\sigma):=\hsd{f_{\alpha}}{q}(\rho\|\sigma)\nn\\
&=
\alpha a^{\alpha-1}(\Tr\rho-a\sigma)
+\alpha(\alpha-1)\int_0^a t^{\alpha-2}\hsn{t}{q}(\rho\|\sigma)\,dt
+\alpha(\alpha-1)\int_a^{+\infty} t^{\alpha-2}\hsp{t}{q}(\rho\|\sigma)\,dt.
\end{align}
\end{ex}

\begin{defin}
In the setting of Example \ref{ex:hs Renyi},
$\hsq{\alpha}{q}(\rho\|\sigma)$ is called the 
\ki{$q$-$\mathrm{hs}$ R\'enyi $\alpha$-quantity}
of $\rho$ and $\sigma$ corresponding to the given family of quantum hockey stick R\'enyi divergences, and
\begin{align*}
\hsr{\alpha}{q}(\rho\|\sigma):=\frac{1}{\alpha-1}\log \hsq{\alpha}{q}(\rho\|\sigma),
\end{align*}
is the \ki{$q$-$\mathrm{hs}$ R\'enyi $\alpha$-divergence} of $\rho$ and $\sigma$.
\end{defin}

\begin{rem}\label{rem:a=0}
The freedom of choosing $a$ in \eqref{eq:hsfdiv def} can be useful. Consider, for instance, 
$f_{\alpha}(t):=t^{\alpha}$, $t\in[0,+\infty)$  for some $\alpha>1$. Then 
$f_{\alpha}(0)=0=f_{\alpha}'(0)$, and taking the limit $a\searrow 0$ in 
\eqref{eq:hs Renyi def} yields
\begin{align*}
\hsq{\alpha}{q}(\rho\|\sigma)
=
\alpha(\alpha-1)\int_0^{+\infty}t^{\alpha-2}\hsp{t}{q}(\rho\|\sigma)\,dt,\ds\ds\ds\alpha\in(1,+\infty).
\end{align*}
In particular, we get that if each $\hsp{t}{q}$ is monotone then
\begin{align}\label{eq:hsq infty}
0\le\hsq{\alpha}{q}(\rho\|\sigma)
\begin{cases}
<+\infty,&\rho^0\le\sigma^0,\\
=+\infty,&\rho^0\nleq\sigma^0,
\end{cases}
\ds\ds\ds\ds\alpha\in(1,+\infty).
\end{align}
Here, the first case follows from the bounds $0\le \hsp{t}{q}(\rho\|\sigma)\le\Tr\rho$
and that $\hsp{t}{q}(\rho\|\sigma)=0$ for $t\ge e^{D_{\max}(\rho\|\sigma)}$, according to 
Lemma \ref{lemma:mon hsp bounds}.
The second case follows from $\Tr\rho(I-\sigma^0)\le\hsp{t}{q}(\rho\|\sigma)$ and 
$\int_0^{+\infty}t^{\alpha-2}=+\infty$.

See further expressions for $\hsq{\alpha}{q}(\rho\|\sigma)$ in Section \ref{sec:differentiable hsd}.
\end{rem}

\begin{ex}\label{ex:hs relentr}
Let $\eta(t):=t\log t$, $t\in(0,+\infty)$. Then $\eta$ is convex and infinitely many times differentiable,
whence
\begin{align*}
d\eta'(t)=\eta''(t)\,dt=\frac{1}{t}\,dt.
\end{align*}
Thus, for any bounded and measurable family
$(\hsp{t}{q})_{t\in\bR_+}$ of non-negative quantum hockey stick divergences, and any 
$\rho,\sigma\in\B(\hil)\p$, Lemma \ref{lemma:hsd def} gives
\begin{align}\label{eq:hs relentr def}
\hsd{}{q}(\rho\|\sigma)&:=\hsd{\eta}{q}(\rho\|\sigma)\nn\\
&=
(1+\log a)(\Tr\rho-a\sigma)
+\int_0^a \frac{1}{t}\hsn{t}{q}(\rho\|\sigma)\,dt
+\int_a^{+\infty} \frac{1}{t}\hsp{t}{q}(\rho\|\sigma)\,dt.
\end{align}
\end{ex}

\begin{defin}
In the setting of Example \ref{ex:hs relentr},
$\hsd{}{q}(\rho\|\sigma)$ is called the 
\ki{$q$-$\mathrm{hs}$ relative entropy}
of $\rho$ and $\sigma$ corresponding to the given family of quantum hockey stick R\'enyi divergences.
\end{defin}

\subsection{Basic properties and order}

\begin{lemma}
Let $(\hsp{t}{q})_{t\in\bR_+}$ be a family of quantum hockey stick divergences 
as in Lemma \ref{lemma:hsd def}, and let
$f\in(\bR^{(0,+\infty)})_{\conv}\cup(\bR^{(0,+\infty)})_{\conc}$.
If all $\hsp{t}{q}$, $t\in(0,+\infty)$ satisfy one of the properties below then so does 
$\hsd{f}{q}$ as well:
\begin{enumerate}
\item
block additivity;
\item
joint convexity/concavity;
\item
monotonicity under a class of positive trace-preserving maps.
\end{enumerate}
\end{lemma}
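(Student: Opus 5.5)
The plan is to work directly from the defining formula \eqref{eq:hsfdiv def} with a fixed $a\in(0,+\infty)$. By Lemma \ref{lemma:hsd def} the value does not depend on $a$. I would also assume without loss of generality that $f$ is convex, since for concave $f$ we have $\hsd{f}{q}=-\hsd{-f}{q}$. This identity follows from the linearity of the defining formula in $f$, noting that $d(-f)'=-df'$. With $f$ convex, $df'$ is a positive measure.

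Each $\hsn{t}{q}$ is determined by $\hsp{t}{q}$ through \eqref{eq:hsn def2}, namely $\hsn{t}{q}(\rho\|\sigma)=\hsp{t}{q}(\rho\|\sigma)-\Tr(\rho-t\sigma)$. So I would first check that each of the three properties transfers from $\hsp{t}{q}$ to $\hsn{t}{q}$. The term $\Tr(\rho-t\sigma)$ is additive under direct sums, affine (hence both convex and concave) jointly in $(\rho,\sigma)$, and invariant under trace-preserving maps. Therefore:
\begin{itemize}
\item block additivity of $\hsp{t}{q}$ gives block additivity of $\hsn{t}{q}$;
\item joint convexity/concavity of $\hsp{t}{q}$ gives the same for $\hsn{t}{q}$;
\item monotonicity under any class of positive trace-preserving maps passes from $\hsp{t}{q}$ to $\hsn{t}{q}$.
\end{itemize}
The non-integral part $f(a)\Tr\sigma+f'(a^+)\Tr(\rho-a\sigma)$ has exactly the same three features, since it is linear in $(\rho,\sigma)$ and trace-preserving maps leave it unchanged.

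For the integral parts, I would argue pointwise in $t$ and then integrate against the positive measure $df'$ on $(0,a]$ and on $(a,+\infty)$. The integrands are non-negative by the non-negativity assumption, so the integrals exist in $[0,+\infty]$ and no $\infty-\infty$ issue arises among them. The three properties then go through as follows.
\begin{itemize}
\item \emph{Block additivity.} The integral of a sum of non-negative measurable functions is the sum of the integrals.
\item \emph{Joint convexity.} For $\rho=\lambda\rho_1+(1-\lambda)\rho_2$ and $\sigma$ defined likewise, the pointwise inequality integrates, by positivity of $df'$, to the corresponding inequality in $[0,+\infty]$. For the joint concavity case the integrated inequalities run in the reverse direction.
\item \emph{Monotonicity.} For a map $\map$ in the class, the pointwise inequality $\hsp{t}{q}(\map(\rho)\|\map(\sigma))\le\hsp{t}{q}(\rho\|\sigma)$, together with its $\hsn{t}{q}$ analogue, integrates. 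Adding the unchanged affine part then gives monotonicity of $\hsd{f}{q}$.
\end{itemize}

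The one step needing care is the bookkeeping of infinite values when the finite affine part is added to integrals that may equal $+\infty$. For convex $f$ every integral lies in $[0,+\infty]$ and the affine part is finite, so all sums are well defined and the inequalities persist in the extended reals. For concave $f$ the signs flip through $-f$: convexity of $\hsd{-f}{q}$ becomes concavity of $\hsd{f}{q}$, while monotonicity and additivity carry over verbatim. Here "joint convexity/concavity" should be read as: if each $\hsp{t}{q}$ is jointly convex, then $\hsd{f}{q}$ is jointly convex for convex $f$ and jointly concave for concave $f$. A minor further point is that block additivity of $\Tr\sigma$ and $\Tr(\rho-a\sigma)$ is immediate. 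I expect no genuine obstacle beyond this extended-real bookkeeping.
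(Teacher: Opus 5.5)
Your proposal is correct and is essentially the paper's argument, whose proof is just ``straightforward from the definition'': fix $a$, transfer each property from $\hsp{t}{q}$ to $\hsn{t}{q}$ via the affine term $\Tr(\rho-t\sigma)$, integrate the pointwise identities or inequalities against $df'$, and add the finite affine part. There is one slip in your last paragraph: for concave $f$, the reduction $\hsd{f}{q}=-\hsd{-f}{q}$ reverses the monotonicity inequality, giving $\hsd{f}{q}(\Phi(\rho)\|\Phi(\sigma))\ge\hsd{f}{q}(\rho\|\sigma)$, so monotonicity flips just as convexity does and does not carry over verbatim; only block additivity is literally unchanged.
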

\begin{proof}
Straightforward from the definition.
\end{proof}

\begin{lemma}
Let $(\hsp{t}{q_1})_{t\in\bR_+}$ and $(\hsp{t}{q_2})_{t\in\bR_+}$ 
be two families of quantum hockey stick divergences as in Lemma \ref{lemma:hsd def}. 
\begin{align}\label{eq:hsd order from hsp order}
\text{If}\ds\ds\ds\ds\hsp{t}{q_1}\le \hsp{t}{q_2},\ds\ds t\in(0,+\infty),\ds\ds\ds\ds\text{then}\ds\ds\ds\ds
\hsd{f}{q_1}\le \hsd{f}{q_2}
\end{align}
for any 
$f\in(\bR^{(0,+\infty)})_{\conv}$.
If $f$ is concave then the inequality in the conclusion goes in the opposite direction as
$\hsd{f}{q_1}\ge \hsd{f}{q_2}$.
\end{lemma}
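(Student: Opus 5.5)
The plan is to fix $a\in(0,+\infty)$ and write both $\hsd{f}{q_1}(\rho\|\sigma)$ and $\hsd{f}{q_2}(\rho\|\sigma)$ via \eqref{eq:hsfdiv def} with this same $a$. This is legitimate because Lemma \ref{lemma:hsd def} shows the value does not depend on $a$. The terms $f(a)\Tr\sigma+f'(a^+)\Tr(\rho-a\sigma)$ do not involve the hockey stick family at all, so they coincide for $q_1$ and $q_2$. It therefore suffices to compare the two integral terms separately.

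For the integral over $(a,+\infty)$ we use the hypothesis $\hsp{t}{q_1}(\rho\|\sigma)\le\hsp{t}{q_2}(\rho\|\sigma)$ directly. For the integral over $(0,a]$ we invoke \eqref{eq:hsn def2}:
\begin{align*}
\hsn{t}{q_i}(\rho\|\sigma)=\hsp{t}{q_i}(\rho\|\sigma)-\Tr(\rho-t\sigma).
\end{align*}
The subtracted term is the same for $i=1,2$, so the hypothesis also gives $\hsn{t}{q_1}\le\hsn{t}{q_2}$ for every $t$.

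When $f$ is convex, $df'$ is a positive measure. The integrands are non-negative, since the families are non-negative, and measurable, so both integrals are well defined in $[0,+\infty]$. Monotonicity of the Lebesgue integral of non-negative functions against a positive measure then yields
\begin{align*}
\int_{(0,a]}\hsn{t}{q_1}\,df'\le\int_{(0,a]}\hsn{t}{q_2}\,df'
\quad\text{and}\quad
\int_{(a,+\infty)}\hsp{t}{q_1}\,df'\le\int_{(a,+\infty)}\hsp{t}{q_2}\,df'.
\end{align*}
Adding these to the common finite terms gives $\hsd{f}{q_1}\le\hsd{f}{q_2}$. This includes the case where the right-hand side is $+\infty$, and no $\infty-\infty$ can occur because every term lies in $(-\infty,+\infty]$.

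When $f$ is concave, $-f$ is convex and $d(-f)'=-df'$. The same argument applied to $-f$, together with $\hsd{-f}{q}=-\hsd{f}{q}$, reverses the inequality. This identity follows from linearity in \eqref{eq:hsfdiv def}: the integrals against $df'$ are those against the positive measure $-df'$, negated. So $\hsd{f}{q_1}\ge\hsd{f}{q_2}$.

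The only point needing care is the bookkeeping of infinite values: the integrals must be well defined and summing them must not produce an indeterminate form. This is settled by the non-negativity of the families combined with the sign of $df'$.
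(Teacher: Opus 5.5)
Your proposal is correct and matches the paper's approach; the paper simply says the lemma is straightforward from the definition. With a common $a$ the non-integral terms agree, the hypothesis passes to the $(\id-t)_-$ divergences via \eqref{eq:hsn def2}, the sign of $df'$ orders the two integrals, and your handling of infinite values and the reduction of the concave case to $-f$ fill in exactly the routine details the paper leaves implicit.
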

\begin{proof}
Straightforward from the definition.
\end{proof}

\begin{cor}\label{cor:hs fdiv order}
Let $(\hsp{t}{q})_{t\in\bR_+}$ be a family of quantum hockey stick divergences
as in Lemma \ref{lemma:hsd def},
 such that 
all $\hsp{t}{q}$ are  monotone under CPTP maps. Then 
\begin{align}\label{eq:hs order}
\fdiv{f}{\meas}\le\hsd{f}{\meas}\le\hsd{f}{q}\le \hsd{f}{\max}\le\fdiv{f}{\max}
\end{align}
for any $f\in(\bR^{(0,+\infty)})_{\conv}$.
If $f$ is concave then the inequalities in \eqref{eq:hs order} hold in the opposite direction.
\end{cor}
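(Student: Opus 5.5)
The chain \eqref{eq:hs order} has four inequalities, and I would treat the two inner ones separately from the two outer ones, working with convex $f$ throughout. For concave $f$, apply the result to $-f$: the map $f\mapsto \hsd{f}{q}$ is linear by \eqref{eq:hsfdiv def}, and $D^{\meas}_{-f}=-D^{\min}$-type and $D^{\max}_{-f}$ relations flip the sup/inf, so every inequality reverses.

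For the two inner inequalities $\hsd{f}{\meas}\le\hsd{f}{q}\le\hsd{f}{\max}$ there is little to do. By Lemma \ref{lemma:mon hsp bounds}, CPTP-monotonicity of each $\hsp{t}{q}$ gives $\hsp{t}{\meas}\le\hsp{t}{q}\le\hsp{t}{\max}$ for all $t$. The implication \eqref{eq:hsd order from hsp order} then yields the two middle inequalities. This uses only that all three families are bounded, measurable and non-negative, which holds by Remark \ref{rem:hsd def}.

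For the leftmost inequality $\fdiv{f}{\meas}\le\hsd{f}{\meas}$, I would fix a POVM $M=(M_i)_{i=1}^r$ and let $p=(\Tr M_i\rho)_i$ and $q=(\Tr M_i\sigma)_i$. By Corollary \ref{cor:classical fdiv repr}, $D_f(p\|q)$ equals \eqref{eq:classical fdiv repr2} with $\hsp{t}{}(p\|q)=\sum_i(\Tr M_i(\rho-t\sigma))_+$. The linear terms coincide because $\Tr p=\Tr\rho$ and $\Tr q=\Tr\sigma$. The measurement map $X\mapsto\sum_i(\Tr M_iX)\pr{i}$ is CPTP, so the monotonicity in Example \ref{ex:hs meas} gives $\hsp{t}{}(p\|q)\le\Tr(\rho-t\sigma)_+$ and likewise for $\hsn{t}{}$. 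Since $df'\ge0$, integrating shows $D_f(p\|q)\le\hsd{f}{\meas}(\rho\|\sigma)$, and taking the supremum over $M$ gives the claim.

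For the rightmost inequality $\hsd{f}{\max}\le\fdiv{f}{\max}$, I would take any reverse test $(\tilde\rho,\tilde\sigma,\rt)$. The definitions \eqref{eq:hsn max}--\eqref{eq:hsp max} as infima over reverse tests give $\hsp{t}{\max}(\rho\|\sigma)\le\hsp{t}{}(\tilde\rho\|\tilde\sigma)$ and similarly for $\hsn{t}{\max}$, for each $t$. The linear terms again agree because $\rt$ is trace-preserving. Integrating against $df'\ge0$ and applying Corollary \ref{cor:classical fdiv repr} to the commuting pair $(\tilde\rho,\tilde\sigma)$ gives $\hsd{f}{\max}(\rho\|\sigma)\le D_f(\tilde\rho\|\tilde\sigma)$. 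Taking the infimum over reverse tests finishes the argument.

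The only real subtlety is the handling of infinite values when adding the linear terms to the integrals. Since $f$ is convex, the integrals take values in $[0,+\infty]$ and the linear terms are finite, so no $\infty-\infty$ cases arise.
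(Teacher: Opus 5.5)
Your proposal is correct and is essentially the argument implicit in the paper, which states this corollary without proof. The middle inequalities come from Lemma \ref{lemma:mon hsp bounds} combined with \eqref{eq:hsd order from hsp order}. The outer ones come from $\hsd{f}{\meas}$ and $\hsd{f}{\max}$ being monotone quantum $f$-divergences, while $\fdiv{f}{\meas}$ and $\fdiv{f}{\max}$ are the smallest and largest such; you simply unroll this term by term for a fixed POVM and a fixed reverse test.

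Your sentence on the concave case is garbled, though. It is cleaner to rerun the same termwise comparison with $df'\le 0$. This gives $D_f(p\|q)\ge\hsd{f}{\meas}(\rho\|\sigma)$ for every POVM and $D_f(\tilde\rho\|\tilde\sigma)\le\hsd{f}{\max}(\rho\|\sigma)$ for every reverse test. The reversed chain then follows whether $\fdiv{f}{\meas}$ and $\fdiv{f}{\max}$ are read as a supremum/infimum or as an infimum/supremum for concave $f$.
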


\begin{rem}
The first and the last inequalities in \eqref{eq:hs order} are not equalities in general.
\end{rem}

Corollary \ref{cor:hs fdiv order} yields immediately the following:

\begin{cor}
Under the conditions of Corollary \ref{cor:hs fdiv order},
\begin{align*}
\overline{D}_{\alpha}^{\meas}(\rho\|\sigma)
\le
\liminf_{n\to+\infty}\frac{1}{n}D_{\alpha}^{q,\mathrm{hs}}(\rho^{\otimes n}\|\sigma^{\otimes n})
\le
\limsup_{n\to+\infty}\frac{1}{n}D_{\alpha}^{q,\mathrm{hs}}(\rho^{\otimes n}\|\sigma^{\otimes n}),
\ds\ds\ds\ds\ds\alpha\in(0,+\infty).
\end{align*}
\end{cor}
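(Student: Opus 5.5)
The plan is to derive the first inequality at the single-copy level from Corollary \ref{cor:hs fdiv order}, applied to the power functions $f_\alpha$ and to $\eta$, and then to regularize. The second inequality, $\liminf\le\limsup$, is trivial.

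First I fix $n$ and apply Corollary \ref{cor:hs fdiv order} to the pair $\rho^{\otimes n},\sigma^{\otimes n}$. For $\alpha>1$ the function $f_\alpha(t)=t^\alpha$ is convex, so \eqref{eq:hs order} gives
\begin{align*}
Q_{f_\alpha}^{\meas}(\rho^{\otimes n}\|\sigma^{\otimes n})\le \hsq{\alpha}{q}(\rho^{\otimes n}\|\sigma^{\otimes n}),
\end{align*}
where $Q_{f_\alpha}^{\meas}=\fdiv{f_\alpha}{\meas}$. Multiplying the logarithm by $\frac{1}{\alpha-1}>0$ preserves the inequality. For $\alpha\in(0,1)$, $f_\alpha$ is concave, so the inequality between the $Q$-quantities is reversed. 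However, $\frac{1}{\alpha-1}<0$, so after taking $\frac{1}{\alpha-1}\log$ the inequality between divergences again reads $D_\alpha^{\meas}\le D_\alpha^{q,\mathrm{hs}}$. In both cases I obtain
\begin{align*}
D_\alpha^{\meas}(\rho^{\otimes n}\|\sigma^{\otimes n})\le D_\alpha^{q,\mathrm{hs}}(\rho^{\otimes n}\|\sigma^{\otimes n}).
\end{align*}
For $\alpha=1$ I use $\eta(t)=t\log t$, which is convex; Corollary \ref{cor:hs fdiv order} applies directly, with $D^{q,\mathrm{hs}}$ the $q$-hs relative entropy. One point to check here: the measured Rényi divergence must be $\frac{1}{\alpha-1}\log$ of the measured $f_\alpha$-quantity, taken as a supremum for convex $f_\alpha$ and an infimum for concave $f_\alpha$. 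This matches Example \ref{ex:measured fdiv}, whose definition uses the sup for convex $f$ and, by the symmetric convention, the inf for concave $f$.

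Next I divide by $n$ and take $\liminf_{n\to+\infty}$. The left-hand side becomes $\overline{D}_\alpha^{\meas}(\rho\|\sigma)$, which is defined as the regularized limit $\lim_n\frac1n D_\alpha^{\meas}(\rho^{\otimes n}\|\sigma^{\otimes n})$. This limit exists by superadditivity of the measured divergence under tensor products, together with Fekete's lemma. If one prefers not to rely on existence, the same argument works with $\liminf$ on the left as well.

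The only delicate step is the sign bookkeeping for $\alpha<1$ and the degenerate values $Q\in\{0,+\infty\}$. For the latter, I use the conventions $\log 0=-\infty$ and $\log(+\infty)=+\infty$, and note that $\log$ is monotone on $[0,+\infty]$, so the inequalities survive even when $Q\in\{0,+\infty\}$.
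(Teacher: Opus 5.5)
Your proposal is correct and follows the paper's route. The paper derives this corollary directly from Corollary \ref{cor:hs fdiv order} applied to $f_\alpha$ on $\rho^{\otimes n},\sigma^{\otimes n}$, followed by dividing by $n$ and regularizing. Your sign bookkeeping for $\alpha<1$ (with the infimum convention for concave $f$) and your use of superadditivity plus Fekete's lemma for the measured limit fill in exactly the details the paper leaves implicit.
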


\begin{rem}
It was shown in \cite{BHT_fdiv} that for every $\rho,\sigma\in\S(\hil)$ and every 
$\alpha\in(0,1)$, 
\begin{align*}
\hsq{\alpha}{\meas}(\rho\|\sigma)
\le
\Tr\rho^{\alpha}\sigma^{1-\alpha}=
\hsq{\alpha}{\petz}(\rho\|\sigma).
\end{align*}
This implies that not every $\hsp{t}{\petz}$ can be monotone under CPTP  maps, since then 
the inequality above would hold in the opposite direction, according to 
Corollary \ref{cor:hs fdiv order}. In fact, it can be shown by a simple argument that 
$\hsp{t}{\petz}$ is not monotone under CPTP maps for any $t\in(0,+\infty)$.
\end{rem}

\subsection{Symmetry}
\label{sec:symmetry}

\begin{lemma}\label{lemma:symmetry}
For a $t\in(0,+\infty)$, let $\hsp{t}{q}$ and $\hsp{1/t}{q}$ be quantum hockey stick divergences, with corresponding 
$\hsn{t}{q}$, $\hsn{1/t}{q}$, $\tn{t}{q}$ and $\tn{1/t}{q}$, and let $\rho,\sigma\in\B(\hil)\p$. 
If one of the following holds for $s=t$ and $s=1/t$ then so do all the other:
\begin{enumerate}
\item\label{symmetry1}
$\displaystyle{\hsp{s}{q}(\sigma\|\rho)-\Tr\sigma=s\left[\hsp{1/s}{q}(\rho\|\sigma)-\Tr\rho\right]}$;
\item\label{symmetry2}
$\displaystyle{\hsn{s}{q}(\sigma\|\rho)+\Tr\sigma=s\left[\hsn{1/s}{q}(\rho\|\sigma)+\Tr\rho\right]}$;
\item\label{symmetry3}
$\displaystyle{\hsp{s}{q}(\rho\|\sigma)=s\hsn{1/s}{q}(\sigma\|\rho)}$;
\item\label{symmetry4}
$\displaystyle{\hsn{s}{q}(\rho\|\sigma)=s\hsp{1/s}{q}(\sigma\|\rho)}$.
\end{enumerate}
\end{lemma}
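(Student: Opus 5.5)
The proof reduces everything to the defining linear relations \eqref{eq:hsn def2}--\eqref{eq:hsp def3} between $\hsp{s}{q}$ and $\hsn{s}{q}$. For any pair of arguments these give
\begin{align*}
\hsn{s}{q}(\rho\|\sigma)=\hsp{s}{q}(\rho\|\sigma)-\Tr(\rho-s\sigma),\qquad
\hsn{s}{q}(\sigma\|\rho)=\hsp{s}{q}(\sigma\|\rho)-\Tr(\sigma-s\rho),
\end{align*}
and similarly with $1/s$ in place of $s$. Each of the four identities is then a linear relation between one of the numbers $\hsp{s}{q}(\sigma\|\rho)$, $\hsp{s}{q}(\rho\|\sigma)$ and one of $\hsp{1/s}{q}(\rho\|\sigma)$, $\hsp{1/s}{q}(\sigma\|\rho)$. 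The plan is to rewrite every item purely in terms of $\hsp{\cdot}{q}$ and compare.

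For item \ref{symmetry1} versus item \ref{symmetry2} at the same $s$, substitute $\hsn{s}{q}(\sigma\|\rho)=\hsp{s}{q}(\sigma\|\rho)-\Tr\sigma+s\Tr\rho$ and $\hsn{1/s}{q}(\rho\|\sigma)=\hsp{1/s}{q}(\rho\|\sigma)-\Tr\rho+\frac1s\Tr\sigma$ into \ref{symmetry2}. The left side becomes $\hsp{s}{q}(\sigma\|\rho)+s\Tr\rho$ and the right side becomes $s\hsp{1/s}{q}(\rho\|\sigma)+\Tr\sigma$. Rearranging gives exactly \ref{symmetry1}, so \ref{symmetry1}$\iff$\ref{symmetry2} holds for each fixed $s$.

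For item \ref{symmetry3} versus item \ref{symmetry4}, substitute $\hsn{s}{q}(\rho\|\sigma)=\hsp{s}{q}(\rho\|\sigma)-\Tr\rho+s\Tr\sigma$ into \ref{symmetry4}, and $\hsn{1/s}{q}(\sigma\|\rho)=\hsp{1/s}{q}(\sigma\|\rho)-\Tr\sigma+\frac1s\Tr\rho$ into \ref{symmetry3}. Then \ref{symmetry3} reads
\begin{align*}
\hsp{s}{q}(\rho\|\sigma)-\Tr\rho=s\left[\hsp{1/s}{q}(\sigma\|\rho)-\Tr\sigma\right],
\end{align*}
which is \ref{symmetry1} with the roles of $\rho$ and $\sigma$ swapped. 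Likewise, \ref{symmetry4} reads $\hsp{s}{q}(\rho\|\sigma)+s\Tr\sigma=s\hsp{1/s}{q}(\sigma\|\rho)+\Tr\rho$, which is the same equation.

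The remaining step is to link the two groups, and this is where both $s=t$ and $s=1/t$ are needed. Item \ref{symmetry1} with $s=1/t$ is $\hsp{1/t}{q}(\sigma\|\rho)-\Tr\sigma=\frac1t[\hsp{t}{q}(\rho\|\sigma)-\Tr\rho]$. Multiplying by $t$ gives the swapped form of \ref{symmetry1} at $s=t$, i.e.\ \ref{symmetry3} at $s=t$. Symmetrically, \ref{symmetry1} at $s=t$ is \ref{symmetry3} at $s=1/t$. Hence the statement "\ref{symmetry1} holds for $s\in\{t,1/t\}$" is equivalent to "\ref{symmetry3} holds for $s\in\{t,1/t\}$". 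Combined with the per-$s$ equivalences \ref{symmetry1}$\iff$\ref{symmetry2} and \ref{symmetry3}$\iff$\ref{symmetry4}, this gives the claim.

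No analytic obstacle arises. The only point needing care is bookkeeping: the equivalence between the two groups holds only for the pair $\{t,1/t\}$ jointly, not for each $s$ separately, which is why the lemma is stated for both values.
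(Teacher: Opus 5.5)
Your proposal is correct and is exactly the "slightly lengthy but completely straightforward computation" that the paper leaves to the reader. You use $\hsn{s}{q}=\hsp{s}{q}-\Tr(\rho-s\sigma)$ to rewrite every item in terms of $\hsp{\cdot}{q}$, which gives items \ref{symmetry1}$\iff$\ref{symmetry2} and \ref{symmetry3}$\iff$\ref{symmetry4} for each fixed $s$, and you then link the two groups by identifying item \ref{symmetry1} at $s=1/t$ with item \ref{symmetry3} at $s=t$, and vice versa, for the fixed pair $\rho,\sigma$.
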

\begin{proof}
Slightly lengthy but completely straightforward computations, which we leave to the reader.
\end{proof}

\begin{defin}
A family of quantum hockey stick divergences $(\hsp{t}{q})_{t\in\bR_+}$ 
is called \ki{symmetric} if any (and hence all) the properties in Lemma \ref{lemma:symmetry}
hold for every $s\in(0,+\infty)$ and every $\rho,\sigma\in\B(\hil)\p$.
\end{defin}

\begin{lemma}\label{lemma:symmetry2}
Let $f\in\ccf$. Then $\tilde f\in\ccf$ and 
for any $a\in(0,+\infty)$ and any $\rho,\sigma\in\B(\hil)\p$,
\begin{align}\label{eq:symmetry2}
f(a)\Tr\sigma+\derright{f}(a)\Tr(\rho-a\sigma)=
\tilde f\bz\frac{1}{a}\jz\Tr\rho+\derleft{\tilde f}\bz\frac{1}{a}\jz\Tr\bz\sigma-\frac{1}{a}\rho\jz.
\end{align}
\end{lemma}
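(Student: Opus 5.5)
The first claim, $\tilde f\in\ccf$, is part of Lemma \ref{lemma:persp convex}: $f$ is convex (concave) if and only if $\tilde f$ is. For the identity \eqref{eq:symmetry2}, both sides are affine in $(\Tr\rho,\Tr\sigma)$. So it is enough to compare the coefficients of $\Tr\rho$ and of $\Tr\sigma$ separately, which reduces the statement to two scalar identities relating $f$, $\derright{f}$ at $a$ to $\tilde f$, $\derleft{\tilde f}$ at $1/a$.

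The key step is computing the one-sided derivatives of $\tilde f(y)=yf(1/y)$. The map $y\mapsto 1/y$ is decreasing, so increasing $y$ from $1/a$ moves $1/y$ leftward from $a$. Letting $y\nearrow 1/a$ therefore corresponds to $1/y\searrow a$. By the product and chain rules for one-sided derivatives (valid since $f$ has one-sided derivatives everywhere), I expect
\begin{align*}
\derleft{\tilde f}(y)=f(1/y)-\frac{1}{y}\derright{f}(1/y),
\end{align*}
so that at $y=1/a$,
\begin{align*}
\derleft{\tilde f}(1/a)=f(a)-a\derright{f}(a).
\end{align*}
I would verify this directly from the difference quotient. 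For $h\searrow 0$,
\begin{align*}
\frac{\tilde f(1/a)-\tilde f(1/a-h)}{h}=\frac{\frac1a f(a)-(\frac1a-h)f\bigl(\frac{a}{1-ah}\bigr)}{h}.
\end{align*}
Write $a/(1-ah)=a+k$ with $k=a^2h/(1-ah)\searrow 0$, and use $f(a+k)=f(a)+k\derright{f}(a)+o(k)$. This is the only place needing care, namely the correct matching of left and right derivatives, and I expect it to be the main (mild) obstacle.

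With this in hand, expand the right-hand side using $\tilde f(1/a)=f(a)/a$:
\begin{align*}
\frac{f(a)}{a}\Tr\rho+\bigl(f(a)-a\derright{f}(a)\bigr)\Bigl(\Tr\sigma-\frac1a\Tr\rho\Bigr).
\end{align*}
The $\Tr\rho$ coefficient is $f(a)/a-f(a)/a+\derright{f}(a)=\derright{f}(a)$. The $\Tr\sigma$ coefficient is $f(a)-a\derright{f}(a)$. These match the left-hand side $f(a)\Tr\sigma+\derright{f}(a)\Tr\rho-a\derright{f}(a)\Tr\sigma$, which completes the proof. The concave case needs no separate treatment, since only the existence of one-sided derivatives is used.
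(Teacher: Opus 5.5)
Your proposal is correct and follows the same route as the paper. The paper also obtains $\tilde f\in\ccf$ from Lemma \ref{lemma:persp convex} and calls \eqref{eq:symmetry2} a straightforward computation; that computation is your difference-quotient verification that $\derleft{\tilde f}(1/a)=f(a)-a\derright{f}(a)$, followed by matching the coefficients of $\Tr\rho$ and $\Tr\sigma$.
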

\begin{proof}
Lemma \ref{lemma:persp convex} yields that $\tilde f\in\ccf$, and \eqref{eq:symmetry2} follows by a
straightforward computation.
\end{proof}

\begin{lemma}\label{lemma:dftilde}
Let $f\in\ccf$. Then 
\begin{align}\label{eq:dftilde}
t(d\tilde f')(t)=\frac{1}{t^2}f''(t)\,dt-(df')_{\sing}\bz\frac{1}{t}\jz=(df'\circ\invv)(t),
\end{align}
where $\invv(t):=1/t$, $t\in(0,+\infty)$, and 
$(df'\circ\invv)(A):=(df')(\{x\in(0,+\infty):\,1/x\in A\})$, $A\subseteq(0,+\infty)$ Borel. 
\end{lemma}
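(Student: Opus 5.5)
The plan is to prove the measure identity $t\,(d\tilde f')(t)=(df'\circ\invv)(t)$ by checking it on the half-open intervals $(a,b]$, $0<a<b<+\infty$. These form a $\pi$-system generating the Borel $\sigma$-algebra of $(0,+\infty)$. Both sides are Radon measures, hence $\sigma$-finite on $(0,+\infty)$, and they are finite on compact subintervals. By the uniqueness theorem for measures, agreement on these intervals gives equality. The middle expression in \eqref{eq:dftilde} is then read off by pushing the decomposition \eqref{eq:Alex} of $df'$ forward under $\invv$. Without loss of generality $f$ is convex; the concave case follows by applying the result to $-f$, since both sides are linear in $f$. By Lemma \ref{lemma:persp convex}, $\tilde f$ is then convex as well.

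\emph{Step 1 (one-sided derivatives of $\tilde f$).} From $\tilde f(y)=yf(1/y)$ and the fact that $y\mapsto 1/y$ is decreasing, a direct difference-quotient computation gives
\begin{align*}
\derright{\tilde f}(y)=f\bz\tfrac1y\jz-\tfrac1y\,\derleft{f}\bz\tfrac1y\jz,\ds\ds\ds
\derleft{\tilde f}(y)=f\bz\tfrac1y\jz-\tfrac1y\,\derright{f}\bz\tfrac1y\jz.
\end{align*}
In particular, a right limit of $\tilde f'$ corresponds to a left limit of $f'$.

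\emph{Step 2 (integration by parts).} Let $g:=\derright{\tilde f}=\tilde f'(\cdot^+)$, which is right continuous and increasing. Stieltjes integration by parts gives
\begin{align*}
\int_{(a,b]}t\,d\tilde f'(t)=bg(b)-ag(a)-\int_a^b g(t)\,dt .
\end{align*}
By \eqref{eq:Taylor1 remainder}, $\int_a^b g=\tilde f(b)-\tilde f(a)=bf(1/b)-af(1/a)$. Substituting Step 1, the $f$-terms cancel, leaving
\begin{align*}
\derleft{f}(1/a)-\derleft{f}(1/b)=f'((1/a)^-)-f'((1/b)^-)=df'\bz[1/b,1/a)\jz=(df'\circ\invv)((a,b]).
\end{align*}
Here the last equality holds because $\invv^{-1}((a,b])=[1/b,1/a)$. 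This proves the second equality in \eqref{eq:dftilde}.

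\emph{Step 3 (explicit form).} Push \eqref{eq:Alex} forward under $\invv$. The absolutely continuous part $f''(x)\,dx$ becomes $f''(1/t)\,t^{-2}\,dt$ by the change of variables $x=1/t$; this is the term written $\frac{1}{t^2}f''(t)\,dt$ in \eqref{eq:dftilde}, understood with the argument $1/t$. Each atom $[\derright f(x)-\derleft f(x)]\dirac{x}$ becomes the same mass at $1/x$, which is how the singular term $(df')_{\sing}(1/t)$ is to be understood. As a consistency check, for $C^2$ functions one computes directly $\tilde f''(t)=t^{-3}f''(1/t)$, so the density part agrees. The jump of $\tilde f'$ at $t$ equals $\frac1t$ times the jump of $f'$ at $1/t$ by Step 1, so after multiplying by $t$ the atoms carry nonnegative mass. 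Hence the sign in front of the singular term must be read as a push-forward and not as a subtraction; I would state this reading explicitly.

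The main obstacle is bookkeeping rather than analysis. One has to track the left/right conventions correctly: $(a,b]$ must be mapped to $[1/b,1/a)$, and the right derivative of $\tilde f$ must be matched with the left derivative of $f$. One also has to justify the integration by parts for a right-continuous integrator, using the same convention as in the proof of Lemma \ref{lemma:persp integral repr}.
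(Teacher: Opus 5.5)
Your proof is correct, and it takes a different route from the paper's.

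\textbf{The paper's route.} It writes $\tilde f'(t)=f(1/t)-\frac1t f'(1/t)$. It computes the Lebesgue--Stieltjes measure of each term as an absolutely continuous part, using Alexandrov's theorem for $f''$, plus a sum of jumps. This gives $d\tilde f'(t)=\frac{1}{t^3}f''(1/t)\,dt+\frac1t((df')_{\sing}\circ\invv)(t)$. It then matches this against the push-forward of \eqref{eq:Alex} obtained from Lemma \ref{lemma:RN composition}.

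\textbf{Your route.} You verify $t\,d\tilde f'=df'\circ\invv$ directly on the $\pi$-system of intervals $(a,b]$. You use the one-sided derivative formulas $\derright{\tilde f}(y)=f(1/y)-\frac1y\derleft{f}(1/y)$ and one Stieltjes integration by parts. The cancellation leaving $\derleft{f}(1/a)-\derleft{f}(1/b)=df'([1/b,1/a))$ is correct, and $\invv^{-1}((a,b])=[1/b,1/a)$ is the right preimage.

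\textbf{What your route buys.} The second equality, which is the one actually used in Lemma \ref{lemma:symmetry3}, never relies on a decomposition of $df'$. This matters because \eqref{eq:Alex} as written is not correct for general convex $f$. The measure $df'$ can have a singular continuous component, for instance when $f'$ is a Cantor-type function on an interval. Then $f''=0$ a.e.\ and there are no jumps, yet $df'\neq 0$. The paper's ``absolutely continuous part plus jumps'' computation silently assumes this component vanishes; your argument is insensitive to it.

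\textbf{Fixing Step 3.} Define $(df')_{\sing}:=df'-f''\,d\lambda$ as the full singular part, and push all of it forward, not just the atoms. This is legitimate because $\invv$ is a $C^1$ diffeomorphism, so it maps Lebesgue-null sets to Lebesgue-null sets and the image of a singular measure stays singular.

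\textbf{What the paper's route buys.} It gives the density and atoms of $d\tilde f'$ itself directly.

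\textbf{The typos.} You are right that the middle expression should read $\frac{1}{t^2}f''(1/t)\,dt+((df')_{\sing}\circ\invv)(t)$. The $+$ sign is exactly what the paper's own intermediate formula for $d\tilde f'$ produces.
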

\begin{proof}
Since $\tilde f\in\ccf$, it is differentiable almost everywhere, and 
\begin{align*}
\tilde f'(t)=f\bz\frac{1}{t}\jz-\frac{1}{t}f'\bz\frac{1}{t}\jz.
\end{align*}
This is the difference of two functions that both of locally bounded variation, hence the induced 
Lebesgue-Stieltjes measure is the difference of the Lebesgue-Stieltjes measures corresponding to the two functions. The first function is differentiable almost everywhere, and thus its
induced Lebesgue-Stieltjes measure is 
\begin{align*}
\bz\frac{d}{dt}f\bz\frac{1}{t}\jz\jz\,dt=-\frac{1}{t^2}f'\bz\frac{1}{t}\jz\,dt.
\end{align*}
The absolutely continuous part of the Lebesgue-Stieltjes measure induced by the second function is 
\begin{align*}
\bz\frac{d}{dt}\frac{1}{t}f\bz\frac{1}{t}\jz\jz\,dt=
\bz-\frac{1}{t^2}f'\bz\frac{1}{t}\jz-\frac{1}{t^3}f''\bz\frac{1}{t}\jz\jz\,dt.
\end{align*}
(Recall that $f$ is twice differentiable almost everywhere, according to Alexandrov's theorem.)
The singular part of the Lebesgue-Stieltjes measure induced by the second function is 
\begin{align*}
&\sum_{t\in(0,+\infty)}\left[\lim_{s\searrow t}\frac{1}{s}f'\bz\frac{1}{s}\jz-
\lim_{s\nearrow t}\frac{1}{s}f'\bz\frac{1}{s}\jz
\right]\,\dirac{t}
=
\sum_{t\in(0,+\infty)}\frac{1}{t}
\underbrace{\left[\derleft{f}\bz\frac{1}{t}\jz-\derright{f}\bz\frac{1}{t}\jz\right]}_{
=-(df')_{\sing}(\{1/t\})}\,\dirac{t}\\
&\ds=-\invv\cdot ((df')_{\sing}\circ\invv).
\end{align*}
Putting it all together, we get that 
\begin{align}\label{eq:dftilde proof1}
d\tilde f'(t)&=\frac{1}{t^3}f''\bz\frac{1}{t}\jz\,dt+\frac{1}{t}((df')_{\sing}\circ\invv)(t).
\end{align}
On the other hand, by \eqref{eq:Alex},
\begin{align*}
df'=f''\,d\lambda+(df')_{\sing},
\end{align*}
whence, by Lemma \ref{lemma:RN composition}, 
\begin{align}\label{eq:dftilde proof2}
(df'\circ\invv)(t)=\frac{1}{t^2}f''\bz\frac{1}{t}\jz\,dt+((df')_{\sing}\circ\invv)(t).
\end{align}
Finally, \eqref{eq:dftilde proof1} and \eqref{eq:dftilde proof2} together give
\eqref{eq:dftilde}.
\end{proof}



\begin{lemma}\label{lemma:symmetry3}
In the setting of Lemma \ref{lemma:hsd def}, assume that 
$(\hsp{t}{q})_{t\in\bR_+}$ is a symmetric family of quantum hockey stick divergences.
Then for any $f\in\ccf$ and 
$a\in(0,+\infty)$, and for any $\rho,\sigma\in\B(\hil)\p$,
\begin{align}
\int_{(0,a]}\hsn{t}{q}(\rho\|\sigma)\,df'(t)
&=
\int_{[1/a,+\infty)}\hsp{t}{q}(\sigma\|\rho)\,d\tilde f'(t),
\label{eq:symmetry3-1}\\
\int_{(a,+\infty)} \hsp{t}{q}(\rho\|\sigma)\,d f'(t)
&=
\int_{(0,1/a)}\hsn{t}{q}(\sigma\|\rho)\,d\tilde f'(t).
\label{eq:symmetry3-2}
\end{align}
\end{lemma}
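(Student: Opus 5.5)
The plan is to rewrite the left-hand sides using the symmetry relations of Lemma \ref{lemma:symmetry}, and then change variables $u=1/t$ via Lemma \ref{lemma:dftilde}. By property \ref{symmetry4} with $s=t$, for every $t\in(0,+\infty)$ we have $\hsn{t}{q}(\rho\|\sigma)=t\,\hsp{1/t}{q}(\sigma\|\rho)$. Likewise, property \ref{symmetry3} gives $\hsp{t}{q}(\rho\|\sigma)=t\,\hsn{1/t}{q}(\sigma\|\rho)$. Hence
\begin{align*}
\int_{(0,a]}\hsn{t}{q}(\rho\|\sigma)\,df'(t)=\int_{(0,a]}g(1/t)\,t\,df'(t),\qquad g(u):=\hsp{u}{q}(\sigma\|\rho),
\end{align*}
and the analogous identity holds for the second integral, with $h(u):=\hsn{u}{q}(\sigma\|\rho)$ in place of $g$.

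The key step is the image-measure identity from Lemma \ref{lemma:dftilde}, read as
\begin{align*}
t\,d\tilde f'(t)=(df'\circ\invv)(t).
\end{align*}
Here $(df'\circ\invv)$ is the image of $df'$ under $t\mapsto 1/t$, because $\invv$ is an involution, so $\invv^{-1}=\invv$. Using the abstract change-of-variables formula for image measures, for any nonnegative measurable $\phi$ and Borel $A$ we get
\begin{align*}
\int_A \phi(1/t)\,df'(t)=\int_{\invv(A)}\phi(u)\,d(df'\circ\invv)(u)=\int_{\invv(A)}\phi(u)\,u\,d\tilde f'(u).
\end{align*}
I would apply this with $\phi(u):=g(u)/u$. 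Then $\phi(1/t)=t\,g(1/t)$ matches the integrand above, and $\invv((0,a])=[1/a,+\infty)$. The factor $u$ cancels the $1/u$, which gives \eqref{eq:symmetry3-1}. For \eqref{eq:symmetry3-2} I would take $\phi(u)=h(u)/u$ and use $\invv((a,+\infty))=(0,1/a)$.

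Two technical points need checking. The first is justifying the integrals. I would reduce to convex $f$ (replacing $f$ by $-f$, which negates $\tilde f$ as well). Then $df'$ and $d\tilde f'$ are positive measures, since $\tilde f$ is convex by Lemma \ref{lemma:persp convex}. The integrands are nonnegative by the non-negativity assumed in Lemma \ref{lemma:hsd def}, and measurable by the measurability assumption. So Tonelli-type change of variables applies with values in $[0,+\infty]$, and no finiteness is needed. The second point, which I expect to be the main obstacle, is making sure Lemma \ref{lemma:dftilde} really delivers the clean measure identity $t\,d\tilde f'=df'\circ\invv$, including the singular (atomic) parts. In particular, atoms at $t=a$ must land at $1/a$ on the correct side. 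This decides the closed versus open endpoints $[1/a,+\infty)$ and $(0,1/a)$. Since the pushforward of $(0,a]$ is exactly $[1/a,+\infty)$ set-theoretically, the endpoint conventions follow automatically once the image-measure identity is accepted.
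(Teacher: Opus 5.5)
Your proposal is correct and is essentially the paper's own proof. You rewrite the integrand via property \ref{symmetry4} (resp.\ \ref{symmetry3}) of Lemma \ref{lemma:symmetry}, push $df'$ forward under $\invv$, and invoke Lemma \ref{lemma:dftilde} in the form $t\,d\tilde f'(t)=(df'\circ\invv)(t)$, so that the factors $t$ and $1/t$ cancel, with $\invv((0,a])=[1/a,+\infty)$ and $\invv((a,+\infty))=(0,1/a)$ fixing the endpoints. Your extra justifications (reducing to convex $f$, and using non-negativity and measurability of the family to apply the change of variables with values in $[0,+\infty]$) are sound and make explicit what the paper calls ``straightforward''.
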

\begin{proof}
We have
\begin{align*}
\int_{(0,a]}\hsn{t}{q}(\rho\|\sigma)\,df'(t)
&=
\int_{(0,a]}t\hsp{1/t}{q}(\sigma\|\rho)\,df'(t)\\
&=
\int_{(0,+\infty)}\bz\egy_{[1/a,+\infty)}\circ\invv\jz(t) 
t\hsp{1/t}{q}(\sigma\|\rho)\,df'(t)\\
&=
\int_{(0,+\infty)}\egy_{[1/a,+\infty)}(t)
\frac{1}{t}\hsp{t}{q}(\sigma\|\rho)\,(df'\circ\invv)(t)\\
&=
\int_{[1/a,+\infty)}\frac{1}{t}\hsp{t}{q}(\sigma\|\rho)t\,d\tilde f'(t)\\
&=
\int_{[1/a,+\infty)}\hsp{t}{q}(\sigma\|\rho)\,d\tilde f'(t),
\end{align*}
where the first equality is due to \ref{symmetry4} of Lemma \ref{lemma:symmetry},
the second and the third equalities are straightforward, 
the fourth equality follows by Lemma \ref{lemma:dftilde},
and the fifth equality is again straightforward.
This proves \eqref{eq:symmetry3-1}, and 
\eqref{eq:symmetry3-2} follows by an exactly analogous argument.
\end{proof}

\begin{cor}\label{cor:hsd symmetric}
In the setting of Lemma \ref{lemma:symmetry3},
%
\begin{align}\label{eq:hsd symmetry}
\hsd{f}{q}(\rho\|\sigma)=\hsd{\tilde f}{q}(\sigma\|\rho).
\end{align}
In particular, for any $\alpha\in(0,+\infty)$,
\begin{align}\label{eq:hsq symmetry}
\hsq{\alpha}{q}(\rho\|\sigma)=\hsq{1-\alpha}{q}(\sigma\|\rho).
\end{align}
\end{cor}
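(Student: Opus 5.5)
The plan is to start from the definition \eqref{eq:hsfdiv def} of $\hsd{f}{q}(\rho\|\sigma)$ with an arbitrary $a\in(0,+\infty)$. Convert each term into the corresponding term of the representation \eqref{eq:hsfdiv def2} of $\hsd{\tilde f}{q}(\sigma\|\rho)$, taken at the base point $1/a$. By Lemma \ref{lemma:hsd def}, this representation gives the same value for every base point and for either choice of one-sided derivative and endpoint conventions.

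\emph{Constant and linear terms.} Lemma \ref{lemma:symmetry2} already gives
\begin{align*}
f(a)\Tr\sigma+f'(a^+)\Tr(\rho-a\sigma)=\tilde f(1/a)\Tr\rho+\tilde f'((1/a)^-)\Tr(\sigma-\tfrac1a\rho).
\end{align*}
The right-hand side is exactly the non-integral part of \eqref{eq:hsfdiv def2} for the pair $(\sigma,\rho)$, the function $\tilde f$, and the base point $1/a$. Here we use $\derleft{\tilde f}(1/a)=\tilde f'((1/a)^-)$.

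\emph{Integral terms.} Lemma \ref{lemma:symmetry3} turns $\int_{(0,a]}\hsn{t}{q}(\rho\|\sigma)\,df'(t)$ into $\int_{[1/a,+\infty)}\hsp{t}{q}(\sigma\|\rho)\,d\tilde f'(t)$. It also turns $\int_{(a,+\infty)}\hsp{t}{q}(\rho\|\sigma)\,df'(t)$ into $\int_{(0,1/a)}\hsn{t}{q}(\sigma\|\rho)\,d\tilde f'(t)$. These are precisely the integrals in \eqref{eq:hsfdiv def2}, with the half-open conventions matching: $(0,1/a)$ for the $\hsn{}{}$-term and $[1/a,+\infty)$ for the $\hsp{}{}$-term. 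Summing the three identities gives $\hsd{f}{q}(\rho\|\sigma)=\hsd{\tilde f}{q}(\sigma\|\rho)$. One also needs that the family stays admissible (bounded, measurable, non-negative) when the arguments are swapped; this is automatic, because those conditions are quantified over all pairs.

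\emph{Infinite values.} The only subtle point is that the sums may be infinite. As in the proof of Lemma \ref{lemma:hsd def}, we may assume $f$ is convex, so $\tilde f$ is convex by Lemma \ref{lemma:persp convex}. Then all the integrals involved are integrals of non-negative functions against positive measures, and the termwise identities hold in $[0,+\infty]$. Hence the sums agree with no $\infty-\infty$ ambiguity.

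\emph{R\'enyi case.} For \eqref{eq:hsq symmetry}, note that for $f_\alpha(x)=x^\alpha$ we have $\tilde f_\alpha(y)=y\,y^{-\alpha}=y^{1-\alpha}=f_{1-\alpha}(y)$. So $\hsq{\alpha}{q}(\rho\|\sigma)=\hsd{f_{1-\alpha}}{q}(\sigma\|\rho)$. For $\alpha\notin\{0,1\}$ this is $\hsq{1-\alpha}{q}(\sigma\|\rho)$ by definition. For $\alpha=1$ (and for $1-\alpha$ in the degenerate range), $f_{1-\alpha}$ is affine or a constant, $x^0=1$. In that case we simply interpret $\hsq{1-\alpha}{q}$ as $\hsd{f_{1-\alpha}}{q}$; it is still in $\ccf$, so the identity persists.

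I expect no real obstacle, since the measure-theoretic work is already contained in Lemmas \ref{lemma:dftilde} and \ref{lemma:symmetry3}. The main thing to check carefully is that the endpoint and one-sided-derivative conventions from Lemma \ref{lemma:symmetry2} and Lemma \ref{lemma:symmetry3} line up with \eqref{eq:hsfdiv def2} rather than \eqref{eq:hsfdiv def}.
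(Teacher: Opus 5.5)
Your proposal is correct and is exactly the paper's argument, which the paper states as ``immediate from Lemmas~\ref{lemma:hsd def}, \ref{lemma:symmetry2}, and \ref{lemma:symmetry3}'': base point $a$ for $\hsd{f}{q}(\rho\|\sigma)$ in the form \eqref{eq:hsfdiv def}, and base point $1/a$ for $\hsd{\tilde f}{q}(\sigma\|\rho)$ in the form \eqref{eq:hsfdiv def2}; your matching of the endpoint and one-sided-derivative conventions, together with the reduction to convex $f$ for infinite values, fills in the details. One small remark: for $\alpha>1$ the exponent $1-\alpha$ is negative, which lies outside the range $(0,1)\cup(1,+\infty)$ where $\hsq{\beta}{q}$ was defined, so $\hsq{1-\alpha}{q}$ must be read as $\hsd{f_{1-\alpha}}{q}$ there as well, not only at $\alpha=1$; this reading is legitimate because $x^{1-\alpha}$ is convex.
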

\begin{proof}
Immediate from Lemmas \ref{lemma:hsd def}, \ref{lemma:symmetry2}, and \ref{lemma:symmetry3}.
\end{proof}

\begin{ex}\label{ex:measured hsp symmetric}
\ki{(Symmetry of the measured hockey stick $f$-divergences)}
For any $t\in(0,+\infty)$,
\begin{align*}
t\hsn{1/t}{\meas}(\sigma\|\rho)=
t\Tr(\sigma-(1/t)\rho)_-
=
\Tr(t\sigma-\rho)_-
=
\Tr(\rho-t\sigma)_+=\hsp{t}{\meas}(\rho\|\sigma).
\end{align*}
Thus, the measured hockey stick divergences $(\hsp{t}{\meas})_{t\in\bR_+}$ form a symmetric family. By Corollary \ref{cor:hsd symmetric},
\begin{align*}
\hsd{f}{\meas}(\rho\|\sigma)=\hsd{\tilde f}{\meas}(\sigma\|\rho)
\end{align*}
for any $f\in(\bR^{(0,+\infty)})_{\conv}\cup(\bR^{(0,+\infty)})_{\conc}$.
\end{ex}

\begin{ex}\label{ex:maximal hsp symmetric}
\ki{(Symmetry of the maximal hockey stick  $f$-divergences)}
For any $t\in(0,+\infty)$,
\begin{align*}
t\hsn{1/t}{\max}(\sigma\|\rho)
&=
t\left[\Tr\frac{1}{t}\rho-\max\{\Tr A:\,0\le A\le\sigma,\,A\le \frac{1}{t}\rho\}\right]\\
&=
\Tr\rho-\max\{\Tr tA:\,0\le tA\le t\sigma,\,tA\le \rho\}\\
&=
\hsp{t}{\max}(\rho\|\sigma).
\end{align*}
By Corollary \ref{cor:hsd symmetric},
\begin{align*}
\hsd{f}{\max}(\rho\|\sigma)=\hsd{\tilde f}{\max}(\sigma\|\rho)
\end{align*}
for any $f\in(\bR^{(0,+\infty)})_{\conv}\cup(\bR^{(0,+\infty)})_{\conc}$.
\end{ex}

\begin{ex}
\ki{(Symmetry of the Petz-type hockey stick  $f$-divergences)}
According to Example \ref{eq:Petz hs},
\begin{align*}
t\hsn{1/t}{\petz}(\sigma\|\rho)=
t\hsn{1/t}{}(\nsdd{\rho}{\sigma}\|\nsd{\rho}{\sigma})
=
\hsp{t}{}(\nsd{\rho}{\sigma}\|\nsdd{\rho}{\sigma})
=
\hsp{t}{\petz}(\rho\|\sigma),
\end{align*}
where we used the symmetry of the classical hockey stick divergences, which may may also be seen as a special case of
Example \ref{ex:maximal hsp symmetric} or \ref{ex:measured hsp symmetric}.
By Corollary \ref{cor:hsd symmetric},
\begin{align*}
\hsd{f}{\petz}(\rho\|\sigma)=\hsd{\tilde f}{\petz}(\sigma\|\rho)
\end{align*}
for any $f\in(\bR^{(0,+\infty)})_{\conv}\cup(\bR^{(0,+\infty)})_{\conc}$.
This, however, may be easily seen directly also from the definition of the Petz-type $f$-divergences, and hold
more generally without assuming convexity of concavity of $f$.
\end{ex}

\subsection{Hockey stick $f$-divergences and Neyman-Pearson error probabilities}
\label{sec:differentiable hsd}

\begin{prop}\label{prop:differentiable hs}
Let $f\in(\bR^{(0,+\infty)})_{\conv}$ with $f(0^+)$ finite, 
and assume that one of the following holds:
\begin{enumerate}
\item\label{differentiable hs1}
$f'(t)\ge 0$ for all $t\in(0,+\infty)$.
\item\label{differentiable hs2}
$f'(t)\le 0$ for all $t\in(0,+\infty)$, and 
$\lim_{x\searrow 0}f'(x)\hsn{x}{q}(\rho\|\sigma)$
is finite.
\item\label{differentiable hs3}
$\int_{(0,a]}f'(t)\,d\bz\hsp{(\valt)}{q}(\rho\|\sigma)\jz(t)$ is finite for some (and hence for all)
$a\in(0,+\infty)$. 
\end{enumerate}
Let $(\hsp{t}{q})_{t\in\bR_+}$ be a bounded and monotone family of non-negative hockey stick divergences, and let 
$\rho,\sigma\in\B(\hil)\p$.
Then
\begin{align}
\hsd{f}{q}(\rho\|\sigma)
=&f(0^+)\Tr\sigma-
\int_{(0,+\infty)}f'(t)\,d\bz\hsp{(\valt)}{q}(\rho\|\sigma)\jz(t)\nn\\
&+\lim_{x\to+\infty}f'(x)\hsp{x}{q}(\rho\|\sigma)-\lim_{x\searrow 0}f'(x)\hsn{x}{q}(\rho\|\sigma),
\label{eq:hsfdiv and NPerror}
\end{align}
which is well defined in the sense that there is no addition of infinities with different signs.
\end{prop}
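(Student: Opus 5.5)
Starting from the definition \eqref{eq:hsfdiv def} of $\hsd{f}{q}(\rho\|\sigma)$ with an arbitrary fixed $a\in(0,+\infty)$, we integrate by parts in each of the two integral terms, exchanging the roles of $f'$ and the hockey stick functions. Write $h_+(t):=\hsp{t}{q}(\rho\|\sigma)$ and $h_-(t):=\hsn{t}{q}(\rho\|\sigma)=h_+(t)-\Tr(\rho-t\sigma)$. By the monotone family assumption $h_+$ is decreasing and $h_-$ increasing, so both induce Lebesgue--Stieltjes measures. Moreover $dh_-=dh_++(\Tr\sigma)\,dt$. Since $h_\pm$ are bounded on compact intervals, the integration by parts formula for Stieltjes integrals applies on $(x,a]$ and $(a,y]$ with $0<x<a<y<+\infty$. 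We use the convention of evaluating $f'$ at right limits, to match the half-open intervals; the possibly discontinuous $h_\pm$ are handled by the standard product rule for right-continuous versions.

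On $(a,y]$ this gives
\begin{align*}
\int_{(a,y]}h_+\,df'=f'(y^+)h_+(y)-f'(a^+)h_+(a)-\int_{(a,y]}f'\,dh_+ .
\end{align*}
On $(x,a]$ it gives
\begin{align*}
\int_{(x,a]}h_-\,df'=f'(a^+)h_-(a)-f'(x^+)h_-(x)-\int_{(x,a]}f'\,dh_+-(\Tr\sigma)\int_{(x,a]}f'(t)\,dt .
\end{align*}
Here $\int_{(x,a]}f'\,dt=f(a)-f(x)$ by \eqref{eq:Taylor1 remainder}. Adding these, the boundary terms at $a$ combine into $f'(a^+)(h_-(a)-h_+(a))=-f'(a^+)\Tr(\rho-a\sigma)$, which cancels the linear term in \eqref{eq:hsfdiv def}. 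The term $-(\Tr\sigma)(f(a)-f(x))$ cancels $f(a)\Tr\sigma$ up to $f(x)\Tr\sigma$. We therefore obtain, for all $0<x<a<y$,
\begin{align*}
\hsd{f}{q}(\rho\|\sigma)=&\lim_{x\searrow0,\,y\to\infty}\Big[f(x)\Tr\sigma-\int_{(x,y]}f'\,dh_+ +f'(y^+)h_+(y)-f'(x^+)h_-(x)\Big].
\end{align*}
The left-hand side is the limit of the sums $\int_{(x,a]}h_-\,df'+\int_{(a,y]}h_+\,df'$ plus constants, by monotone convergence, since $f$ is convex and the integrands are nonnegative. Since $f'(t^+)$ and $f'(t)$ differ only at countably many points, and $h_\pm$ are monotone, the limits of the boundary terms may be written with $f'(x)$ as in \eqref{eq:hsfdiv and NPerror}.

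The main obstacle is justifying that the limits split into the separate pieces of \eqref{eq:hsfdiv and NPerror} without an $\infty-\infty$ situation; this is exactly where the three alternative hypotheses enter. At $y\to+\infty$ there is no issue in any case. The function $f'$ is increasing with limit $\tilde f(0^+)\in(-\infty,+\infty]$ by Lemma \ref{lemma:convex f tilde limit}, and $h_+\ge0$ is decreasing. Hence $f'(y)h_+(y)$ is eventually monotone, so it has a limit, possibly $+\infty$. Also $-\int_{(a,y]}f'\,dh_+$ is eventually an integral of a sign-definite integrand against the positive measure $-dh_+$, so it converges in $(-\infty,+\infty]$. Both pieces lie in $(-\infty,+\infty]$, so their sum is well defined.

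At $x\searrow0$, the term $f(x)\Tr\sigma\to f(0^+)\Tr\sigma$ is finite by hypothesis. In case \ref{differentiable hs1}, $f'\ge0$, so $-\int_{(x,a]}f'\,dh_+$ increases to a limit in $[0,+\infty]$. The term $-f'(x)h_-(x)\le0$ is bounded, because $f'$ is bounded near $0$ (it is increasing and nonnegative) and $h_-\le t\Tr\sigma$ by canonical boundedness; hence this term tends to $0$. In case \ref{differentiable hs2}, the boundary term is finite by assumption. The integral is again monotone in $x$, because $f'\le0$ and $-dh_+\ge0$. In case \ref{differentiable hs3}, the integral near $0$ converges absolutely by assumption. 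The boundary term then converges as the difference of convergent quantities, since the total is finite or $+\infty$ consistently with the other terms.

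Finally, we check sign consistency. The total $\hsd{f}{q}$ lies in $(-\infty,+\infty]$ for convex $f$, since it is $f(a)\Tr\sigma$ plus a linear term plus nonnegative integrals. We then verify in each case that at most one of the split terms can be infinite, or that all infinite ones equal $+\infty$. This yields both the identity \eqref{eq:hsfdiv and NPerror} and its well-definedness.
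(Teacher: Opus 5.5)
Your argument follows the paper's proof: integrate by parts on $(x,a]$ and $(a,y]$, use $dh_-=dh_++(\Tr\sigma)\,dt$, cancel the boundary terms at $a$ against the linear term, then check the limits $x\searrow0$ and $y\to+\infty$ case by case. The paper fixes $a$ so that $f'$ has constant sign on $(0,a]$ and on $(a,+\infty)$, where you use the eventual sign of $f'$; this difference is immaterial. Two of your sub-steps are wrong as stated, though.

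First, under hypothesis \ref{differentiable hs1} you use canonical boundedness, $h_-(t)\le t\Tr\sigma$, to conclude $f'(x)h_-(x)\to 0$. Canonical boundedness is not a hypothesis of the proposition. The paper adds it only in Corollary \ref{cor:qalpha for differentiable hs}, precisely to kill the two boundary terms, and without it this limit need not vanish. You only need finiteness, and that holds anyway. Both $f'$ and $h_-$ are increasing and non-negative near $0$, so the limit is $f'(0^+)h_-(0^+)\in[0,+\infty)$, which simply stays as the last term of \eqref{eq:hsfdiv and NPerror}.

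Second, your claim that $f'(y)h_+(y)$ is eventually monotone is false when $f'$ is eventually positive. It is then a product of an increasing and a decreasing non-negative function; take $f'(y)=y$ and $h_+$ made of long flat stretches separated by short drops. So you have not shown that $\lim_{y\to+\infty}f'(y)h_+(y)$ exists. Here is a correct argument.
\begin{itemize}
\item For a monotone family, $h_+$ decreases and $h_+-\Tr(\rho-t\sigma)$ increases, so $0\le h_+(s)-h_+(t)\le (t-s)\Tr\sigma$ for $s<t$. Hence $h_\pm$ are Lipschitz and $\mu:=-dh_+\le(\Tr\sigma)\,dt$. This also settles your worries about right-continuous versions and about replacing $f'(x^{\pm})$ by $f'(x)$ in the boundary limits. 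Combined with the remark after the proposition, it makes hypothesis \ref{differentiable hs3} automatic when $f'\le0$ near $0$, and it is trivial otherwise.
\item If $f'(+\infty)<+\infty$, the limit is $f'(+\infty)h_+(+\infty)$.
\item If $f'(+\infty)=+\infty$ and $h_+(+\infty)>0$, the limit is $+\infty$.
\item If $f'(+\infty)=+\infty$ and $h_+(+\infty)=0$, then $h_+(y)=\mu((y,+\infty))$, so $0\le f'(y)h_+(y)\le\int_{(y,+\infty)}f'\,d\mu$ for large $y$. This tends to $0$ when $\int_{(a,+\infty)}f'\,d\mu<+\infty$.
\item If instead $\int_{(a,+\infty)}f'\,d\mu=+\infty$, both sides of \eqref{eq:hsfdiv and NPerror} equal $+\infty$; the left side does by your integration-by-parts identity on $(a,y]$. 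The boundary term, whose limit points lie in $[0,+\infty]$, then does not affect the value, even though its limit can fail to exist.
\end{itemize}
The paper's proof also passes to this limit without comment, but the reason you give is incorrect. With these two repairs, the sign-consistency check you only assert goes through exactly as in the paper's case analysis.
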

\begin{proof}
Let us fix an $a\in(0,+\infty)$ such that 
the sign of $f'$ is constant on $(0,a]$ and on $(a,+\infty)$.
Note that 
by \eqref{eq:hsn def2},
\begin{align}
d\bz\hsn{(\valt)}{q}(\rho\|\sigma)\jz(t)
&=
d\bz\hsp{(\valt)}{q}(\rho\|\sigma)\jz(t)
-d\underbrace{\bz\fdiv{(\id-(\valt))}{}(\rho\|\sigma)\jz}_{=\Tr(\rho-(\valt)\sigma)}(t)\nn\\
&=
d\bz\hsp{(\valt)}{q}(\rho\|\sigma)\jz(t)+(\Tr\sigma)\,dt.
\label{eq:hsfdiv and NPerror proof1}
\end{align}
Note also that $d\bz\hsp{(\valt)}{q}(\rho\|\sigma)\jz(t)$ is a finite negative measure, and
hence $d\mu(t):=-d\bz\hsp{(\valt)}{q}(\rho\|\sigma)\jz(t)$ is a finite positive measure on 
$(0,+\infty)$.

For any $b\in(a,+\infty)$, a simple integration by parts yields
\begin{align}
&\int_{(a,b]}\hsp{t}{q}(\rho\|\sigma)\,df'(t)\nn\\
&\ds=
f'(b^+)\hsp{b}{q}(\rho\|\sigma)
-
f'(a^+)\hsp{a}{q}(\rho\|\sigma)
-\int_{(a,b]}f'(t)\,d\bz\hsp{(\valt)}{q}(\rho\|\sigma)\jz(t),
\label{eq:hsfdiv and NPerror proof3-1}
\end{align}
and taking the limit $b\to+\infty$ gives
\begin{align}
&\int_{(a,+\infty)}\hsp{t}{q}(\rho\|\sigma)\,df'(t)\nn\\
&\ds=
\lim_{x\to+\infty}f'(x)\hsp{x}{q}(\rho\|\sigma)
-
f'(a^+)\hsp{a}{q}(\rho\|\sigma)
+\int_{(a,+\infty)}f'(t)\,d\mu(t).
\label{eq:hsfdiv and NPerror proof3}
\end{align}
If $f'$ is non-positive on $(a,+\infty)$ then (since it is also monotone increasing, and hence bounded), all terms above are finite. 
If $f'$ is non-negative on $(a,+\infty)$ then any term in \eqref{eq:hsfdiv and NPerror proof3} that is not finite is equal to $+\infty$, whence the expression in \eqref{eq:hsfdiv and NPerror proof3} is well defined.

For any $b\in(0,a)$,
\begin{align}
&\int_{(b,a]}\hsn{t}{q}(\rho\|\sigma)\,df'(t)\nn\\
&\ds=
f'(a^+)\hsn{a}{q}(\rho\|\sigma)
-
f'(b^+)\hsn{b}{q}(\rho\|\sigma)
-\int_{(b,a]}f'(t)\,d\bz\hsn{(\valt)}{q}(\rho\|\sigma)\jz(t)\nn\\
&\ds=
f'(a^+)\hsn{a}{q}(\rho\|\sigma)
-
f'(b^+)\hsn{b}{q}(\rho\|\sigma)
-\int_{(b,a]}f'(t)(\Tr\sigma)\,dt
-\int_{(b,a]}f'(t)\,d\bz\hsp{(\valt)}{q}(\rho\|\sigma)\jz(t)\nn\\
&\ds=
f'(a^+)\hsn{a}{q}(\rho\|\sigma)
-
f'(b^+)\hsn{b}{q}(\rho\|\sigma)
-(\Tr\sigma)f(a)+(\Tr\sigma)f(b)
+\int_{(b,a]}f'(t)\,d\mu(t),
\label{eq:hsfdiv and NPerror proof2}
\end{align}
where the first equality follows by using integration by parts, 
in the second equality we used 
\eqref{eq:hsfdiv and NPerror proof1},
and the third equality is obvious.
Clearly, each term in \eqref{eq:hsfdiv and NPerror proof2} has a limit 
as $b\searrow 0$, and they are all finite except possibly for
\begin{align*}
l_1:=\lim_{b\searrow 0}f'(b)\hsn{b}{q}(\rho\|\sigma)
\ds\ds\text{and}\ds\ds
l_2:=\lim_{b\searrow 0}\int_{(b,a]}f'(t)\,d\mu(t)=\int_{(0,a]}f'(t)\,d\mu(t).
\end{align*}
Taking the limit $b\searrow 0$ in \eqref{eq:hsfdiv and NPerror proof2} yields
\begin{align}
\int_{(0,a]}\hsn{t}{q}(\rho\|\sigma)\,df'(t)
&=
f'(a^+)\hsn{a}{q}(\rho\|\sigma)
-(\Tr\sigma)f(a)+(\Tr\sigma)f(0^+)
-
l_1+l_2,
\label{eq:hsfdiv and NPerror proof2-1}
\end{align}
whenever the RHS above is well defined.

Under assumption \ref{differentiable hs1}, both $l_1$ and $l_2$ are finite, 
whence the RHS of \eqref{eq:hsfdiv and NPerror proof2-1} is well defined, and
its sum with the RHS of \eqref{eq:hsfdiv and NPerror proof2} is also well defined.

Under assumption \ref{differentiable hs2}, $l_1$ is finite and the RHS of \eqref{eq:hsfdiv and NPerror proof2} is also finite, 
whence the RHS of \eqref{eq:hsfdiv and NPerror proof2-1} is well defined, and
its sum with the RHS of \eqref{eq:hsfdiv and NPerror proof2} is also well defined.

Finally, under assumption \ref{differentiable hs3}, $l_2$ is finite,
whence the RHS of \eqref{eq:hsfdiv and NPerror proof2-1} is well defined, and
if it is not finite then it is equal to $+\infty$, and therefore in either case, its 
sum with the RHS of \eqref{eq:hsfdiv and NPerror proof2} is also well defined.

Hence, under either of the assumptions 
\ref{differentiable hs1}--\ref{differentiable hs3},
the sum of the terms on the RHSs of \eqref{eq:hsfdiv and NPerror proof2}
and \eqref{eq:hsfdiv and NPerror proof2-1} is well-defined, and 
substituting it 
into \eqref{eq:hsfdiv def} yields
\begin{align*}
\hsd{f}{q}(\rho\|\sigma)
=&
f(a)\Tr \sigma+
\underbrace{f'(a^+)\Tr(\rho-a\sigma)+
f'(a^+)\hsn{a}{q}(\rho\|\sigma)
-
f'(a^+)\hsp{a}{q}(\rho\|\sigma)}_{=0}\\
&-f(a)\Tr\sigma+f(0)\Tr\sigma
-
\int_{(0,+\infty)}f'(t)\,d\bz\hsp{(\valt)}{q}(\rho\|\sigma)\jz(t)\\
&+\lim_{x\to+\infty}f'(x)\hsp{x}{q}(\rho\|\sigma)-\lim_{x\searrow 0}f'(x)\hsn{x}{q}(\rho\|\sigma)\\
=&f(0^+)\Tr\sigma-
\int_{(0,+\infty)}f'(t)\,d\bz\hsp{(\valt)}{q}(\rho\|\sigma)\jz(t)\\
&+\lim_{x\to+\infty}f'(x)\hsp{x}{q}(\rho\|\sigma)-\lim_{x\searrow 0}f'(x)\hsn{x}{q}(\rho\|\sigma),
\end{align*}
proving \eqref{eq:hsfdiv and NPerror}.
\end{proof}

\begin{rem}
Assume in the setting of Proposition \ref{prop:differentiable hs} that there exists some $C\in(0,+\infty)$ such that 
\begin{align}
-d\bz\hsp{(\valt)}{q}(\rho\|\sigma)\jz(t)\le C\,dt.
\end{align} 
Then if $f'$ is non-positive on some non-degenerate interval $(0,a]$ then 
\begin{align}
\int_{(b,a]}f'(t)d\bz\hsp{(\valt)}{q}(\rho\|\sigma)\jz(t)
=
\int_{(b,a]}(-f'(t))\bz-d\bz\hsp{(\valt)}{q}(\rho\|\sigma)\jz(t)\jz
\le C(f(b)-f(a))
\end{align} 
for any $b\in(0,a)$, and taking the limit $b\searrow 0$ and using that $f(0^+)$ is finite, we see that 
assumption 
\ref{differentiable hs3} in Proposition \ref{prop:differentiable hs} is satisfied. 
\end{rem}

\begin{cor}\label{cor:qalpha for differentiable hs}
In the setting of Proposition \ref{prop:differentiable hs}, assume also that 
$(\hsp{t}{q})_{t\in\bR_+}$ is canonically bounded, and 
consider $f_{\alpha}(t):=-t^{\alpha}$, $t\in[0,+\infty)$  for some $\alpha\in(0,1)\cup(1,+\infty)$,
and assume that $\alpha\in(0,1)$, or $\alpha\in(1,+\infty)$ and $\rho^0\le\sigma^0$.
Then 
\begin{align}\label{eq:qalpha for differentiable hs}
\hsq{\alpha}{q}(\rho\|\sigma)=-\alpha\int_{(0,+\infty)}t^{\alpha-1}
\,d\bz\hsp{(\valt)}{q}(\rho\|\sigma)\jz(t)\,.
\end{align}
\end{cor}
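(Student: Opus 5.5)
The plan is to apply Proposition \ref{prop:differentiable hs} to a convex function built from $t^{\alpha}$, and then show that both boundary terms in \eqref{eq:hsfdiv and NPerror} vanish. Two facts are used throughout. First, the definition \eqref{eq:hsfdiv def} is linear in $f$, so $\hsd{-g}{q}=-\hsd{g}{q}$; this is why the sign in \eqref{eq:qalpha for differentiable hs} comes out as stated. Second, $\hsq{\alpha}{q}=\hsd{t^{\alpha}}{q}$ by definition. In both cases below the function vanishes at $0^+$, so the term $f(0^+)\Tr\sigma$ in \eqref{eq:hsfdiv and NPerror} drops out.

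\emph{Case $\alpha\in(0,1)$.} Here $g(t):=-t^{\alpha}$ is convex, with $g(0^+)=0$ and $g'(t)=-\alpha t^{\alpha-1}\le 0$. I will check assumption \ref{differentiable hs2} of Proposition \ref{prop:differentiable hs}. Canonical boundedness gives $0\le\hsn{x}{q}(\rho\|\sigma)\le x\Tr\sigma$, so
\[
|g'(x)\hsn{x}{q}(\rho\|\sigma)|\le\alpha x^{\alpha}\Tr\sigma\to 0 \quad\text{as } x\searrow 0 .
\]
Thus the limit at $0$ exists and equals $0$. At infinity, $0\le\hsp{x}{q}(\rho\|\sigma)\le\Tr\rho$ and $x^{\alpha-1}\to 0$, so $g'(x)\hsp{x}{q}(\rho\|\sigma)\to 0$. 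Proposition \ref{prop:differentiable hs} then gives
\[
\hsd{g}{q}=\alpha\int_{(0,+\infty)} t^{\alpha-1}\,d\bigl(\hsp{(\valt)}{q}(\rho\|\sigma)\bigr)(t),
\]
and negating yields \eqref{eq:qalpha for differentiable hs}.

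\emph{Case $\alpha>1$ with $\rho^0\le\sigma^0$.} Here $f=t^{\alpha}$ is convex with $f'\ge 0$, so assumption \ref{differentiable hs1} holds and $f(0^+)=0$. The limit at $0$ vanishes exactly as in the first case: $f'(x)\hsn{x}{q}\le\alpha x^{\alpha}\Tr\sigma\to 0$. Proposition \ref{prop:differentiable hs} then gives \eqref{eq:qalpha for differentiable hs} with the extra term $\lim_{x\to\infty}\alpha x^{\alpha-1}\hsp{x}{q}(\rho\|\sigma)$, and it remains to show this limit is $0$.

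\emph{Main obstacle: the boundary term at infinity for $\alpha>1$.} This is the only place the hypothesis $\rho^0\le\sigma^0$ enters. Write $d\mu:=-d\hsp{(\valt)}{q}(\rho\|\sigma)$, a positive measure since the family is monotone. Provided $\hsp{\infty}{q}:=\lim_{x\to\infty}\hsp{x}{q}(\rho\|\sigma)=0$, we have
\[
x^{\alpha-1}\hsp{x}{q}=x^{\alpha-1}\mu((x,\infty))\le\int_{(x,\infty)}t^{\alpha-1}\,d\mu(t).
\]
If $\int t^{\alpha-1}\,d\mu<\infty$, this tail tends to $0$. If instead $\int t^{\alpha-1}\,d\mu=+\infty$, then both sides of \eqref{eq:qalpha for differentiable hs} equal $+\infty$: the boundary term is non-negative and no $-\infty$ can occur, so the identity still holds.

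So everything reduces to showing $\hsp{x}{q}(\rho\|\sigma)\to 0$. My first attempt will use the intended setting where each $\hsp{t}{q}$ is monotone under CPTP maps. Lemma \ref{lemma:mon hsp bounds} then gives $\hsp{x}{q}(\rho\|\sigma)=0$ for $x\ge e^{D_{\max}(\rho\|\sigma)}$, which is finite precisely because $\rho^0\le\sigma^0$. In that case the boundary term is identically $0$ for large $x$ and the integral is over a bounded range. Under canonical boundedness alone, the lower bound $\Tr\rho(I-\sigma^0)=0$ only gives $\hsp{\infty}{q}\ge 0$, not the vanishing of the limit. If that weaker setting must be covered, I would look for an upper estimate of $\hsp{x}{q}$ by the measured or maximal quantity, whose limit is $0$ by \eqref{eq:pospr limit5}.
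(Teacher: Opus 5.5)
Your argument is the paper's: apply Proposition~\ref{prop:differentiable hs} and remove the two boundary terms using the canonical bounds. The case $\alpha\in(0,1)$ is complete as written. For $\alpha>1$, using the convex $t^{\alpha}$ under assumption~\ref{differentiable hs1} is cleaner than the paper, which applies the proposition to $-t^{\alpha}$ even where that function is concave.

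Your hesitation about $\lim_{x\to+\infty}\alpha x^{\alpha-1}\hsp{x}{q}(\rho\|\sigma)$ for $\alpha>1$ is justified, and it is also the weak spot of the paper's proof. The paper simply records this limit as $0$, citing only \eqref{eq:hsp bounds}. When $\rho^0\le\sigma^0$, that bound gives nothing beyond $0\le\hsp{x}{q}(\rho\|\sigma)\le\Tr\rho$.

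Under the listed hypotheses alone (monotone family in the sense of Definition~\ref{def:monotone-family}), the claim is in fact false. Put $\hsp{t}{q}(\rho\|\sigma):=\Tr(\rho-t\sigma)_+$ for commuting pairs and $\hsp{t}{q}(\rho\|\sigma):=\max\{\Tr(\rho-t\sigma)_+,\tfrac12\Tr\rho\}$ for non-commuting ones. This family has the following properties:
\begin{itemize}
\item it is isometrically invariant and reduces correctly on commuting pairs;
\item it is continuous and non-increasing in $t$, with values in $[\Tr(\rho-t\sigma)_+,\Tr\rho]$;
\item $t\mapsto\hsp{t}{q}(\rho\|\sigma)+t\Tr\sigma$ is non-decreasing, being a maximum of two non-decreasing functions.
\end{itemize}
Hence it is bounded, monotone, non-negative and canonically bounded.

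Now take a non-commuting pair with $\rho^0\le\sigma^0$ and $\rho\ne0$. Then $\hsp{t}{q}(\rho\|\sigma)=\tfrac12\Tr\rho$ for all $t\ge e^{D_{\max}(\rho\|\sigma)}$. So $-d\bigl(\hsp{(\valt)}{q}(\rho\|\sigma)\bigr)$ is a finite measure supported on a bounded interval, and the right-hand side of \eqref{eq:qalpha for differentiable hs} is finite. On the other hand, \eqref{eq:hsfdiv and NPerror} with $f=t^{\alpha}$ gives $\hsq{\alpha}{q}(\rho\|\sigma)=+\infty$, because its boundary term at infinity is $\lim_{x\to+\infty}\tfrac{\alpha}{2}x^{\alpha-1}\Tr\rho=+\infty$.

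So for $\alpha>1$ an additional hypothesis is unavoidable, and your fallback cannot extract it from canonical boundedness. For CPTP-monotone families the measured quantity is a lower bound, not an upper one (Lemma~\ref{lemma:mon hsp bounds}). The upper bound by $\hsp{t}{\max}$ is itself a consequence of CPTP monotonicity.

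Add the hypothesis explicitly, in one of two forms:
\begin{itemize}
\item monotonicity of every $\hsp{t}{q}$ under CPTP maps, which via \eqref{eq:hsp zero} makes the boundary term vanish identically for large $x$ (this is what the paper tacitly uses);
\item only $\lim_{t\to+\infty}\hsp{t}{q}(\rho\|\sigma)=0$, which your tail argument already shows is sufficient.
\end{itemize}
In that tail argument, replace $(x,+\infty)$ by $[x,+\infty)$, since $t\mapsto\hsp{t}{q}(\rho\|\sigma)$ need not be right-continuous. Then $x^{\alpha-1}\hsp{x}{q}(\rho\|\sigma)\le\int_{[x,+\infty)}t^{\alpha-1}\,d\mu(t)$, and your finite/infinite dichotomy for $\int t^{\alpha-1}\,d\mu$ finishes the proof.
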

\begin{proof}
Since $f_{\alpha}'(t)=-\alpha t^{\alpha-1}\le 0$ for every $t\in(0,+\infty)$, we have 
\begin{align*}
0\ge\lim_{x\searrow 0}f_{\alpha}'(x)\hsn{x}{q}(\rho\|\sigma)
\ge
\lim_{x\searrow 0}\alpha x^{\alpha-1}x\Tr\sigma
=
\alpha(\Tr\sigma)\lim_{x\searrow 0}x^{\alpha}
=0,
\end{align*}
where the inequality is due to \eqref{eq:hsn bounds}, and
\begin{align*}
0\ge\lim_{x\to+\infty}f_{\alpha}'(x)\hsp{x}{q}(\rho\|\sigma)
\begin{cases}
\ge\lim_{x\to+\infty}\alpha x^{\alpha-1}\Tr\rho=0,&\alpha\in(0,1),\\
=0,&\alpha>1\s\&\s\rho^0\le\sigma^0,
\end{cases}
\end{align*}
while 
\begin{align*}
\lim_{x\to+\infty}f_{\alpha}'(x)\hsp{x}{q}(\rho\|\sigma)
&\le
-\lim_{x\to+\infty}\alpha x^{\alpha-1}\Tr\rho(I-\sigma^0)=-\infty,
\ds\ds\ds \alpha>1\s\&\s\rho^0\nleq\sigma^0,
\end{align*}
where the inequalities are due to \eqref{eq:hsp bounds}.
Thus, 
Proposition \ref{prop:differentiable hs} yields \eqref{eq:qalpha for differentiable hs}.
\end{proof}

\begin{cor}\label{cor:measured hsq expressions}
Let $\rho,\sigma\in\B(\hil)\p$ and $\alpha\in(0,1)\cup(1,+\infty)$. 
If 
$\alpha\in(0,1)$, or $\alpha\in(1,+\infty)$ and $\rho^0\le\sigma^0$ then 
\begin{align}\label{eq:qalpha measured hs}
\hsq{\alpha}{\meas}(\rho\|\sigma)
=
\alpha\int_0^{+\infty}t^{\alpha-1}\Tr\sigma\{\rho-t\sigma>0\}\,dt
=
\alpha\int_0^{+\infty}t^{\alpha-1}\errii{t}{}{\rho}{\sigma}\,dt\,.
\end{align}
If $\alpha\in(0,1)$ then we further have
\begin{align}\label{eq:qalpha measured hs4}
\hsq{\alpha}{\meas}(\rho\|\sigma)
=
(1-\alpha)\int_0^{+\infty}t^{\alpha-2}\Tr\rho\{\rho-t\sigma<0\}\,dt
=
(1-\alpha)\int_0^{+\infty}t^{\alpha-2}\erri{t}{}{\rho}{\sigma}\,dt\,,
\end{align}
and if $\alpha\in(1,+\infty)$ then 
\begin{align}
\hsq{\alpha}{\meas}(\rho\|\sigma)
&=
(\alpha-1)\int_0^{+\infty}t^{\alpha-2}\Tr\rho\{\rho-t\sigma>0\}\,dt
=
(\alpha-1)\int_0^{+\infty}t^{\alpha-2}\bz 1-\erri{t}{}{\rho}{\sigma}\jz\,dt\,,
\label{eq:qalpha measured hs2}
\end{align}
independently of whether $\rho^0\le\sigma^0$ or not. 
\end{cor}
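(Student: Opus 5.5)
The first formula \eqref{eq:qalpha measured hs} comes straight from Corollary \ref{cor:qalpha for differentiable hs}. Note that $\hsq{\alpha}{\meas}$ is defined with $f_\alpha(t)=t^\alpha$, while that corollary uses $-t^\alpha$. Since $\hsd{f}{q}$ is linear in $f$ (through $f(a)$, $f'(a^+)$ and $df'$), the corollary gives
\begin{align*}
\hsq{\alpha}{\meas}(\rho\|\sigma)=\alpha\int_{(0,+\infty)}t^{\alpha-1}\,d\bz\hsp{(\valt)}{\meas}(\rho\|\sigma)\jz(t)\cdot(-1).
\end{align*}
The family $(\hsp{t}{\meas})$ is monotone, bounded and canonically bounded by Lemma \ref{lemma:mon hsp bounds} and Example \ref{ex:hs meas}, so the corollary applies. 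By Lemma \ref{lemma:measured hsp derivative}, $t\mapsto\Tr(\rho-t\sigma)_+$ is convex and Lipschitz. Its Lebesgue--Stieltjes measure is therefore absolutely continuous, with density equal to the right derivative $-\Tr\sigma\{\rho-t\sigma>0\}=-\errii{t}{}{\rho}{\sigma}$. This gives \eqref{eq:qalpha measured hs}.

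For $\alpha>1$ I would not use this route, since it needs $\rho^0\le\sigma^0$. Instead I start from Remark \ref{rem:a=0}: $\hsq{\alpha}{\meas}=\alpha(\alpha-1)\int_0^\infty t^{\alpha-2}\Tr(\rho-t\sigma)_+\,dt$, which is valid without support assumptions. Next I write $\Tr(\rho-t\sigma)_+=\Tr\rho\{\rho-t\sigma>0\}-t\Tr\sigma\{\rho-t\sigma>0\}$ and integrate by parts on $[\ep,b]$ with the antiderivative $t^{\alpha}/\alpha$. Concretely, $\alpha(\alpha-1)\int t^{\alpha-2}h\,dt$ becomes $(\alpha-1)[t^{\alpha-1}h]$ minus $(\alpha-1)\int t^{\alpha-1}h'\,dt$, where $h(t)=\Tr(\rho-t\sigma)_+$ and $h'=-\Tr\sigma\{\rho-t\sigma>0\}$ a.e. This gives
\begin{align*}
(\alpha-1)\int t^{\alpha-2}\bz h(t)+t\Tr\sigma\{\rho-t\sigma>0\}\jz\,dt=(\alpha-1)\int t^{\alpha-2}\Tr\rho\{\rho-t\sigma>0\}\,dt.
\end{align*}
The point is to do this pointwise-algebraically rather than by integrating by parts with divergent boundary terms. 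I would simply substitute the decomposition of $h$ into the integral and split it as $\int t^{\alpha-2}\Tr\rho\{\cdot\}$ minus $\int t^{\alpha-1}\Tr\sigma\{\cdot\}$. The second integral is finite when $\rho^0\le\sigma^0$, and then equals $\frac1\alpha\hsq{\alpha}{\meas}$ by \eqref{eq:qalpha measured hs}; solving gives $\hsq{\alpha}{\meas}=(\alpha-1)\int t^{\alpha-2}\Tr\rho\{\rho-t\sigma>0\}$. When $\rho^0\nleq\sigma^0$, both sides equal $+\infty$. The left side is infinite by \eqref{eq:hsq infty}. The right side is infinite because $\Tr\rho\{\rho-t\sigma>0\}\to\Tr\rho(I-\sigma^0)>0$ by Corollary \ref{cor:IV.20}, and $\int^\infty t^{\alpha-2}=\infty$. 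Finally, $\Tr\rho\{\rho-t\sigma>0\}=\Tr\rho-\erri{t}{}{\rho}{\sigma}$, so the last expression follows, with $1$ read as $\Tr\rho$ in the non-normalized case.

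For $\alpha\in(0,1)$ I use the same splitting, but in terms of the negative parts, via \eqref{eq:hs Renyi def} in the limit $a\to+\infty$. This is justified by \eqref{eq:interval restriction3} when $\rho^0\le\sigma^0$; in general I would use symmetry instead. Concretely, Example \ref{ex:measured hsp symmetric} gives $\hsq{\alpha}{\meas}(\rho\|\sigma)=\hsq{1-\alpha}{\meas}(\sigma\|\rho)$. Applying \eqref{eq:qalpha measured hs} to $(\sigma,\rho)$ with exponent $1-\alpha$ gives $(1-\alpha)\int s^{-\alpha}\Tr\rho\{\sigma-s\rho>0\}\,ds$. The substitution $s=1/t$ then turns $\{\sigma-s\rho>0\}$ into $\{\rho-t\sigma<0\}$ and $s^{-\alpha}\,ds$ into $t^{\alpha-2}\,dt$. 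This yields \eqref{eq:qalpha measured hs4}. The last equality there holds because $\{\rho-t\sigma<0\}$ and $\{\rho-t\sigma\le0\}$ differ only at countably many $t$.

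The main obstacle is the $\alpha>1$ case without support inclusion, and the careful handling of infinite terms when splitting the integral. I address this by the explicit case split on $\rho^0\le\sigma^0$ described above.
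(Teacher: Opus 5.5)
Your proof is correct and is essentially the paper's. You get \eqref{eq:qalpha measured hs} from Corollary~\ref{cor:qalpha for differentiable hs} and Lemma~\ref{lemma:measured hsp derivative}, and \eqref{eq:qalpha measured hs4} from the symmetry $\hsq{\alpha}{\meas}(\rho\|\sigma)=\hsq{1-\alpha}{\meas}(\sigma\|\rho)$ plus the substitution $s=1/t$. You obtain \eqref{eq:qalpha measured hs2} by playing Remark~\ref{rem:a=0} against \eqref{eq:qalpha measured hs} through $\Tr(\rho-t\sigma)_+=\Tr\rho\{\rho-t\sigma>0\}-t\Tr\sigma\{\rho-t\sigma>0\}$, with both sides $+\infty$ when $\rho^0\nleq\sigma^0$; the paper substitutes in the opposite direction, which is immaterial.

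Two cosmetic fixes. First, drop the aborted integration-by-parts aside: the antiderivative of $\alpha(\alpha-1)t^{\alpha-2}$ is $\alpha t^{\alpha-1}$, so its coefficients are wrong, and you do not use it. Second, it is the numbers $\Tr\rho\{\rho-t\sigma<0\}$ and $\erri{t}{}{\rho}{\sigma}$ that agree off a countable set, since $\Tr\rho\{\rho-t\sigma=0\}=t\Tr\sigma\{\rho-t\sigma=0\}$ is nonzero only at kinks of $t\mapsto\Tr(\rho-t\sigma)_+$. The projections themselves can differ for every $t$ when $\ker\rho\cap\ker\sigma\neq\{0\}$.
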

\begin{proof}
When 
$\alpha\in(0,1)$, or 
$\alpha\in(1,+\infty)$ and $\rho^0\le\sigma^0$, the 
expression in \eqref{eq:qalpha measured hs}
follows immediately from 
Corollary \ref{cor:qalpha for differentiable hs} and Lemma \ref{lemma:measured hsp derivative}.

When $\alpha\in(0,1)$, we may use  \eqref{eq:qalpha measured hs} 
combined with the symmetry relation \eqref{eq:hsq symmetry} for $q=\meas$
to obtain
\begin{align*}
\hsq{\alpha}{\meas}(\rho\|\sigma)
&=
\hsq{1-\alpha}{\meas}(\sigma\|\rho)
=
(1-\alpha)\int_0^{+\infty}t^{\alpha-1}\Tr\rho\underbrace{\{\sigma-t\rho>0\}}_{=\{\rho-(1/t)\sigma<0\}}\,dt\\
&=
(1-\alpha)\int_0^{+\infty}t^{\alpha-2}\Tr\rho\{\rho-t\sigma<0\}\,dt,
\end{align*}
where the last equality follows by the simple change of variables $u:=1/t$. This proves the first equality in 
\eqref{eq:qalpha measured hs4}, and the second equality follows by definition.

Note that 
\begin{align*}
\Tr\sigma\{\rho-t\sigma>0\}=\frac{1}{t}\left[\Tr\rho\{\rho-t\sigma>0\}-\Tr(\rho-t\sigma)_+\right].
\end{align*}
Substituting this into \eqref{eq:qalpha measured hs} yields
\begin{align}
\hsq{\alpha}{\meas}(\rho\|\sigma)
&=
\alpha\int_0^{+\infty}t^{\alpha-1}t\inv\Tr\rho\{\rho-t\sigma>0\}\,dt
-
\alpha\int_0^{+\infty}t^{\alpha-1}t\inv\Tr(\rho-t\sigma)_+\,dt
\label{eq:measured hsq expressions proof1}
\end{align}
whenever $\alpha\in(1,+\infty)$ and $\rho^0\le\sigma^0$. 
(Note that if $\alpha\in(0,1)$ and $\rho\ne 0$ then both integrals in 
\eqref{eq:measured hsq expressions proof1} are equal to $+\infty$, and hence the difference is not 
well defined.)

Now, if $\alpha\in(1,+\infty)$ then 
by Remark \ref{rem:a=0}, 
\begin{align}
\alpha\int_0^{+\infty}t^{\alpha-2}\Tr(\rho-t\sigma)_+\,dt=\frac{1}{\alpha-1}\hsq{\alpha}{\meas}(\rho\|\sigma).
\label{eq:measured hsq expressions proof2}
\end{align}
If $\rho^0\le\sigma^0$ then 
substituting \eqref{eq:measured hsq expressions proof2} into \eqref{eq:measured hsq expressions proof1} yields the first equality in
\eqref{eq:qalpha measured hs2}.
If $\rho^0\nleq\sigma^0$ then 
\begin{align*}
(\alpha-1)\int_0^{+\infty}t^{\alpha-2}\Tr\rho\{\rho-t\sigma>0\}\,dt
\ge
\underbrace{\Tr\rho(I-\sigma^0)}_{>0}(\alpha-1)\underbrace{\int_0^{+\infty}t^{\alpha-2}\,dt}_{=+\infty}=+\infty
=\hsq{\alpha}{\meas}(\rho\|\sigma),
\end{align*}
where the inequality is due to \eqref{eq:hsp bounds}, and the last equality 
is due to \eqref{eq:hsq infty}.
This proves the first equality in \eqref{eq:qalpha measured hs2}, 
and the second equality 
follows by definition.
\end{proof}

\begin{cor}\label{cor:measured hsq expressions2}
Let $\rho,\sigma\in\B(\hil)\p$ and $\alpha\in(0,1)$.
Then 
\begin{align}\label{eq:qalpha measured hs5}
\hsq{\alpha}{\meas}(\rho\|\sigma)
=
\alpha(1-\alpha)\int_0^{+\infty}t^{\alpha-2}\errm{t}{}{\rho}{\sigma}\,dt.
\end{align}
\end{cor}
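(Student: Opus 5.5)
The plan is to combine the two representations from Corollary \ref{cor:measured hsq expressions} that are valid for $\alpha\in(0,1)$, namely \eqref{eq:qalpha measured hs} and \eqref{eq:qalpha measured hs4}, with weights $1-\alpha$ and $\alpha$. Write
\begin{align*}
\hsq{\alpha}{\meas}(\rho\|\sigma)=(1-\alpha)\hsq{\alpha}{\meas}(\rho\|\sigma)+\alpha\hsq{\alpha}{\meas}(\rho\|\sigma).
\end{align*}
Express the first term using \eqref{eq:qalpha measured hs}, rewritten as $\alpha(1-\alpha)\int_0^{+\infty}t^{\alpha-2}\,t\errii{t}{}{\rho}{\sigma}\,dt$. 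Express the second using the second equality in \eqref{eq:qalpha measured hs4}, which gives $\alpha(1-\alpha)\int_0^{+\infty}t^{\alpha-2}\erri{t}{}{\rho}{\sigma}\,dt$.

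Adding the two and using the definition \eqref{eq:mixederror}, $\errm{t}{}{\rho}{\sigma}=\erri{t}{}{\rho}{\sigma}+t\errii{t}{}{\rho}{\sigma}$, yields \eqref{eq:qalpha measured hs5} at once. The only point needing care is that adding the integrals is legitimate. Both integrands are non-negative, so linearity of the integral of non-negative functions applies without any finiteness issue.

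One subtlety is that \eqref{eq:qalpha measured hs4} is stated with $\Tr\rho\{\rho-t\sigma<0\}$, while $\erri{t}{}{\rho}{\sigma}$ is defined with $\{\rho-t\sigma\le 0\}$. These differ only at the finitely many $t$ where $\rho-t\sigma$ is singular, which is a Lebesgue-null set. Moreover, the corollary itself asserts the second equality in \eqref{eq:qalpha measured hs4}, so this can simply be invoked.

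The expected "obstacle" is therefore minor: checking that $\hsq{\alpha}{\meas}(\rho\|\sigma)$ is finite, so that splitting it as a convex combination causes no problems. This holds because $0\le\hsq{\alpha}{\meas}\le\Tr\rho^\alpha\sigma^{1-\alpha}<+\infty$ for $\alpha\in(0,1)$. Alternatively, it follows directly from \eqref{eq:qalpha measured hs}, since $\errii{t}{}{\rho}{\sigma}\le\Tr\sigma$ and $\errii{t}{}{\rho}{\sigma}=0$ for $t$ beyond $e^{D_{\max}}$ on the support, up to the $t\to\infty$ decay in Corollary \ref{cor:IV.20}. Even without finiteness, the identity holds as an equality in $[0,+\infty]$, since $\hsq{\alpha}{\meas}=+\infty$ would make both sides infinite.
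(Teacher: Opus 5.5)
Your proof is correct and is the paper's argument: the paper writes $\hsq{\alpha}{\meas}(\rho\|\sigma)/(\alpha(1-\alpha))=\hsq{\alpha}{\meas}(\rho\|\sigma)\bigl[\frac{1}{\alpha}+\frac{1}{1-\alpha}\bigr]$ and adds \eqref{eq:qalpha measured hs} and \eqref{eq:qalpha measured hs4}, which is your split $(1-\alpha)\hsq{\alpha}{\meas}(\rho\|\sigma)+\alpha\hsq{\alpha}{\meas}(\rho\|\sigma)$ divided through by $\alpha(1-\alpha)$. One small correction to your side remark on $<$ versus $\le$: $\rho-t\sigma$ is singular for every $t$ when $\ker\rho\cap\ker\sigma\ne\{0\}$, so the set that is finite is the set of $t$ with $\Tr\rho\{\rho-t\sigma=0\}>0$ (restrict to $(\ker\rho\cap\ker\sigma)^{\perp}$, where $\rho+\sigma$ is invertible and $\det(\rho-t\sigma)$ is a nonzero polynomial), and this still gives the Lebesgue-null difference you need.
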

\begin{proof}
The assertion follows immediately from \eqref{eq:qalpha measured hs}--\eqref{eq:qalpha measured hs4} as
\begin{align*}
\frac{\hsq{\alpha}{\meas}(\rho\|\sigma)}{\alpha(1-\alpha)}
=
\hsq{\alpha}{\meas}(\rho\|\sigma)\left[\frac{1}{\alpha}+\frac{1}{1-\alpha}\right]
=
\int_0^{+\infty}t^{\alpha-2}\left[t\errii{t}{}{\rho}{\sigma}+\erri{t}{}{\rho}{\sigma}\right]\,dt.
\end{align*}
\end{proof}

\section{Hockey stick $f$-divergences in von Neumann  algebras}\label{sec:V}

\subsection{Hockey stick divergences in von Neumann algebras}\label{sec:V.A}

Let $\M$ be a von Neumann algebra with the identity $\egy$ and the predual $\M_*$, i.e., the space of
$\sigma$-weakly continuous (or normal) functionals on $\M$. We write $\M_{*,\sa}$ for the set of self-adjoint
functionals in $\M_*$ and $\M_{*,\ge0}$ for the set of positive functionals in $\M_*$. For each
$\psi\in\M_{*,\sa}$ let $\psi=\psi_+-\psi_-$ be the Jordan decomposition of $\psi$ (see
\cite[Theorem 4.2, p.~140]{Takesaki} so that $|\psi|=\psi_++\psi_-$, the absolute value of $\psi$. We write
$\{\psi>0\}$ and $\{\psi<0\}$ for the support projections of $\psi_+$ and $\psi_-$ respectively and also
$\{\psi\ge0\}$ for $\{\psi<0\}^\perp=\egy-\{\psi<0\}$, i.e., $\{\psi\ge0\}=\{\psi>0\}+\{\psi=0\}$, where
$\{\psi=0\}$ is the projection onto the kernel of $\psi$.

In this section we may assume that $\M$ is represented in the standard form
\[
\bigl(\M,L^2(\M),J=*,L^2(\M)_{\ge0}\bigr),
\]
where $L^2(\M)$ is \emph{Haagerup's $L^2$-space} and $L^2(\M)_{\ge0}$ is its positive part. Moreover, in
this situation, $\M_*$ is isometrically order-isomorphic to \emph{Haagerup's $L^1$-space} $L^1(\M)$ by the
mapping $\psi\in\M_*\mapsto h_\psi\in L^1(\M)$; see \cite{Terp1981} and \cite[Chap.~9]{Hiai2021}. Under
the trace type functional $\tr:L^1(\M)\to\bC$ given by $\tr(h_\psi)=\psi(\egy)$, $\psi\in\M_*$, any
$\rho\in\M_{*,\ge0}$ is represented as
\[
\rho(A)=\langle h_\rho^{1/2},Ah_\rho^{1/2}\rangle_{L^2(\M)}=\tr\,Ah_\rho,\qquad A\in\M.
\]

One can extend the hockey stick divergences in Examples \ref{eq:Petz hs}--\ref{ex:maximal hsp} to the
von Neumann algebra setting as in the following (I)--(III).

\medskip
(I)\enspace
Let $\rho,\sigma\in\M_{*,\ge0}$ and let $\Delta_{\rho,\sigma}$ be the \emph{relative modular operator}
due to Araki \cite{Araki_relentrII}. We write the spectral decomposition of $\Delta_{\rho,\sigma}$ as
\begin{align}\label{relative modular}
\Delta_{\rho,\sigma}=\int_{[0,+\infty)}x\,dE_{\rho,\sigma}(x).
\end{align}
Then for any $t\in\bR$ the \emph{Petz-type hockey stick divergences} are defined by
\begin{align}
D_{(\id-t)_+}^{\mathrm{P}}(\rho\|\sigma)
&:=\int_{(0,+\infty)}(x-t)_+\,d\|E_{\rho,\sigma}(x)h_\sigma^{1/2}\|^2+\rho(\{\sigma=0\}), \label{Petz-hs+vN}\\
D_{(\id-t)_-}^{\mathrm{P}}(\rho\|\sigma)
&:=\int_{(0,+\infty)}(x-t)_-\,d\|E_{\rho,\sigma}(x)h_\sigma^{1/2}\|^2+t\sigma(\{\rho=0\}). \label{Petz-hs-vN}
\end{align}
See \cite{Hiai_fdiv_standard} and \cite[Chapter 2]{Hiai2021Div} for the definition of the Petz-type (or the
standard) $f$-divergences $S_f(\rho\|\sigma)$ (or $D_f^{\mathrm{P}}(\rho\|\sigma)$ in accordance with the
notation in the present paper).

In particular, when $\M=\B(\hil)$ with a general Hilbert space, $L^2(\M)$ and $L^1(\M)$ are realized as the
Hilbert space consisting of Hilbert--Schmidt operators and the space of trace-class operators on $\hil$,
reepectively. For $\rho,\sigma\in L^1(\M)_{\ge0}$ take the spectral decompositions
\[
\rho=\sum_{r\ge0}rP_r^\rho,\qquad\sigma=\sum_{s\ge0}sP_s^\sigma.
\]
Then the relative modular operator $\Delta_{\rho,\sigma}$ is written as
\[
\Delta_{\rho,\sigma}=L_\rho R_{\sigma^{-1}}=\sum_{r>0,s>0}rs^{-1}L_{P_r^\rho}R_{P_s^\sigma},
\]
where $L_{[-]}$ and $R_{[-]}$ denote the the left and the right multiplications and $\sigma^{-1}$ is the
generalized inverse (typically unbounded) of $\sigma$. The $D_{(\id-t)_+}^{\mathrm{P}}(\rho\|\sigma)$ in
\eqref{Petz-hs+vN} in this setting is given by
\begin{align*}
D_{(\id-t)_+}^{\mathrm{P}}(\rho\|\sigma)
&=\sum_{r>0,s>0}(rs^{-1}-t)_+\|P_r^\rho\sigma^{1/2}P_s^\sigma\|^2+\Tr\rho(P_0^\sigma) \\
&=\sum_{r>0,s>0}(r-ts)_+\Tr P_r^\rho P_s^\sigma+\sum_{r>0}r\Tr P_r^\rho P_0^\sigma \\
&=\sum_{r\ge0,s\ge0}(r-ts)_+\Tr P_r^\rho P_s^\sigma,
\end{align*}
which has the same expression as in Example \ref{eq:Petz hs} in the $\dim\hil<+\infty$ case. The
$D_{(\id-t)_-}^{\mathrm{P}}(\rho\|\sigma)$ in \eqref{Petz-hs-vN} is similarly given.

\medskip
(II)\enspace
The \emph{measured hockey stick divergences} of $\rho,\sigma\in\M_{*,\ge0}$ are defined in the same way
as in Example \ref{ex:hs meas} by
\begin{align*}
D_{(\id-t)_+}^\meas(\rho\|\sigma)&=(\rho-t\sigma)_+(\egy)=(\rho-t\sigma)(\{\rho-t\sigma>0\}) \\
&=\max\{(\rho-t\sigma)(A):A\in\M,\ 0\le A\le\egy\}, \\
D_{(\id-t)_-}^\meas(\rho\|\sigma)&=(\rho-t\sigma)_-(\egy)=(t\sigma-\rho)(\{t\sigma-\rho>0\}) \\
&=\max\{(t\sigma-\rho)(A):A\in\M,\ 0\le A\le\egy\}.
\end{align*}
It is clear that $t\mapsto D_{(\id-t)_+}^\meas$ is monotone decreasing and $t\mapsto D_{(\id-t)_-}^\meas$
is monotone increasing. Both $D_{(\id-t)_\pm}^\meas(\rho\|\sigma)$ are jointly convex in $\rho,\sigma$
as well as convex (hence continuous) in $t\in\bR$. Furthermore, $D_{(\id-t)_\pm}^\meas$ are monotone
under positive trace-preserving maps $\Phi:\M_*\to\N_*$ as
\begin{align}\label{mono-meas-vN}
D_{(\id-t)_\pm}^\meas(\Phi(\rho)\|\Phi(\sigma))\le D_{(\id-t)_\pm}^\meas(\rho\|\sigma),
\qquad\rho,\sigma\in\M_{*,\ge0},
\end{align}
where $\N$ is another von Neumann algebra and the trace-preserving of $\Phi$ means that
$(\Phi\psi)(\egy)=\psi(\egy)$ for $\psi\in\M_*$ (though $\M,\N$ are not assumed to have traces).

\medskip
(III)\enspace
Let $\mathrm{RT}(\rho\|\sigma)$ denote the set of reverse tests for $(\rho,\sigma)$, i.e., the set of
$(\tilde\rho,\tilde\sigma,\Gamma)$ of $\tilde\rho,\tilde\sigma\in\A_{*,\ge0}$, where $\A$ is a
\emph{commutative} von Neumann algebra, and a positive trace-preserving map $\Gamma:\A_*\to\M_*$
such that $\Gamma(\tilde\rho)=\rho$, $\Gamma(\tilde\sigma)=\sigma$. We may let
$\A=L^\infty(\Omega,\F,\mu)$ and $\A_*=L^1(\Omega,\F,\mu)$ on a $\sigma$-finite measure space
$(\Omega,\F,\mu)$. The \emph{maximal hockey stick divergences} are then defined by
\begin{align*}
D_{(\id-t)_+}^{\max}(\rho\|\sigma)&:=\inf\bigl\{D_{(\id-t)_+}(\tilde\rho\|\tilde\sigma):
(\tilde\rho,\tilde\sigma,\Gamma)\in\mathrm{RT}(\rho,\sigma)\bigr\}, \\
D_{(\id-t)_-}^{\max}(\rho\|\sigma)&:=\inf\bigl\{D_{(\id-t)_-}(\tilde\rho\|\tilde\sigma):
(\tilde\rho,\tilde\sigma,\Gamma)\in\mathrm{RT}(\rho,\sigma)\bigr\}.
\end{align*}
It is clear by definition that $t\mapsto D_{(\id-t)_+}^\meas$ is monotone decreasing and
$t\mapsto D_{(\id-t)_-}^\meas$ is monotone increasing. Moreover, both $D_{(\id-t)_\pm}^{\max}$ are
monotone under positive trace-preserving maps $\Phi:\M_*\to\N_*$ between von Neumann algebras.

\medskip
Matsumoto's formulation mentioned in Example \ref{ex:maximal hsp} is also valid in the von Neumann
algebra setting.

\begin{proposition}\label{prop:V.1}
For any $\rho,\sigma\in\M_{*,\ge0}$ and each $t\in\bR$,
\begin{align}
D_{|\id-t|}^{\max}(\rho\|\sigma)
&=(\rho+t\sigma)(\egy)-2\max\bigl\{\eta(\egy):\eta\in\M_{*,\ge0},\,\eta\le\rho,\,\eta\le t\sigma\bigr\},
\label{D-max-abso}\\
D_{(\id-t)_+}^{\max}(\rho\|\sigma)
&=\rho(\egy)-\max\bigl\{\eta(\egy):\eta\in\M_{*,\ge0},\,\eta\le\rho,\,\eta\le t\sigma\bigr\}, \label{D-max+}\\
D_{(\id-t)_-}^{\max}(\rho\|\sigma)
&=t\sigma(\egy)-\max\bigl\{\eta(\egy):\eta\in\M_{*,\ge0},\,\eta\le\rho,\,\eta\le t\sigma\bigr\}. \label{D-max-}
\end{align}
\end{proposition}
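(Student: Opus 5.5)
The plan is to prove \eqref{D-max+} first and then obtain \eqref{D-max-abso} and \eqref{D-max-} from it by an affine shift. For any reverse test $(\tilde\rho,\tilde\sigma,\Gamma)\in\mathrm{RT}(\rho\|\sigma)$, the classical identities from the Example in Section \ref{sec:classical} give
\begin{align*}
D_{(\id-t)_-}(\tilde\rho\|\tilde\sigma)=D_{(\id-t)_+}(\tilde\rho\|\tilde\sigma)-(\tilde\rho-t\tilde\sigma)(\egy),\qquad
D_{|\id-t|}(\tilde\rho\|\tilde\sigma)=2D_{(\id-t)_+}(\tilde\rho\|\tilde\sigma)-(\tilde\rho-t\tilde\sigma)(\egy).
\end{align*}
Since $\Gamma$ is trace-preserving, $(\tilde\rho-t\tilde\sigma)(\egy)=(\rho-t\sigma)(\egy)$ is the same constant for every reverse test. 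Taking infima therefore shifts all three quantities by the same constant, and \eqref{D-max-abso} and \eqref{D-max-} follow from \eqref{D-max+} by direct arithmetic. For $t<0$ and $\sigma\ne0$ the constraint set is empty, so we restrict to $t\ge0$; the case $t=0$ is trivial, since then $\eta=0$ and $D_{(\id-0)_+}(\tilde\rho\|\tilde\sigma)=\tilde\rho(\egy)=\rho(\egy)$. So fix $t>0$.

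\emph{Classical step and lower bound.} For commutative $\A=L^\infty(\Omega,\F,\mu)$, write $p=d\tilde\rho/d\mu$ and $q=d\tilde\sigma/d\mu$, and let $\tilde\rho\wedge t\tilde\sigma$ be the measure with density $\min\{p,tq\}$. Since $(p-tq)_+=p-\min\{p,tq\}$, we have
\begin{align*}
D_{(\id-t)_+}(\tilde\rho\|\tilde\sigma)=\tilde\rho(\egy)-(\tilde\rho\wedge t\tilde\sigma)(\egy).
\end{align*}
Put $\eta:=\Gamma(\tilde\rho\wedge t\tilde\sigma)$. Positivity of $\Gamma$ gives $0\le\eta\le\Gamma(\tilde\rho)=\rho$ and $\eta\le\Gamma(t\tilde\sigma)=t\sigma$. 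Trace preservation gives $\eta(\egy)=(\tilde\rho\wedge t\tilde\sigma)(\egy)$. Hence $D_{(\id-t)_+}(\tilde\rho\|\tilde\sigma)\ge\rho(\egy)-\sup\{\eta(\egy)\}$, where the supremum runs over the feasible set in \eqref{D-max+}. Taking the infimum over reverse tests yields ``$\ge$''.

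\emph{Upper bound.} Given any feasible $\eta$, I will build a three-point reverse test with $\Omega=\{1,2,3\}$ and counting measure. Set
\begin{align*}
\tilde\rho=\bigl(\eta(\egy),\,(\rho-\eta)(\egy),\,0\bigr),\qquad
\tilde\sigma=\bigl(\eta(\egy)/t,\,0,\,(t\sigma-\eta)(\egy)/t\bigr).
\end{align*}
Define $\Gamma(\delta_1)=\eta/\eta(\egy)$, $\Gamma(\delta_2)=(\rho-\eta)/(\rho-\eta)(\egy)$ and $\Gamma(\delta_3)=(t\sigma-\eta)/(t\sigma-\eta)(\egy)$. If a denominator vanishes, the corresponding functional is $0$ and the weight is $0$, and we send that $\delta_i$ to an arbitrary normal state. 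Then $\Gamma$ is positive and trace-preserving from $\ell^1(\Omega)$ to $\M_*$, with $\Gamma(\tilde\rho)=\rho$ and $\Gamma(\tilde\sigma)=\sigma$. Moreover
\begin{align*}
D_{(\id-t)_+}(\tilde\rho\|\tilde\sigma)=0+(\rho-\eta)(\egy)+0=\rho(\egy)-\eta(\egy).
\end{align*}
Taking the supremum over $\eta$ gives ``$\le$''. This proves the formula with $\sup$ in place of $\max$.

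\emph{Attainment.} The step I expect to be the main obstacle is showing that the supremum is a maximum. My plan is to show that the order interval $\{\eta\in\M_*:0\le\eta\le\rho\}$ is $\sigma(\M_*,\M)$-compact. One route is Akemann's criterion, using that such $\eta$ are uniformly $\rho$-absolutely continuous. Alternatively, one can represent each such $\eta$ via the Radon--Nikodym type theorem as $\eta(A)=\langle h_\rho^{1/2}, hAh\, h_\rho^{1/2}\rangle$-type expressions with $\|h\|\le 1$ in $\M'$-like structure, and then use weak$^*$ compactness of the unit ball. The constraint $\eta\le t\sigma$ is weakly closed, and $\eta\mapsto\eta(\egy)$ is weakly continuous, so the maximum is attained.
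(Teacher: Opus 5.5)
Your proposal is correct and follows essentially the paper's proof. The lower bound comes from $\eta:=\Gamma(\tilde\rho\wedge t\tilde\sigma)$, the upper bound from the same three-point reverse test, and the maximum is attained by $\sigma(\M_*,\M)$-compactness of $\{\eta: 0\le\eta\le\rho,\ \eta\le t\sigma\}$ via Akemann's criterion, which is the paper's Takesaki III.5.4. The differences are cosmetic: you prove \eqref{D-max+} first and obtain the other two by the constant shift, whereas the paper proves \eqref{D-max-abso} first. Your treatment of vanishing normalizations is a correct refinement, as is your observation that the constraint set is empty for $t<0$, $\sigma\ne0$, so the formulas need $t\ge0$, which the paper's proof also tacitly assumes.
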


\begin{proof}
Let $(\tilde\rho,\tilde\sigma,\Gamma)$ be in $\mathrm{RT}(\rho,\sigma)$, where
$\tilde\rho,\tilde\sigma\in L^1(\Omega,\F,\mu)_{\ge0}$ and $\Gamma:L^1(\Omega,\F,\mu)\to\M_*$ with a
$\sigma$-finite measure space $(\Omega,\F,\mu)$. Since
\[
D_{(\id-t)_\pm}^{\max}(\tilde\rho\|\tilde\sigma)
={1\over2}\bigl(D_{|\id-t|}^{\max}(\tilde\rho\|\tilde\sigma)\pm D_{(\id-t)}^{\max}(\tilde\rho\|\tilde\sigma)\bigr)
={1\over2}\bigl(D_{|\id-t|}^{\max}(\tilde\rho\|\tilde\sigma)\pm(\rho-t\sigma)(\egy)\bigr),
\]
it suffices to show \eqref{D-max-abso} only. Define $\gamma:\bC^3=C(\{0,1,2\})\to\M_{*,\ge0}$ by
\[
\begin{cases}\gamma(\delta_0):=\Gamma(\tilde\rho\wedge(t\tilde\sigma))/c_0, \\
\gamma(\delta_1):=\Gamma((\tilde\rho-t\tilde\sigma)_+)/c_1, \\
\gamma(\delta_2):=\Gamma((t\tilde\sigma-\tilde\rho)_+)/c_2,\end{cases}
\]
where $c_0,c_1,c_2$ are the normalizations, i.e., $c_0:=\|\tilde\rho\wedge(t\tilde\sigma)\|_1$,
$c_1:=\|(\tilde\rho-t\tilde\sigma)_+\|_1$ and $c_2:=\|(t\tilde\sigma-\tilde\rho)_+\|_1$. Let $p:=(c_0,c_1,0)$
and $tq:=(c_0,0,c_2)$. Then $(p,q,\gamma)\in\mathrm{RT}(\rho,\sigma)$; indeed,
\begin{align*}
\gamma(p)&=\Gamma(\tilde\rho\wedge(t\tilde\sigma))+\Gamma((\tilde\rho-t\tilde\sigma)_+)
=\Gamma(\tilde\rho)=\rho, \\
t\gamma(q)&=\Gamma(\tilde\rho\wedge(t\tilde\sigma))+\Gamma((t\tilde\sigma-\tilde\rho)_+)
=\Gamma(t\tilde\sigma)=t\sigma.
\end{align*}
Moreover,
\[
\|\tilde\rho-t\tilde\sigma\|_1=\|(\tilde\rho-t\tilde\sigma)_+\|_1+\|(\tilde\rho-t\tilde\sigma)_-\|_1
=c_1+c_2=\|p-tq\|_1.
\]
Now let $\eta:=\Gamma(\tilde\rho\wedge(t\tilde\sigma))\in\M_{*,\ge0}$; then $0\le\eta\le\rho$,
$\eta\le t\sigma$ and
\[
(\rho+t\sigma)(\egy)-2\eta(\egy)=\|\tilde\rho+t\tilde\sigma\|_1-2\|\tilde\rho\wedge(t\tilde\sigma)\|_1
=\|\tilde\rho-t\tilde\sigma\|_1.
\]
Conversely, choose an $\eta\in\M_{*,\ge0}$ maximizing $\eta(\egy)$ under $\eta\le\rho$ and $\eta\le t\sigma$,
where the existence of the maximizer is due to the $\sigma(\M_*,\M)$-compactness of
$\{\eta\in\M_*^+:0\le\eta\le\rho,\ \eta\le t\sigma\}$ as immediately seen from \cite[Theorem III.5.4]{Takesaki}.
Define $\gamma:\bC^3\to\M_*$ by
\[
\begin{cases}\gamma(\delta_0):=\eta/\eta(\egy), \\
\gamma(\delta_1):=(\rho-\eta)/(\rho-\eta)(\egy), \\
\gamma(\delta_2):=(t\sigma-\eta)/(t\sigma-\eta)(\egy),\end{cases}
\]
and $p:=(\eta(\egy),(\rho-\eta)(\egy),0)$ and $tq=(\eta(\egy),0,(t\sigma-\eta)(\egy))$. Then
$(p,q,\gamma)\in\mathrm{RT}(\rho,\sigma)$ and
\[
\|p-tq\|_1=(\rho-\eta)(\egy)+(t\sigma-\eta)(\egy)=(\rho+t\sigma)(\egy)-2\eta(\egy).
\]
Therefore, we have \eqref{D-max-abso}.
\end{proof}

Now let $f\in(\bR^{(0,+\infty)})_{\mathrm{conv}}$ and $\rho,\sigma\in\M_{*,\ge0}$. For
$q\in\{\mathrm{P},\meas,\max\}$, we can define, for any choice of $a\in(0,+\infty)$, the \emph{hockey stick
$f$-divergence} $D_f^{q,\hs}$ by
\begin{align}\label{D-f-q-hs}
D_f^{q,\hs}(\rho\|\sigma):=f(a)\sigma(\egy)+f'(a^+)(\rho-a\sigma)(\egy)
+\int_{(0,a]}D_{(\id-t)_-}^q(\rho\|\sigma)\,df'(t)+\int_{(a,+\infty)}D_{(\id-t)_+}^q(\rho\|\sigma)\,df'(t).
\end{align}
Indeed, since $D_{(\id-t)_+}^q(\rho\|\sigma)-D_{(\id-t)_-}^q(\rho\|\sigma)=(\rho-t\sigma)(\egy)$,
it is seen as in the proof of Lemma \ref{lemma:hsd def} that the definition \eqref{D-f-q-hs} is independent of
the choice of $a\in(0,+\infty)$. Furthermore, the above definition can be applied to any bounded
family $(D_{(\id-t)_+}^q)_{t\in(0,+\infty)}$ of non-negative quantum hockey stick divergences
as in Section \ref{sec:IV.A} in the von Neumann algebra setting. In particular, if all $D_{(\id-t)_+}^q$ are
monotone under CPTP maps, then as in Corollary \ref{cor:hs fdiv order} it is immediately seen from the
definition that
\begin{align}\label{hs fdiv order vN}
D_f^\meas\le D_f^{\meas,hs}\le D_f^{q,\hs}\le D_f^{\max,\hs}\le D_f^{\max}
\end{align}
for any $f\in(\bR^{(0,+\infty)})_{\mathrm{conv}}$.


As expected, $D_f^{\mathrm{P},\hs}$ coincides with the (standard) Petz-type $f$divergence
$D_f^{\mathrm{P}}$ (or $S_f$). We give a proof for completeness.

\begin{proposition}\label{prop:V.2}
For any $f\in(\bR^{(0,+\infty)})_{\mathrm{conv}}$ and $\rho,\sigma\in\M_{*,\ge0}$,
$D_f^{\mathrm{P},\hs}(\rho\|\sigma)=D_f^{\mathrm{P}}(\rho\|\sigma)$.
\end{proposition}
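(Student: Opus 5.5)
The plan is to reduce the statement to the classical integral representation of Lemma \ref{lemma:persp integral repr}, applied pointwise on the spectrum of the relative modular operator, together with the known closed form of the standard $f$-divergence in the von Neumann algebra setting. Recall that (see \cite{Hiai_fdiv_standard}, \cite[Chapter 2]{Hiai2021Div})
\begin{align*}
D_f^{\mathrm{P}}(\rho\|\sigma)=\int_{(0,+\infty)}f(x)\,d\|E_{\rho,\sigma}(x)h_\sigma^{1/2}\|^2+f(0^+)\sigma(\{\rho=0\})+\tilde f(0^+)\rho(\{\sigma=0\}).
\end{align*}
Set $\nu(dx):=d\|E_{\rho,\sigma}(x)h_\sigma^{1/2}\|^2$, a finite positive measure. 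Because $\Delta_{\rho,\sigma}$ vanishes on the relevant part of the support of $\sigma$ where $\rho=0$, the restriction $\nu|_{(0,+\infty)}$ has total mass $\sigma(\egy)-\sigma(\{\rho=0\})$, and $\int_{(0,+\infty)}x\,d\nu(x)=\rho(\egy)-\rho(\{\sigma=0\})$. These two mass identities are the standard facts I will use, and I will check them from the definition of $\Delta_{\rho,\sigma}$ via $\|\Delta^{1/2}h_\sigma^{1/2}\|^2=\rho(s(\sigma))$.

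In effect this lets us treat the situation as a classical pair. On $(0,+\infty)$ we take the measures $P:=x\,\nu$ and $Q:=\nu$. We add an atom carrying $P$-mass $\rho(\{\sigma=0\})$ and $Q$-mass $0$, and another atom carrying $P$-mass $0$ and $Q$-mass $\sigma(\{\rho=0\})$. With this setup, \eqref{Petz-hs+vN} and \eqref{Petz-hs-vN} say precisely that $D_{(\id-t)_\pm}^{\mathrm{P}}(\rho\|\sigma)=D_{(\id-t)_\pm}(P\|Q)$, the classical hockey stick divergences. For the atom with $Q$-mass $0$, $(x-t)_+$ at $(x,0)$ gives $x$ and $(x-t)_-$ gives $0$. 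For the atom with $P$-mass $0$, the perspective at $(0,y)$ gives $0$ and $ty$ respectively. These match the extra terms in the two definitions.

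By Lemma \ref{lemma:cl fdiv def2}, $D_f(P\|Q)$ equals the displayed formula for $D_f^{\mathrm{P}}$. We also have $P(\Omega)=\rho(\egy)$ and $Q(\Omega)=\sigma(\egy)$. Applying Corollary \ref{cor:classical fdiv repr0} to $(P,Q)$ therefore gives
\begin{align*}
D_f^{\mathrm{P}}(\rho\|\sigma)=f(a)\sigma(\egy)+f'(a^+)(\rho-a\sigma)(\egy)+\int_{(0,a]}D^{\mathrm{P}}_{(\id-t)_-}(\rho\|\sigma)\,df'(t)+\int_{(a,+\infty)}D^{\mathrm{P}}_{(\id-t)_+}(\rho\|\sigma)\,df'(t),
\end{align*}
which is \eqref{D-f-q-hs} for $q=\mathrm{P}$.

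The main obstacle is justifying the two mass identities. The first is $\nu(\{0\})=\sigma(\{\rho=0\})$; equivalently, $\nu((0,\infty))=\sigma(\egy)-\sigma(\{\rho=0\})$. The second is the first-moment identity. Together these ensure that the affine terms $f(a)\sigma(\egy)+f'(a^+)(\rho-a\sigma)(\egy)$ come out with the full masses rather than the restricted ones. If quoting \cite{Hiai2021Div} is not enough, I would derive both from the support properties of $\Delta_{\rho,\sigma}$: its support is $s(\rho)$ on the left, so $E_{\rho,\sigma}(\{0\})h_\sigma^{1/2}=s(\rho)^\perp h_\sigma^{1/2}$, and $\|\Delta^{1/2}_{\rho,\sigma}h_\sigma^{1/2}\|^2=\rho(s(\sigma))$. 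The Fubini interchange is handled, as in Corollary \ref{cor:classical fdiv repr0}, by subtracting an affine function so that $f\ge0$.
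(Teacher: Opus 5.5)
Your proposal is correct. It uses the same ingredients as the paper: the pointwise representation \eqref{eq:convexf repr4} integrated against $\nu=d\|E_{\rho,\sigma}(x)h_\sigma^{1/2}\|^2$ with Fubini, and the two identities $\nu((0,+\infty))=\sigma(\{\rho>0\})$ and $\int_{(0,+\infty)}x\,d\nu(x)=\rho(\{\sigma>0\})$. The only difference is bookkeeping: the paper matches the boundary terms $f(0^+)\sigma(\{\rho=0\})$ and $f'(+\infty)\rho(\{\sigma=0\})$ by hand via $\int_{(0,a]}t\,df'(t)=af'(a^+)-f(a)+f(0^+)$, whereas your two auxiliary atoms let the $xy=0$ cases of Lemma \ref{lemma:persp integral repr} absorb them automatically, through Corollary \ref{cor:classical fdiv repr0}, which holds verbatim for finite positive measures even though it is stated for probability measures.
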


\begin{proof}
First, recall the definition
\begin{align}\label{standard S_f}
D_f^{\mathrm{P}}(\rho\|\sigma):=\int_{(0,+\infty)}f(x)\,d\|E_{\rho,\sigma}(x)h_\sigma^{1/2}\|^2
+f(0^+)\sigma(\{\rho=0\})+f'(\infty)\rho(\{\sigma=0\})
\end{align}
with the spectral decomposition \eqref{relative modular}. We insert \eqref{eq:convexf repr4} into
\eqref{standard S_f} and use the Fubini theorem to write
\begin{align}
D_f^{\mathrm{P}}(\rho\|\sigma)
&=f(a)\int_{(0,+\infty)}d\|E_{\rho,\sigma}(x)h_\sigma^{1/2}\|^2
+f'(a^+)\int_{(0,+\infty)}(x-a)\,d\|E_{\rho,\sigma}(x)h_\sigma^{1/2}\|^2 \nonumber\\
&\qquad+\int_{(0,a]}\biggl(\int_{(0,+\infty)}(x-t)_-\,d\|E_{\rho,\sigma}(x)h_\sigma^{1/2}\|^2\biggr)\,df'(t)
\nonumber\\
&\qquad+\int_{(a,+\infty)}\biggl(\int_{(0,+\infty)}(x-t)_+\,d\|E_{\rho,\sigma}(x)h_\sigma^{1/2}\|^2\biggr)\,df'(t)
\nonumber\\
&\qquad+f(0^+)\sigma(\{\rho=0\})+f'(+\infty)\rho(\{\sigma=0\}) \nonumber\\
&=(f(a)-af'(a^+))\sigma(\{\rho>0\})+f'(a^+)\rho(\{\sigma>0\}) \nonumber\\
&\qquad+\int_{(0,a]}\biggl(\int_{(0,+\infty)}(x-t)_-\,d\|E_{\rho,\sigma}(x)h_\sigma^{1/2}\|^2\biggr)\,df'(t)
\nonumber\\
&\qquad+\int_{(a,+\infty)}\biggl(\int_{(0,+\infty)}(x-t)_+\,d\|E_{\rho,\sigma}(x)h_\sigma^{1/2}\|^2\biggr)\,df'(t)
\nonumber\\
&\qquad+f(0^+)\sigma(\{\rho=0\})+f'(+\infty)\rho(\{\sigma=0\}). \label{standard S_f2}
\end{align}

On the other hand, we insert \eqref{Petz-hs+vN} and \eqref{Petz-hs-vN} into \eqref{D-f-q-hs} to write
\begin{align}
D^{\mathrm{P},\hs}_f(\rho\|\sigma)&=f(a)\sigma(\egy)+f'(a^+)(\rho(\egy)-a\sigma(\egy)) \nonumber\\
&\qquad+\int_{(0,a]}\biggl(\int_{(0,+\infty)}(x-t)_-\,d\|E_{\rho,\sigma}(x)h_\sigma^{1/2}\|^2\biggr)\,df'(t)
\nonumber\\
&\qquad+\int_{(a,+\infty)}\biggl(\int_{(0,+\infty)}(x-t)_+\,d\|E_{\rho,\sigma}(x)h_\sigma^{1/2}\|^2\biggr)\,df'(t)
\nonumber\\
&\qquad+\biggl(\int_{(0,a]}t\,df'(t)\biggr)\sigma(\{\rho=0\})+(f'(+\infty)-f'(a^+))\rho(\{\sigma=0\}).
\label{D-f-P-hs2}
\end{align}
From \eqref{standard S_f2} and \eqref{D-f-P-hs2} it suffices to see the equality
\begin{align*}
&(f(a)-af'(a^+))\sigma(\{\rho>0\})+f'(a^+)\rho(\{\sigma>0\})
+f(0^+)\sigma(\{\rho=0\})+f'(+\infty)\rho(\{\sigma=0\}) \nonumber\\
&\quad=f(a)\sigma(\egy)+f'(a^+)(\rho(\egy)-a\sigma(\egy))
+\biggl(\int_{(0,a]}t\,df'(t)\biggr)\sigma(\{\rho=0\})+(f'(+\infty)-f'(a^+))\rho(\{\sigma=0\}),
\end{align*}
which can directly be checked by using integration by parts $\int_{(0,a]}t\,df'(t)=af'(a^+)-f(a)+f(0^+)$.
\end{proof}

\subsection{Differentiability of measured hockey stick divergences}\label{sec:V.B}

Our aim of this subsection is to prove the next theorem, which extends
Lemma \ref{lemma:measured hsp derivative} to the von Neumann algebra setting.
Throughout the sequel of Section \ref{sec:V} let $\rho,\sigma\in\M_{*,\ge0}$ be arbitrarily given.

\begin{theorem}\label{theorem:V.3}
For every $t\in\bR$ we have
\begin{align}
\partial_+(\rho-t\sigma)_+(\egy)&=-\sigma(\{\rho-t\sigma>0\}), \label{deriv+}\\
\partial_-(\rho-t\sigma)_+(\egy)&=-\sigma(\{\rho-t\sigma\ge0\}). \label{deriv-}
\end{align}
\end{theorem}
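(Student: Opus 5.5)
The plan is to treat $g(t):=(\rho-t\sigma)_+(\egy)$ as a maximum of affine functions and run a Danskin-type argument over a weak$^*$-compact set. By (II) of Section \ref{sec:V.A},
\begin{align*}
g(t)=\max\{(\rho-t\sigma)(T):\,T\in\M,\ 0\le T\le\egy\},
\end{align*}
and the test set $\{0\le T\le\egy\}$ is $\sigma$-weakly compact. Each map $T\mapsto(\rho-t\sigma)(T)$ is $\sigma$-weakly continuous because $\rho,\sigma$ are normal. Write $P_t:=\{\rho-t\sigma>0\}$ and $Q_t:=\{\rho-t\sigma\ge0\}$.

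\emph{Step 1 (the maximizer set).} I will show that $T$ is a maximizer at $t$ if and only if $P_t\le T\le Q_t$. Put $\psi:=\rho-t\sigma=\psi_+-\psi_-$. Then
\begin{align*}
\psi(T)=\psi_+(\egy)-\psi_+(\egy-T)-\psi_-(T),
\end{align*}
so $T$ is a maximizer exactly when $\psi_+(\egy-T)=0$ and $\psi_-(T)=0$. Since $P_t$ is the support of $\psi_+$, the first condition gives $(\egy-T)^{1/2}P_t=0$, hence $TP_t=P_t$, i.e.\ $P_t\le T$. Similarly, $\psi_-(T)=0$ gives $T^{1/2}\{\psi<0\}=0$, i.e.\ $T\le Q_t$. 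Both $P_t$ and $Q_t$ are therefore maximizers, and every maximizer $T$ satisfies
\begin{align*}
\sigma(P_t)\le\sigma(T)\le\sigma(Q_t).
\end{align*}

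\emph{Step 2 (subgradient bounds).} For any $s$, since $P_t$ is admissible at $s$,
\begin{align*}
g(s)\ge(\rho-s\sigma)(P_t)=g(t)-(s-t)\sigma(P_t),
\end{align*}
and the same holds with $Q_t$ in place of $P_t$. Hence for $s>t$ the difference quotient satisfies $\frac{g(s)-g(t)}{s-t}\ge-\sigma(P_t)$, and for $s<t$ it satisfies $\frac{g(s)-g(t)}{s-t}\le-\sigma(Q_t)$. For the converse bounds I use $P_s$ as a maximizer at $s$:
\begin{align*}
g(s)=(\rho-t\sigma)(P_s)-(s-t)\sigma(P_s)\le g(t)-(s-t)\sigma(P_s).
\end{align*}
This gives a quotient $\le-\sigma(P_s)$ for $s>t$ and $\ge-\sigma(P_s)$ for $s<t$.

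\emph{Step 3 (limits, the main point).} Let $s_n\searrow t$ be a sequence realizing $\limsup_{s\searrow t}\frac{g(s)-g(t)}{s-t}$. By compactness, pass to a subnet along which $P_{s_n}\to T$ $\sigma$-weakly, so that $\sigma(P_{s_n})\to\sigma(T)$. Then
\begin{align*}
(\rho-s_n\sigma)(P_{s_n})=(\rho-t\sigma)(P_{s_n})-(s_n-t)\sigma(P_{s_n})\to(\rho-t\sigma)(T),
\end{align*}
because the correction term is bounded by $(s_n-t)\sigma(\egy)$. The left-hand side equals $g(s_n)$, and $g$ is continuous since it is convex and finite, so $(\rho-t\sigma)(T)=g(t)$ and $T$ is a maximizer. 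By Step 1, $\sigma(T)\ge\sigma(P_t)$. Therefore
\begin{align*}
\limsup_{s\searrow t}\frac{g(s)-g(t)}{s-t}\le-\sigma(T)\le-\sigma(P_t),
\end{align*}
which combined with Step 2 proves \eqref{deriv+}. Symmetrically, taking $s_n\nearrow t$, a cluster point $T$ is again a maximizer with $\sigma(T)\le\sigma(Q_t)$, giving $\liminf_{s\nearrow t}\frac{g(s)-g(t)}{s-t}\ge-\sigma(T)\ge-\sigma(Q_t)$ and hence \eqref{deriv-}.

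I expect the only delicate point to be Step 1 in the von Neumann algebra setting. It requires identifying the supports of the Jordan parts $\psi_\pm$ with $P_t$ and $\egy-Q_t$, and checking that $\omega(X)=0$ for a positive $X$ forces $s(\omega)X s(\omega)=0$. The latter is standard faithfulness of a normal positive functional on its support. Everything else reduces to normality and weak$^*$ compactness.
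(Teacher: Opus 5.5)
Your proof is correct, and it takes a genuinely different route from the paper. The paper proves the subgradient inequalities directly in Lemma~\ref{lemma:V.4}; this is the first half of your Step 2. It then obtains equality by approximating $t_0$ with points $t_j\searrow t_0$ at which $g$ is differentiable, using the right-continuity of $\partial_+g$ and Lemma~\ref{lemma:V.5}, the norm lower semicontinuity of $\psi\mapsto\eta(\{\psi>0\})$. That lemma is proved with heavy machinery: Kosaki's norm continuity of $\psi\mapsto|\psi|$, Haagerup's $L^1(\M)$ inside the crossed product with the measure topology, strong convergence of resolvents, and Kadison's strong continuity of the functional calculus.

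You instead argue as follows:
\begin{itemize}
\item You identify the whole set of maximizers as $\{T:\,P_t\le T\le Q_t\}$.
\item You get the complementary bound on the difference quotient by evaluating at the maximizer $P_s$.
\item You conclude by $\sigma$-weak compactness of the test set: every cluster point of $P_s$ as $s\to t$ is again a maximizer at $t$, so it dominates $P_t$ and is dominated by $Q_t$.
\end{itemize}
This is Danskin's theorem for $t\mapsto\max_{0\le T\le\egy}(\rho-t\sigma)(T)$. It uses only normality of $\rho,\sigma$, Banach--Alaoglu, and faithfulness of a normal positive functional on its support, which you handle correctly in Step 1. It also covers all $t\in\bR$ at once, without the paper's separate treatment of $t\le 0$.

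The same cluster-point argument, applied to an arbitrary norm-convergent sequence $\psi_j\to\psi$ in $\M_{*,\sa}$, reproves Lemma~\ref{lemma:V.5} in the full generality in which the paper proves it. The key facts are that $\psi\mapsto\psi_+(\egy)=\max_{0\le T\le\egy}\psi(T)$ is $1$-Lipschitz in norm, and that $|\psi(P_j)-\psi_j(P_j)|\le\|\psi-\psi_j\|$ with $P_j:=\{\psi_j>0\}$. This also covers the later use of that lemma in Lemma~\ref{lemma:V.7}. So for this theorem your approach is strictly more economical. The paper's route is heavier and, for the purposes of this statement and its corollaries, yields nothing your compactness argument does not already give.
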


To prove the theorem, we give a few preliminary lemmas.

\begin{lemma}\label{lemma:V.4}
For every $t\in\bR$ we have
\begin{align}
\partial_+(\rho-t\sigma)_+(\egy)&\ge-\sigma(\{\rho-t\sigma>0\}), \label{deriv+ge}\\
\partial_-(\rho-t\sigma)_+(\egy)&\le-\sigma(\{\rho-t\sigma\ge0\}). \label{deriv-le}
\end{align}
Moreover, if $\partial_-(\rho-t\sigma)_+(\egy)=\partial_+(\rho-t\sigma)_+(\egy)$, then
\[
\partial_-(\rho-t\sigma)_+(\egy)=\partial_+(\rho-t\sigma)_+(\egy)
=-\sigma(\{\rho-t\sigma>0\})=-\sigma(\{\rho-t\sigma\ge0\}).
\]
\end{lemma}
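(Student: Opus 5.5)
We compare $g(s):=(\rho-s\sigma)_+(\egy)$ at $s$ and at $t$, using the variational formula
\[
g(s)=\max\{(\rho-s\sigma)(A):\,A\in\M,\ 0\le A\le\egy\}
\]
from (II) of Section \ref{sec:V.A}. For $t$ the maximum is attained at $A=P:=\{\rho-t\sigma>0\}$, since $(\rho-t\sigma)(P)=(\rho-t\sigma)_+(\egy)$ by the Jordan decomposition. It is also attained at $A=Q:=\{\rho-t\sigma\ge0\}$, because $(\rho-t\sigma)(\{\rho-t\sigma=0\})=0$: the projection $\{\rho-t\sigma=0\}$ is orthogonal to the supports of both $(\rho-t\sigma)_\pm$. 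Plugging the fixed test $A=P$ (resp. $A=Q$) into the variational formula at $s$ gives a lower bound for $g(s)$, linear in $s$, that touches $g$ at $s=t$:
\[
g(s)\ge(\rho-s\sigma)(P)=g(t)-(s-t)\sigma(P),\qquad
g(s)\ge g(t)-(s-t)\sigma(Q).
\]

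Next we convert these into derivative bounds. For $s>t$ we divide the first inequality by $s-t>0$ and let $s\searrow t$, which gives $\partial_+g(t)\ge-\sigma(P)$; this is \eqref{deriv+ge}. For $s<t$ we divide the second inequality by $s-t<0$, which reverses the inequality, so $(g(s)-g(t))/(s-t)\le-\sigma(Q)$; letting $s\nearrow t$ gives $\partial_-g(t)\le-\sigma(Q)$, which is \eqref{deriv-le}. The one-sided derivatives exist because $g$ is convex in $s$, being a supremum of affine functions.

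For the last assertion we use convexity of $g$, which gives $\partial_-g(t)\le\partial_+g(t)$, together with $P\le Q$ and positivity of $\sigma$, which give $\sigma(P)\le\sigma(Q)$. These combine into the chain
\[
-\sigma(Q)\le-\sigma(P)\le\partial_+g(t),\qquad \partial_-g(t)\le-\sigma(Q).
\]
If $\partial_-g(t)=\partial_+g(t)$, then every term in this chain is squeezed between $\partial_-g(t)$ and $\partial_+g(t)$, so all of them are equal.

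The only step needing care is justifying that $Q$ is also a maximizer, i.e. that $(\rho-t\sigma)(\{\rho-t\sigma=0\})=0$. This follows directly from the definition of $\{\psi=0\}$ as the complement of the supports of $\psi_\pm$ for $\psi=\rho-t\sigma$.
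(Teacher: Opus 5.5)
Your proof is correct and is the paper's argument: the paper also bounds $(\rho-(t\pm\delta)\sigma)_+(\egy)$ from below by evaluating $\rho-(t\pm\delta)\sigma$ at the fixed projections $\{\rho-t\sigma>0\}$ and $\{\rho-t\sigma\ge0\}$, takes one-sided difference quotients, and closes the equality case with $\sigma(\{\rho-t\sigma>0\})\le\sigma(\{\rho-t\sigma\ge0\})$. Your explicit check that $(\rho-t\sigma)(\{\rho-t\sigma=0\})=0$, which makes $\{\rho-t\sigma\ge0\}$ a maximizer as well, spells out a step the paper uses without comment.
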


\begin{proof}
We notice that
\begin{align*}
(\rho-(t+\delta)\sigma)_+(\egy)&\ge[(\rho-t\sigma)_+-(\rho-t\sigma)_--\delta\sigma](\{\rho-t\sigma>0\}) \\
&=(\rho-t\sigma)_+(\egy)-\delta\sigma(\{\rho-t\sigma>0\}),
\end{align*}
and
\begin{align*}
(\rho-(t-\delta)\sigma)_+(\egy)&\ge[(\rho-t\sigma)_+-(\rho-t\sigma)_-+\delta\sigma](\{\rho-t\sigma\ge0\}) \\
&=(\rho-t\sigma)_+(\egy)+\delta\sigma(\{\rho-t\sigma\ge0\}).
\end{align*}
Therefore,
\begin{align}\label{driv-limit+}
\partial_+(\rho-t\sigma)_+(\egy)
=\lim_{\delta\searrow0}{(\rho-(t+\delta)\sigma)_+(\egy)-(\rho-t\sigma)_+(\egy)\over\delta}
\ge-\sigma(\{\rho-t\sigma>0\}),
\end{align}
and
\begin{align}\label{deriv-limit-}
\partial_-(\rho-t\sigma)_+(\egy)
=\lim_{\delta\searrow0}{(\rho-(t-\delta)\sigma)_+(\egy)-(\rho-t\sigma)_+(\egy)\over-\delta}
\le-\sigma(\{\rho-t\sigma\ge0\}).
\end{align}
If $\partial_-(\rho-t\sigma)_+(\egy)=\partial_+(\rho-t\sigma)_+(\egy)$, then it follows from \eqref{driv-limit+}
and \eqref{deriv-limit-} that
\[
\sigma(\{\rho-t\sigma\ge0\})\le\sigma(\{\rho-t\sigma>0\}).
\]
Since $\sigma(\{\rho-t\sigma>0\})\le\sigma(\{\rho-t\sigma\ge0\})$ obviously, we have the assertion.
\end{proof}

The next lemma plays an essential role to prove the theorem.

\begin{lemma}\label{lemma:V.5}
For any sequence $\{t_j\}$ in $[0,+\infty)$ with $t_j\searrow t_0\ge0$ and any $\eta\in\M_{*\ge0}$,
\[
\eta(\{\rho-t_0\sigma>0\})\le\liminf_{j\to\infty}\eta(\{\rho-t_j\sigma>0\})
\le\sup_{j\ge1}\eta(\{\rho-t_j\sigma>0\}).
\]
\end{lemma}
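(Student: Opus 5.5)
The second inequality is trivial, since a $\liminf$ never exceeds a supremum. The real content is the first inequality: $t\mapsto\{\rho-t\sigma>0\}$ should be lower semicontinuous from the right in the $\sigma$-weak sense. I would prove it through a compactness argument in the standard form $L^2(\M)$.

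Write $P_j:=\{\rho-t_j\sigma>0\}$, $P_0:=\{\rho-t_0\sigma>0\}$, and $h_j:=h_{(\rho-t_j\sigma)_+}$, $h_0:=h_{(\rho-t_0\sigma)_+}\in L^1(\M)_{\ge0}$. By definition, $P_j$ is the support projection of $h_j$, so $P_jh_j^{1/2}=h_j^{1/2}$, and likewise for $P_0$ and $h_0$.

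\emph{Step 1: convergence.} I first show $h_j^{1/2}\to h_0^{1/2}$ in $L^2(\M)$. Since $\norm{(\rho-t_j\sigma)-(\rho-t_0\sigma)}=|t_j-t_0|\,\sigma(\egy)\to0$, it is enough that the positive part of the Jordan decomposition is norm continuous on $\M_{*,\sa}$. Because $\psi_+=\half(|\psi|+\psi)$, this reduces to norm continuity of $\psi\mapsto|\psi|$. I would take this from the known Haagerup $L^1$ estimates for the absolute value, namely Kosaki's inequality of the form $\norm{|h|-|k|}_1\le\bz2\norm{h+k}_1\norm{h-k}_1\jz^{1/2}$ for self-adjoint $h,k\in L^1(\M)$. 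Then $h_j\to h_0$ in $L^1(\M)$, and the Powers--St\o rmer inequality $\norm{h_j^{1/2}-h_0^{1/2}}_2^2\le\norm{h_j-h_0}_1$ gives the claim. I expect this step to be the main obstacle: it needs the correct form of the absolute-value continuity inequality in $L^1(\M)$ to be cited. If that inequality is unavailable, the fallback is to exploit the specific affine structure $\rho-t\sigma$, for instance by showing that $(\rho-t_j\sigma)_+\to(\rho-t_0\sigma)_+$ in norm directly from the variational formula for $(\cdot)_+(\egy)$ together with uniqueness of the Jordan decomposition.

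\emph{Step 2: cluster points.} Choose a subsequence, still denoted $(t_j)$, along which $\eta(P_j)\to\liminf_{j}\eta(\{\rho-t_j\sigma>0\})$. The unit ball of $\M$ is $\sigma$-weakly compact, so a further subnet gives $P_j\to Q$ $\sigma$-weakly, with $0\le Q\le\egy$. Since $\eta$ is normal, $\eta(Q)$ equals that $\liminf$. For any $\xi\in L^2(\M)$,
\begin{align*}
\inner{\xi}{P_jh_j^{1/2}}-\inner{\xi}{Qh_0^{1/2}}
=\inner{\xi}{P_j(h_j^{1/2}-h_0^{1/2})}+\inner{\xi}{(P_j-Q)h_0^{1/2}}.
\end{align*}
The first term is at most $\norm{\xi}\,\norm{h_j^{1/2}-h_0^{1/2}}_2$, which tends to $0$ by Step 1. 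The second tends to $0$ because $\sigma$-weak convergence in the standard representation implies weak operator convergence. Since $P_jh_j^{1/2}=h_j^{1/2}\to h_0^{1/2}$, this yields $Qh_0^{1/2}=h_0^{1/2}$.

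\emph{Step 3: comparison with $P_0$.} From $Qh_0^{1/2}=h_0^{1/2}$ we get $\inner{h_0^{1/2}}{(\egy-Q)h_0^{1/2}}=0$, hence $(\egy-Q)^{1/2}h_0^{1/2}=0$. The left support of $h_0^{1/2}$ in $\M$ is $P_0$, so $(\egy-Q)^{1/2}P_0=0$, and therefore $QP_0=P_0=P_0Q$. With $0\le Q\le\egy$ this gives $Q=P_0+(\egy-P_0)Q(\egy-P_0)\ge P_0$. Consequently
\begin{align*}
\eta(P_0)\le\eta(Q)=\liminf_{j\to\infty}\eta(\{\rho-t_j\sigma>0\}),
\end{align*}
which is the first inequality. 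Monotonicity $t_j\searrow t_0$ is not used beyond the convergence $t_j\to t_0$; it is simply the case needed for Theorem \ref{theorem:V.3}.
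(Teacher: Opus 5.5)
Your proof is correct, but it handles the key step by a different mechanism from the paper.

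\textbf{Common start.} Both arguments begin identically. Kosaki's inequality gives norm continuity of $\psi\mapsto\psi_+$, which reduces the claim to norm lower semicontinuity of $\psi\mapsto\eta(\{\psi>0\})$ on $\M_{*,\ge0}$.

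\textbf{The paper's route.} It embeds $\M$ into the crossed product $\M\rtimes_{\sigma^{\vfi_0}}\bR$ and uses that $L^1$-convergence implies convergence in measure. It then invokes strong resolvent convergence $(\egy+h_{\psi_j})^{-1}\to(\egy+h_\psi)^{-1}$ and strong continuity of the continuous functional calculus. This lets it write $\{\psi>0\}$ as the increasing limit of $f_\delta(h_\psi(\egy+h_\psi)^{-1})$, so $\langle\xi,\{\psi>0\}\xi\rangle$ is a supremum of norm-continuous functions. Finally it decomposes $\eta$ as a sum of vector functionals.

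\textbf{Your route.} You stay entirely in $L^2(\M)$. The Powers--St\o rmer--Araki inequality turns $L^1$-convergence of $h_j$ into $L^2$-convergence of $h_j^{1/2}$. Then $\sigma$-weak compactness of $\{0\le x\le\egy\}$ shows that every cluster point $Q$ of the support projections satisfies $Qh_0^{1/2}=h_0^{1/2}$, hence $Q\ge P_0$.

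\textbf{Comparison.} Your argument avoids $\tau$-measurable operators, the measure topology, and the resolvent and functional-calculus continuity theorems. It also treats an arbitrary normal $\eta$ directly, without decomposing it into vector functionals. It gives a slightly more structural conclusion: every $\sigma$-weak cluster point of $\{\psi_j>0\}$ dominates $\{\psi>0\}$. The paper's argument instead exhibits the semicontinuity explicitly as a supremum of continuous functions, at the cost of heavier machinery.

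\textbf{One step to spell out.} You should justify that $xh_0^{1/2}=0$ implies $xP_0=0$ for $x\in\M$. This holds because $P_0$, the support of the vector functional $(\rho-t_0\sigma)_+=\langle h_0^{1/2},\,\cdot\,h_0^{1/2}\rangle$, is the projection onto $\overline{\M'h_0^{1/2}}$. You can apply this directly to $x=\egy-Q$; the square root is not needed.
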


Let us first give the proof of Theorem \ref{theorem:V.3} under assuming that
Lemma \ref{lemma:V.5} has been proved. The proof of the lemma will be given later on.

\begin{proof}[Proof of Theorem \ref{theorem:V.3}]
Note that $(\rho-t\sigma)_+(\egy)=(\rho-t\sigma)(\egy)$ and $\sigma(\{\rho-t\sigma\ge0\})=\sigma(\egy)$
for all $t\le0$ as well as $\sigma(\{\rho-t\sigma>0\})=\sigma(\egy)$ for all $t<0$. Hence it suffices to show
\eqref{deriv+} for $t\ge0$ and \eqref{deriv-} for $t>0$. For any $t_0\ge0$, since
$t\mapsto(\rho-t\sigma)_+(\egy)$ is convex on $\bR$, by Lemma \ref{lemma:V.4} there exists a sequence
$\{t_j\}$ with $t_j>t_0$ and $t_j\searrow t_0$ such that
\[
\partial_+(\rho-t_j\sigma)_+(\egy)=-\sigma(\{\rho-t_j\sigma>0\}),\qquad j\ge1.
\]
As $\partial_+(\rho-t_j\sigma)_+(\egy)\searrow\partial_+(\rho-t_0\sigma)_+(\egy)$, we have by
Lemma \ref{lemma:V.5}
\[
\partial_+(\rho-t_0\sigma)_+(\egy)=\inf_{j\ge1}\partial_+(\rho-t_j\sigma)_+(\egy)
=-\sup_{j\ge1}\sigma(\{\rho-t_j\sigma>0\})\le-\sigma(\{\rho-t_0\sigma>0\}),
\]
which with \eqref{deriv+ge} yields \eqref{deriv+}. 

On the other hand, for any $t_0>0$ by Lemma \ref{lemma:V.4} there exists a sequence $\{s_j\}$ with
$0<s_j<t_0$ and $s_j\nearrow t_0$ such that
\begin{align}\label{derivconv-}
\partial_-(\rho-s_j\sigma)_+(\egy)=-\sigma(\{\rho-s_j\sigma\ge0\}),\qquad j\ge1.
\end{align}
Note that for any $t>0$,
\[
\sigma(\{\rho-t\sigma\ge0\})=\sigma(\egy-\{t\sigma-\rho>0\})
=\sigma(\egy)-\sigma(\{\sigma-t^{-1}\rho>0\}).
\]
Therefore, as $s_j^{-1}\searrow t_0^{-1}$, by Lemma \ref{lemma:V.5} we have
\begin{align*}
\sigma(\{\rho-t_0\sigma\ge0\})&=\sigma(\egy)-\sigma(\{\sigma-t_0^{-1}\rho>0\})
\ge\inf_{j\ge1}[\sigma(\egy)-\sigma(\{\sigma-s_j^{-1}\rho>0\})] \\
&=\inf_{j\ge1}\sigma(\{\rho-s_j\sigma\ge0\})=-\sup_{j\ge1}\partial_-(\rho-s_j\sigma)_+(\egy)
\end{align*}
thanks to \eqref{derivconv-}.
As $\partial_-(\rho-s_j\sigma)_+(\egy)\nearrow\partial_-(\rho-t_0\sigma)_+(\egy)$, we have
\[
\sigma(\{\rho-t_0\sigma\ge0\})\ge-\partial_-(\rho-t_0\sigma)_+(\egy),
\]
which with \eqref{deriv-le} yields \eqref{deriv-}.
\end{proof}

Now let us prove Lemma \ref{lemma:V.5}. To do so, we first recall the norm continuity of the mapping
$\psi\mapsto|\psi|$ on $\M_*$ due to Kosaki \cite{Kosaki_conti}, which says that for every
$\psi,\omega\in\M_*$,
\begin{align}\label{Kosaki:conti}
\|\,|\psi|-|\omega|\,\|\le(2\,\|\psi+\omega\|\,\|\psi-\omega\|)^{1/2}.
\end{align}
This implies that the mapping $\psi\in\M_{*,\sa}\mapsto\psi_+=(\psi+|\psi|)/2\in\M_{*,\ge0}$ is continuous
in the norm.

An ingredient of the proof below is a small part of Takesaki's structure theorem of von Neumann algebras
(of type III). Let $\M$ be a general von Neumann algebra $\M$ on a Hilbert space $\hil$, and $\vfi_0$ be
a faithful normal semi-finite weight on $\M$. Let $\mathcal{N}:=\M\rtimes_{\sigma^{\vfi_0}}\bR$ be the
crossed product von Neumann algebra by the modular automorphism group $\sigma^{\vfi_0}$, which is
represented on the Hilbert space $\hil\otimes L^2(\bR)$. Then it is known that $\mathcal{N}$ is semi-finite
with the canonical semi-finite trace $\tau$; see \cite[Lemma 2]{takesaki1973} (also
\cite[Theorem 8.15]{Hiai2021}). Since we have a *-isomorphism of $\M$ into $\mathcal{N}$, we can consider
$\M$ as a von Neumann subalgebra of $\mathcal{N}$. Also, let $\widetilde{\mathcal{N}}$ be the space of
$\tau$-measurable operators affiliated with $\mathcal{N}$; see \cite{Nelson1974} and
\cite[Chap.~4]{Hiai2021}). Another ingredient of the proof is Haagerup's $L^1$-space $L^1(\M)$ that is
constructed inside $\widetilde{\mathcal{N}}$, which is isometrically isometric to $\M_*$ as mentioned in the
beginning of Section \ref{sec:V}.

\begin{proof}[Proof of Lemma \ref{lemma:V.5}]
Let $\{t_j\}$ and $\eta$ be as in Lemma \ref{lemma:V.5}. Since $\rho-t_j\sigma\to\rho-t_0\sigma$ in the
norm, it suffices to prove that the mapping $\psi\in\M_{*,\sa}\mapsto\eta(\{\psi>0\})$ is lower semicontinuous
in the norm. Moreover, since $\psi\in\M_{*,\sa}\mapsto\psi_+\in\M_{*,\ge0}$ is norm-continuous as
mentioned above, we may prove that $\psi\in\M_{*,\ge0}\mapsto\eta(\{\psi>0\})$ is lower semicontinuous
in the norm. To show this, we consider $\M$ as a von Neumann subalgebra of
$\mathcal{N}:=\M\rtimes_{\sigma^{\vfi_0}}\bR$ and use $L^1(\M)\subset\widetilde{\mathcal{N}}$, as
mentioned above. Now let $\psi,\psi_j$ ($j\in\bN$) be in $\M_{*,\ge0}$ such that
$\|h_{\psi_j}-h_\psi\|_1=\|\psi_j-\psi\|\to0$. Since the $\|\cdot\|_1$-norm topology on $L^1(\M)$ coincides with
the relative topology induced from $\widetilde{\mathcal{N}}$ with the measure topology (see
\cite[Chap.~II, Proposition 26]{Terp1981}, \cite[Theorem 9.18]{Hiai2021}), it follows that
$h_{\phi_j}\to h_\phi$ in the measure topology in $\widetilde{\mathcal{N}}$. Hence by
\cite[Proposition 3.7]{HiaiKosaki} we havev $(\egy+h_{\phi_j})^{-1}\to(\egy+h_\phi)^{-1}$ in the strong
operator topology, so that $h_{\psi_j}(\egy+h_{\psi_j})^{-1}\to h_\psi(\egy+h_\psi)^{-1}$ in the same sense
by \cite[Lemma A.1]{Hiai2021}. For each $\delta>0$ define a continuous function $f_\delta$ on $[0,\infty)$ by $f_\delta(x):=\min\{\delta^{-1}x,1\}$ for $x\ge0$. Then by \cite{Kadison1968} or
\cite[Theorem A.2]{Stratila1981} we have
\[
f_\delta(h_{\psi_j}(\egy+h_{\psi_j})^{-1})\to f_\delta(h_\psi(\egy+h_\psi)^{-1})
\ \ \mbox{in the strong operator topology as $j\to\infty$}.
\]
In particular, for any $\delta>0$ and any $\xi\in\hil\otimes L^2(\bR)$ it follows that
$\psi\in\M_{*,\ge0}\mapsto\langle\xi,f_\delta(h_\psi(\egy+h_\psi)^{-1})\xi\rangle$ is continuous in the norm.
Since $f_\delta(h_\phi(\egy+h_\phi)^{-1})$ converges increasingly as $\delta\searrow0$ to the support projection
of $h_\psi$ that coincides with $\{\phi>0\}$, we furthermore have
\[
\langle\xi,\{\psi>0\}\xi\rangle=\sup_{\delta>0}\langle\xi,f_\delta(h_\psi(\egy+h_\psi)^{-1})\xi\rangle.
\]
Therefore, $\psi\in\M_*^+\mapsto\langle\xi,\{\psi>0\}\xi\rangle$ is lower semicontinuous in the norm. When
we consider $\M$ ($\subset\mathcal{N}$) as represented on $\hil\otimes L^2(\bR)$, there exists a sequence
$\{\xi_k\}_{k=1}^\infty$ in $\hil\otimes L^2(\bR)$ with $\sum_{k=1}^\infty\|\xi_k\|^2<+\infty$ such that
$\eta(A)=\sum_{k=1}^\infty\langle\xi_k,A\xi_k\rangle$, $A\in\M$; see, e.g.,
\cite[Theorem II.2.6]{Takesaki}. Hence we can write
\[
\eta(\{\psi>0\})=\sup_{K\ge1}\sum_{k=1}^K\langle\xi_k,\{\psi>0\}\xi_k\rangle,\qquad\psi\in\M_{*,\ge0}.
\]
which implies that $\psi\in\M_{*,\ge0}\mapsto\eta(\{\psi>0\})$ is lower semicontinuous in the norm.
\end{proof}

For any $\rho,\sigma\in\M_{*,\ge0}$ and $t\in(0,+\infty)$ let
\begin{align*}
&\beta_{0,t}(\rho\|\sigma):=\rho(\{\rho-t\sigma\le0\}),\qquad
\beta_{1,t}(\rho\|\sigma):=\sigma(\{\rho-t\sigma>0\}), \\
&\beta_{\mathrm{mix},t}(\rho\|\sigma):=\beta_{0,t}(\rho\|\sigma)+\beta_{1,t}(\rho\|\sigma).
\end{align*}
Then the following is Corollary \ref{cor:IV.20} in the von Neumann algebra case.

\begin{cor}\label{cor:V.6}
For any $\rho,\sigma\in\M_{*,\ge0}$,
\begin{align}
\lim_{t\to+\infty}\rho(\{\rho-t\sigma>0\})&=\lim_{t\to+\infty}\rho(\{\rho-t\sigma\ge0\})=\rho(\egy-\{\sigma>0\}),
\label{lim-1}\\
\lim_{t\to+\infty}\beta_{0,t}(\rho\|\sigma)&=\lim_{t\to+\infty}\rho(\{\rho-t\sigma<0\})=\rho(\{\sigma>0\}),
\label{lim-2}\\
\lim_{t\to+\infty}t\beta_{1,t}(\rho\|\sigma)&=\lim_{t\to+\infty}t\sigma(\{\rho-t\sigma>0\})
=\lim_{t\to+\infty}t\sigma(\{\rho-t\sigma\ge0\})=0, \label{lim-3}\\
\lim_{t\to+\infty}\beta_{\mathrm{mix},t}(\rho\|\sigma)&=\rho(\{\sigma>0\}), \label{lim-4}\\
\lim_{t\to+\infty}(\rho-t\sigma)_+(\egy)&=\rho(\egy-\{\sigma>0\}). \label{lim-5}
\end{align}
\end{cor}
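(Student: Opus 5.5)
We cannot use the contour-integral argument of Lemma \ref{lemma:pospr limit}, since it relies on finite dimensionality. Instead, the plan is to work with the Jordan decomposition of $\rho-t\sigma$ and the variational characterization of $(\rho-t\sigma)_+(\egy)$. We first establish \eqref{lim-5}, then derive \eqref{lim-3}, and finally obtain \eqref{lim-1}, \eqref{lim-2}, \eqref{lim-4} from these.

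\textbf{Step 1: proof of \eqref{lim-5}.} Write $s:=\{\sigma>0\}$.

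For the lower bound, test $A=\egy-s$ in the maximum formula for $D^\meas_{(\id-t)_+}$. Since $\sigma(\egy-s)=0$, this gives
\[
(\rho-t\sigma)_+(\egy)\ge\rho(\egy-s)\quad\text{for every } t.
\]

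For the upper bound, take $A_t:=\{\rho-t\sigma>0\}$. Then
\[
(\rho-t\sigma)_+(\egy)=\rho(A_t)-t\sigma(A_t)\le\rho(A_t).
\]
It therefore suffices to show $\limsup_t\rho(sA_ts)\le0$ in an appropriate sense; I will phrase it via normal states instead.

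Since $t\mapsto(\rho-t\sigma)_+(\egy)$ is monotone decreasing and bounded below, its limit $L$ exists, and so $t\sigma(A_t)\le\rho(\egy)$. Hence $\sigma(A_t)\le\rho(\egy)/t\to0$.

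Next, split $\rho(A_t)=\rho(sA_t s)+\rho((\egy-s)A_t(\egy-s))+2\,\mathrm{Re}\,\rho(sA_t(\egy-s))$. We need $\rho(sA_ts)\to0$ and the cross term $\to0$. By Cauchy--Schwarz applied to the vector representative $h_\rho^{1/2}$, both are controlled by $\|A_t s h_\rho^{1/2}\|$. This is where the main obstacle lies: smallness of $\sigma(A_t)$ must be converted into smallness of $\rho(sA_ts)$ in the absence of domination $\rho\le C\sigma$.

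I would first handle this by approximation. Choose $\rho_\ep\le C_\ep\sigma$ supported in $s$ with $\|s\rho s-\rho_\ep\|<\ep$. Such a choice exists because $\{\omega\in\M_{*,\ge0}:\omega\le C\sigma\text{ for some }C\}$ is norm dense among positive normal functionals supported in $\{\sigma>0\}$; this follows, e.g., by Radon--Nikodym / the $h_\sigma^{1/2}\M h_\sigma^{1/2}$ density in $L^1$. Then
\[
\rho(sA_ts)\le\ep+C_\ep\sigma(A_t)\to\ep,
\]
and the cross terms are handled by Cauchy--Schwarz with the same bound. This yields
\[
L\le\limsup_t\rho((\egy-s)A_t(\egy-s))\le\rho(\egy-s),
\]
which proves \eqref{lim-5}.

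\textbf{Step 2: proof of \eqref{lim-3}.} Let $A_t':=\{\rho-t\sigma\ge0\}$. From $\rho(A_t')-t\sigma(A_t')=(\rho-t\sigma)_+(\egy)$ we get
\[
t\sigma(A_t')=\rho(A_t')-(\rho-t\sigma)_+(\egy)\quad\text{for every } t.
\]
We therefore need $\limsup_t\rho(A_t')\le\rho(\egy-s)$. The same approximation argument gives this, because $\sigma(A_t')\le\rho(\egy)/t$ as well, using $t\sigma(A_t')\le\rho(A_t')$. Combining with Step 1 shows $t\sigma(A_t')\to0$, and a fortiori $t\sigma(A_t)\to0$.

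\textbf{Step 3: the remaining limits.} Now
\[
\rho(A_t)=(\rho-t\sigma)_+(\egy)+t\sigma(A_t)\to\rho(\egy-s),
\]
and similarly for $A_t'$; this is \eqref{lim-1}. Taking complements gives \eqref{lim-2}, since $\beta_{0,t}=\rho(\egy)-\rho(A_t)$ and $\rho(\{\rho-t\sigma<0\})=\rho(\egy)-\rho(A_t')$. Finally, \eqref{lim-4} follows by adding \eqref{lim-2} and \eqref{lim-3} (as $\beta_{1,t}\to0$).
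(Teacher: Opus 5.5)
Your argument is correct, but it takes a genuinely different route from the paper.

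The paper obtains \eqref{lim-1} as a corollary of Theorem~\ref{theorem:V.3} applied to the swapped pair. With $s=1/t$ one has $\rho(\{\rho-t\sigma>0\})=\rho(\egy)+\partial_-(\sigma-s\rho)_+(\egy)$. By convexity, $\partial_-(\sigma-s\rho)_+(\egy)\to\partial_+(\sigma-s\rho)_+(\egy)|_{s=0}=-\rho(\{\sigma>0\})$ as $s\searrow0$. The paper then gets \eqref{lim-3} from the elementary fact that $t\,\partial_\pm\phi(t)\to0$ for the convex, decreasing, non-negative function $\phi(t)=(\rho-t\sigma)_+(\egy)$, again combined with Theorem~\ref{theorem:V.3}. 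Finally, \eqref{lim-5} is read off from \eqref{lim-1} and \eqref{lim-3}.

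You instead prove \eqref{lim-5} first, by a direct two-sided estimate:
\begin{itemize}
\item from below, you test with $\egy-\{\sigma>0\}$;
\item from above, you compress $A_t=\{\rho-t\sigma>0\}$ by $s$ and $\egy-s$;
\item you kill $\rho(sA_ts)$ using a $\sigma$-dominated approximant of $\rho(s\cdot s)$ together with $\sigma(A_t)\le\rho(\egy)/t$;
\item you control the cross terms by Cauchy--Schwarz.
\end{itemize}
The remaining limits then follow as you describe.

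Your route is more elementary and self-contained. It never uses the differentiability theorem, so it avoids Lemma~\ref{lemma:V.5} and its Haagerup $L^1$, crossed-product and measure-topology machinery. Its only nontrivial input is the norm density of $\{\omega\in\M_{*,\ge0}:\omega\le C\sigma\text{ for some }C\}$ among positive normal functionals supported in $\{\sigma>0\}$. You should state this with a one-line proof rather than a vague pointer. For example, in standard form $\{\sigma>0\}$ is the projection onto $\overline{\M' h_\sigma^{1/2}}$, so $s h_\rho^{1/2}$ is a limit of vectors $x'_n h_\sigma^{1/2}$ with $x'_n\in\M'$. Moreover $\langle x'h_\sigma^{1/2},a\,x'h_\sigma^{1/2}\rangle\le\|x'\|^2\sigma(a)$ for $a\ge0$.

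The paper's route is shorter once Theorem~\ref{theorem:V.3} is available, and that theorem is needed anyway for Lemmas~\ref{lemma:V.7}--\ref{lemma:V.8}. It also makes visible that these limits are one-sided derivatives at $0$ of the swapped hockey-stick function.

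A small wording point: the bound $t\sigma(A_t)\le\rho(\egy)$ follows from $(\rho-t\sigma)_+(\egy)\ge0$, not from the existence of the limit $L$ as your ``and so'' suggests.
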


\begin{proof}
\eqref{lim-1}.\enspace
By Theorem \ref{theorem:V.3} we have
\begin{align*}
\lim_{t\to+\infty}\rho(\{\rho-t\sigma>0\})
&=\lim_{t\to+\infty}\rho(\{t^{-1}\rho-\sigma>0\})=\lim_{s\searrow0}\rho(\{\sigma-s\rho<0\}) \\
&=\rho(\egy)-\lim_{s\searrow0}\rho(\{\sigma-s\rho\ge0\})
=\rho(\egy)+\lim_{s\searrow0}\partial_-(\sigma-s\rho)_+(1) \\
&=\rho(\egy)+\partial_+(\sigma-s\rho)_+(1)\big|_{s=0}=\rho(\egy)-\rho(\{\sigma>0\}) \\
&=\rho(\egy-\{\sigma>0\}).
\end{align*}
The second equality is similar.

\eqref{lim-2}.\enspace
Since $\beta_{0,t}(\rho\|\sigma)=\rho(\egy)-\rho(\{\rho-t\sigma>0\})$, \eqref{lim-1} gives
\[
\lim_{t\to+\infty}\beta_{0,t}(\rho\|\sigma)=\rho(\egy)-\rho(\egy-\{\sigma>0\})=\rho(\{\sigma>0\}).
\]
The second equality is similar.

\eqref{lim-3}.\enspace
Since the function $\phi(t):=(\rho-t\sigma)_+(\egy)$ is monotone decreasing, non-negative and convex on
$(0,+\infty)$, we have $\lim_{t\to+\infty}\{\phi(t)-t\partial_+\phi(t)\}=\phi(+\infty)$ so that
$\lim_{t\to+\infty}t\partial_+\phi(t)=0$. By Theorem \ref{theorem:V.3},
\[
t\sigma(\{\rho-t\sigma>0\})=-t\partial_+\phi(t)\to0\quad\mbox{as $t\to+\infty$}.
\]
Similarly, $t\sigma(\{\rho-t\sigma\ge0\})=t\partial_-\phi(t)\to0$ as $t\to+\infty$.

\eqref{lim-4}.\enspace
By \eqref{lim-2} and \eqref{lim-3}, $\beta_{\mathrm{mix},t}(\rho\|\sigma)\to\rho(\{\sigma>0\})$ as $t\to+\infty$.

\eqref{lim-5}.\enspace
By \eqref{lim-1} and \eqref{lim-3} we have
\[
\rho(\egy-\{\sigma>0\})
=\lim_{t\to+\infty}[(\rho-t\sigma)(\{\rho-t\sigma>0\})+t\sigma(\{\rho-t\sigma>0\})]
=\lim_{t\to+\infty}(\rho-t\sigma)_+(\egy),
\]
as desired.
\end{proof}

\subsection{Representation from Neyman-Pearson tests}

In this subsection we extend the results in Section \ref{sec:clfdiv from NP} to the general von Neumann
algebra setting. Define for $t\in\bR$,
\begin{align*}
F(t)&:=\beta_{0,t}(\rho\|\sigma):=\rho(\{\rho-t\sigma\le0\}), \\
G(t)&:=\sigma(\egy)-\beta_{1,t}(\rho\|\sigma):=\sigma(\{\rho-t\sigma\le0\}).
\end{align*}
It is immediate to see that $F(t)=G(t)=0$ for all $t\in(-\infty,0)$. So we may confine our consideration to
$F|_{[0,+\infty)}$ and $G|_{[0,+\infty)}$.

\begin{lemma}\label{lemma:V.7}
$F$ and $G$ are monotone increasing and right continuous non-negative functions on $[0,\infty)$ such that
\begin{align}\label{F-G:0infty}
F(0)=0,\quad F(+\infty)=\rho(\{\sigma>0\}),\quad G(0)=\sigma(\{\rho=0\}),\quad G(+\infty)=\sigma(\egy).
\end{align}
\end{lemma}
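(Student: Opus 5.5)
The plan is to derive every property of $F$ and $G$ from Theorem \ref{theorem:V.3} and Corollary \ref{cor:V.6}, through the derivative identity for $\phi(t):=(\rho-t\sigma)_+(\egy)$, rather than by manipulating projections directly.

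\emph{The function $G$.} By definition $G(t)=\sigma(\egy)-\sigma(\{\rho-t\sigma>0\})$. Theorem \ref{theorem:V.3} gives $G(t)=\sigma(\egy)+\partial_+\phi(t)$. Since $\phi$ is convex on $\bR$, its right derivative $\partial_+\phi$ is monotone increasing and right continuous. Hence $G$ is monotone increasing and right continuous.

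Non-negativity of $G$ is immediate, since $\{\rho-t\sigma\le 0\}$ is a projection.

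For $t=0$ we have $\{\rho\le 0\}=\egy-\{\rho>0\}$, which is the projection $\{\rho=0\}$. This gives $G(0)=\sigma(\{\rho=0\})$.

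For the limit at infinity, Corollary \ref{cor:V.6} \eqref{lim-3} gives $t\sigma(\{\rho-t\sigma>0\})\to0$. A fortiori $\sigma(\{\rho-t\sigma>0\})\to0$, so $G(+\infty)=\sigma(\egy)$.

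\emph{The function $F$.} I will use the identity
\[
\rho(\{\rho-t\sigma>0\})=\phi(t)+t\sigma(\{\rho-t\sigma>0\})=\phi(t)-t\,\partial_+\phi(t),
\]
which comes from evaluating $\rho-t\sigma$ on its positive support projection. This yields
\[
F(t)=\rho(\egy)-\phi(t)+t\,\partial_+\phi(t).
\]
Right continuity of $F$ then follows from the continuity of $\phi$ and the right continuity of $\partial_+\phi$.

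\emph{Monotonicity of $F$ (main obstacle).} Monotonicity is the one step not immediate from the formula for $F$. I would first try the supporting-line argument. For $s<t$, convexity gives $\phi(s)\ge\phi(t)+(s-t)\partial_+\phi(t)$. It also gives $\partial_+\phi(s)\le\partial_+\phi(t)$.

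Therefore
\[
F(t)-F(s)=\phi(s)-\phi(t)+t\partial_+\phi(t)-s\partial_+\phi(s)\ge s\bigl(\partial_+\phi(t)-\partial_+\phi(s)\bigr)\ge0
\]
for $s\ge0$, and this is the region of interest. Alternatively, monotonicity can be obtained from the Neyman--Pearson variational argument of Lemma \ref{lemma:errprob mon}, which carries over verbatim to $\M_*$ using $(\rho-t\sigma)_+(\egy)=\max_{0\le A\le\egy}(\rho-t\sigma)(A)$.

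Non-negativity of $F$ is again clear, since $\{\rho-t\sigma\le0\}$ is a projection.

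\emph{Boundary values of $F$.} At $t=0$, $F(0)=\rho(\{\rho\le0\})=\rho(\{\rho=0\})=0$, since $\{\rho=0\}$ is orthogonal to the support of $\rho$.

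At infinity, $F(t)=\rho(\egy)-\rho(\{\rho-t\sigma>0\})$. By Corollary \ref{cor:V.6} \eqref{lim-1}, this tends to $\rho(\egy)-\rho(\egy-\{\sigma>0\})=\rho(\{\sigma>0\})$. This gives all the values in \eqref{F-G:0infty}.
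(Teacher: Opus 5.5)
Your proof is correct. For $G$ it is essentially the paper's argument: $G(t)=\sigma(\egy)+\partial_+\phi(t)$ by Theorem \ref{theorem:V.3}, with the boundary values taken from the definition and from \eqref{lim-3}.

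For $F$ you take a genuinely different route.

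\textbf{The paper's route.} It applies Theorem \ref{theorem:V.3} to the swapped pair. With $\psi(s):=(\sigma-s\rho)_+(\egy)$, it writes $F(t)=-\partial_-\psi(1/t)$ for $t>0$. Monotonicity is then immediate, since $F$ is a one-sided derivative of a convex function composed with the decreasing map $t\mapsto 1/t$. Right continuity on $(0,+\infty)$ follows from the one-sided continuity of that derivative. However, $1/t$ degenerates at $0$, so right continuity at $t=0$ needs a separate appeal to Lemma \ref{lemma:V.5}.

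\textbf{Your route.} You stay with the single convex function $\phi(t)=(\rho-t\sigma)_+(\egy)$ and the intercept identity $F(t)=\rho(\egy)-\phi(t)+t\,\partial_+\phi(t)$. Equivalently, $tG(t)-F(t)=(\rho-t\sigma)_-(\egy)$, which the paper only uses later, in Lemma \ref{lemma:V.8}. This gives right continuity on all of $[0,+\infty)$ at once, with the endpoint covered by Theorem \ref{theorem:V.3} at $t=0$. The cost is your short supporting-line computation for monotonicity.

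\textbf{What your representation buys.} It gives more than this lemma. Since $\phi(b)-\phi(a)=\int_a^b\partial_+\phi(s)\,ds$, Lebesgue--Stieltjes integration by parts turns it into $F(b)-F(a)=\int_{(a,b]}s\,dG(s)$ for $0\le a<b$. This yields monotonicity of $F$ in one line. It is also exactly the content of Proposition \ref{prop:V.9}, namely $dF=x\,dG$, and bypasses the paper's atom-by-atom argument there.
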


\begin{proof}
As for $F$, note that for any $t>0$
\[
F(t)=\rho(\{t^{-1}\rho-\sigma\le0\})=\rho(\{\sigma-t^{-1}\rho\ge0\}).
\]
Set $\phi(t):=(\sigma-t\rho)_+(\egy)$ and $\hat\phi(t):=\phi(t^{-1})$ for $t>0$. By
Theorem \ref{theorem:V.3} we have $F(t)=-(\partial_-\phi)(t^{-1})$ for any $t>0$. Since
$\phi$ is convex on $(0,+\infty)$, it follows that $\partial_-\phi$ is increasing on $(0,+\infty)$ so that $F$ is
increasing on $(0,+\infty)$ as well. It is immediate to see that
$(\partial_+\hat\phi)(t)=-{1\over t^2}(\partial_-\phi)(t^{-1})$ so that $F(t)=t^2(\partial_+\hat\phi)(t)$.
This implies that $F$ is right continuous on $(0,+\infty)$. It is obvious that $F(0)=\rho(\{\rho\le0\})=0$. Since
Lemma \ref{lemma:V.5} gives $\rho(\egy)=\rho(\{\rho>0\})\le\liminf_{t\searrow0}\rho(\{\rho-t\sigma>0\})$,
we have
\[
\limsup_{t\searrow0}F(t)=\limsup_{t\searrow0}\bigl[\rho(\egy)-\rho(\{\rho-t\sigma>0\})\bigr]\le0=F(0).
\]
Since $F(t)\ge0$ for all $t\ge0$ obviously, this implies that $F(0)=\lim_{t\searrow0}F(t)$, i.e., $F$ is right
continuous at $0$ as well. Finally, $F(\infty)=\rho(\{\sigma>0\})$ is due to \eqref{lim-2}.

As for $G$, since Theorem \ref{theorem:V.3} implies that
\[
G(t)=\sigma(\egy)-\sigma(\{\rho-t\sigma>0\})
=\sigma(\egy)+\partial_+(\rho-t\sigma)_+(\egy),\qquad t\in\bR,
\]
it follows that $G$ is increasing and right continuous on $[0,\infty)$ (indeed, on $\bR$). Furthermore,
$G(0)=\sigma(\egy)-\sigma(\{\rho>0\})=\sigma(\{\rho=0\})$. By \eqref{lim-3} we have
$\lim_{t\to+\infty}\beta_{1,t}(\rho\|\sigma)=0$, hence $G(+\infty)=\sigma(\egy)$.
\end{proof}

Since the distribution function of $dG|_{(0,+\infty)}$ is
\[
G_0(t):=\begin{cases}0 & (t<0) \\ G(t)-G(0) & (t\ge0),\end{cases}
\]
we consider $G_0$ instead of $G$.

\begin{lemma}\label{lemma:V.8}
There exists a non-negative convex (hence continuous) and monotone increasing function $H$ on $\bR$
such that $H(0)=0$ and
\begin{align}\label{tG_0(t)-F(t)=H(t)}
tG_0(t)-F(t)=H(t)\qquad\mbox{and}\qquad\partial_+H(t)=G_0(t),\qquad t\in\bR.
\end{align}
\end{lemma}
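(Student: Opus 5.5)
The plan is to write $H$ explicitly in terms of the measured hockey stick divergence, so that everything reduces to Theorem \ref{theorem:V.3} and Lemma \ref{lemma:V.7}. For $t\ge 0$ we have $\{\rho-t\sigma>0\}=\egy-\{\rho-t\sigma\le0\}$, and therefore
\begin{align*}
(\rho-t\sigma)_+(\egy)&=(\rho-t\sigma)(\{\rho-t\sigma>0\})\\
&=\rho(\egy)-F(t)-t\bigl(\sigma(\egy)-G(t)\bigr).
\end{align*}
Rearranging and using $G_0(t)=G(t)-G(0)=G(t)-\sigma(\{\rho=0\})$ (Lemma \ref{lemma:V.7}) gives the candidate
\[
H(t):=(\rho-t\sigma)_+(\egy)-\rho(\egy)+t\,\sigma(\{\rho>0\}),\qquad t\ge0 .
\]
On $(-\infty,0)$ we set $H(t):=0$. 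By this computation, the first identity in \eqref{tG_0(t)-F(t)=H(t)} holds on $[0,+\infty)$. On $(-\infty,0)$ it is trivial, since $F=G_0=0$ there.

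Next come the properties on $[0,+\infty)$. At $t=0$ we get $H(0)=\rho(\egy)-\rho(\egy)=0$. The map $t\mapsto(\rho-t\sigma)_+(\egy)$ is convex on $\bR$, being a maximum of affine functions, and $H$ differs from it by an affine function, so $H$ is convex, and hence continuous, on $[0,+\infty)$. By \eqref{deriv+} of Theorem \ref{theorem:V.3}, for $t\ge0$,
\[
\partial_+H(t)=-\sigma(\{\rho-t\sigma>0\})+\sigma(\{\rho>0\})=G(t)-\sigma(\egy)+\sigma(\egy)-G(0)=G_0(t).
\]
For $t<0$ we have $\partial_+H(t)=0=G_0(t)$ trivially. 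This gives the second identity in \eqref{tG_0(t)-F(t)=H(t)}.

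It remains to glue the two pieces at $0$ and conclude. Continuity at $0$ holds because $H(0)=0$ and $H$ is continuous from the right there. The left derivative at $0$ is $0$, while the right derivative is $G_0(0)=0$. Since $\partial_+H=G_0$ is monotone increasing on all of $\bR$ (Lemma \ref{lemma:V.7}), $H$ is convex on $\bR$: a continuous function whose right derivative is increasing is convex. Moreover $G_0\ge0$, so $H$ is monotone increasing, and together with $H(0)=0$ this gives $H\ge0$ on $[0,+\infty)$; on $(-\infty,0)$, $H=0$. The only nontrivial input is the derivative formula of Theorem \ref{theorem:V.3}, which is already proved, so I expect no real obstacle beyond the bookkeeping of the $\sigma(\{\rho=0\})$ term.
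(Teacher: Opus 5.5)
Your proof is correct and follows essentially the same route as the paper. You define the same explicit $H(t)=(\rho-t\sigma)_+(\egy)-\rho(\egy)+t\sigma(\{\rho>0\})$ for $t\ge0$ (and $H=0$ for $t<0$), obtain $\partial_+H=G_0$ from \eqref{deriv+} of Theorem~\ref{theorem:V.3}, and glue at $0$ using $\partial_-H(0)=0=\partial_+H(0)$. The general fact you invoke, that a continuous function with increasing right derivative is convex, is more than you need: since $H$ is convex on each half-line, continuous at $0$, and satisfies $\partial_-H(0)\le\partial_+H(0)$, convexity on $\bR$ follows directly.
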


\begin{proof}
Note that for any $t\in\bR$,
\begin{align}
tG(t)-F(t)&=-(\rho-t\sigma)(\{\rho-t\sigma\le0\})=(\rho-t\sigma)_-(\egy) \nonumber\\
&=(\rho-t\sigma)_+(\egy)-(\rho-t\sigma)(\egy)
=(\rho-t\sigma)_+(\egy)-\rho(\egy)+t\sigma(\egy). \label{tG(t)-F(t)}
\end{align}
Since $G(0)=\sigma(\{\rho=0\})$, we have
\[
tG_0(t)-F(t)=(\rho-t\sigma)_+(\egy)-\rho(\egy)+t\sigma(\{\rho>0\}),\qquad t\in\bR.
\]
Define $H(t)$ on $\bR$ by
\[
H(t):=\begin{cases}0 & (t<0), \\ (\rho-t\sigma)_+(\egy)-\rho(\egy)+t\sigma(\{\rho>0\}) & (t\ge0).\end{cases}
\]
For $t<0$, \eqref{tG_0(t)-F(t)=H(t)} is obvious, and $H(0)=0$. For $t\ge0$, by Theorem \ref{theorem:V.3} we have
\begin{align*}
\partial_+H(t)&=\partial_+(\rho-t\sigma)_+(\egy)+\sigma(\{\rho>0\})
=-\sigma(\{\rho-t\sigma>0\})+\sigma(\{\rho>0\}) \\
&=G(t)-\sigma(\egy)+\sigma(\{\rho>0\})=G(t)-\sigma(\{\rho=0\})=G_0(t).
\end{align*}
In particular, $\partial_+H(0)=0$, so that $H$ is convex on $\bR$ and increasing on $[0,+\infty)$.
\end{proof}

\begin{proposition}\label{prop:V.9}
Let $dF,dG$ be the finite positive measures on $[0,+\infty)$ determined by $F,G$ respectively (i.e., the
Lebesgue--Stieltjes measures determined by $F,G$). Then
\[
dF\ll dG\qquad\mbox{and}\qquad{dF\over dG}(x)=x,\quad x\in[0,+\infty).
\]
\end{proposition}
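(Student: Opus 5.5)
We will derive the statement from Lemma \ref{lemma:V.8}. That lemma gives the pointwise identity $F(t)=tG_0(t)-H(t)$, with $H$ convex, continuous, $H(0)=0$ and $\partial_+H=G_0$. Since $H$ is convex with right derivative $G_0$, it is absolutely continuous on $[0,b]$, and we have
\begin{align*}
H(t)=\int_{(0,t]}G_0(s)\,ds,\qquad t\ge 0 .
\end{align*}
Substituting this into $F=tG_0-H$ gives $F(t)=tG_0(t)-\int_0^tG_0(s)\,ds$. This is exactly the integration-by-parts form of $\int_{(0,t]}x\,dG_0(x)$.

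The key step is the Lebesgue--Stieltjes integration by parts for the right-continuous increasing function $G_0$ against the continuous function $x\mapsto x$:
\begin{align*}
\int_{(0,t]}x\,dG_0(x)=tG_0(t)-0\cdot G_0(0)-\int_{(0,t]}G_0(x)\,dx .
\end{align*}
It follows by Fubini--Tonelli applied to $\int_{(0,t]}\int_{(0,x]}ds\,dG_0(x)$, in the same way as in the proof of Lemma \ref{lemma:convexf repr}. Hence
\begin{align*}
F(t)=\int_{(0,t]}x\,dG_0(x)=\int_{(0,t]}x\,dG(x),\qquad t\ge0,
\end{align*}
since $dG_0=dG|_{(0,+\infty)}$.

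Next we compare measures on intervals $(t_1,t_2]\subset[0,+\infty)$. We have $dF((t_1,t_2])=F(t_2)-F(t_1)=\int_{(t_1,t_2]}x\,dG(x)$. At the point $0$, we have $dF(\{0\})=F(0)=0$ by Lemma \ref{lemma:V.7}, and $\int_{\{0\}}x\,dG=0$, so the two sides agree there as well. The two finite positive measures $dF$ and $x\,dG(x)$ on $[0,+\infty)$ therefore coincide on the $\pi$-system of sets $\{0\}$ and $(t_1,t_2]$, which generates the Borel sets. Both have total mass $F(+\infty)<\infty$. The uniqueness theorem for measures then gives $dF=x\cdot dG$. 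This yields $dF\ll dG$ with $\frac{dF}{dG}(x)=x$.

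The main obstacle is justifying $H(t)=\int_0^tG_0$. For this it suffices to know that a finite convex function on $\bR$ is locally absolutely continuous, with derivative equal to its right derivative almost everywhere. This is the fact \eqref{eq:Taylor1 remainder} already used in the paper, here applied to $H$ on $[0,t]$, using the continuity of $H$ at $0$ together with $H(0)=0$. Once that is in place, the rest is bookkeeping of boundary terms at $0$ in the integration by parts. Those terms vanish because the integrand $x$ is $0$ there.
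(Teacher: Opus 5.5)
Your proof is correct, and it follows a more streamlined route than the paper's. Both arguments use the same key lemma, Lemma \ref{lemma:V.8}, then the identity $H(x)=\int_0^x G_0(t)\,dt$ coming from convexity of $H$, and then an integration by parts.

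The difference is in how the jumps are handled. The paper first splits $dF$ and $dG_0$ into continuous parts $dF_1,dG_1$ and atomic parts $\sum_i a_i\delta_{t_i}$, $\sum_i b_i\delta_{t_i}$. It uses continuity of $H=xG_0-F$ to show that $a_i=b_it_i$ at every atom. It then writes $H(x)=\int_0^x G_1(t)\,dt+\sum_i b_i(x-t_i)_+$, which is just $\int_0^x G_0(t)\,dt$ split into pieces. Only after this does it apply Riemann--Stieltjes integration by parts to the continuous parts, obtaining $F_1(x)=\int_0^x t\,dG_1(t)$.

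You skip that decomposition. You apply Lebesgue--Stieltjes integration by parts, via Fubini--Tonelli, directly to the right-continuous $G_0$. The atoms are absorbed automatically, because $G_0(s^-)=G_0(s)$ for all but countably many $s$ and the integrand $x$ is continuous. You then identify $dF$ with $x\,dG$ by a uniqueness-of-measures argument on the $\pi$-system generated by $\{0\}$ and the intervals $(t_1,t_2]$. The total masses agree by monotone convergence.

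Your route is shorter and avoids the atom bookkeeping. The paper's route makes the relation between the jumps of $F$ and $G$ explicit, and it only needs integration by parts against continuous integrators.
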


\begin{proof}
It suffices to prove that
\[
dF\ll dG_0\qquad\mbox{and}\qquad{dF\over dG_0}(x)=x,\quad x\in(0,+\infty).
\]
Let $\{t_i\}_{i=1}^N$ where $N\in\{0,1,\dots,\infty\}$ be a countable set of all points in $(0,+\infty)$ at which
at least one of $dF$ or $dG$ has an atom. Then we can write $dF$ and $dG$ as follows:
\[
dF(t):=dF_1(t)+\sum_{i=1}^Na_i\,d\delta_{t_i}(t),\qquad
dG_0(t):=dG_1(t)+\sum_{i=1}^Nb_i\,d\delta_{t_i}(t),\qquad t\in(0,+\infty),
\]
where $F_1$ and $G_1$ are continuous monotone increasing functions with $F_1(0)=G_1(0)=0$, and
$a_i,b_i\ge0$ with $(a_i,b_i)\ne(0,0)$ such that $\sum_{i=1}^Na_i<+\infty$ and $\sum_{i=1}^Nb_i<+\infty$.
We then have
\begin{align}
F(x)=F_1(x)+\sum_{i=1}^N1_{[t_i,+\infty)}(x)a_i,\qquad
G_0(x)=G_1(x)+\sum_{i=1}^N1_{[t_i,+\infty)}(x)b_i,\qquad x\in(0,+\infty). \label{F(x)-G_0(x)}
\end{align}
Since Lemma \ref{lemma:V.8} says that $xG_0(x)-F(x)=H(x)$ is continuous on $(0,+\infty)$, it follows that
the function
\[
\phi(x):=x\sum_{i=1}^N1_{[t_i,+\infty)}(x)b_i-\sum_{i=1}^N1_{[t_i,+\infty)}(x)a_i
=\sum_{i=1}^N1_{[t_i,+\infty)}(x)(b_ix-a_i)
\]
must be continuous on $(0,+\infty)$. For each $t_k$ (with $N>0$ assumed) we find that
\begin{align}\label{phi(t_k)}
\phi(t_k^-)=\sum_{t_i<t_k}(b_it_k-a_i),\qquad
\phi(t_k)=\sum_{t_i\le t_k}(b_it_k-a_i).
\end{align}
Indeed, the above second equality is obvious. For any $\delta>0$, since
\[
\phi(t_k-\delta)=\sum_{t_i\le t_k-\delta}\{b_i(t_k-\delta)-a_i\}
=t_k\sum_{t_i\le t_k-\delta}b_i-\sum_{t_i\le t_k-\delta}a_i-\delta\sum_{t_i\le t_k-\delta}b_i
\]
and as $\delta\searrow0$,
\[
\sum_{t_i\le t_k-\delta}a_i\nearrow\sum_{t_i<t_k}a_i,\qquad
\sum_{t_i\le t_k-\delta}b_i\nearrow\sum_{t_i<t_k}b_i,\qquad
\delta\sum_{t_i\le t_k-\delta}b_i\le\delta\sum_{i=1}^Nb_i\to 0,
\]
we have the first equality in \eqref{phi(t_k)} as well, so that $a_k=b_kt_k>0$ follows for any $k$. Therefore,
\begin{align}
&F(x)=F_1(x)+\sum_{i=1}^N1_{[t_i,+\infty)}(x)b_it_i,\qquad x\in(0,+\infty), \label{F(x)-F_1(x)}\\
&\sum_{i=1}^Na_i\delta_{t_i}=\sum_{i=1}^Nb_it_i\delta_{t_i}. \label{a_i-b_it_i}
\end{align}

Next, since $H(t)$ is convex on $(0,+\infty)$, let $\{s_j\}_{j=1}^K$ where $K\in\{0,1,\dots,\infty\}$ be a
countable set of all points $s\in(0,+\infty)$ for which $\partial_-H(s)<\partial_+H(s)$. Then we can write
\begin{align}\label{partial_+H}
\partial_+H(x)=h(x)+\sum_{j=1}^K1_{[s_j,+\infty)}(x)c_j,\qquad x\in(0,+\infty),
\end{align}
where $h$ is a monotone increasing continuous function on $(0,+\infty)$ and
$c_j:=\partial_+H(s_j)-\partial_-H(s_j)$. Since $G_0(x)=\partial_+H(x)$ for all $x\in(0,+\infty)$ by
Lemma \ref{lemma:V.8}, it follows from \eqref{F(x)-G_0(x)} and \eqref{partial_+H} that
\[
G_1(x)=h(x),\qquad\sum_{i=1}^N1_{[t_i,+\infty)}(x)b_i=\sum_{j=1}^K1_{[s_j,+\infty)}(x)c_j,
\]
so that
\[
\partial_+H(x)=G_1(x)+\sum_{i=1}^N1_{[t_i,+\infty)}(x)b_i,\qquad x\in(0,+\infty),
\]
which implies that
\begin{align}\label{H(x)=}
H(x)=\int_0^xG_1(t)\,dt+\sum_{i=1}^Nb_i(x-t_i)_+,\qquad x\in(0,+\infty).
\end{align}
Moreover, by \eqref{F(x)-G_0(x)} and \eqref{F(x)-F_1(x)},
\begin{align*}
H(x)&=xG_0(x)-F(x)=xG_1(x)-F_1(x)+\sum_{i=1}^N1_{[t_i,+\infty)}(x)(b_ix-b_it_i) \\
&=xG_1(x)-F_1(x)+\sum_{i=1}^Nb_i(x-t_i)_+,\qquad x\in(0,+\infty).
\end{align*}
Combining this with \eqref{H(x)=} yields
$xG_1(x)-F_1(x)=\int_0^xG_1(t)\,dt$ for all $x\in(0,+\infty)$. From the well known integration by part
\[
\int_0^xt\,dG_1(t)=xG_1(x)-\int_0^xG_1(t)\,dt
\]
for the Riemann--Stieltjes integral, we finally have $F_1(x)=\int_0^xt\,dG_1(t)$ for all $x\in(0,+\infty)$, which
implies that
\begin{align}\label{dF_1-dG_1}
dF_1\ll dG_1\qquad\mbox{and}\qquad{dF_1\over dG_1}(x)=x,\quad x\in(0,+\infty).
\end{align}
Hence the assertion follows from \eqref{a_i-b_it_i} and \eqref{dF_1-dG_1}.
\end{proof}

\begin{lemma}\label{lemma:V.10}
For any $t\in[0,+\infty)$ we have
\begin{align}
&D_{(\id-t)_+}(dF\|dG)=D_{(\id-t)_+}(dF|_{(0,+\infty)}\|dG|_{(0,+\infty)})
=D_{(\id-t)_+}^\meas(\rho\|\sigma)-\rho(\{\sigma=0\}), \label{D_+(dF|dG)}\\
&D_{(\id-t)_-}(dF\|dG)=D_{(\id-t)_-}^\meas(\rho\|\sigma), \label{D_-(dF|dG)}\\
&D_{(\id-t)_-}(dF|_{(0,+\infty)}\|dG|_{(0,+\infty)})=D_{(\id-t)_-}^\meas(\rho\|\sigma)-t\sigma(\{\rho=0\}).
\label{D_-(dF|dG)_(0,infty)}
\end{align}
\end{lemma}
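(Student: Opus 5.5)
We compute the classical hockey stick divergences of the measures $dF,dG$ on $[0,+\infty)$ directly, using Proposition \ref{prop:V.9}. Since $dF\ll dG$ with $dF/dG(x)=x$, the measure $\mu:=dG$ dominates both, with $dF/d\mu=\id$ and $dG/d\mu=1$. Then
\begin{align*}
D_{(\id-t)_\pm}(dF\|dG)=\int_{[0,+\infty)}(x-t)_\pm\,dG(x).
\end{align*}
For $D_{(\id-t)_+}$, the integrand vanishes at $x=0$ (as $t\ge 0$), so restricting to $(0,+\infty)$ changes nothing. This already gives the first equality in \eqref{D_+(dF|dG)}. For $D_{(\id-t)_-}$, the atom of $dG$ at $0$ has mass $G(0)=\sigma(\{\rho=0\})$ by Lemma \ref{lemma:V.7} and contributes $t\,G(0)=t\sigma(\{\rho=0\})$. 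Hence \eqref{D_-(dF|dG)_(0,infty)} will follow from \eqref{D_-(dF|dG)}.

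It therefore remains to evaluate $\int_{[0,+\infty)}(t-x)_+\,dG(x)$ and to compare it with $(\rho-t\sigma)_-(\egy)$. I would do this by integration by parts, using the layer-cake formula for the Lebesgue--Stieltjes integral:
\begin{align*}
\int_{[0,t]}(t-x)\,dG(x)=\int_0^t G(s)\,ds .
\end{align*}
By Lemma \ref{lemma:V.8} and \eqref{tG(t)-F(t)}, $G(s)=\sigma(\egy)-\sigma(\{\rho-s\sigma>0\})=\sigma(\egy)+\partial_+\phi(s)$, where $\phi(s):=(\rho-s\sigma)_+(\egy)$ is convex and continuous. Integrating the right derivative of the convex function $\phi$ on $[0,t]$ gives
\begin{align*}
\int_0^t G(s)\,ds=t\sigma(\egy)+\phi(t)-\phi(0)=t\sigma(\egy)+(\rho-t\sigma)_+(\egy)-\rho(\egy),
\end{align*}
which equals $(\rho-t\sigma)_-(\egy)=D^{\meas}_{(\id-t)_-}(\rho\|\sigma)$. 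This proves \eqref{D_-(dF|dG)}.

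For the second equality in \eqref{D_+(dF|dG)}, I use the relation
\begin{align*}
D_{(\id-t)_+}(dF\|dG)-D_{(\id-t)_-}(dF\|dG)=\int_{[0,+\infty)}(x-t)\,dG(x)=F(+\infty)-tG(+\infty).
\end{align*}
Here the first term is $\int x\,dG=\int dF=F(+\infty)$, which is finite. By Lemma \ref{lemma:V.7}, this equals $\rho(\{\sigma>0\})-t\sigma(\egy)$. Adding $(\rho-t\sigma)_-(\egy)$ yields
\begin{align*}
(\rho-t\sigma)_+(\egy)-\rho(\egy)+\rho(\{\sigma>0\})=D^{\meas}_{(\id-t)_+}(\rho\|\sigma)-\rho(\{\sigma=0\}),
\end{align*}
as required.

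The main point needing care is justifying the two integration steps. The first is the Stieltjes/Fubini identity $\int_{[0,t]}(t-x)\,dG=\int_0^tG$; this holds because $G$ is right continuous and $G(0^-)=0$, so the atom at $0$ is included. The second is the identity $\phi(t)-\phi(0)=\int_0^t\partial_+\phi$, which is standard for convex functions and relies on Theorem \ref{theorem:V.3}. Everything involved is finite since $dG$ is a finite measure.
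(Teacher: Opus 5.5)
Your argument is correct, but its central step differs from the paper's. In proving this lemma, the paper uses $dF=x\,dG$ (Proposition~\ref{prop:V.9}) directly inside the integrals. It writes
$\int_{[0,t]}(t-x)\,dG(x)=tG(t)-\int_{[0,t]}x\,dG(x)=tG(t)-F(t)$
and
$\int_{(t,+\infty)}(x-t)\,dG(x)=(F(+\infty)-F(t))-t(G(+\infty)-G(t))$.
It then concludes with \eqref{F-G:0infty} and the identity \eqref{tG(t)-F(t)},
$tG(t)-F(t)=-(\rho-t\sigma)(\{\rho-t\sigma\le0\})=(\rho-t\sigma)_-(\egy)$,
which follows immediately from the definitions of $F$, $G$ and the Jordan decomposition.

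You instead proceed in three steps:
\begin{enumerate}
\item You turn $\int_{[0,t]}(t-x)\,dG$ into $\int_0^tG(s)\,ds$ by the layer-cake formula.
\item You identify $G$ with $\sigma(\egy)+\partial_+\phi$. This comes from the definition of $G$ and \eqref{deriv+} of Theorem~\ref{theorem:V.3}, as in the proof of Lemma~\ref{lemma:V.7}, not from Lemma~\ref{lemma:V.8}.
\item You integrate the right derivative of the convex function $\phi$, and obtain the $D_{(\id-t)_+}$ case from $D_{(\id-t)_+}(dF\|dG)-D_{(\id-t)_-}(dF\|dG)=F(+\infty)-tG(+\infty)$.
\end{enumerate}

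Once Proposition~\ref{prop:V.9} is available, the paper's route is shorter and purely algebraic. In fact \eqref{tG(t)-F(t)}, which you already cite, would finish your $D_{(\id-t)_-}$ computation in one line.

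Your route has a compensating feature. It proves $\int_{[0,+\infty)}(x-t)_-\,dG(x)=(\rho-t\sigma)_-(\egy)$ without Proposition~\ref{prop:V.9}; you need that proposition only to identify this integral with $D_{(\id-t)_-}(dF\|dG)$. Writing the left side as $tG(t)-\int_{[0,t]}x\,dG(x)$ and comparing with \eqref{tG(t)-F(t)} gives $\int_{[0,t]}x\,dG(x)=F(t)$ for all $t\ge0$. That is Proposition~\ref{prop:V.9} itself, so your computation also shows that the atom-by-atom analysis in its proof can be bypassed.
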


\begin{proof}
For any $t\in[0,+\infty)$ we have
\begin{align*}
D_{(\id-t)_+}(dF\|dG)&=\int_{[0,+\infty)}(x-t)_+\,dG(x)=\int_{(t,+\infty)}(x-t)\,dG(x) \\
&=\int_{(t,+\infty)}dF(x)-t\int_{(t,+\infty)}dG(x) \\
&=tG(t)-F(t)+F(+\infty)-tG(+\infty) \\
&=D_{(\id-t)_+}^\meas(\rho\|\sigma)-\rho(\egy)+t\sigma(\egy)+\rho(\{\sigma>0\})-t\sigma(\egy) \\
&=D_{(\id-t)_+}^\meas(\rho\|\sigma)-\rho(\{\sigma=0\}),
\end{align*}
where the first and the third inequalities are due to Proposition \ref{prop:V.9} and the fifth equality follows
from \eqref{tG(t)-F(t)} and \eqref{F-G:0infty}. The computation is similar for
$D_{(\id-t)_+}(dF|_{(0,+\infty)}\|dG|_{(0,+\infty)})$, so that both equalities in $\eqref{D_+(dF|dG)}$ hold.

Next, by \eqref{tG(t)-F(t)} and \eqref{F-G:0infty} again we have
\begin{align*}
D_{(\id-t)_+}(dF|_{(0,+\infty)}\|dG|_{(0,+\infty)})
&=\int_{[0,+\infty)}(x-t)_-\,dG(x)=\int_{[0,t)}(t-x)\,dG(x) \\
&=tG(t)-F(t)=D_{(\id-t)_-}^\meas(\rho\|\sigma), \\
D_{(\id-t)_-}(dF|_{(0,+\infty)}\|dG|_{(0,\infty)})
&=\int_{(0,t)}(t-x)\,dG(x)=t\{G(t)-G(0)\}-F(t) \\
&=D_{\id-t)_-}^\meas(\rho\|\sigma)-t\sigma(\{\rho=0\}),
\end{align*}
as asserted.
\end{proof}

The next theorem is an extension of \cite[Proposition 6.1]{LiuHircheCheng2025} and
Corollary \ref{cor:cl fdiv equality} to the general von Neumann algebra setting.

\begin{theorem}\label{theorem:V.11}
Let $f\in(\bR^{(0,+\infty)})_{\mathrm{conv}}\cup(\bR^{(0,+\infty)})_{\mathrm{conc}}$. Then we have
\[
D_f^{\meas,\hs}(\rho\|\sigma)=D_f(dF\|dG)+f'(+\infty)\rho(\{\sigma=0\}),
\]
where $f'(+\infty):=\lim_{x\to+\infty}f'(x^+)\in[-\infty,+\infty]$.

Moreover, $D_f^{\meas,\hs}(\rho\|\sigma)=D_f(dF\|dG)$ holds if one of the following conditions
hold:
\begin{itemize}
\item[(i)] $\{\rho>0\}\le\{\sigma>0\}$;
\item[(ii)] $f'(+\infty)=0$;
\item[(iii)] $D_f(dF\|dG)=\pm\infty$.
\end{itemize}
\end{theorem}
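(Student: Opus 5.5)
The plan is to compare the defining formula \eqref{D-f-q-hs} for $D_f^{\meas,\hs}(\rho\|\sigma)$ term by term with the classical hockey stick decomposition of $D_f(dF\|dG)$ from Corollary \ref{cor:classical fdiv repr0}, using Lemma \ref{lemma:V.10} to translate between the two families of hockey stick divergences. Without loss of generality $f$ is convex; the concave case follows by replacing $f$ with $-f$.

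First, by Lemma \ref{lemma:V.7} the measures $dF$ and $dG$ on $[0,+\infty)$ are finite, with total masses
\[
dF([0,+\infty))=F(+\infty)=\rho(\{\sigma>0\}),\qquad dG([0,+\infty))=G(+\infty)=\sigma(\egy).
\]
Fix $a\in(0,+\infty)$. Corollary \ref{cor:classical fdiv repr0} applies to finite positive measures; its proof never uses normalization, since it only integrates Lemma \ref{lemma:persp integral repr}. It gives
\begin{align*}
D_f(dF\|dG)={}&f(a)\sigma(\egy)+f'(a^+)\bigl(\rho(\{\sigma>0\})-a\sigma(\egy)\bigr)\\
&+\int_{(0,a]}D_{(\id-t)_-}(dF\|dG)\,df'(t)+\int_{(a,+\infty)}D_{(\id-t)_+}(dF\|dG)\,df'(t).
\end{align*}
Now substitute Lemma \ref{lemma:V.10}. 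Equation \eqref{D_-(dF|dG)} makes the first integral coincide exactly with the corresponding integral in \eqref{D-f-q-hs}. Equation \eqref{D_+(dF|dG)} replaces $D_{(\id-t)_+}(dF\|dG)$ by $D_{(\id-t)_+}^{\meas}(\rho\|\sigma)-\rho(\{\sigma=0\})$.

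Next, collect the constant terms. Splitting the second integral formally yields the extra term $-\rho(\{\sigma=0\})\,df'((a,+\infty))=-\rho(\{\sigma=0\})\bigl(f'(+\infty)-f'(a^+)\bigr)$. Comparing with \eqref{D-f-q-hs}, the linear terms differ by
\[
f'(a^+)\bigl[(\rho-a\sigma)(\egy)-\rho(\{\sigma>0\})+a\sigma(\egy)\bigr]=f'(a^+)\rho(\{\sigma=0\}),
\]
and together with the extra term this gives exactly $f'(+\infty)\rho(\{\sigma=0\})$, which is the claimed identity.

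The main obstacle is justifying the split of the integral and the addition when infinities occur. If $\rho(\{\sigma=0\})=0$ there is nothing to justify. Otherwise, if $f'(+\infty)<+\infty$, then $df'((a,+\infty))$ is finite and the split is legitimate. If $f'(+\infty)=+\infty$, then $D^{\meas}_{(\id-t)_+}(\rho\|\sigma)\ge\rho(\{\sigma=0\})>0$ by \eqref{eq:hsp bounds}, so the integral in \eqref{D-f-q-hs} is $+\infty$. Hence both sides equal $+\infty$, because $D_f(dF\|dG)\in(-\infty,+\infty]$ for convex $f$, with the convention that $+\infty$ plus a finite or $+\infty$ quantity is $+\infty$. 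The $df'$-integrals themselves are well defined because the integrands are non-negative and bounded on compact subintervals, as in Lemma \ref{lemma:hsd def}.

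Finally, the three special cases follow directly from the identity. Under (i) we have $\rho(\{\sigma=0\})=0$, and under (ii) the correction term vanishes. Under (iii), $D_f(dF\|dG)=+\infty$ for convex $f$ (respectively $-\infty$ for concave $f$). The correction $f'(+\infty)\rho(\{\sigma=0\})$ lies in $(-\infty,+\infty]$ by Lemma \ref{lemma:convex f tilde limit}, so the sum stays $+\infty$. This mirrors Corollary \ref{cor:cl fdiv equality}.
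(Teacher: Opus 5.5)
Your proposal is correct and is essentially the paper's proof: you fix $a$, expand $D_f(dF\|dG)$ via Corollary~\ref{cor:classical fdiv repr0} (valid for the finite measures with masses $F(+\infty)=\rho(\{\sigma>0\})$ and $G(+\infty)=\sigma(\egy)$), substitute Lemma~\ref{lemma:V.10}, and collect the constant terms $f'(a^+)\rho(\{\sigma=0\})+(f'(+\infty)-f'(a^+))\rho(\{\sigma=0\})=f'(+\infty)\rho(\{\sigma=0\})$. Your explicit treatment of the case $f'(+\infty)=+\infty$, $\rho(\{\sigma=0\})>0$, where both sides are $+\infty$, fills in a detail the paper leaves implicit.
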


\begin{proof}
Assume that $f$ is convex. Then $f'(+\infty)\in(-\infty,+\infty]$. By \eqref{D-f-q-hs}, Lemma \ref{lemma:V.10},
\eqref{F-G:0infty} and Corollary \ref{cor:classical fdiv repr0} we have
\begin{align*}
D_f^{\meas,\hs}(\rho\|\sigma)&=f(a)\sigma(\egy)+f'(a^+)(\rho(\egy)-a\sigma(\egy)) \\
&\qquad+\int_{(0,a]}D_{(\id-t)_-}^\meas(\rho\|\sigma)\,df'(t)
+\int_{(a,+\infty)}D_{(\id-t)_+}^\meas(\rho\|\sigma)\,df'(t) \\
&=f(a)\sigma(\egy)+f'(a^+)(\rho(\egy)-a\sigma(\egy)) \\
&\qquad+\int_{(0,a]}D_{(\id-t)_-}(dF\|dG)\,df'(t)
+\int_{(a,+\infty)}D_{(\id-t)_+}(dF\|dG)\,df'(t) \\
&\qquad+(f'(+\infty)-f'(a^+))\rho(\{\sigma=0\}) \\
&=f(a)G(+\infty)+f'(a^+)(F(+\infty)-aG(+\infty)) \\
&\qquad+\int_{(0,a]}D_{(\id-t)_-}(dF\|dG)\,df'(t)
+\int_{(a,+\infty)}D_{(\id-t)_+}(dF\|dG)\,df'(t) \\
&\qquad+f'(+\infty)\rho(\{\sigma=0\}) \\
&=D_f(dF\|dG)+f'(+\infty)\rho(\{\sigma=0\}).
\end{align*}
Hence the first assertion follows. If the condition (i) or (ii) is satisfied, then $f'(+\infty)\rho(\{\sigma=0\})=0$
so that $D_f^{\meas,\hs}(\rho\|\sigma)=D_f(dF\|dG)$ follows. If (iii) is satisfied, then
$D_f^{\meas,\hs}(\rho\|\sigma)$ and $D_f(dF\|dG)$ are $+\infty$. The proof is similar when $f$ is
concave; otherwise, apply the convex case to $-f$.
\end{proof}

\subsection{Joint lower semicontinuity}\label{sec:V.D}

Concerning the joint lower semicontinuity of quantum $f$-divergences in the von Neumann algebra setting,
the following are known:
\begin{itemize}
\item[(a)] When $f$ is operator convex on $(0,+\infty)$, the Petz-type $f$-divergence
$D_f^{\mathrm{P}}(\rho\|\sigma)$ is jointly lower semicontinuous in $\rho,\sigma\in\M_{*,\ge0}$ in the
$\sigma(\M_*,\M)$-topology; see \cite{Hiai_fdiv_standard}.

\item[(b)] When $f$ is as in (a), the maximal $f$-divergence $D_f^{\max}(\rho\|\sigma)$ (or
$\widehat S_f(\rho\|\sigma)$) is jointly lower semicontinuous in $\rho,\sigma\in\M_{*,\ge0}$ in the norm;
see \cite{Hiai_fdiv_maximal}.

\item[(c)] For a general $f\in(\bR^{(0,+\infty)})_{\mathrm{conv}}$ the measured $f$-divergence
$D_f^\meas(\rho\|\sigma)$ is jointly lower semicontinuous in $\rho,\sigma\in\M_{*,\ge0}$ in the
$\sigma(\M_*,\M)$-topology. This is immediate from the definition of $D_f^\meas$ as mentioned in
\cite[Proposition 5.4]{Hiai2021Div}.
\end{itemize}

In this subsection let us discuss the joint lower semicontinuity of the hockey stick $f$-divergences
$D_f^{q,\hs}$. Throughout the subsection we assume that $(D_{(\id-t)_+}^q)_{t\in(0,+\infty)}$ is a
non-negative and monotone family of quantum hockey stick divergences in the sense of
Definitions \ref{def:non-neg-family} and \ref{def:monotone-family}. This is the case, in particular, when
$q\in\{\mathrm{P},\meas,\max\}$.

\begin{proposition}\label{prop:V.12}
Let $f\in(\bR^{(0,+\infty)})_{\mathrm{conv}}$. Assume that $D^q_{(\id-t)_+}(\rho\|\sigma)$ is jointly lower
semicontinuous in $\rho,\sigma\in\M_{*,\ge0}$ in the norm (resp., in the $\sigma(\M_*,\M)$-topology) for
every $t\in(0,+\infty)$. Then so is $D_f^{q,\hs}(\rho\|\sigma)$.
\end{proposition}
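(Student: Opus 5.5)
The plan is to write $D_f^{q,\hs}$ as a sum of terms, each lower semicontinuous, using Fatou's lemma for the two integral terms. Fix $a\in(0,+\infty)$ and use the defining formula \eqref{D-f-q-hs}:
\[
D_f^{q,\hs}(\rho\|\sigma)=f(a)\sigma(\egy)+f'(a^+)(\rho-a\sigma)(\egy)+\int_{(0,a]}D_{(\id-t)_-}^q(\rho\|\sigma)\,df'(t)+\int_{(a,+\infty)}D_{(\id-t)_+}^q(\rho\|\sigma)\,df'(t).
\]

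The first two terms are affine in $(\rho,\sigma)$, since they depend only on $\rho(\egy)$ and $\sigma(\egy)$. These are continuous both in the norm and in the $\sigma(\M_*,\M)$-topology, because $\egy\in\M$. So it suffices to show that each of the two integral terms is lower semicontinuous.

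Since $f$ is convex, $df'$ is a positive measure. By non-negativity of the family (Definition \ref{def:non-neg-family}), the integrands are non-negative functions of $t$, and they are measurable in $t$ because the family is monotone.

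For the second integral, $D^q_{(\id-t)_+}$ is lower semicontinuous by assumption for every $t$. For the first integral, I would use
\[
D^q_{(\id-t)_-}(\rho\|\sigma)=D^q_{(\id-t)_+}(\rho\|\sigma)-(\rho-t\sigma)(\egy).
\]
This is the assumed lower semicontinuous function minus a continuous one, hence lower semicontinuous for each fixed $t$.

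Now take a net $(\rho_i,\sigma_i)\to(\rho,\sigma)$ in the relevant topology. Then
\[
\liminf_i\int g_t(\rho_i,\sigma_i)\,df'(t)\ \ge\ \int\liminf_i g_t(\rho_i,\sigma_i)\,df'(t)\ \ge\ \int g_t(\rho,\sigma)\,df'(t),
\]
where $g_t$ denotes the relevant integrand, provided Fatou's lemma applies.

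The main obstacle is that Fatou's lemma holds for sequences but not for general nets, and the $\sigma(\M_*,\M)$-topology is not first countable. To get around this, I would write each integral as a supremum of lower semicontinuous functionals. Since the integrands are non-negative, they are monotone in $t$ (the $-$ part increasing, the $+$ part decreasing), and $df'$ is a positive measure, I would approximate each integral from below by Riemann-type lower sums, for example $\sum_k g_{t_k}\,df'(I_k)$ over finite partitions of compact subintervals.

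I need to choose the evaluation points $t_k$ so that each summand stays below the integrand on its interval $I_k=(s_{k-1},s_k]$:
\begin{itemize}
\item For the decreasing $+$ part, take the right endpoint, $t_k=s_k$.
\item For the increasing $-$ part, take the left endpoint $s_{k-1}$, together with an atom correction at $s_k$.
\end{itemize}
Each such finite sum is lower semicontinuous, because it is a finite positive combination of lower semicontinuous functions.

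By monotonicity of the integrands in $t$, these lower sums increase to the integral along refining partitions. More precisely, their supremum equals the integral of the lower-left or lower-right limits of the integrand, and these differ from the integrand at most at countably many $t$. To handle possible atoms of $df'$ at those points, I would include the atoms of $df'$ among the partition points. Then the supremum of the lower sums equals the integral exactly.

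A supremum of lower semicontinuous functions is lower semicontinuous. This proves the claim in both the norm and the $\sigma(\M_*,\M)$ settings simultaneously.
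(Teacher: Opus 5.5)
Your proposal is correct and is essentially the paper's argument: the paper likewise splits $df'$ into a continuous part $dF_0$ plus atoms $\sum_k a_k\delta_{s_k}$, writes each integral term as a supremum over partitions of non-negatively weighted finite sums of $D^q_{(\id-t)_\mp}(\rho\|\sigma)$ evaluated at endpoints chosen by monotonicity in $t$ (with atoms evaluated exactly), and concludes because a supremum of lower semicontinuous functionals is lower semicontinuous, remarking as you do that Fatou's lemma already suffices in the norm case. Your treatment of atoms as partition points with an atom correction is only a cosmetic variant of that decomposition; note just that the monotonicity in $t$ you use comes from the standing monotone-family assumption of this subsection, not from non-negativity.
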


\begin{proof}
The assumption also implies the joint lower semicontinuity of $D_{(\id-t)_-}^q(\rho\|\sigma)$ in the norm
(resp., in the $\sigma(\M_*,\M)$-topology). By extracting a countable set $\{s_k\}$ in $(0,+\infty)$ of
discontinuity points of $\partial_+f(t)$ and letting $a_k:=\partial_+f(s_k)-\partial_-f(s_k)>0$, we can write
\[
df'(t)=dF_0(t)+\sum_ka_k\,d\delta_{s_k}(t),\qquad t\in(0,+\infty),
\]
where $F_0$ is a monotone increasing continuous function on $(0,+\infty)$ with $F_0(0^+)=0$. For each
$a\in(0,+\infty)$ we easily see that
\begin{align*}
&\int_{(0,a]}D_{(\id-t)_-}^qu(\rho\|\sigma)\,df'(t) \\
&\qquad=\sup_{K,\Delta}\Biggl[\sum_{k\le K,\,s_k<a}D_{(\id-s_k)_-}^q(\rho\|\sigma)a_k
+\sum_{i=1}^nD_{(\id-t_i)_-}^q(\rho\|\sigma)\{F_0(t_{i+1})-F_0(t_i)\}\Biggr], \\
&\int_{(a,+\infty)}D_{(\id-t)_+}^qu(\rho\|\sigma)\,df'(t) \\
&\qquad=\sup_{K,\Delta'}\Biggl[\sum_{k\le K,\,s_k\ge a}D_{(\id-s_k)_+}^q(\rho\|\sigma)a_k
+\sum_{i=1}^nD_{(\id-t_i)_+}^q(\rho\|\sigma)\{F_0(t_i)-F_0(t_{i-1})\}\Biggr],
\end{align*}
where the above first supremum is taken over all $K\in\bN$ and partitions
$\Delta:0<t_1<\dots<t_n<t_{n+1}=a$, and the second is taken over $K\in\bN$ and partitions
$\Delta':a=t_0<t_1<\dots<t_n<+\infty$. From the assumption, all terms in the sums inside both suprema are
lower semicontinuous in $\rho,\sigma$ in the specified topology. Hence the result follows from the definition
of $D_f^q(\rho\|\sigma)$ in \eqref{D-f-q-hs}. (For the lower semicontinuity in the norm can
more directly follow from Fatou's lemma applied to sequences $\rho_n,\sigma_n$ with $\|\rho_n-\rho\|\to0$
and $\|\sigma_n-\sigma\|\to0$.)
\end{proof}

Since $D_{(\id-t)_+}^\meas$ is jointly lower semicontinuous in $\sigma(\M_*,\M)$ by definition, we have

\begin{cor}\label{cor:V.13}
For every $f\in(\bR^{(0,+\infty)})_{\mathrm{conv}}$, $D_f^{\meas,\hs}(\rho\|\sigma)$ is jointly lower
semicontinuous in $\rho,\sigma\in\M_{*,\ge0}$ in the $\sigma(\M_*,\M)$-topology.
\end{cor}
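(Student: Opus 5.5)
The plan is to reduce the statement to Proposition \ref{prop:V.12}: I will show that for every fixed $t\in(0,+\infty)$ the measured hockey stick divergence $D_{(\id-t)_+}^\meas(\rho\|\sigma)=(\rho-t\sigma)_+(\egy)$ is jointly lower semicontinuous on $\M_{*,\ge0}\times\M_{*,\ge0}$ in the $\sigma(\M_*,\M)$-topology. Once that is established, Proposition \ref{prop:V.12} applies with $q=\meas$ and gives the claim for every convex $f$. The standing assumptions of this subsection also hold, since by (II) the family $(D_{(\id-t)_+}^\meas)_{t}$ is non-negative and monotone in $t$.

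For the lower semicontinuity I would use the variational formula from (II):
\begin{align*}
D_{(\id-t)_+}^\meas(\rho\|\sigma)=\sup\{\rho(A)-t\sigma(A):\,A\in\M,\ 0\le A\le\egy\}.
\end{align*}
For each fixed $A\in\M$, the map $(\rho,\sigma)\mapsto\rho(A)-t\sigma(A)$ is jointly continuous in the product $\sigma(\M_*,\M)$-topology, because evaluation at a fixed element of $\M$ is exactly what defines that topology. A pointwise supremum of continuous functions is lower semicontinuous, so $D_{(\id-t)_+}^\meas$ is jointly $\sigma(\M_*,\M)$-lower semicontinuous. The same argument applied to $(t\sigma-\rho)(A)$ shows that $D_{(\id-t)_-}^\meas$ is lower semicontinuous as well, although Proposition \ref{prop:V.12} already derives this from the $+$ case.

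The only point needing care is that the approximation in the proof of Proposition \ref{prop:V.12} works with nets rather than sequences, because $\sigma(\M_*,\M)$ is not metrizable. This causes no difficulty: the integrals $\int D^q\,df'$ are written there as suprema of finite nonnegative combinations of the lower semicontinuous functions $D^\meas_{(\id-t_i)_\pm}$ (with $f$ convex, so $df'\ge0$). The affine part $f(a)\sigma(\egy)+f'(a^+)(\rho-a\sigma)(\egy)$ is $\sigma(\M_*,\M)$-continuous, since it is evaluation at $\egy$. So the whole expression is lower semicontinuous with no sequential argument involved.

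I expect no real obstacle. The only check is that the supremum representation in Proposition \ref{prop:V.12} does not implicitly require norm topology, and it does not: it uses only lower semicontinuity of each term.
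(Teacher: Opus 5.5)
Your proposal is correct and follows the paper's route: the corollary is Proposition \ref{prop:V.12} with $q=\meas$. The paper's one-line justification that $D^\meas_{(\id-t)_+}$ is lower semicontinuous ``by definition'' is exactly your observation that $(\rho-t\sigma)_+(\egy)=\sup_{0\le A\le\egy}(\rho-t\sigma)(A)$ is a supremum of $\sigma(\M_*,\M)$-continuous affine functionals. Your remark that the supremum representation in Proposition \ref{prop:V.12} works for nets, with no sequential argument, is a correct and worthwhile clarification.
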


To discuss the maximal hockey stick divergences, we use the notation
\[
F_t(\rho\|\sigma):=\max\{\eta(\egy):\eta\in\M_*^+,\ 0\le\eta\le\rho,\ \eta\le t\sigma\}
\]
for $t\in(0,+\infty)$ and $\rho,\sigma\in\M_*^+$. By Proposition \ref{prop:V.1} note that
\[
D^{\max}_{(\id-t)_+}(\rho\|\sigma)=\rho(\egy)-F_t(\rho\|\sigma),\qquad
D^{\max}_{(\id-t)_-}(\rho\|\sigma)=t\sigma(\egy)-F_t(\rho\|\sigma).
\]

\begin{lemma}\label{lemma:V.14}
The hockey stick divergence $D_{(\id-t)_+}^{\max}(\rho\|\sigma)$ is jointly lower semicontinuous in
$\rho,\sigma$ in the norm for every $t\in(0,+\infty)$.
\end{lemma}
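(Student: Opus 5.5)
Since $D^{\max}_{(\id-t)_+}(\rho\|\sigma)=\rho(\egy)-F_t(\rho\|\sigma)$ and $\rho\mapsto\rho(\egy)$ is norm continuous, it suffices to prove that $(\rho,\sigma)\mapsto F_t(\rho\|\sigma)$ is jointly upper semicontinuous in the norm. Concretely, let $\rho_n\to\rho$ and $\sigma_n\to\sigma$ in norm. For each $n$ take a maximizer $\eta_n\in\M_{*,\ge0}$ with $\eta_n\le\rho_n$, $\eta_n\le t\sigma_n$ and $\eta_n(\egy)=F_t(\rho_n\|\sigma_n)$; it exists by Proposition \ref{prop:V.1}. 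Pass to a subsequence along which $\eta_n(\egy)\to\limsup_n F_t(\rho_n\|\sigma_n)$. The goal is then to produce some $\eta\in\M_{*,\ge0}$ with $\eta\le\rho$, $\eta\le t\sigma$ and $\eta(\egy)\ge\limsup_n\eta_n(\egy)$. This gives $F_t(\rho\|\sigma)\ge\limsup_nF_t(\rho_n\|\sigma_n)$, which is exactly the required lower semicontinuity of $D^{\max}_{(\id-t)_+}$.

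To get such an $\eta$, set $\omega:=\rho+\sum_n2^{-n}\rho_n/(1+\|\rho_n\|)$. Then every $\eta_n$ satisfies $\eta_n\le\rho_n$, and the family $\{\eta_n\}$ is norm bounded. The main obstacle is the compactness step: I would show that $\{\eta_n\}$ is relatively $\sigma(\M_*,\M)$-compact. The plan is to use the Akemann--Takesaki criterion \cite[Theorem III.5.4]{Takesaki}, namely uniform continuity on decreasing sequences of projections $e_k\searrow0$. For this, note that $\eta_n(e_k)\le\rho_n(e_k)\le\rho(e_k)+\|\rho_n-\rho\|$. Given $\ep>0$, choose $N$ with $\|\rho_n-\rho\|<\ep$ for all $n>N$. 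The finitely many $\rho_1,\dots,\rho_N$ and $\rho$ are each normal, so for $k$ large we have $\rho_n(e_k)<\ep$ for $n\le N$ and $\rho(e_k)<\ep$. Hence $\sup_n\eta_n(e_k)\le2\ep$ for $k$ large, which gives the required uniform condition. Alternatively, the norm convergent sequence $\{\rho_n\}$ is itself relatively weakly compact, and the order interval domination transfers this property to the $\eta_n$.

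With compactness in hand, extract a subnet $\eta_{n_\alpha}\to\eta$ in $\sigma(\M_*,\M)$. Then $\eta(\egy)=\lim\eta_{n_\alpha}(\egy)=\limsup_nF_t(\rho_n\|\sigma_n)$. For every $A\in\M_{\ge0}$ we also have $\eta(A)=\lim\eta_{n_\alpha}(A)\le\lim\rho_{n_\alpha}(A)=\rho(A)$, because $\rho_n\to\rho$ in norm and hence weakly. The same argument gives $\eta(A)\le t\sigma(A)$, and $\eta\ge0$ holds as a weak limit of positive functionals. Therefore $\eta$ is feasible for $F_t(\rho\|\sigma)$, and the conclusion follows.
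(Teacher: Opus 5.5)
Your proposal is correct and follows the paper's proof essentially step for step. You reduce to upper semicontinuity of $F_t$, take maximizers $\eta_n$, and verify the uniform projection criterion of \cite[Theorem III.5.4]{Takesaki} via $\eta_n(e_k)\le\rho(e_k)+\|\rho_n-\rho\|$, splitting off finitely many indices; you then pass to a $\sigma(\M_*,\M)$-convergent subnet whose limit is feasible for $F_t(\rho\|\sigma)$. (The auxiliary functional $\omega$ you define is never used and can simply be dropped.)
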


\begin{proof}
Let $\rho,\sigma,\rho_n,\sigma_n$ for $n\in\bN$ be such that $\|\rho_n-\rho\|\to0$ and
$\|\sigma_n-\sigma\|\to0$. To prove the joint lower semicontinuity, it suffices to show that
$F_t(\rho\|\sigma)\ge\limsup_{n\to\infty}F_t(\rho_n\|\sigma_n)$. To show this, by taking a subsequence we
may assume that $\lim_{n\to\infty}F_t(\rho_n\|\sigma_n)$ exists. For each $n\in\bN$ choose an
$\eta_n\in\M_{*,\ge0}$ such that $0\le\eta_n\le\rho_n$, $\eta_n\le t\sigma_n$ and
$F_t(\rho_n\|\sigma_n)=\eta_n(\egy)$. Define a subset $K$ of $\M_{*,\ge0}$ by
\[
K:=\bigcup_{n=1}^\infty\{\eta\in\M_*^+:\eta\le\rho_n\}.
\]
Let $\{p_j\}$ be a sequence of projections in $\M$ with $p_j\searrow0$. Then for any $\ep>0$
choose an
$n_0\in\bN$ such that $\|\rho_n-\rho\|<\ep/2$ for all $n\ge n_0$. Moreover, choose a $j_0\in\bN$
such that if $j\ge j_0$ then $\rho_n(p_j)<\ep/2$ for $1\le n<n_0$ and $\rho(p_j)<\ep/2$. Now
let $\eta\in K$ and $j\ge j_0$. Then $\eta\le\rho_n$ for some $n$. For $n<n_0$ we have
$\eta(p_j)\le\rho_n(p_j)<\ep/2$. Moreover, for any $n\ge n_0$ we have
\[
\eta(p_j)\le\rho_n(p_j)=(\rho_n-\rho)(p_j)+\rho(p_j)
\le\|\rho_n-\rho\|+{\ep\over2}<\ep.
\]
This implies that $\lim_{j\to\infty}\eta(p_j)=0$ uniformly for $\eta\in K$. Hence $K$ is relatively
$\sigma(\M_*,\M)$-compact due to \cite[Theorem III.5.4]{Takesaki}. So we can choose a subnet
$\{\eta_{n(i)}\}$ of $\{\eta_n\}$ and an $\eta\in\M_*^+$ such that $\eta_{n(i)}\to\eta$ in $\sigma(\M_*,\M)$.
Since $\eta_{n(i)}\le\rho_{n(i)}$ and $\|\rho_{n(i)}-\rho\|\to0$, it follows that $\eta\le\rho$. Also, since
$\eta_{n(i)}\le t\sigma_{n(i)}$ and $\|\sigma_{n(i)}-\sigma\|\to0$, it follows that $\eta\le t\sigma$. Therefore,
\[
F_t(\rho\|\sigma)\ge\eta(\egy)=\lim_i\eta_{n(i)}(\egy)=\lim_iF_t(\rho_{n(i)}\|\sigma_{n(i)})
=\lim_{n\to\infty}F_t(\rho_n\|\sigma_n),
\]
as desired.
\end{proof}

\begin{cor}\label{cor:V.15}
For every $f\in(\bR^{(0,+\infty)})_{\mathrm{conv}}$, $D_f^{\max,\hs}(\rho\|\sigma)$ is jointly lower
semicontinuous in $\rho,\sigma\in\M_{*,\ge0}$ in the norm.
\end{cor}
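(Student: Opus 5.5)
The corollary follows by combining Lemma \ref{lemma:V.14} with Proposition \ref{prop:V.12}, taken in its norm-topology version.

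First we check the standing hypotheses of this subsection for $q=\max$. Non-negativity follows from Proposition \ref{prop:V.1}. Any $\eta$ admissible in the definition of $F_t(\rho\|\sigma)$ satisfies $\eta\le\rho$ and $\eta\le t\sigma$, so $\eta(\egy)\le\min\{\rho(\egy),t\sigma(\egy)\}$. Hence $D^{\max}_{(\id-t)_+}(\rho\|\sigma)=\rho(\egy)-F_t(\rho\|\sigma)\ge0$ and $D^{\max}_{(\id-t)_-}(\rho\|\sigma)=t\sigma(\egy)-F_t(\rho\|\sigma)\ge0$.

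Next, monotonicity of the family in $t$. The feasible set for $F_t$ grows with $t$, so $t\mapsto F_t(\rho\|\sigma)$ is increasing and $t\mapsto D^{\max}_{(\id-t)_+}(\rho\|\sigma)$ is decreasing. For the other family, fix $s<t$ and an optimizer $\eta$ for $F_t$. Then $(s/t)\eta$ is admissible for $F_s$, which gives $F_t-F_s\le (1-s/t)F_t\le (t-s)\sigma(\egy)$. Therefore $D^{\max}_{(\id-t)_-}=t\sigma(\egy)-F_t$ is increasing in $t$.

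These two facts also give boundedness on compact $t$-intervals, by the bounds $0\le D^{\max}_{(\id-t)_+}\le\rho(\egy)$ and $0\le D^{\max}_{(\id-t)_-}\le t\sigma(\egy)$. Measurability in $t$ holds because the functions are monotone. So $D^{\max,\hs}_f$ is well defined by \eqref{D-f-q-hs}.

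Lemma \ref{lemma:V.14} supplies joint norm lower semicontinuity of $D^{\max}_{(\id-t)_+}(\rho\|\sigma)$ for every $t\in(0,+\infty)$. Proposition \ref{prop:V.12} then yields joint norm lower semicontinuity of $D_f^{\max,\hs}$.

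The only step needing a word of care is inside Proposition \ref{prop:V.12}: the passage from $D^{\max}_{(\id-t)_+}$ to $D^{\max}_{(\id-t)_-}$. This uses $D_{(\id-t)_-}=D_{(\id-t)_+}-(\rho-t\sigma)(\egy)$, and $(\rho-t\sigma)(\egy)$ is norm continuous in $(\rho,\sigma)$. Equivalently, the proof of Lemma \ref{lemma:V.14} directly gives upper semicontinuity of $F_t$, hence lower semicontinuity of both $\rho(\egy)-F_t$ and $t\sigma(\egy)-F_t$.

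Given this, the only other issue is that $f'(a^+)(\rho-a\sigma)(\egy)$ and $f(a)\sigma(\egy)$ are norm continuous. The two integrals against the positive measure $df'$ are lower semicontinuous, by Fatou's lemma applied to sequences $\rho_n\to\rho$, $\sigma_n\to\sigma$ in norm. The integrands are non-negative, so Fatou applies without further conditions. I expect no real obstacle.
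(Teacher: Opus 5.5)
Your proof is correct and follows the paper's route: Corollary~\ref{cor:V.15} is obtained there by feeding the joint norm lower semicontinuity of $D^{\max}_{(\id-t)_+}$ from Lemma~\ref{lemma:V.14} into Proposition~\ref{prop:V.12}, whose norm version is the Fatou argument you describe. Your explicit check of non-negativity and of monotonicity in $t$ (via the rescaled optimizer $(s/t)\eta$) is a useful addition, since the paper only asserts these standing hypotheses for $q=\max$.
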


It seems unknown whether the above stated (a) holds true or not for general convex functions $f$ on
$(0,+\infty)$. By Propositions \ref{prop:V.2} and \ref{prop:V.12} the problem boils down to that of hockey
stick divergence $D_{(\id-t)_+}^{\mathrm{P}}$. This is indeed true in the finite dimensional case, whose
proof is included in Appendix C for the reader's convenience. 

\subsection{Martingale convergence}\label{sec:V.E}

Let $\{\M_\alpha\}$ be an increasing net of unital von Neumann subalgebras of $\M$ such that
$(\bigcup_\alpha\M_\alpha)''=\M$. Let $\rho_\alpha:=\rho|_{\M_\alpha}$ and
$\sigma_\alpha:=\sigma|_{\M_\alpha}$. Roughly speaking, the martingale convergence of quantum
$f$-divergences is an easy consequence from the monotonicity under CPTP maps and the joint lower
semicontinuity in the norm. Therefore, in the rest of this subsection we make the following two assumptions:
\begin{itemize}
\item[(A)] $D^q_{(\id-t)_+}(\rho\|\sigma)$ is jointly lower semicontinuous in $\rho,\sigma\in\M_*^+$ in the
norm for every $t\in(0,+\infty)$.
\item[(B)] $D^q_{(\id-t)_+}(\rho\|\sigma)$ is monotone under TPCP maps for every $t\in(0,+\infty)$.
\end{itemize}

Note that properties (A) and (B) are satisfied for $D_{(\id-t)_+}^\meas$ and $D_{(\id-t)_+}^{\max}$.

\begin{theorem}\label{theorem:V.16}
Assume (A), (B) and that $\M$ is $\sigma$-finite. For every $f\in(\bR^{(0,+\infty)})_{\mathrm{conv}}$, the
martingale convergence holds:
\[
D^{q,\hs}_f(\rho_\alpha\|\sigma_\alpha)\nearrow D^{q,\hs}_f(\rho\|\sigma).
\]
\end{theorem}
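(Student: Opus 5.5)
The argument first proves martingale convergence for each hockey stick divergence separately, and then integrates against $df'$ using monotone convergence.

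\emph{Step 1: single $t$.} Fix $t\in(0,+\infty)$. The restriction map $\M_*\to(\M_\alpha)_*$, $\psi\mapsto\psi|_{\M_\alpha}$, is the predual of the unital inclusion $\M_\alpha\hookrightarrow\M$, so it is completely positive and trace-preserving. The same holds for the restriction $(\M_\beta)_*\to(\M_\alpha)_*$ when $\alpha\le\beta$. By (B),
\[
\alpha\le\beta\ \imp\ D^q_{(\id-t)_\pm}(\rho_\alpha\|\sigma_\alpha)\le D^q_{(\id-t)_\pm}(\rho_\beta\|\sigma_\beta)\le D^q_{(\id-t)_\pm}(\rho\|\sigma).
\]
For $D_{(\id-t)_-}^q$ we use that $\Tr(\rho-t\sigma)$ is preserved under restriction. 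Hence both nets are increasing and bounded above by the value at $(\rho,\sigma)$.

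For the reverse inequality we use $\sigma$-finiteness. Take a faithful normal state $\omega$ and the generalized conditional expectations (or the Petz recovery maps) $\E_\alpha$ associated with $\M_\alpha\subset\M$ and $\omega$. Put $\hat\rho_\alpha:=\rho_\alpha\circ\E_\alpha\in\M_{*,\ge0}$, and define $\hat\sigma_\alpha$ in the same way. By the known martingale theorem (Haagerup–Junge–Xu type, or the norm convergence of $\rho\circ\E_\alpha\to\rho$ for an increasing net generating $\M$), $\hat\rho_\alpha\to\rho$ and $\hat\sigma_\alpha\to\sigma$ in norm. The map $\psi\mapsto\psi\circ\E_\alpha$ from $(\M_\alpha)_*$ to $\M_*$ is CPTP, so (B) gives
\[
D^q_{(\id-t)_+}(\hat\rho_\alpha\|\hat\sigma_\alpha)\le D^q_{(\id-t)_+}(\rho_\alpha\|\sigma_\alpha).
\]
The norm lower semicontinuity in (A) then gives
\[
D^q_{(\id-t)_+}(\rho\|\sigma)\le\liminf_\alpha D^q_{(\id-t)_+}(\rho_\alpha\|\sigma_\alpha).
\]
Therefore $D^q_{(\id-t)_\pm}(\rho_\alpha\|\sigma_\alpha)\nearrow D^q_{(\id-t)_\pm}(\rho\|\sigma)$ for every $t$. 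I expect this step to be the main obstacle: we need norm convergence $\rho|_{\M_\alpha}\circ\E_\alpha\to\rho$ for a general increasing net, and we must check that the recovery map is indeed CPTP at the predual level. I would first try Petz's recovery map with respect to $\omega$, whose convergence for nets is known in the $\sigma$-finite case. If that route fails, I would reduce to $\sigma$-weak convergence together with monotonicity.

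\emph{Step 2: integrate.} Use \eqref{D-f-q-hs} with a fixed $a$. The affine terms $f(a)\sigma_\alpha(\egy)+f'(a^+)(\rho_\alpha-a\sigma_\alpha)(\egy)$ do not depend on $\alpha$, since restriction preserves $\egy$. Since $f$ is convex, $df'$ is a positive measure, and the integrands are non-negative and increase to their limits. The net is indexed by a directed set and not a sequence, so monotone convergence needs care. I would justify it via the supremum representation from the proof of Proposition~\ref{prop:V.12}: each integral is a supremum over finite data (finitely many atoms and partitions) of finite sums of the integrands. Alternatively, each integral equals the supremum of $\int g\,df'$ over simple $g$ below the integrand on compact subintervals, where the integrand is bounded, and suprema commute. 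In either case an increasing net of integrands yields the supremum of the integrals, giving
\[
\int D^q(\rho_\alpha\|\sigma_\alpha)\,df'\nearrow\int D^q(\rho\|\sigma)\,df'.
\]
Adding the constant terms gives $D^{q,\hs}_f(\rho_\alpha\|\sigma_\alpha)\nearrow D^{q,\hs}_f(\rho\|\sigma)$, and the case where the value is $+\infty$ is covered as well.
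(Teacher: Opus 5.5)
Your argument is correct, but it is organized differently from the paper's.

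The paper never passes through the individual hockey stick divergences. It first uses Proposition~\ref{prop:V.12} to get norm lower semicontinuity of $D_f^{q,\hs}$ itself, and observes that (B) makes $D_f^{q,\hs}$ monotone under CPTP maps. It then runs the conditional-expectation argument once, directly for $D_f^{q,\hs}$. It uses the Accardi--Cecchini generalized conditional expectation $\E_\alpha$, which is unital, completely positive and normal, so its predual is CPTP. It also uses the Hiai--Tsukada theorem $\|\rho\circ\E_\alpha-\rho\|\to0$ \cite{AccardiCecchini1982,HiaiTsukada1984}. So the step you flag as the main obstacle is settled by citation, and no Petz-recovery detour is needed.

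You run the same argument at each fixed $t$. This gives the $f$-independent statement $D^q_{(\id-t)_\pm}(\rho_\alpha\|\sigma_\alpha)\nearrow D^q_{(\id-t)_\pm}(\rho\|\sigma)$ for every $t$, and you then have to pass an increasing \emph{net} through the $df'$-integral. Monotone convergence fails for uncountable nets in general.

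Your second justification (simple functions below the integrand, and ``suprema commute'') therefore does not work as stated. A simple $g\le h$ need not lie below any $h_\alpha$, and continuity from below of $df'$ is only available along sequences.

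Your first justification is sound, but only because $t\mapsto D^q_{(\id-t)_+}$ is decreasing and $t\mapsto D^q_{(\id-t)_-}$ is increasing. This is the monotone-family assumption of Section~\ref{sec:V.D}, which the paper also needs for Proposition~\ref{prop:V.12}. It is what makes each integral a supremum of finite sums that evaluate the integrand at finitely many points, and you should state that dependence explicitly.

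In effect both proofs rest on the same finite-partition representation. The paper uses it to transfer lower semicontinuity from $D^q_{(\id-t)_+}$ to $D_f^{q,\hs}$; you use it as a monotone convergence theorem for nets. The paper's route is shorter once Proposition~\ref{prop:V.12} is available. Yours yields, as a by-product, the slightly stronger pointwise-in-$t$ martingale convergence of the hockey stick divergences themselves.
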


\begin{proof}
By (A) and Proposition \ref{prop:V.12}, $D^{q,\hs}_f(\rho\|\sigma)$ is jointly lower semicontinuous in
$\rho,\sigma\in\M_*^+$ in the norm. Also, it is immediate from (B) that $D^{q,\hs}_f(\rho\|\sigma)$ is
monotone under CPTP maps. This implies that $D^{q,\hs}_f(\rho_\alpha\|\sigma_\alpha)$ is increasing in
$\alpha$ and $D^{q,\hs}_f(\rho_\alpha\|\sigma_\alpha)\le D^{q,\hs}_f(\rho\|\sigma)$ for all $\alpha$.
Hence it only remains to show that
\begin{align}\label{ineq:martingale}
D^{q,\hs}_f(\rho\|\sigma)\le\sup_\alpha D^{q,\hs}_f(\rho_\alpha\|\sigma_\alpha).
\end{align}

Since $\M$ is $\sigma$-finite by assumption, there is a faithful $\vfi\in\M_*^+$. Now, let $\E_\alpha$ be the
generalized conditional expectation from $\M$ into $\M_\alpha$ with respect to $\vfi$ due to Accardi and
Cecchini \cite{AccardiCecchini1982} (see also \cite[Sec.\ 5.2]{Hiai2021}), which is a unital completely
positive normal map, so the predual map $(\E_\alpha)_*:(\M_\alpha)_*\to\M_*$ is a CPTP map. Let
$\hat\rho_\alpha:=\rho\circ\E_\alpha=\rho_\alpha\circ\E_\alpha$ and
$\hat\sigma_\alpha:=\sigma\circ\E_\alpha=\sigma_\alpha\circ\E_\alpha$. Then it is known
\cite[Theorem 3]{HiaiTsukada1984} that $\|\hat\rho_\alpha-\rho\|\to0$ and $\|\hat\sigma_\alpha-\sigma\|\to0$.
From the lower semicontinuity in the norm we have
\[
D^{q,\hs}_f(\rho\|\sigma)\le\liminf_\alpha D^{q,\hs}_f(\hat\rho_\alpha\|\hat\sigma_\alpha).
\]
Furthermore, from the monotonicity property we have
$D^{q,\hs}_f(\hat\rho_\alpha\|\hat\sigma_\alpha)\le D^{q,\hs}_f(\rho_\alpha\|\sigma_\alpha)$.
Therefore, \eqref{ineq:martingale} follows.
\end{proof}

\begin{proposition}\label{prop:V.17}
Assume (A), (B) and that $D^q_{(\id-t)_+}(\rho\|\sigma)$ is additive under direct sums for every
$t\in(0,+\infty)$. Let $\{e_\alpha\}$ be any increasing net of projections in $\M$ such that
$e_\alpha\nearrow\egy$. Let $e_\alpha\rho e_\alpha:=\rho|_{e_\alpha\M e_\alpha}$ and
$e_\alpha\sigma e_\alpha:=\sigma|_{e_\alpha\M e_\alpha}$. If either $f\in(\bR^{(0,+\infty)})_{\mathrm{conv}}$
or $f\in(\bR^{(0,+\infty)})_{\mathrm{conc}}$, then
\[
\lim_\alpha D^{q,\hs}_f(e_\alpha\rho e_\alpha\|e_\alpha\sigma e_\alpha)
=D^{q,\hs}_f(\rho\|\sigma).
\]
\end{proposition}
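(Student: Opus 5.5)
The plan is to reduce to convex $f$, prove the statement one hockey stick divergence at a time, and then integrate. For the reduction, note that replacing $f$ by $-f$ changes the sign of every term in \eqref{D-f-q-hs}, since $d(-f)'=-df'$. Hence $D^{q,\hs}_{-f}=-D^{q,\hs}_f$, and the concave case follows from the convex one. So from now on $f$ is convex.

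\emph{Identifying the corner with a functional on $\M$.} Write $\hat\rho_\alpha:=\rho(e_\alpha\,\cdot\,e_\alpha)$ and $\hat\sigma_\alpha:=\sigma(e_\alpha\,\cdot\,e_\alpha)\in\M_{*,\ge0}$. The first step is to show
\[
D^q_{(\id-t)_\pm}(e_\alpha\rho e_\alpha\|e_\alpha\sigma e_\alpha)=D^q_{(\id-t)_\pm}(\hat\rho_\alpha\|\hat\sigma_\alpha).
\]
For this I view $\hat\rho_\alpha,\hat\sigma_\alpha$ as the direct sum of the corner functionals and the zero functionals on $e_\alpha^\perp\M e_\alpha^\perp$. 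Additivity under direct sums then gives the identity, because $D^q_{(\id-t)_\pm}(0\|0)=0$. That last fact itself follows from additivity, via $0\oplus0=0$.

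\emph{Upper bound for each $t$.} The pinching $\psi\mapsto e_\alpha\psi e_\alpha+e_\alpha^\perp\psi e_\alpha^\perp$ is CPTP. By (B) and direct-sum additivity,
\[
D^q_{(\id-t)_+}(\rho\|\sigma)\ge D^q_{(\id-t)_+}(e_\alpha\rho e_\alpha\|e_\alpha\sigma e_\alpha)+D^q_{(\id-t)_+}(e_\alpha^\perp\rho e_\alpha^\perp\|e_\alpha^\perp\sigma e_\alpha^\perp).
\]
Non-negativity of the family then gives $D^q_{(\id-t)_+}(e_\alpha\rho e_\alpha\|e_\alpha\sigma e_\alpha)\le D^q_{(\id-t)_+}(\rho\|\sigma)$ for every $t$ and every $\alpha$. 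The same argument gives the bound for $D^q_{(\id-t)_-}$, because it is related to $D^q_{(\id-t)_+}$ by the additive, trace-preserving term $(\rho-t\sigma)(\egy)$.

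\emph{Upper bound for $D^{q,\hs}_f$.} Insert these bounds into \eqref{D-f-q-hs}. Since $df'\ge0$, the integral terms on the corner are dominated by the full ones. The affine terms are $f(a)\sigma(e_\alpha)+f'(a^+)(\rho-a\sigma)(e_\alpha)$, and they converge to $f(a)\sigma(\egy)+f'(a^+)(\rho-a\sigma)(\egy)$ by normality, since $e_\alpha\nearrow\egy$. This yields
\[
\limsup_\alpha D^{q,\hs}_f(e_\alpha\rho e_\alpha\|e_\alpha\sigma e_\alpha)\le D^{q,\hs}_f(\rho\|\sigma).
\]

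\emph{Lower bound.} By the first step, the corner divergence equals $D^{q,\hs}_f(\hat\rho_\alpha\|\hat\sigma_\alpha)$, a quantity on $\M$ itself. Next, $\hat\rho_\alpha\to\rho$ in norm; this is the standard estimate $\|\rho-\rho(e\,\cdot\,e)\|\le 2\rho(e^\perp)^{1/2}\rho(\egy)^{1/2}+\rho(e^\perp)$, obtained by Cauchy--Schwarz. Likewise $\hat\sigma_\alpha\to\sigma$ in norm. By (A) and Proposition \ref{prop:V.12}, $D^{q,\hs}_f$ is jointly norm lower semicontinuous, so
\[
\liminf_\alpha D^{q,\hs}_f(\hat\rho_\alpha\|\hat\sigma_\alpha)\ge D^{q,\hs}_f(\rho\|\sigma).
\]
Using lower semicontinuity of $D^{q,\hs}_f$ as a whole is deliberate: it avoids needing Fatou or dominated convergence for nets, which is problematic.

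The main obstacle is the bookkeeping in the identification step. One must check that the corner quantity really agrees with the quantity for $\hat\rho_\alpha,\hat\sigma_\alpha$ on $\M$, so that the norm lower semicontinuity on $\M$ applies. One must also handle the $D_-$ family correctly, since additivity for $D_-$ has to be derived from that of $D_+$ through the trace term. If direct-sum additivity is only available in a weaker form, I would instead go through the CPTP compression map $\psi\mapsto e_\alpha\psi e_\alpha$ combined with the embedding of the corner back into $\M$.
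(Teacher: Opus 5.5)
Your argument is correct, but it is organized differently from the paper's. The paper works with the subalgebras $\M_\alpha:=e_\alpha\M e_\alpha+\bC e_\alpha^\perp$. It first proves the martingale convergence $D^{q,\hs}_f(\rho|_{\M_\alpha}\|\sigma|_{\M_\alpha})\nearrow D^{q,\hs}_f(\rho\|\sigma)$, using norm lower semicontinuity and monotonicity under the predual of $\E_\alpha(A)=e_\alpha Ae_\alpha+\omega_\alpha(e_\alpha^\perp Ae_\alpha^\perp)e_\alpha^\perp$. It then uses additivity to split off the scalar term $\persp{f}(\rho(e_\alpha^\perp),\sigma(e_\alpha^\perp))$. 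Controlling this remainder, which need not tend to $0$ and may even be $+\infty$, is deferred to the argument of \cite[Theorem 4.5]{Hiai_fdiv_standard}.

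You bypass that remainder entirely. Your upper bound is termwise: it comes from pinching, additivity and non-negativity of each $D^q_{(\id-t)_\pm}$, with the affine part handled at a fixed $a$ by normality. Your lower bound applies Proposition~\ref{prop:V.12} directly to the norm-convergent functionals $\rho(e_\alpha\cdot e_\alpha)$, $\sigma(e_\alpha\cdot e_\alpha)$ on $\M$. This makes the proof self-contained and elementary, since it exploits the hockey stick structure (non-negative primitive divergences integrated against $df'\ge0$). The paper's route instead mirrors the proof for standard $f$-divergences and gives monotone convergence along $\M_\alpha$ as a by-product.

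One step should be stated more carefully. The identification $D^q_{(\id-t)_\pm}(e_\alpha\rho e_\alpha\|e_\alpha\sigma e_\alpha)=D^q_{(\id-t)_\pm}(\rho(e_\alpha\cdot e_\alpha)\|\sigma(e_\alpha\cdot e_\alpha))$ does not follow from additivity alone. The reason is that $\rho(e_\alpha\cdot e_\alpha)$ is a functional on $\M$, not on $e_\alpha\M e_\alpha\oplus e_\alpha^\perp\M e_\alpha^\perp$. Your fallback, the compression $\psi\mapsto\psi|_{e_\alpha\M e_\alpha}$, is not trace-preserving, so (B) does not apply to it either.

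Instead, apply (B) to two CPTP maps, with $\omega$ any normal state on $e_\alpha\M e_\alpha$:
\begin{itemize}
\item $\phi\mapsto\phi(e_\alpha\cdot e_\alpha)$ from $(e_\alpha\M e_\alpha)_*$ to $\M_*$, whose dual $A\mapsto e_\alpha Ae_\alpha$ is unital CP;
\item $\psi\mapsto\psi|_{e_\alpha\M e_\alpha}+\psi(e_\alpha^\perp)\omega$ back, whose dual $B\mapsto B+\omega(B)e_\alpha^\perp$ is unital CP.
\end{itemize}
These maps carry the two pairs onto each other. This gives the equality directly and makes the detour through $D^q(0\|0)=0$ unnecessary.
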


\begin{proof}
We may prove the result when $f\in(\bR^{(0,+\infty)})_{\mathrm{conv}}$ (we may consider $-f$ when
$f\in(\bR^{(0,+\infty)})_{\mathrm{conc}}$). Let $\M_\alpha:=e_\alpha\M e_\alpha+\bC e_\alpha^\perp$ and
$\rho_\alpha:=\rho|_{\M_\alpha}$, $\sigma_\alpha:=\sigma|_{\M_\alpha}$. Clearly, $\{\M_\alpha\}$ is an
increasing von Neumann subalgebras of $\M$ such that $(\bigcup_\alpha\M_\alpha)''=\M$. First we show
that
\begin{align}\label{conv:martingale}
D^{q,\hs}_f(\rho_\alpha\|\sigma_\alpha)\nearrow D^{q,\hs}_f(\rho\|\sigma).
\end{align}
(Of course, this is contained in Theorem \ref{theorem:V.16} when $\M$ is $\sigma$-finite.) Similarly to the
proof of Theorem \ref{theorem:V.16}, it suffices to show \eqref{ineq:martingale}. To show this, instead of the
generalized conditional expectation $\E_\alpha$, for each $\alpha$ we can simply consider
$\E_\alpha:\M\to\M_\alpha$ given by
\[
\E_\alpha(A):=e_\alpha Ae_\alpha+\omega_\alpha(e_\alpha^\perp Ae_\alpha^\perp)e_\alpha^\perp,
\qquad A\in\M,
\]
where $\omega_\alpha$ is an arbitrarily chosen normal state on $e_\alpha^\perp\M e_\alpha^\perp$.
Then $\E_\alpha$ is a unital completely positive normal map, so the predual map
$(\E_\alpha)_*:(\M_\alpha)_*\to\M_*$ is a CPTP map. Since
\[
\rho\circ\E_\alpha=\rho(e_\alpha\cdot e_\alpha)
+\rho(e_\alpha^\perp)\omega_\alpha(e_\alpha^\perp\cdot e_\alpha^\perp)
\]
and $\rho(e_\alpha^\perp)\|\omega_\alpha(e_\alpha^\perp\cdot e_\alpha^\perp)\|\le\rho(e_\alpha^\perp)\to0$,
it follows that $\|\rho\circ\E_\alpha-\rho(e_\alpha\cdot e_\alpha)\|\to0$. To show that
$\|\rho\circ\E_\alpha-\rho\|\to0$, we need to see that $\|\rho-\rho(e_\alpha\cdot e_\alpha)\|\to0$. For this we
may assume as before that $\M$ is standardly represented on Haagerup's $L^2(\M)$ so that
$\M_*\cong L^1(\M)$ by $\psi\leftrightarrow h_\psi$. We then have
\begin{align*}
\|\rho-\rho(e_\alpha\cdot e_\alpha)\|&=\|h_\rho-e_\alpha h_\rho e_\alpha\|_1
\le\|e_\alpha^\perp h_\rho\|_1+\|e_\alpha h_\rho e_\alpha^\perp\|_1 \\
&\le\|e_\alpha^\perp h_\rho\|_1+\|h_\rho e_\alpha^\perp\|_1=2\|h_\rho e_\alpha^\perp\|_1 \\
&\le2\|h_\rho^{1/2}\|_2\|h_\rho^{1/2}e_\alpha^\perp\|_2
=2(\tr\,h_\rho)^{1/2}(\tr\,e_\alpha^\perp h_\rho e_\alpha^\perp)^{1/2} \\
&=2(\tr\,h_\rho)^{1/2}(\tr\,h_{e_\alpha^\perp\rho e_\alpha^\perp})^{1/2}
=2\rho(1)^{1/2}((e_\alpha^\perp\rho e_\alpha^\perp)(1))^{1/2} \\
&=2\rho(1)^{1/2}\rho(e_\alpha^\perp)^{1/2}\to0.
\end{align*}
Hence $\|\rho\circ\E_\alpha-\rho\|\to0$ and similarly $\|\sigma\circ\E_\alpha-\sigma\|\to0$. From the lower
semicontinuity in the norm we have
\[
D^{q,\hs}_f(\rho\|\sigma)\le\liminf_\alpha D^{q,\hs}_f(\rho\circ\E_\alpha\|\sigma\circ\E_\alpha).
\]
Since $\rho\circ\E_\alpha=\rho_\alpha\circ\E_\alpha$ and
$\sigma\circ\E_\alpha=\sigma_\alpha\circ\E_\alpha$, from the monotonicity property we furthermore have
$D^{q,\hs}_f(\rho\circ\E_\alpha\|\sigma\circ\E_\alpha)\le D^{q,\hs}_f(\rho_\alpha\|\sigma_\alpha)$.
Therefore, \eqref{ineq:martingale} follows so that \eqref{conv:martingale} follows.

Next, since $D^q_{(\id-t)_+}$ is additive under direct sums for every $t$, it is immediate to see that the same
holds for $D^{q,\hs}_f$. Therefore, in view of \eqref{hs fdiv order vN} we have
\[
D^{q,\hs}_f(\rho_\alpha\|\sigma_\alpha)
=D^{q,\hs}_f(e_\alpha\rho e_\alpha\|e_\alpha\sigma e_\alpha)
+\rho(e_\alpha^\perp)f\biggl({\rho(e_\alpha^\perp)\over\sigma(e_\alpha^\perp)}\biggr).
\]
Then the remaining proof is the same as that of \cite[Theorem 4.5]{Hiai_fdiv_standard}.
\end{proof}

\begin{corollary}\label{cor:V.18}
Under the same assumptions as in Proposition \ref{prop:V.17}, let $\hil$ be an arbitrary Hilbert space and
$\{E_\alpha\}$ be an increasing net of finite dimensional projections on $\hil$ such that
$E_\alpha\nearrow I$. If $\rho,\sigma$ are density operators on $\hil$ and if either
$f\in(\bR^{(0,+\infty)})_{\mathrm{conv}}$ or $f\in(\bR^{(0,+\infty)})_{\mathrm{conc}}$, then we have
\[
\lim_\alpha D^{q,\hs}_f(E_\alpha\rho E_\alpha\|E_\alpha\sigma E_\alpha)
=D^{q,\hs}_f(\rho\|\sigma).
\]
\end{corollary}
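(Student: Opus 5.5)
The natural route is to specialize Proposition \ref{prop:V.17} to $\M:=\B(\hil)$. Here the normal positive functionals correspond to positive trace-class operators. So the density operators $\rho,\sigma$ are identified with elements of $\M_{*,\ge0}$ via $A\mapsto\Tr\rho A$ and $A\mapsto\Tr\sigma A$. The net $\{E_\alpha\}$ is an increasing net of projections in $\M$ with $E_\alpha\nearrow I$, as required in Proposition \ref{prop:V.17} with $e_\alpha:=E_\alpha$. Assumptions (A), (B) and direct-sum additivity are inherited verbatim from the hypotheses, so Proposition \ref{prop:V.17} yields
\[
\lim_\alpha D^{q,\hs}_f(e_\alpha\rho e_\alpha\|e_\alpha\sigma e_\alpha)=D^{q,\hs}_f(\rho\|\sigma),
\]
where $e_\alpha\rho e_\alpha$ denotes the restriction $\rho|_{E_\alpha\M E_\alpha}$.

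The remaining step is to identify this restricted functional with the finite-dimensional object in the statement. $E_\alpha\M E_\alpha=E_\alpha\B(\hil)E_\alpha\cong\B(E_\alpha\hil)$, a matrix algebra of dimension $\dim E_\alpha\hil<+\infty$. For $A\in E_\alpha\B(\hil)E_\alpha$ we have $\rho(A)=\Tr\rho E_\alpha AE_\alpha=\Tr(E_\alpha\rho E_\alpha)A$. Hence the restricted functional is represented by the PSD operator $E_\alpha\rho E_\alpha$ on the finite-dimensional space $E_\alpha\hil$, and similarly for $\sigma$.

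We then must check that $D^{q,\hs}_f$ computed in the von Neumann algebra $E_\alpha\M E_\alpha$ agrees with the finite-dimensional definition of Section \ref{sec:finitedim} applied to $E_\alpha\rho E_\alpha, E_\alpha\sigma E_\alpha$. Equivalently, by isometric invariance, with these operators viewed on $\hil$ or on $E_\alpha\hil$. For $q\in\{\meas,\max,\mathrm{P}\}$ this holds because the von Neumann algebra definitions (II), (III) and (I) reduce to Examples \ref{ex:hs meas}, \ref{ex:maximal hsp} and \ref{eq:Petz hs} when $\M$ is a matrix algebra. This was noted explicitly for (I), and for (II)--(III) it is immediate from the variational formulas and Proposition \ref{prop:V.1}. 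Since \eqref{D-f-q-hs} is literally the same formula as \eqref{eq:hsfdiv def}, the $f$-divergences agree as well.

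The only point needing care is this identification of the family $D^q_{(\id-t)_+}$ across the compressed algebra. For a general family $q$, the paper implicitly regards the family as defined consistently on all von Neumann algebras, compatible with the finite-dimensional one. I would state this compatibility as part of the standing assumptions rather than prove it. Everything else is a direct invocation of Proposition \ref{prop:V.17}.
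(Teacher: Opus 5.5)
Your proposal is correct and is exactly how the corollary follows in the paper. It is Proposition \ref{prop:V.17} specialized to $\M=\B(\hil)$ and $e_\alpha=E_\alpha$, with $E_\alpha\B(\hil)E_\alpha\cong\B(E_\alpha\hil)$ and $\rho|_{E_\alpha\B(\hil)E_\alpha}$ represented by $E_\alpha\rho E_\alpha$.

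The compatibility you propose to take as a standing assumption is actually automatic from (B). The maps $A\mapsto E_\alpha AE_\alpha$ and $B\mapsto B+\omega_0(B)E_\alpha^\perp$, with $\omega_0$ any normal state on $E_\alpha\B(\hil)E_\alpha$, are unital CP maps. Their preduals carry the compressed pair on $\B(\hil)$ and its restriction to $E_\alpha\B(\hil)E_\alpha$ into each other, so monotonicity in both directions gives equality.
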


From this martingale type convergence the hockey stick $f$-divergences in the infinite dimensional $\B(\hil)$
case can largely reduce to the finite-dimensional case.

We can extend Frenkel's representation of the relative entropy to the injective (or AFD) von Neumann
algebra case.

\begin{corollary}\label{cor:V.19}
Assume that $\M$ is an injective von Neumann algebra. Then for any $\rho,\sigma\in\M_{*,\ge0}$ the relative
entropy $D(\rho\|\sigma)$ is equal to the measured hockey stick $f$-divergence
$D_f^{\meas,\hs}(\rho\|\sigma)$ for $f(x):=x\log x$ ($x>0$).
\end{corollary}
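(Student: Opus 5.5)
The plan is to reduce the statement to the finite-dimensional case, where Frenkel's integral representation is known, and to pass to the limit using martingale convergence on both sides. Since $\M$ is injective (AFD), I would first treat the case where $\M$ is $\sigma$-finite. Then $\M$ contains an increasing net (in fact a sequence in the separable case) of finite-dimensional unital von Neumann subalgebras $\M_\alpha$ with $(\bigcup_\alpha\M_\alpha)''=\M$. Each $\M_\alpha$ is a finite direct sum of full matrix algebras. For every $\alpha$ I would show
\[
D(\rho_\alpha\|\sigma_\alpha)=D_\eta^{\meas,\hs}(\rho_\alpha\|\sigma_\alpha),
\]
where $\eta(x)=x\log x$ and $\rho_\alpha,\sigma_\alpha$ are the restrictions. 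On a single matrix block this is Frenkel's formula in the form of Example \ref{ex:hs relentr} with $q=\meas$, extended to non-normalized $\rho,\sigma$ and arbitrary supports:
\begin{itemize}
\item for $\rho^0\le\sigma^0$ it follows from the integral representation, via the Petz-type decomposition and \eqref{eq:interval restriction};
\item for $\rho^0\not\le\sigma^0$ both sides equal $+\infty$, because the term $\int_a^{+\infty}t^{-1}\Tr(\rho-t\sigma)_+\,dt\ge\Tr\rho(I-\sigma^0)\int_a^\infty t^{-1}dt$ diverges.
\end{itemize}
Additivity under direct sums of both $D$ and $D^{\meas}_{(\id-t)_\pm}$, hence of $D_\eta^{\meas,\hs}$, then gives the identity on all of $\M_\alpha$.

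Next I would take limits on both sides. The measured hockey stick divergences satisfy assumptions (A) and (B): norm lower semicontinuity (Corollary \ref{cor:V.13} gives even $\sigma(\M_*,\M)$-lower semicontinuity) and monotonicity under CPTP maps. Theorem \ref{theorem:V.16} therefore applies to the convex $\eta$ and yields $D_\eta^{\meas,\hs}(\rho_\alpha\|\sigma_\alpha)\nearrow D_\eta^{\meas,\hs}(\rho\|\sigma)$. For Araki's relative entropy, the classical martingale convergence theorem (Kosaki/Hiai; see \cite{Hiai2021Div}) gives $D(\rho_\alpha\|\sigma_\alpha)\nearrow D(\rho\|\sigma)$. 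Equating the limits proves the claim in the $\sigma$-finite case.

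For general injective $\M$, I would reduce to the $\sigma$-finite case. Take the support projection $e:=\{\rho>0\}\vee\{\sigma>0\}$, which is $\sigma$-finite. The corner $e\M e$ is again injective and $\sigma$-finite, and both quantities are unchanged when passing to $e\M e$. For $D$ this is standard. For the measured hockey stick divergences, Jordan decompositions of $\rho-t\sigma$ live under $e$, so $(\rho-t\sigma)_\pm(\egy)$ is computed in $e\M e$. Alternatively, one can use Proposition \ref{prop:V.17} with an increasing net of $\sigma$-finite projections.

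The main obstacle is the finite-dimensional identity with full generality in normalization and supports. Frenkel's theorem and \cite{HircheTomamichel_integral} assume states with $f(1)=0$, and the generalized form \eqref{eq:hs relentr def} with arbitrary $a$ must be matched to $\Tr\rho(\log\rho-\log\sigma)$. I would first handle invertible $\sigma$ and normalized states, and then reach the general case by scaling ($t\mapsto ct$ substitutions, using homogeneity and the identity $\Tr\rho\log c$ correction) together with the $+\infty$ support argument above. A secondary point to check is that a finite-dimensional approximating net exists in the non-separable injective case; if only hyperfinite approximation in a weaker sense is available, I would instead use a net of finite-dimensional subalgebras, which Theorem \ref{theorem:V.16} allows since it is stated for nets.
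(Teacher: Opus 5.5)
Your proposal is correct and follows the paper's proof. The finite-dimensional identity comes from Frenkel's formula, the general case is reduced to $\sigma$-finite $\M$ by compressing to the support projection of $\rho+\sigma$, and then Theorem \ref{theorem:V.16} handles $D_{x\log x}^{\meas,\hs}$ while the known martingale convergence of Araki's relative entropy handles $D$. The paper simply cites the finite-dimensional case, whereas you spell out direct sums, normalization and supports; note only that the measured integrand there comes from Frenkel's theorem itself, not from the Petz-type hockey stick decomposition, which gives the Nussbaum--Szko\l a integrand instead.
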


\begin{proof}
The equality $D(\rho\|\sigma)=D_f^{\meas,\hs}(\rho\|\sigma)$ holds in the finite dimensional case; see
\cite{Frenkel_integral} (also \cite{Jencova2023}, \cite{HircheTomamichel_integral}). Assume that $\M$ is
injective, and let $e$ be the support projection of $\rho+\sigma$. Since $e\M e$ remains to be injective and
both $D(\rho\|\sigma)$ and $D_{x\log x}^{\meas,\hs}(\rho\|\sigma)$ do not change if we define them for
$\rho,\sigma$ as elements in $(e\M e)_{*,\ge0}$. Therefore, we may assume without loss of generality that
$\M$ is $\sigma$-finite. Then we can apply Theorem \ref{theorem:V.16} to $D_{x\log x}^{\meas,\hs}$ and
\cite[Theorem 4.1(v)]{Hiai_fdiv_standard} to $D(\rho\|\sigma)$.
\end{proof}

\begin{remark}\label{remark:V.20}
Remark that Frenkel's representation of the relative entropy has been shown for AFD von Neumann algebras
in a recent paper \cite[Sec.~C]{vLuWi2026} and has been furthermore extended to general von Neumann
algebras in a very recent paper \cite{daSilva2026}.
\end{remark}

\subsection{Sufficiency via measured hockey stick $f$-divergences}\label{sec:V.F}

Since Petz' seminal papers \cite{Petz1986,Petz1988}, the sufficiency (or reversibility) theorems
have been obtained by use of different quantum divergences; see \cite{Hiai2021Div,HiaiMosonyi2017,
Jencova_NCLp,Jencova_NCLpII,HiaiJencova2024} for a few among others. In this subsection we consider
the sufficiency via measured hockey stick $f$-divergences in the von Neumann algebra setting. First, we note
that for any $f\in(\bR^{(0,+\infty)})_{\mathrm{conv}}$, the \emph{DPI} (the \emph{data processing inequality})
\begin{align}\label{DPI vN}
D_f^{\meas,\hs}(\Phi(\rho)\|\Phi(\sigma))\le D_f^{\meas,\hs}(\rho\|\sigma),\qquad\rho,\sigma\in\M_{*,\ge0}
\end{align}
holds for any positive trace-preserving map $\Phi:\M_*\to\N_*$. This follows immediately from
\eqref{mono-meas-vN}.

In the rest of this subsection let $\Phi:\M_*\to\N_*$ be a $2$-positive trace-preserving map between
von Neumann algebras. We say that $\Phi$ is \emph{sufficient} (or \emph{reversible}) for
$\rho,\sigma\in\M_{*,\ge0}$ if there exists a $2$-positive trace-preserving map $\Psi:\N_*\to\M_*$ such that
$\Psi(\Phi(\rho))=\rho$ and $\Psi(\Phi(\sigma))=\sigma$. Following Jen{\u c}ov\'a's approach
in \cite{Jencova2023} we present the following:

\begin{theorem}\label{theorem:V.21}
Assume that $\M$ and $\N$ are injective. Let $\Phi:\M_*\to\N_*$ be as stated above and
$\rho,\sigma\in\M_{*,\ge0}$. Then the following conditions are equivalent:
\begin{itemize}
\item[(i)] $\Phi$ is reversible for $\rho,\sigma$.
\item[(ii)] $D_{(\id-t)_+}^\meas(\Phi(\rho)\|\Phi(\sigma))=D_{(\id-t)_+}^\meas(\rho\|\sigma)$ for all
$t\in(0,+\infty)$.
\item[(iii)] $\|\Phi(\rho)-t\Phi(\sigma)\|=\|\rho-t\sigma\|$ for all $t\in(0,+\infty)$.
\item[(iv)] $D_f^{\meas,\hs}(\Phi(\rho)\|\Phi(\sigma))=D_f^{\meas,\hs}(\rho\|\sigma)$ for all
$f\in(\bR^{(0,+\infty)})_\mathrm{conv}$.
\item[(v)] $D_f^{\meas,\hs}(\Phi(\rho)\|\Phi(\sigma))=D_f^{\meas,\hs}(\rho\|\sigma)<+\infty$ for some
$f\in(\bR^{(0,+\infty)})_\mathrm{conv}$ such that the topological support $\supp(df')$ of $df'$ is $(0,+\infty)$.
\item[(vi)] $D_f^{\meas,\hs}(\Phi(\rho)\|\Phi({\rho+\sigma\over2}))=
D_f^{\meas,\hs}(\rho\|{\rho+\sigma\over2})$ for some $f\in(\bR^{(0,+\infty)})_\mathrm{conv}$ such that
$f(0^+)<+\infty$ and $\supp(df')=(0,+\infty)$.
\item[(vii)] $D(\Phi(\rho)\|\Phi({\rho+\sigma\over2}))=D(\rho\|{\rho+\sigma\over2})$.
\end{itemize}
\end{theorem}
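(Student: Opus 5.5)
Our plan is to arrange the implications in the cycle (i)$\Rightarrow$(iv)$\Rightarrow$(v)$\Rightarrow$(ii) and (iv)$\Rightarrow$(vi)$\Rightarrow$(ii), to note (ii)$\Leftrightarrow$(iii), and to close with (ii)$\Rightarrow$(vii)$\Rightarrow$(i).

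\emph{Easy implications.}
\begin{itemize}
\item (i)$\Rightarrow$(iv): apply the DPI \eqref{DPI vN} twice, once for $\Phi$ and once for the recovery map $\Psi$.
\item (ii)$\Leftrightarrow$(iii): this holds because $\|\rho-t\sigma\|=2(\rho-t\sigma)_+(\egy)-(\rho-t\sigma)(\egy)$, and $\Phi$ is trace-preserving.
\item (iv)$\Rightarrow$(v) and (iv)$\Rightarrow$(vi): specialize, e.g.\ to $f(x)=x\log x$, respectively to $f(x)=x\log x$ applied to the pair $(\rho,(\rho+\sigma)/2)$. Finiteness holds because $\rho\le 2\cdot\frac{\rho+\sigma}{2}$, so all hockey stick terms vanish for $t\ge2$.
\end{itemize}

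\emph{(v)$\Rightarrow$(ii).} Write $D_f^{\meas,\hs}$ via \eqref{D-f-q-hs}. The affine terms are preserved since $\Phi$ is trace-preserving. By \eqref{mono-meas-vN}, each integrand satisfies $D^\meas_{(\id-t)_\pm}(\Phi(\rho)\|\Phi(\sigma))\le D^\meas_{(\id-t)_\pm}(\rho\|\sigma)$. Equality of the finite totals then forces equality $df'$-a.e. Both sides are continuous in $t$ (convexity of $t\mapsto(\rho-t\sigma)_+(\egy)$), and $\supp(df')=(0,+\infty)$, so equality holds for every $t>0$; here we use that $D_-$ and $D_+$ differ by the preserved affine term. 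The same argument applied to $(\rho,\frac{\rho+\sigma}{2})$ gives (vi)$\Rightarrow$ equality of $D^\meas_{(\id-t)_+}(\rho\|\omega)$, with $\omega=\frac{\rho+\sigma}{2}$, for all $t$. Here finiteness follows from $f(0^+)<\infty$ and the interval restriction $t<2$. Since $(\rho-t\omega)_+=\bigl((1-t/2)\rho-(t/2)\sigma\bigr)_+$ and positive homogeneity holds, we can rescale: for $t<2$ the parameter $s=t/(2-t)$ ranges over all of $(0,+\infty)$, which yields (ii).

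\emph{(ii)$\Rightarrow$(vii).} Condition (ii) for $(\rho,\sigma)$ implies the corresponding condition for $(\rho,\omega)$ by the same rescaling, run backwards; for $t\ge2$ both sides vanish. Hence $D_f^{\meas,\hs}(\rho\|\omega)$ is preserved for $f=x\log x$. By Corollary~\ref{cor:V.19}, which uses injectivity of $\M$ and $\N$, this quantity equals the Umegaki/Araki relative entropy $D$ on both algebras, giving (vii). Finiteness of $D(\rho\|\omega)\le\log 2\,\rho(\egy)$ is automatic.

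\emph{(vii)$\Rightarrow$(i).} Invoke Petz's sufficiency theorem for $2$-positive trace-preserving maps in the von Neumann algebra setting \cite{Petz1986,Petz1988} (see also \cite{Hiai2021Div,HiaiMosonyi2017}). Preservation of the finite relative entropy $D(\rho\|\omega)$ yields a recovery map (the Petz map for $\omega$) with $\Psi\Phi(\omega)=\omega$ and $\Psi\Phi(\rho)=\rho$. Then $\Psi\Phi(\sigma)=2\omega-\rho$ is recovered by linearity, which is (i).

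The main obstacle we expect is this last step: it requires a version of Petz's theorem in which the reference functional $\omega$ need not be faithful and the states are non-normalized. The key point is that $\rho\le2\omega$, so the support of $\rho$ is contained in that of $\omega$, and restricting to the reduced algebra $\{\omega>0\}\M\{\omega>0\}$ should handle the non-faithfulness. The use of injectivity is confined to Corollary~\ref{cor:V.19}, so its hypotheses need checking for $\N$ as well.
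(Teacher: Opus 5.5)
Your architecture is the paper's. It uses the DPI for (i)$\Rightarrow$(iv), and $df'$-a.e.\ equality plus continuity in $t$ and $\supp(df')=(0,+\infty)$ for (v)$\Rightarrow$(ii). It uses the rescaling \eqref{D-meas++} to pass between $(\rho,\sigma)$ and $(\rho,\omega)$ with $\omega=\frac{\rho+\sigma}{2}$, Corollary~\ref{cor:V.19} for (ii)$\Rightarrow$(vii), and a Petz-type sufficiency theorem for (vii)$\Rightarrow$(i); the paper cites \cite[Theorem 6.19]{Hiai2021Div} for the latter.

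\textbf{The gap is in (iv)$\Rightarrow$(v).} Condition (v) requires $D_f^{\meas,\hs}(\rho\|\sigma)<+\infty$ for the pair $(\rho,\sigma)$ itself. Your finiteness argument ($\rho\le 2\omega$) only concerns the pair $(\rho,\omega)$. For $f(x)=x\log x$ the value can be $+\infty$. If $\rho(\egy-\{\sigma>0\})>0$, then for every $t$
\[
D^\meas_{(\id-t)_+}(\rho\|\sigma)\ge(\rho-t\sigma)(\egy-\{\sigma>0\})=\rho(\egy-\{\sigma>0\}),
\]
while $df'(t)=t^{-1}dt$ has infinite mass on $(a,+\infty)$. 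Even when $\{\rho>0\}\le\{\sigma>0\}$ the value can be infinite in infinite dimensions. So, as written, nothing implies (v).

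The repair is exactly the point of the paper's preliminary remark. Choose an $f$ with full support whose hockey stick divergence is finite for \emph{all} pairs, e.g.\ $f(t)=(1+t)^{-1}$. Then $df'(t)=2(1+t)^{-3}\,dt$. The bounds $D^\meas_{(\id-t)_-}(\rho\|\sigma)\le t\sigma(\egy)$ and $D^\meas_{(\id-t)_+}(\rho\|\sigma)\le\rho(\egy)$ make both integrals in \eqref{D-f-q-hs} finite.

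\textbf{A smaller point: (iv)$\Rightarrow$(vi) is not a specialization.} Condition (iv) is a statement about $(\rho,\sigma)$, while (vi) is about $(\rho,\omega)$. You have two ways to close this:
\begin{itemize}
\item Route it through (ii) and the backward rescaling, which you only introduce later in (ii)$\Rightarrow$(vii).
\item Prove (i)$\Rightarrow$(vi) directly, as the paper does. Reversibility for $\rho,\sigma$ gives reversibility for $\rho,\omega$ by linearity, and the DPI for $\Phi$ and for $\Psi$ yields the equality.
\end{itemize}
With these two fixes your argument coincides with the paper's proof.
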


\begin{proof}
First, we give remarks about items (v) and (vi). As for (v), note that there is an
$f\in(\bR^{(0,+\infty)})_{\mathrm{\conv}}$ such that $D_f^{\meas,\hs}(\rho\|\sigma)<+\infty$ for all
$\rho,\sigma\in\M_{*,\ge0}$ and $\supp(df')=\bR$. For instance, consider $f(t):=(1+t)^{-1}$ for $t>0$. Then
$f'(t)=-(1+t)^{-2}$ and $f''(t)=2(1+t)^{-3}$ for $t>0$, so that $df'(t)$ is equivalent to $dt$ on $(0,+\infty)$ and
$D_f^{\meas,\hs}(\rho\|\sigma)<+\infty$ for all $\rho,\sigma$. As for (vi), since
$D_{(\id-t)_-}^\meas(\rho\|{\rho+\sigma\over2})\le t\,{\rho+\sigma\over2}(\egy)$ for all $t>0$ and
$D_{(\id-t)_+}^\meas(\rho\|{\rho+\sigma\over2})=0$ for all $t\ge2$, we automatically
$D_f^{\meas,\hs}(\rho\|{\rho+\sigma\over2})<+\infty$. Indeed, for any $\ep\in(0,1)$, since
$\int_{(\ep,1)}t\,df'(t)=f(\ep)-\ep f'(\ep)-f(1)+f'(1)$, it follows that $\int_{(0,1)}t\,df'(t)<+\infty$ if and only if
$\lim_{\ep\searrow0}\{f(\ep)-\ep f'(\ep)\}<+\infty$, which is equivalent to $f(0^+)<+\infty$.

Since (i) is equivalent to the same for $\rho,{\rho+\sigma\over2}$ instead of $\rho,\sigma$, (i)$\iff$(vii) is well
known; see, e.g., \cite[Theorem 6.19]{Hiai2021Div}. (i)$\implies$(iv)$\implies$(v) and (i)$\implies$(vi) are
obvious by the DPI in \eqref{DPI vN}. Since for any $t\in(0,+\infty)$,
\begin{align}
\|\rho-t\sigma\|&=D_{(\id-t)_+}^\meas(\rho\|\sigma)+D_{(\id-t)_-}^\meas(\rho\|\sigma), \nonumber\\
D_{(\id-t)_-}^\meas(\rho\|\sigma)&=D_{(\id-t)_+}^\meas(\rho\|\sigma)-(\rho-t\sigma)(\egy),
\label{D-meas-+}
\end{align}
(ii)$\iff$(iii) is obvious. Moreover, (ii)$\iff$(iv) is immediately seen from \eqref{D-meas-+} and the definition
\eqref{D-f-q-hs}.

Next, let us prove that (v)$\implies$(ii). With any $a\in(0,+\infty)$ we have
\begin{align*}
D_f^{\meas,\hs}(\rho\|\sigma)
&=f(a)\sigma(\egy)+f'(a^+)(\rho-a\sigma)(\egy)
+\int_{(0,a]}D_{(\id-t)_-}^\meas(\rho\|\sigma)\,df'(t) \\
&\qquad+\int_{(a,+\infty)}D_{(\id-t)_+}^\meas(\rho\|\sigma)\,df'(t)<+\infty, \\
D_f^{\meas,\hs}(\rho\circ\Phi\|\sigma\circ\Phi)
&=f(a)\sigma(\egy)+f'(a^+)(\rho-a\sigma)(\egy)
+\int_{(0,a]}D_{(\id-t)_-}^\meas(\rho\circ\Phi\|\sigma\circ\Phi)\,df'(t) \\
&\qquad+\int_{(a,+\infty)}D_{(\id-t)_+}^\meas(\rho\circ\Phi\|\sigma\circ\Phi)\,df'(t).
\end{align*}
By \eqref{DPI vN} this implies that
\begin{align*}
D_{(\id-t)_-}^\meas(\Phi(\rho)\|\Phi(\sigma))&=D_{(\id-t)_-}^\meas(\rho\|\sigma)
\ \ \mbox{a.e.\ $t\in(0,a]$ with respect to $df'$}, \\
D_{(\id-t)_+}^\meas(\Phi(\rho)\|\Phi(\sigma))&=D_{(\id-t)_+}^\meas(\rho\|\sigma)
\ \ \mbox{a.e.\ $t\in(a,+\infty)$ with respect to $df'$}.
\end{align*}
From \eqref{D-meas-+} and $\supp(df')=\bR$ these imply (ii) since
$t\mapsto D_{(\id-t)_\pm}^\meas(\rho\|\sigma)$ are continuous on $(0,+\infty)$. Hence (v)$\implies$(ii) has
been shown.

Furthermore, since for any $t\in(0,2)$,
\begin{equation}
\begin{aligned}\label{D-meas++}
\Bigl(\rho-t\,{\rho+\sigma\over2}\Bigr)_+
&={2-t\over2}\biggl(\rho-{t\over2-t}\,\sigma\biggr)_+, \\
\Bigl(\rho\circ\Phi-t\,{\rho+\sigma\over2}\circ\Phi\Bigr)_+
&={2-t\over2}\Bigl(\rho\circ\Phi-{t\over2-t}\,\sigma\circ\Phi\Bigr)_+,
\end{aligned}
\end{equation}
it follows that (ii) is equivalent to the same for $\rho,{\rho+\sigma\over2}$. Hence (vi)$\implies$(ii) is similar
to (v)$\implies$(ii).

Finally, let us prove that (ii)$\implies$(vii). Since (ii) implies that (iv) holds for $\rho,{\rho+\sigma\over2}$,
we apply this to $f(x)=x\log x$ to have
\[
D_{x\log x}^{\meas,\hs}\Bigl(\Phi(\rho)\Big\|\Phi\Bigl({\rho+\sigma\over2}\Bigr)\Bigr)
=D_{x\log x}^{\meas,\hs}\Bigl(\rho\Big\|{\rho+\sigma\over2}\Bigr).
\]
Since $\M$ and $\N$ are injective, Corollary \ref{cor:V.19} gives
\begin{align*}
D_{x\log x}^{\meas,\hs}\Bigl(\Phi(\rho)\Big\|\Phi\Bigl({\rho+\sigma\over2}\Bigr)\Bigr)
&=D\Bigl(\Phi(\rho)\Big\|\Phi\Bigl({\rho+\sigma\over2}\Bigr)\Bigr), \\
D_{x\log x}^{\meas,\hs}\Bigl(\rho\Big\|{\rho+\sigma\over2}\Bigr)
&=D\Bigl(\rho\Big\|{\rho+\sigma\over2}\Bigr).
\end{align*}
Hence (ii)$\implies$(vii) has been shown.
\end{proof}

\begin{remark}\label{remark:V.22}
The assumption of $\M$ and $\N$ being injective is necessary only to use Corollary \ref{cor:V.19}.
Therefore, this assumption can be removed due to a recent paper \cite{daSilva2026} mentioned in
Remark \ref{remark:V.20}.
\end{remark}

\section{Regularized version of hockey stick R\'enyi $\alpha$-divergences}\label{sec:VI}

For each $\alpha\in(0,+\infty)\setminus\{1\}$ let $f_\alpha(x):=x^\alpha$ for $x\in(0,+\infty)$. For any
$\rho,\sigma\in\M_{*,\ge0}$ the \emph{measured hockey stick R\'enyi $\alpha$-quantity} is given by
\begin{align}
Q_\alpha^{\meas,\hs}(\rho\|\sigma)&:=D_{f_\alpha}^{\meas,\hs}(\rho\|\sigma) \nonumber\\
&\ =\alpha\rho(\egy)+(1-\alpha)\sigma(\egy) \nonumber\\
&\quad+\int_1^{+\infty}D_{(\id-t)_+}^\meas(\rho\|\sigma)\alpha(\alpha-1)t^{\alpha-2}\,dt
+\int_0^1D_{\id-t)_-}^\meas(\rho\|\sigma)\alpha(\alpha-1)t^{\alpha-2}\,dt \nonumber\\
&\ =\alpha\rho(\egy)+(1-\alpha)\sigma(\egy) \nonumber\\
&\quad+\int_1^{+\infty}\alpha(\alpha-1)t^{\alpha-2}D_{(\id-t)_+}^\meas(\rho\|\sigma)\,dt
+\int_0^1\alpha(\alpha-1)t^{\alpha-1}D_{(\id-t^{-1})_+}^\meas(\sigma\|\rho)\,dt \nonumber\\
&\hskip6cm
\Bigl(\mbox{since $D_{(\id-t)_-}^\meas(\rho\|\sigma)=tD_{(\id-t^{-1})_+}^\meas(\sigma\|\rho)$}\Bigr)
\nonumber\\
&\ =\alpha\rho(\egy)+(1-\alpha)\sigma(\egy) \nonumber\\
&\quad+\alpha(\alpha-1)\int_1^{+\infty}\bigl\{t^{\alpha-2}D_{(\id-t)_+}^\meas(\rho\|\sigma)
+t^{-1-\alpha}D_{(\id-t)_+}^\meas(\sigma\|\rho)\bigr\}\,dt. \label{(VI.162)}
\end{align}
When $\alpha>1$, since $f_\alpha(0^+)=f_\alpha'(0^+)=0$, by letting $a\searrow0$ in
\eqref{eq:interval restriction2} the quantity $Q_\alpha^{\meas,\hs}(\rho\|\sigma)$ can be also expressed as
\begin{align}\label{(VI.163)}
Q_\alpha^{\meas,\hs}(\rho\|\sigma)
=\alpha(\alpha-1)\int_0^{+\infty}t^{\alpha-2}D_{(\id-t)_+}^\meas(\rho\|\sigma)\,dt.
\end{align}

Then the \emph{measured hockey stick R\'enyi $\alpha$-divergences} is given by
\[
D_\alpha^{\meas,\hs}(\rho\|\sigma):={1\over\alpha-1}\log Q_\alpha^{\meas,\hs}(\rho\|\sigma).
\]
Here we consider the \emph{regularized version} of $D_\alpha^{\meas,\hs}(\rho\|\sigma)$, that is,
\[
\overline D_\alpha^{\meas,\hs}(\rho\|\sigma)
:=\lim_{n\to\infty}{1\over n}\,D_\alpha^{\meas,\hs}(\rho^{\otimes n}\|\sigma^{\otimes n})
\]
whenever the limit exists.

\subsection{Haagerup's reduction theorem}\label{sec:VI.A}

Let $\M$ be a general $\sigma$-finite von Neumann algebra. Let $\omega$ be a faithful normal state of $\M$
and $\sigma_t^\omega$ ($t\in\bR$) be the associated modular automorphism group. Consider the discrete
additive group $G:=\bigcup_{n\in\bN}2^{-n}\bZ$ and define $\hat\M:=\M\rtimes_{\sigma^\omega}G$, the
crossed product of $\M$ by the action $\sigma^\omega|_G$. Then the dual weight $\hat\omega$ is a faithful
normal state of $\hat\M$, and we have $\hat\omega=\omega\circ E_\M$, where $E_\M:\hat\M\to\M$ is the
canonical conditional expectation (see, e.g., \cite[Sec.~8.1]{Hiai2021}, also \cite[Sec.~2.5]{FGR2026}).
In this setting, Haagerup's reduction theorem is summarized as follows:

There exists an increasing sequence $\{\M_n\}_{n\ge1}$ of von Neumann subalgebras of $\hat\M$,
containing the unit of $\hat\M$, such that the following hold:
\begin{itemize}
\item[(i)] Each $\M_n$ is finite with a faithful normal tracial state $\tau_n$.
\item[(ii)] $\bigl(\bigcup_{n\ge1}\M_n\bigr)''=\hat\M$.
\item[(iii)] For every $n$ there exists a (unique) faithful normal conditional expectation
$E_{\M_n}:\hat\M\to\M_n$ satisfying
\[
\hat\omega\circ E_{\M_n}=\hat\omega,\qquad
\sigma_t^{\hat\omega}\circ E_{\M_n}=E_{\M_n}\circ\sigma_t^{\hat\omega},\quad t\in\bR.
\]
Moreover, for any $A\in\hat\M$, $E_{\M_n}(A)\to A$ in the $\sigma$-strong topology.
\end{itemize}

Furthermore, for any $\rho,\sigma\in\M_*^+$ define $\hat\rho:=\rho\circ E_\M$ and
$\hat\sigma:=\sigma\circ E_\M$. Then $\hat\rho\circ E_{\M_n}\to\hat\rho$ in the norm.
If either $0<\alpha<1$ and $z\ge\max\{\alpha,1-\alpha\}$, or $\alpha>1$ and $z\ge\max\{\alpha/2,\alpha-1\}$,
then
\[
D_{\alpha,z}(\rho\|\sigma)=D_{\alpha,z}(\hat\rho\|\hat\sigma)
=\lim_{n\to\infty}D_{\alpha,z}(\hat\rho|_{\M_n}\|\hat\sigma|_{\M_n})\quad\mbox{increasingly},
\]
where $D_{\alpha,z}$ is the $\alpha$-$z$-R\'enyi divergence (see \cite[Lemma B.2]{HiaiJencova2024}).
In particular, if $\alpha\in[1/2,+\infty)\setminus\{1\}$, then
\begin{align}\label{(VI.164)}
\widetilde D_\alpha(\rho\|\sigma)=\widetilde D_\alpha(\hat\rho\|\hat\sigma)
=\lim_{n\to\infty}\widetilde D_\alpha(\hat\rho|_{\M_n}\|\hat\sigma|_{\M_n})\quad\mbox{increasingly},
\end{align}
where $\widetilde D_\alpha$ is the sandwiched R\'enyi $\alpha$-divergence. If $\alpha\in(0,2]\setminus\{1\}$,
then
\begin{align}\label{(VI.165)}
D_\alpha^{\mathrm{P}}(\rho\|\sigma)=D_\alpha^{\mathrm{P}}(\hat\rho\|\hat\sigma)
=\lim_{n\to\infty}D_\alpha^{\mathrm{P}}(\hat\rho|_{\M_n}\|\hat\sigma|_{\M_n})\quad\mbox{increasingly},
\end{align}
where $D_\alpha^{\mathrm{P}}$ is the standard (Petz-type) R\'enyi $\alpha$-divergence.

\subsection{Regularized measured R\'enyi $\alpha$-divergences}\label{sec:VI.B}

For any $\alpha\in[1/2,+\infty)\setminus\{1\}$, the measured R\'enyi $\alpha$-divergence and its regularized
version are defined by
\begin{align*}
D_\alpha^\meas(\rho\|\sigma)
&:=\sup\{D_\alpha(\fM(\rho)\|\fM(\sigma)):\mbox{$\fM$ is a measurement in $\M$}\}, \\
\overline D_\alpha^\meas(\rho\|\sigma)
&:=\sup_{n\in\bN}{1\over n}\,D_\alpha^\meas(\rho^{\otimes n}\|\sigma^{\otimes n})
=\lim_{n\to\infty}{1\over n}\,D_\alpha^\meas(\rho^{\otimes n}\|\sigma^{\otimes n}).
\end{align*}

The next lemma was proved in \cite[Theorem 3.7]{MO} in the finite dimensional case and recently
generalized in \cite[Theorem 5.5]{FGR2026} to the semi-finite von Neumann algebra case.

\begin{lemma}\label{lemma:VI.1}
Let $\alpha>1$ and $\rho,\sigma\in\M_{*,\ge0}$ with $s(\rho)\le s(\sigma)$, where $s(\rho)$ is the support
projection of $\rho$. Then
\[
\widetilde D_\alpha(\rho\|\sigma)=\overline D_\alpha^\meas(\rho\|\sigma).
\]
\end{lemma}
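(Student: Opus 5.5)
We prove the two inequalities separately. The inequality $\overline D_\alpha^\meas\le\widetilde D_\alpha$ holds in full generality. For the reverse inequality we use Haagerup's reduction theorem (Section~\ref{sec:VI.A}) to pass to finite von Neumann algebras, where the semi-finite result \cite[Theorem 5.5]{FGR2026} applies.

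\emph{Step 1 (easy direction).} A measurement $\fM$ in $\M^{\otimes n}$ is a positive unital normal map into a commutative algebra, so its predual is a (completely) positive trace-preserving map. By the data processing inequality for $\widetilde D_\alpha$, $\alpha\ge 1/2$, we get $D_\alpha(\fM(\rho^{\otimes n})\|\fM(\sigma^{\otimes n}))\le \widetilde D_\alpha(\rho^{\otimes n}\|\sigma^{\otimes n})$. Additivity of the sandwiched divergence under tensor products gives $\widetilde D_\alpha(\rho^{\otimes n}\|\sigma^{\otimes n})=n\widetilde D_\alpha(\rho\|\sigma)$. Dividing by $n$ and taking the supremum yields $\overline D_\alpha^\meas(\rho\|\sigma)\le\widetilde D_\alpha(\rho\|\sigma)$.

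\emph{Step 2 (reduction to the $\sigma$-finite case).} Let $e:=s(\sigma)$; then $s(\rho)\le e$. Both $\widetilde D_\alpha$ and the measured quantities are unchanged if $\rho,\sigma$ are regarded as functionals on $e\M e$: measurements in $e\M e$ extend to $\M$ by adding $\egy-e$ to one outcome, and conversely a measurement in $\M$ compresses to one in $e\M e$ with the same outcome probabilities. Since $e\M e$ is $\sigma$-finite, we may assume $\M$ is $\sigma$-finite with $\sigma$ faithful.

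\emph{Step 3 (Haagerup reduction).} Set $\hat\rho=\rho\circ E_\M$, $\hat\sigma=\sigma\circ E_\M$, $\rho_n:=\hat\rho|_{\M_n}$ and $\sigma_n:=\hat\sigma|_{\M_n}$.
\begin{itemize}
\item The support condition passes to $\M_n$. If $p\in\M_n$ is a projection with $\sigma_n(p)=0$, then $\sigma(E_\M(p))=0$ with $E_\M(p)\ge0$. Hence $s(\sigma)E_\M(p)s(\sigma)=0$, and since $\rho=\rho(s(\sigma)\cdot s(\sigma))$, also $\rho_n(p)=0$. Thus $s(\rho_n)\le s(\sigma_n)$.
\item Each $\M_n$ is finite, so \cite[Theorem 5.5]{FGR2026} gives $\widetilde D_\alpha(\rho_n\|\sigma_n)=\overline D_\alpha^\meas(\rho_n\|\sigma_n)$.
\item Passing from $\M_n$ to $\hat\M$: a measurement in $\M_n^{\otimes k}\subset\hat\M^{\otimes k}$ is also a measurement in $\hat\M^{\otimes k}$, with the same outcome distributions for $\hat\rho^{\otimes k},\hat\sigma^{\otimes k}$. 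Hence $\overline D_\alpha^\meas(\rho_n\|\sigma_n)\le\overline D_\alpha^\meas(\hat\rho\|\hat\sigma)$.
\item Passing from $\hat\M$ to $\M$: $\hat\rho^{\otimes k}=\rho^{\otimes k}\circ E_\M^{\otimes k}$, where $E_\M^{\otimes k}:\hat\M^{\otimes k}\to\M^{\otimes k}$ is a normal unital completely positive map. It sends a POVM $(M_i)$ in $\hat\M^{\otimes k}$ to a POVM $(E_\M^{\otimes k}(M_i))$ in $\M^{\otimes k}$ with identical outcome probabilities, for both $\hat\rho^{\otimes k}$ and $\hat\sigma^{\otimes k}$. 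Hence $\overline D_\alpha^\meas(\hat\rho\|\hat\sigma)\le\overline D_\alpha^\meas(\rho\|\sigma)$.
\end{itemize}
Combining these with \eqref{(VI.164)}, we obtain
\[
\widetilde D_\alpha(\rho\|\sigma)=\lim_{n\to\infty}\widetilde D_\alpha(\rho_n\|\sigma_n)\le\overline D_\alpha^\meas(\rho\|\sigma).
\]
Together with Step~1 this proves the claim.

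\emph{Main obstacle.} The delicate point is the technical justification in Step 3. One must check that the normal tensor power $E_\M^{\otimes k}$ of the conditional expectation exists on the spatial von Neumann tensor products, and that the notion of measurement in \cite{FGR2026} coincides with the one used here, so that the inequality chain is valid. If problems arise, one fallback is to avoid tensoring $E_\M$ altogether: bound $\overline D_\alpha^\meas(\hat\rho\|\hat\sigma)\le\widetilde D_\alpha(\hat\rho\|\hat\sigma)=\widetilde D_\alpha(\rho\|\sigma)$ by Step~1 applied in $\hat\M$. This, however, only yields the trivial direction, so the tensor-power argument is genuinely needed.
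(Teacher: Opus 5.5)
Your proposal is correct and follows the paper's proof essentially step for step. Both use Haagerup reduction, apply \cite[Theorem 5.5]{FGR2026} on the finite algebras $\M_n$, transport finite-outcome measurements through $E_\M^{\otimes k}\circ\iota_n^{\otimes k}$ (you split this composition into two steps), and conclude with \eqref{(VI.164)}. Your explicit reduction to the $\sigma$-finite case via $e=s(\sigma)$ and your check that $s(\rho_n)\le s(\sigma_n)$ are correct details that the paper leaves implicit.
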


\begin{proof}
Since $\overline D_\alpha^\meas(\rho\|\sigma)\le\widetilde D_\alpha(\rho\|\sigma)$ is clear, it suffices to show
the reverse inequality. In the situation of Section \ref{sec:VI.A} let $\hat\rho_n:=\hat\rho|_{\M_n}$ and
$\hat\sigma_n:=\hat\sigma|\M_n$. Since $\M_n$ is finite, from \cite[Theorem 5.5]{FGR2026} we have
\begin{align}\label{(VI.166)}
\widetilde D_\alpha(\hat\rho_n\|\hat\sigma_n)
=D_\alpha^\meas(\hat\rho_n\|\hat\sigma_n)
=\sup_{k\in\bN,\,\fM}{1\over k}\,D_\alpha(\fM(\hat\rho_n^{\otimes k})\|\fM(\hat\sigma_n^{\otimes k})),
\end{align}
where $\fM$ is taken over measurements in $\M_n^{\otimes k}$. Note that measurements with finite outcomes
are sufficient to take the supremum in \eqref{(VI.166)}. Such a measurement $\fM$ in $\M_n^{\otimes k}$ is given
as the predual map of a unital positive map $\Phi:\bC^d\to\M_n^{\otimes k}$ in such a way that
$\fM(\hat\rho_n^{\otimes k})=\hat\rho_n^{\otimes k}\circ\Phi$. Consider the
composition
\[
\Psi:\bC^d\overset{\Phi}\longrightarrow\M_n^{\otimes k}\overset{\iota_n^{\otimes k}}
\longhookrightarrow\hat\M^{\otimes k}\overset{E_\M^{\otimes k}}\longrightarrow\M^{\otimes k},
\]
where $\iota_n$ is the imbedding $\M_n\hookrightarrow\hat\M$. This $\Psi$ is a unital positive map. Since
\begin{align*}
\rho^{\otimes k}\circ\Psi&=\rho^{\otimes k}\circ E_\M^{\otimes k}\circ\iota_n^{\otimes k}\circ\Phi
=(\rho\circ E_\M\circ\iota_n)^{\otimes k}\circ\Phi \\
&=(\hat\rho|_{\M_n})^{\otimes k}\circ\Phi
=\hat\rho_n^{\otimes k}\circ\Phi=\fM(\hat\rho_n^{\otimes k})
\end{align*}
and similarly $\sigma^{\otimes k}\circ\Psi=\fM(\hat\sigma_n^{\otimes k})$, we have
\[
D_\alpha(\fM(\hat\rho_n^{\otimes k})\|\fM(\hat\sigma_n^{\otimes k}))
=D_\alpha(\rho^{\otimes k}\circ\Psi\|\sigma^{\otimes k}\circ\Psi)
\le D_\alpha^\meas(\rho^{\otimes k}\|\sigma^{\otimes k}).
\]
Therefore, by \eqref{(VI.166)}
\begin{align}\label{(VI.167)}
\widetilde D_\alpha(\hat\rho_n\|\hat\sigma_n)
\le\sup_{k\in\bN}{1\over k}\,D_\alpha^\meas(\rho^{\otimes k}\|\sigma^{\otimes k})
=\overline D_\alpha^\meas(\rho\|\sigma).
\end{align}
By the martingale convergence we have
$\widetilde D_\alpha(\hat\rho_n\|\hat\sigma_n)\nearrow\widetilde D_\alpha(\hat\rho\|\hat\sigma)$ as $n\to\infty$.
From this and \eqref{(VI.167)} together with \eqref{(VI.164)} we have
$\widetilde D_\alpha(\rho\|\sigma)\le\overline D_\alpha^\meas(\rho\|\sigma)$, as desired.
\end{proof}

\subsection{$\overline D_\alpha^{\meas,\hs}(\rho\|\sigma)=\widetilde D_\alpha(\rho\|\sigma)$ for
$\alpha\in(1,+\infty)$}\label{sec:VI.C}

The next result is the $\alpha>1$ case of \cite[Theorem 3.2]{HircheTomamichel_integral}. The proof below is
similar to that in \cite{HircheTomamichel_integral}.

\begin{theorem}\label{theorem:VI.2}
Let $\alpha>1$ and $\rho,\sigma\in\M_{*,\ge0}$. Assume that either $\widetilde D_\beta(\rho\|\sigma)<+\infty$
for some $\beta\in(\alpha,+\infty)$ or $\widetilde D_\alpha(\rho\|\sigma)=+\infty$. Then
$\overline D_\alpha^{\meas,\hs}(\rho\|\sigma)=
\lim_{n\to\infty}{1\over n}\,D_\alpha^{\meas,\hs}(\rho^{\otimes n}\|\sigma^{\otimes n})$ exists and
\[
\overline D_\alpha^{\meas,\hs}(\rho\|\sigma)=\widetilde D_\alpha(\rho\|\sigma).
\]
\end{theorem}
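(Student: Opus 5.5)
Throughout, $\alpha>1$ and $Q_\alpha^{\meas,\hs}(\rho\|\sigma)=\alpha(\alpha-1)\int_0^{+\infty}t^{\alpha-2}D_{(\id-t)_+}^\meas(\rho\|\sigma)\,dt$ by \eqref{(VI.163)}. The plan is to prove matching lower and upper bounds on $\frac1n D_\alpha^{\meas,\hs}(\rho^{\otimes n}\|\sigma^{\otimes n})$.

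\emph{Lower bound.} By \eqref{hs fdiv order vN}, $D_\alpha^{\meas}\le D_\alpha^{\meas,\hs}$. Here one must check that the $\alpha$-power convention is compatible with taking $\frac1{\alpha-1}\log$, i.e.\ that it preserves order for $\alpha>1$, which holds since $f_\alpha$ is convex. Applying this to $\rho^{\otimes n},\sigma^{\otimes n}$ and passing to the limit gives
\[
\liminf_n \tfrac1n D_\alpha^{\meas,\hs}(\rho^{\otimes n}\|\sigma^{\otimes n})\ge\overline D_\alpha^\meas(\rho\|\sigma).
\]
Lemma \ref{lemma:VI.1} identifies the right-hand side with $\widetilde D_\alpha(\rho\|\sigma)$ when $s(\rho)\le s(\sigma)$. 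If $s(\rho)\not\le s(\sigma)$, then $\widetilde D_\alpha=+\infty$. Also $D_{(\id-t)_+}^\meas(\rho^{\otimes n}\|\sigma^{\otimes n})\ge\rho^{\otimes n}(\egy-\{\sigma^{\otimes n}>0\})>0$ for all $t$, so $Q_\alpha^{\meas,\hs}=+\infty$ for every $n$ and the claim is trivial. This case also covers the hypothesis $\widetilde D_\alpha=+\infty$ whenever the support condition fails. If the supports are nested but $\widetilde D_\alpha=+\infty$, the lower bound alone already gives the limit $+\infty$.

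\emph{Upper bound.} Assume $\widetilde D_\beta(\rho\|\sigma)<+\infty$ for some $\beta>\alpha$. I would split the integral at a threshold $t=e^{n\lambda}$. For $t\le e^{n\lambda}$, use $D_{(\id-t)_+}^\meas\le\rho(\egy)^n$, which contributes at most $\alpha\,\rho(\egy)^n e^{n\lambda(\alpha-1)}$. For $t>e^{n\lambda}$, use a Markov/Chebyshev-type tail bound for sandwiched divergences:
\[
D_{(\id-t)_+}^\meas(\rho\|\sigma)\le t^{1-\beta}\widetilde Q_\beta(\rho\|\sigma).
\]
In finite dimensions this follows from $(\rho-t\sigma)(P)\le\ldots$ via the measured bound $D_{(\id-t)_+}^\meas\le D^{\meas}_{(\id-t)_+}$ of the classical measured pair, combined with the classical inequality $(x-ty)_+\le x^\beta y^{1-\beta}t^{1-\beta}$ summed over outcomes, and $Q_\beta^\meas\le\widetilde Q_\beta$. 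In the von Neumann setting, the same argument works outcome-wise for a projective measurement $\{P,\egy-P\}$ with $P=\{\rho-t\sigma>0\}$, using DPI of $\widetilde D_\beta$ under measurements. The tail then contributes at most
\[
\alpha(\alpha-1)\widetilde Q_\beta(\rho\|\sigma)^n\,\frac{e^{n\lambda(\alpha-\beta)}}{\beta-\alpha}.
\]
Optimizing $\lambda$ and taking $\frac1{n(\alpha-1)}\log$ gives
\[
\limsup_n\tfrac1nD_\alpha^{\meas,\hs}(\rho^{\otimes n}\|\sigma^{\otimes n})\le\min_\lambda\max\{\ldots\},
\]
which is only a mixture of $\widetilde D_\beta$ and the normalization. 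This crude bound is insufficient, so the main obstacle is sharpening it to $\widetilde D_\alpha$.

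\emph{Sharpening.} Following Hirche--Tomamichel, I would first apply a pinching or measurement reduction. Fix $m$ and a measurement $\fM_m$ on $\M^{\otimes m}$ that nearly attains $\widetilde D_\alpha$ and $\widetilde D_\beta$ simultaneously (in finite dimensions, pinching by $\sigma^{\otimes m}$). Then bound $Q_\alpha^{\meas,\hs}$ of the $n=km$ copies by the classical quantity for the measured pair, plus a correction that vanishes after normalization. For a classical i.i.d.\ pair the hockey-stick integral equals the classical $Q_\alpha$ exactly (Corollary \ref{cor:classical fdiv repr}), so the limit is $D_\alpha$ of the measured pair, which approaches $\widetilde D_\alpha$ as $m\to\infty$. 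The role of $\beta>\alpha$ is to control the tail $t\gtrsim e^{n(\widetilde D_\alpha+\delta)}$ uniformly, so that the inverse of the pinching does not spoil the large-$t$ region. In general von Neumann algebras, I would pass through Haagerup's reduction (\eqref{(VI.164)} together with the martingale convergence of Theorem \ref{theorem:V.16}) to reduce to the finite $\M_n$ with traces, where the semi-finite argument of \cite{FGR2026} supplies the needed asymptotic pinching.

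Combining both bounds yields existence of the limit and equality with $\widetilde D_\alpha(\rho\|\sigma)$.
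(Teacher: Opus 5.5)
Your lower bound is correct and is essentially the paper's argument: DPI of $D^{\meas}_{(\id-t)_+}$ under measurements gives $D^{\meas}_\alpha\le D^{\meas,\hs}_\alpha$ on $\rho^{\otimes n},\sigma^{\otimes n}$, and then Lemma~\ref{lemma:VI.1} applies. Your separate treatment of $s(\rho)\not\le s(\sigma)$, via $D^{\meas}_{(\id-t)_+}\ge\rho(\egy-\{\sigma>0\})>0$ and $\int_0^{+\infty}t^{\alpha-2}\,dt=+\infty$, is also right, and covers a point the paper leaves implicit.

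The genuine gap is the upper bound. Your ``sharpening'' paragraph is not an argument. Any reduction to a measured (classical) pair can only \emph{decrease} the hockey stick divergences, so it cannot bound $Q^{\meas,\hs}_\alpha(\rho^{\otimes n}\|\sigma^{\otimes n})$ from above. An operator-inequality route (pinching by $\sigma^{\otimes n}$ with polynomially many blocks) needs finite spectrum, which you do not have in a general, or even a finite, von Neumann algebra. Several further steps are also never made precise: a measurement nearly optimal for $\widetilde D_\alpha$ and $\widetilde D_\beta$ simultaneously, the ``correction that vanishes after normalization'', and the specific inequality you would take from \cite{FGR2026}.

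What is missing is already in your proposal. The tail bound you derived, $D^{\meas}_{(\id-t)_+}(\rho\|\sigma)\le t^{1-\gamma}\widetilde Q_\gamma(\rho\|\sigma)$ (via the test $\{\rho-t\sigma>0\}$ and DPI of $\widetilde D_\gamma$), holds for \emph{every} $\gamma>1$, not only $\gamma=\beta$.

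The paper takes $1<\alpha'<\alpha<\alpha''<\beta$ and uses $\gamma=\alpha'$ on $t\in(0,1)$ and $\gamma=\alpha''$ on $t\in(1,+\infty)$ in \eqref{(VI.163)}. The resulting weights $t^{\alpha-\alpha'-1}$ and $t^{\alpha-\alpha''-1}$ are integrable on those ranges. Multiplicativity of $\widetilde Q_\gamma$ then gives $Q^{\meas,\hs}_\alpha(\rho^{\otimes n}\|\sigma^{\otimes n})\le C_{\alpha',\alpha''}\max\{\widetilde Q_{\alpha'}(\rho\|\sigma),\widetilde Q_{\alpha''}(\rho\|\sigma)\}^n$. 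Since $\gamma\mapsto\log\widetilde Q_\gamma(\rho\|\sigma)$ is convex and finite on $(1,\beta]$, it is continuous at $\alpha$. Letting $\alpha'\nearrow\alpha$ and $\alpha''\searrow\alpha$ yields $\limsup_n\frac1n D^{\meas,\hs}_\alpha(\rho^{\otimes n}\|\sigma^{\otimes n})\le\widetilde D_\alpha(\rho\|\sigma)$; this is exactly where the hypothesis on $\beta$ enters.

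Your own threshold bound also suffices. Optimizing over $\lambda$ gives $\limsup_n\frac1n\log Q^{\meas,\hs}_\alpha(\rho^{\otimes n}\|\sigma^{\otimes n})\le\frac{\beta'-\alpha}{\beta'-1}\log\rho(\egy)+\frac{\alpha-1}{\beta'-1}\log\widetilde Q_{\beta'}(\rho\|\sigma)$ for every $\beta'\in(\alpha,\beta]$, where finiteness follows from the hypothesis at $\beta$. By the same continuity this tends to $\log\widetilde Q_\alpha(\rho\|\sigma)$ as $\beta'\searrow\alpha$. So declaring the crude bound insufficient was premature, and no pinching or Haagerup reduction is needed for the upper bound.
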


\begin{proof}
We need to prove that
\begin{align}
\liminf_{n\to\infty}{1\over n}\,D_\alpha^{\meas,\hs}(\rho^{\otimes n}\|\sigma^{\otimes n})
&\ge\widetilde D_\alpha(\rho\|\sigma), \label{(VI.168)}\\
\limsup_{n\to\infty}{1\over n}\,D_\alpha^{\meas,\hs}(\rho^{\otimes n}\|\sigma^{\otimes n})
&\le\widetilde D_\alpha(\rho\|\sigma). \label{(VI.169)}
\end{align}

To prove \eqref{(VI.168)}, the assumption in the theorem is unnecessary. For each $n\in\bN$ and any
measurement $\fM$ in $\M^{\otimes n}$, from the DPI of $D_{(\id-t)_+}^\meas$ it follows that
\[
D_{(\id-t)_+}^\meas(\rho^{\otimes n}\|\sigma^{\otimes n})
\ge D_{(\id-t)_+}^\meas(\fM(\rho^{\otimes n})\|\fM(\sigma^{\otimes n}))
=D_{(\id-t)_+}(\fM(\rho^{\otimes n})\|\fM(\sigma^{\otimes n}))
\]
for all $t\in(0,+\infty)$. Therefore, by the expression \eqref{(VI.163)},
\begin{align*}
Q_\alpha^{\meas,\hs}(\rho^{\otimes n}\|\sigma^{\otimes n})
&\ge\alpha(\alpha-1)\int_0^{+\infty}t^{\alpha-2}D_{(\id-t)_+}(\fM(\rho^{\otimes n})\|\fM(\sigma^{\otimes n}))\,dt \\
&=Q_\alpha(\fM(\rho^{\otimes n})\|\fM(\sigma^{\otimes n}))
\end{align*}
so that
\[
D_\alpha^{\meas,\hs}(\rho^{\otimes n}\|\sigma^{\otimes n})
\ge D_\alpha(\fM(\rho^{\otimes n})\|\fM(\sigma^{\otimes n}))
\]
for all measurement $\fM$ in $\M^{\otimes n}$. This gives that
\[
\liminf_{n\to\infty}{1\over n}\,D_\alpha^{\meas,\hs}(\rho^{\otimes n}\|\sigma^{\otimes n})
\ge\lim_{n\to\infty}{1\over n}\,D_\alpha^\meas(\rho^{\otimes n}\|\sigma^{\otimes n})
\]
so that \eqref{(VI.168)} holds thanks to Lemma \ref{lemma:VI.1}.

Next prove \eqref{(VI.169)}. As it is trivial when $\widetilde D_\alpha(\rho\|\sigma)=+\infty$, assume that
$\widetilde D_\beta(\rho\|\sigma)<+\infty$ for some $\beta\in(\alpha,+\infty)$. For simplicity write
$\rho_n:=\rho^{\otimes n}$, $\sigma_n:=\sigma^{\otimes n}$ and $P_{t,n}:=\{\rho_n-t\sigma_n>0\}$ for any
$n\in\bN$ and $t\in(0,+\infty)$. Since
\begin{align}
\rho_n(P_{t,n})&=\rho_n(P_{t,n})^\alpha\rho_n(P_{t,n})^{1-\alpha}
\le\rho_n(P_{t,n})^\alpha\{t\sigma_n(P_{t,n})\}^{1-\alpha} \nonumber\\
&=t^{1-\alpha}\bigl\{\rho_n(P_{t,n})^\alpha\sigma_n(P_{t,n})^{1-\alpha}
+\rho_n(\egy-P_{t,n})^\alpha\sigma_n(\egy-P_{t,n})^{1-\alpha}\bigr\} \label{(VI.170)}\\
&\le t^{1-\alpha}\widetilde Q_\alpha(\rho_n\|\sigma_n), \nonumber
\end{align}
we have
\begin{align}\label{(VI.171)}
D_{(\id-t)_+}^\meas(\rho_n\|\sigma_n)
=(\rho_n-t\sigma_n)(P_{t,n})\le\rho_n(P_{t,n})\le t^{1-\alpha}\widetilde Q_\alpha(\rho_n\|\sigma_n).
\end{align}
Let $1<\alpha'<\alpha<\alpha''<\beta$. Using \eqref{(VI.163)} we estimate
\begin{align}
Q_\alpha^{\meas,\hs}(\rho_n\|\sigma_n)
&=\alpha(\alpha-1)\int_0^{+\infty}t^{\alpha-2}D_{(\id-t)_+}^\meas(\rho\|\sigma)\,dt \nonumber\\
&\le\alpha(\alpha-1)\biggl[\int_0^1t^{\alpha-2}t^{1-\alpha'}\widetilde Q_{\alpha'}(\rho_n\|\sigma_n)\,dt
+\int_1^{+\infty}t^{\alpha-2}t^{1-\alpha''}\widetilde Q_{\alpha''}(\rho_n\|\sigma_n)\,dt\biggr] \nonumber\\
&=\alpha(\alpha-1)\biggr[\biggl(\int_0^1t^{\alpha-\alpha'-1}\,dt\biggr)\widetilde Q_{\alpha'}(\rho_n\|\sigma_n)
+\biggl(\int_1^{+\infty}t^{\alpha-\alpha''-1}\,dt\biggr)\widetilde Q_{\alpha''}(\rho_n\|\sigma_n)\biggr]
\nonumber\\
&=\alpha(\alpha-1)\biggl[{1\over\alpha-\alpha'}\widetilde Q_{\alpha'}(\rho\|\sigma)^n
+{1\over\alpha''-\alpha}\widetilde Q_{\alpha''}(\rho\|\sigma)^n\biggr] \nonumber\\
&\le\alpha(\alpha-1)\biggl({1\over\alpha-\alpha'}+{1\over\alpha''-\alpha}\biggr)
\Bigl[\max\bigl\{\widetilde Q_{\alpha'}(\rho\|\sigma),\widetilde Q_{\alpha''}(\rho\|\sigma)\bigr\}\Bigr]^n.
\label{(VI.172)}
\end{align}
Therefore, we have
\begin{align*}
\limsup_{n\to\infty}{1\over n}\,D_\alpha^{\meas,\hs}(\rho_n\|\sigma_n)
&\le{1\over\alpha-1}\log\Bigl[\max\bigl\{\widetilde Q_{\alpha'}(\rho\|\sigma),
\widetilde Q_{\alpha''}(\rho\|\sigma)\bigr\}\Bigr] \\
&=\max\bigl\{\widetilde D_{\alpha'}(\rho\|\sigma),\widetilde D_{\alpha''}(\rho\|\sigma)\bigr\}.
\end{align*}
Now recall (\cite[Lemma 8]{BST}, \cite[Proposition 3.7]{Jencova_NCLp}) that
$\alpha\mapsto\widetilde D_\alpha(\rho\|\sigma)$ is monotone increasing and convex on $(1,+\infty)$ (since
$\alpha\mapsto\widetilde Q_\alpha(\rho\|\sigma)$ is log-convex). From the assumption that
$\widetilde D_\beta(\rho\|\sigma)<+\infty$, it follows that $\alpha\mapsto\widetilde D_\alpha(\rho\|\sigma)$ is
continuous on $(1,\beta)$. Thus letting $\alpha'\nearrow\alpha$ and $\alpha''\searrow\alpha$ yields \eqref{(VI.169)}.
\end{proof}

\begin{remark}\label{remark:VI.3}
The second part of the proof of Theorem \ref{theorem:VI.2} does not work when
$\widetilde D_\alpha(\rho\|\sigma)<+\infty$ and $\widetilde D_\beta(\rho\|\sigma)=+\infty$ for all $\beta>\alpha$.
It is written in \cite[Lemma 8]{BST} that the function $\alpha\mapsto\widetilde D_\alpha(\rho\|\sigma)$ is
continuous and monotonically increasing on $(1,+\infty)$ since it is convex on $(1,+\infty)$. But convexity does
not imply continuity when the function takes the $+\infty$ value.
\end{remark}

The test-measured and the projective test-measured R\'enyi $\alpha$-divergences and their regularized
versions are defined \cite{HiaiMosonyi_testmeasured,sc_vN} by
\begin{align*}
D_\alpha^{\mathrm{test}}(\rho\|\sigma)
&:=\sup\{D_\alpha((\rho(T),\rho(\egy-T))\|(\sigma(T),\sigma(\egy-T))):T\in\M,\,0\le T\le\egy\}, \\
\overline D_\alpha^{\mathrm{test}}(\rho\|\sigma)
&:=\sup_{n\in\bN}{1\over n}\,D_\alpha^{\mathrm{test}}(\rho^{\otimes n}\|\sigma^{\otimes n}), \\
D_\alpha^\proj(\rho\|\sigma)
&:=\sup\{D_\alpha((\rho(P),\rho(P^\perp))\|(\sigma(P),\sigma(P^\perp))):\mbox{$P$ is a projection in $\M$}\}, \\
\overline D_\alpha^\proj(\rho\|\sigma)
&:=\sup_{n\in\bN}{1\over n}\,D_\alpha^\proj(\rho^{\otimes n}\|\sigma^{\otimes n}).
\end{align*}
For $\rho,\sigma\in\B(\hil)_{\ge0}$ with $\rho^0\le\sigma^0$ in the finite dimensional case, it was shown in
\cite[Corollary 4.6]{MO} that for $\alpha>1$,
\[
\widetilde D_\alpha(\rho\|\sigma)
=\lim_{n\to\infty}{1\over n}\,D_\alpha^\proj(\rho^{\otimes n}\|\sigma^{\otimes n}),
\]
more strongly than $\overline D_\alpha^\proj(\rho\|\sigma)=\widetilde D_\alpha(\rho\|\sigma)$. When $\M$ is
an injective von Neumann algebra, it was shown in \cite[Proposition 3.13]{sc_vN} that for $\alpha>1$,
\[
\widetilde D_\alpha(\rho\|\sigma)
=\lim_{n\to\infty}{1\over n}\,D_\alpha^{\mathrm{test}}(\rho^{\otimes n}\|\sigma^{\otimes n}),
\]
more strongly than $\overline D_\alpha^{\mathrm{test}}(\rho\|\sigma)=\widetilde D_\alpha(\rho\|\sigma)$.
The next corollary extends $\overline D_\alpha^\proj(\rho\|\sigma)=\widetilde D_\alpha(\rho\|\sigma)$ to the
general von Neumann algebra case.

\begin{cor}\label{cor:VI.4}
For any $\alpha>1$ and $\rho,\sigma\in\M_{*,\ge0}$,
\[
\overline D_\alpha^\proj(\rho\|\sigma)=\overline D_\alpha^{\mathrm{test}}(\rho\|\sigma)
=\overline D_\alpha^\meas(\rho\|\sigma)=\widetilde D_\alpha(\rho\|\sigma).
\]
\end{cor}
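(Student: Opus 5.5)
We prove the chain of inequalities
\[
\widetilde D_\alpha(\rho\|\sigma)\le\overline D_\alpha^\proj(\rho\|\sigma)\le\overline D_\alpha^{\mathrm{test}}(\rho\|\sigma)\le\overline D_\alpha^\meas(\rho\|\sigma)\le\widetilde D_\alpha(\rho\|\sigma).
\]
The middle two inequalities are immediate. A projection is a test, and a test $T$ defines a two-outcome measurement $(T,\egy-T)$. Both inequalities therefore already hold at each finite $n$ and pass to the suprema over $n$. The last inequality is the data processing inequality for $\widetilde D_\alpha$, $\alpha>1$, under measurements (unital positive maps into commutative algebras). Applying it at each $n$ and using additivity $\widetilde D_\alpha(\rho^{\otimes n}\|\sigma^{\otimes n})=n\widetilde D_\alpha(\rho\|\sigma)$ gives it after dividing by $n$.

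Everything thus reduces to $\widetilde D_\alpha(\rho\|\sigma)\le\overline D_\alpha^\proj(\rho\|\sigma)$. We first dispose of the case $s(\rho)\not\le s(\sigma)$. Take $P:=\egy-s(\sigma)$. Then $\sigma(P)=0<\rho(P)$, so the classical $D_\alpha$ of the binary distributions is $+\infty$ for $\alpha>1$. Hence $D_\alpha^\proj(\rho\|\sigma)=+\infty=\widetilde D_\alpha(\rho\|\sigma)$.

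Assume next $s(\rho)\le s(\sigma)$. The key idea is to reuse the proof of \eqref{(VI.168)} in Theorem \ref{theorem:VI.2}, with the Neyman--Pearson projections $P_{t,n}=\{\rho_n-t\sigma_n>0\}$ in place of general measurements. We have $D_{(\id-t)_+}^\meas(\rho_n\|\sigma_n)=(\rho_n-t\sigma_n)(P_{t,n})$, and each single projection only certifies one value of $t$. So we discretize. Fix $n$ and a grid $t_k=e^{k\delta}$, $k\in\bZ$, restricted to a window $|k|\le K$ as justified below. We then form the projection from a chosen $P_{t_k,n}$ on $\rho^{\otimes nm}$ via tensor powers, or more simply use the following elementary binary bound. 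For a projection $P$,
\[
Q_\alpha\bigl((\rho(P),\rho(P^\perp))\|(\sigma(P),\sigma(P^\perp))\bigr)\ge\rho(P)^\alpha\sigma(P)^{1-\alpha}\ge\rho(P)\,t^{\alpha-1}
\]
whenever $\rho(P)\ge t\sigma(P)$, which holds for $P=P_{t,n}$.

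By \eqref{(VI.163)}, $Q_\alpha^{\meas,\hs}(\rho_n\|\sigma_n)=\alpha(\alpha-1)\int_0^\infty t^{\alpha-2}(\rho_n-t\sigma_n)(P_{t,n})\,dt$. We bound the integrand by $t^{\alpha-2}\rho_n(P_{t,n})\le t^{-1}\exp\bigl(D^\proj_\alpha(\rho_n\|\sigma_n)\bigr)^{\alpha-1}$. This is not directly integrable, so we truncate to $t\in[e^{-cn},e^{cn}]$. The tail $t>e^{cn}$ contributes at most $\int_{e^{cn}}^\infty t^{\alpha-2}\rho_n(P_{t,n})\,dt$, which we control by \eqref{(VI.171)} at a larger exponent $\alpha''$. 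The tail $t<e^{-cn}$ contributes at most $\int_0^{e^{-cn}}t^{\alpha-2}\rho(\egy)^n\,dt=O(e^{-cn(\alpha-1)})$. This yields $Q^{\meas,\hs}_\alpha(\rho_n\|\sigma_n)\lesssim 2cn\,\exp\bigl((\alpha-1)D_\alpha^\proj(\rho_n\|\sigma_n)\bigr)+$ tails, hence
\[
\limsup_n\tfrac1n D_\alpha^{\meas,\hs}(\rho_n\|\sigma_n)\le\overline D_\alpha^\proj(\rho\|\sigma).
\]
Combined with \eqref{(VI.168)} and Lemma \ref{lemma:VI.1}, this gives the required lower bound on $\overline D_\alpha^\proj$.

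\textbf{Main obstacle.} The upper tail needs finiteness of some $\widetilde D_{\alpha''}$ with $\alpha''>\alpha$, which may fail. To avoid this I would instead run the argument at exponents $\alpha'<\alpha$, where the above gives $\widetilde D_{\alpha'}\le\overline D_{\alpha'}^\proj\le\overline D_\alpha^\proj$, using monotonicity of the classical $D_\alpha$ in $\alpha$. Taking $\alpha'\nearrow\alpha$ and using lower semicontinuity of $\alpha\mapsto\widetilde D_\alpha$ from the left (it is a supremum of continuous functions via Haagerup reduction / Lemma \ref{lemma:VI.1}) then gives the claim. Making this left-limit step rigorous in the general von Neumann setting is where I expect most of the work.
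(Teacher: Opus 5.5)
Your reduction to $\widetilde D_\alpha(\rho\|\sigma)\le\overline D_\alpha^\proj(\rho\|\sigma)$, your treatment of the case $s(\rho)\not\le s(\sigma)$, and the truncation estimate are fine as far as they go. (In the lower tail, keep the factor $\rho(\egy)^n$ and compare with $Q_\alpha^\proj(\rho^{\otimes n}\|\sigma^{\otimes n})\ge\rho(\egy)^{n\alpha}\sigma(\egy)^{n(1-\alpha)}$ when choosing $c$.)

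The gap is the case $s(\rho)\le s(\sigma)$ with $\widetilde D_\alpha(\rho\|\sigma)=+\infty$. Your fallback at an exponent $\alpha'<\alpha$ still needs, for the upper tail, some $\beta>\alpha'$ with $\widetilde D_\beta(\rho\|\sigma)<+\infty$.
\begin{itemize}
\item If $\widetilde D_\alpha(\rho\|\sigma)<+\infty$, you may take $\beta=\alpha$, and the fallback works; it even covers the situation of Remark~\ref{remark:VI.3}. The left-limit step you worry about is harmless. By Lemma~\ref{lemma:VI.1}, $\widetilde D_\alpha=\sup_n\frac1n D_\alpha^\meas(\rho^{\otimes n}\|\sigma^{\otimes n})$ is a supremum of finite-outcome classical R\'enyi divergences that are continuous in $\alpha$, hence it is lower semicontinuous in $\alpha$.
\item If $\widetilde D_\alpha(\rho\|\sigma)=+\infty$, you must show $\overline D_\alpha^\proj(\rho\|\sigma)=+\infty$, and nothing in your argument does so. For example, take $\M=L^\infty[0,1]$, $\rho=dx$ and $\sigma=e^{-1/x}\,dx$. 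Then $s(\rho)\le s(\sigma)$, but $\widetilde Q_\beta(\rho\|\sigma)=\int_0^1e^{(\beta-1)/x}\,dx=+\infty$ for every $\beta>1$. So neither your direct argument nor the fallback at any $\alpha'>1$ can even be started.
\end{itemize}

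The missing idea is the paper's reduction to the case $\rho\le c\sigma$ by perturbing the second argument. Put $\sigma_\ep:=\sigma+\ep\rho$. Then $\rho\le\ep^{-1}\sigma_\ep$, so $\rho^{\otimes n}\le\ep^{-n}\sigma_\ep^{\otimes n}$ and $(\rho^{\otimes n}-t\sigma_\ep^{\otimes n})_+=0$ for $t\ge\ep^{-n}$. Your upper tail therefore vanishes identically once $e^{cn}\ge\ep^{-n}$. Your truncation, together with \eqref{(VI.168)} and Lemma~\ref{lemma:VI.1}, then gives $\widetilde D_\alpha(\rho\|\sigma_\ep)\le\overline D_\alpha^\proj(\rho\|\sigma_\ep)$ directly at $\alpha$. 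This incidentally avoids the continuity of $\alpha\mapsto\overline D_\alpha^\proj$ that the paper invokes after its $\alpha',\alpha''$ splitting.

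To pass back to $\sigma$, use two facts:
\begin{itemize}
\item Since $\sigma_\ep^{\otimes n}\ge\sigma^{\otimes n}$ and the classical $Q_\alpha$ is decreasing in its second argument for $\alpha>1$, we get $\overline D_\alpha^\proj(\rho\|\sigma_\ep)\le\overline D_\alpha^\proj(\rho\|\sigma)$.
\item $\widetilde D_\alpha(\rho\|\sigma_\ep)\to\widetilde D_\alpha(\rho\|\sigma)$ as $\ep\searrow0$, by \cite[Lemma 3.13]{Hiai2021Div}.
\end{itemize}
Letting $\ep\searrow0$ gives $\widetilde D_\alpha(\rho\|\sigma)\le\overline D_\alpha^\proj(\rho\|\sigma)$ in all cases, including $\widetilde D_\alpha(\rho\|\sigma)=+\infty$. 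It also makes both your separate support case and the $\alpha'\nearrow\alpha$ detour unnecessary.
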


\begin{proof}
From \eqref{(VI.170)}--\eqref{(VI.172)} in the above proof, we indeed have
\[
D_\alpha^{\meas,\hs}(\rho_n\|\sigma_n)
\le\alpha(\alpha-1)\biggl({1\over\alpha-\alpha'}+{1\over\alpha''-\alpha}\biggr)
\max\bigl\{Q_{\alpha'}^\proj(\rho_n\|\sigma_n),Q_{\alpha''}^\proj(\rho_n\|\sigma_n)\bigr\}.
\]
This implies that
\begin{align}\label{(VI.173)}
\limsup_{n\to\infty}{1\over n}\,D_\alpha^{\meas,\hs}(\rho_n\|\sigma_n)
\le\max\bigl\{\overline D_{\alpha'}^\proj(\rho\|\sigma),\overline D_{\alpha''}^\proj(\rho\|\sigma)\bigr\}.
\end{align}
Now assume that $\rho\le c\sigma$ for some $c>0$. Then $\widetilde D_\alpha(\rho\|\sigma)<+\infty$ for all
$\alpha\in(1,+\infty)$. Hence the LHS of \eqref{(VI.173)} is equal to $\widetilde D_\alpha(\rho\|\sigma)$. Since
$\alpha\mapsto\overline D_\alpha^\proj(\rho\|\sigma)$ is convex and hence continuous on $(1,+\infty)$, the
RHS of \eqref{(VI.173)} goes to $\overline D_\alpha^\proj(\rho\|\sigma)$ as $\alpha'\nearrow\alpha$ and
$\alpha''\searrow\alpha$. Therefore, $\widetilde D_\alpha(\rho\|\sigma)\le\overline D_\alpha^\proj(\rho\|\sigma)$.
For general $\rho,\sigma$ note (see \cite[Lemma 3.13]{Hiai2021Div}) that
\[
\lim_{\ep\searrow0}\widetilde D_\alpha(\rho\|\sigma+\ep\rho)=\widetilde D_\alpha(\rho\|\sigma).
\]
Moreover,
\begin{align*}
\lim_{\ep\searrow0}\overline D_\alpha^\proj(\rho\|\sigma+\ep\rho)
&=\sup_{\ep>0}\sup_{n\in\bN}{1\over n}\,D_\alpha^\proj(\rho^{\otimes n}\|(\sigma+\ep\rho)^{\otimes n}) \\
&=\sup_{n\in\bN}\sup_{\ep>0}{1\over n}\,D_\alpha^\proj(\rho^{\otimes n}\|(\sigma+\ep\rho)^{\otimes n}) \\
&=\sup_{n\in\bN}{1\over n}\,D_\alpha^\proj(\rho^{\otimes n}\|\sigma^{\otimes n})
=\overline D_\alpha^\proj(\rho\|\sigma).
\end{align*}
Hence $\widetilde D_\alpha(\rho\|\sigma)\le\overline D_\alpha^\proj(\rho\|\sigma)$.
\end{proof}

\subsection{$\overline D_\alpha^{\meas,\hs}(\rho\|\sigma)=D_\alpha(\rho\|\sigma)$ for $\alpha\in(0,1)$}
\label{sec:VI.D}

The next lemma is from \cite[Proposition 1.1]{Ogata2011}.

\begin{lemma}\label{lemma:VI.5}
For any $\rho,\sigma\in\M_{+,\ge0}$ and any $0\le s\le1$ we have
\[
\|\Delta_{\rho,\sigma}^{s/2}h_\rho^{1/2}\|^2
\ge{1\over2}(\rho+\sigma)(\egy)-{1\over2}\|\rho-\sigma\|.
\]
\end{lemma}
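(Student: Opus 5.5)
The approach I would take is to prove the inequality first in the tracial (semifinite) setting, where it is the Audenaert-type inequality, and then pass to a general $\M$ through the Haagerup reduction of Section \ref{sec:VI.A}. A first step is to identify the left-hand side as a Petz-type quasi-entropy. If $\M$ has a faithful normal semifinite trace $\tau$, it should read $\tau(h_\rho^{1-s}h_\sigma^{s})$ (or $\tau(h_\rho^{s}h_\sigma^{1-s})$, depending on Araki's convention for $\Delta_{\rho,\sigma}$; the argument is symmetric under $s\leftrightarrow 1-s$ and $\rho\leftrightarrow\sigma$, since the right-hand side is symmetric). In general it is the Petz quantity $Q^{\mathrm P}$ underlying \eqref{(VI.165)}. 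The right-hand side equals $\rho(\egy)-(\rho-\sigma)_+(\egy)$, which is the content of \eqref{eq:mixederror} at $t=1$.

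\emph{Step 1 (tracial case).} Let $\M$ carry a faithful normal trace, and write $A:=h_\rho$, $B:=h_\sigma$ as $\tau$-measurable operators. Let $P:=\{A-B>0\}$ and $C:=(A-B)_+$. Since $A\le B+C$ and $x\mapsto x^{s}$ is operator monotone, I would follow Audenaert's argument. The goal is to show
\[
\tau(A)-\tau(A^{1-s}B^{s})=\tau\bigl(A^{1-s}(A^{s}-B^{s})\bigr)\le\tau(C).
\]
I would first bound $\tau(A^{1-s}(A^s-B^s))\le\tau(A^{1-s}(B+C)^s)-\tau(A^{1-s}B^s)$ using $A^s\le(B+C)^s$. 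I would then use $A^{1-s}\le(B+C)^{1-s}$ together with the identity $\tau\bigl((B+C)^{1-s}((B+C)^s-B^s)\bigr)=\tau(C)+\tau(B)-\tau((B+C)^{1-s}B^s)$. The remaining term is handled by the concavity bound $\tau((B+C)^{1-s}B^s)\ge\tau(B)$, which follows from $(B+C)^{1-s}\ge B^{1-s}$. The noncommutative details follow Audenaert's finite-dimensional proof, which relies only on operator monotonicity and trace cyclicity. These carry over verbatim to $\tau$-measurable operators, provided I first restrict to $\tau(A)+\tau(B)<\infty$, which holds automatically since both are integrable.

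\emph{Step 2 (reduction).} By Section \ref{sec:VI.A}, I may pass to $\hat\rho=\rho\circ E_\M$ and $\hat\sigma=\sigma\circ E_\M$ on $\hat\M$, and then restrict to the finite subalgebras $\M_n$. The left-hand side converges along this procedure. Indeed, $\|\Delta^{s/2}_{\rho,\sigma}h_\rho^{1/2}\|^2$ is the Petz quasi-entropy with exponent in $[0,1]$; it is invariant under passing to $\hat\rho,\hat\sigma$ and is the increasing limit over $n$, by the same martingale statement as in \eqref{(VI.165)}. Note that this quasi-entropy is decreasing along the restrictions, because $x^{s}$ is operator concave. On the right-hand side, the total mass $(\rho+\sigma)(\egy)$ is preserved by the restrictions, and $\|\hat\rho|_{\M_n}-\hat\sigma|_{\M_n}\|\le\|\hat\rho-\hat\sigma\|=\|\rho-\sigma\|$ by contractivity of restriction. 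Applying Step 1 in $\M_n$ therefore gives
\[
Q(\hat\rho|_{\M_n},\hat\sigma|_{\M_n})\ge\tfrac12(\rho+\sigma)(\egy)-\tfrac12\|\rho-\sigma\|,
\]
and letting $n\to\infty$ yields the claim for $\M$. If $\M$ is not $\sigma$-finite, I would first compress to the support of $\rho+\sigma$, which changes neither side.

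The main obstacle I anticipate is Step 2's convergence claim for the $s$-quasi-entropy at the endpoints $s\in\{0,1\}$, and the exact matching of conventions (i.e.\ which of $s$ or $1-s$ appears). The endpoints can be checked directly: at $s=0$ the left-hand side equals $\rho(\egy)$, and at $s=1$ it equals $\sigma(\{\rho>0\})$, and both are at least $\rho(\egy)\wedge\sigma(\egy)$, which in turn is at least the right-hand side. If the martingale step turned out to be delicate, my fallback would be to run Audenaert's argument directly in Haagerup's $L^1(\M)$. There I would use the monotonicity $\Delta_{\rho,\sigma}^{s}\le\Delta_{\omega,\sigma}^{s}$ for $\rho\le\omega$ (the Araki--Masuda/Kosaki operator-monotone calculus), with $\omega=\sigma+(\rho-\sigma)_+$.
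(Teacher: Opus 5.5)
Your argument is correct in substance but takes a different route from the paper, which does not prove this lemma. The paper simply imports Ogata's generalized Powers--St\o rmer inequality \cite[Proposition 1.1]{Ogata2011}, which is established there for arbitrary von Neumann algebras and all $s\in[0,1]$. You instead prove Audenaert's inequality $\tau(A)-\tau(A^{1-s}B^{s})\le\tau((A-B)_+)$ in the finite tracial algebras $\M_n$. You then transport it to $\M$ by Haagerup reduction and the martingale convergence \eqref{(VI.165)} of the Petz quasi-entropy, using that restriction preserves $(\rho+\sigma)(\egy)$ and gives $\|\hat\rho|_{\M_n}-\hat\sigma|_{\M_n}\|\le\|\rho-\sigma\|$. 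This is the same template the paper itself uses later in Lemma~\ref{lemma:VI.12}. The citation gets the general statement in one stroke, with no compression to a $\sigma$-finite corner, no reduction theorem and no separate endpoint analysis. Your route stays inside the paper's toolkit and shows that the von Neumann algebra inequality is just the tracial Audenaert inequality pushed through the reduction. The price is dependence on \eqref{(VI.165)} (imported from Hiai--Jen\v cov\'a) and on L\"owner--Heinz and trace monotonicity for $\tau$-measurable operators.

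Two slips need correcting. First, for $s\in(0,1)$ the quasi-entropy $\tr(h_\rho^s h_\sigma^{1-s})$ \emph{increases} under restriction to a subalgebra (data processing for the operator convex function $-x^s$). So along $\M_n\nearrow\hat\M$ it decreases to its value at $(\hat\rho,\hat\sigma)$; it is $D^{\mathrm P}_s$ that increases in \eqref{(VI.165)}. Your two sentences on this contradict each other, but only the convergence is used, so nothing breaks.

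Second, your endpoint check is wrong as written. With support projections (the convention matching $\Delta^0_{\rho,\sigma}$), the left-hand side is $\rho(\{\sigma>0\})$ at one endpoint and $\sigma(\{\rho>0\})$ at the other, not $\rho(\egy)$. These need not dominate $\rho(\egy)\wedge\sigma(\egy)$; orthogonal supports give $0$. The correct check starts from $(\rho-\sigma)_+(\egy)\ge(\rho-\sigma)(\{\sigma=0\})=\rho(\{\sigma=0\})$, which gives $\rho(\{\sigma>0\})\ge\rho(\egy)-(\rho-\sigma)_+(\egy)$. Symmetrically, $\sigma(\{\rho>0\})\ge\sigma(\egy)-(\sigma-\rho)_+(\egy)$. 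Both right-hand sides equal $\frac12(\rho+\sigma)(\egy)-\frac12\|\rho-\sigma\|$.
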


Since $\Delta_{\rho,\sigma}^{s/2}h_\rho^{1/2}=h_\rho^{s\over2}h_\sigma^{1-s\over2}$, we have
$\|\Delta_{\rho,\sigma}^{s/2}h_\rho^{1/2}\|^2=\tr(h_\rho^sh_\sigma^{1-s})=Q_s(\rho\|\sigma)$. Hence the above
lemma gives
\begin{align}
D_{(\id-t)_+}^\meas(\rho\|\sigma)
&=(\rho-t\sigma)_+(\egy)={1\over2}\{\|\rho-t\sigma\|+(\rho-t\sigma)(\egy)\} \nonumber\\
&\ge\rho(\egy)-t^{1-s}Q_s(\rho\|\sigma). \label{(VI.174)}
\end{align}

\begin{lemma}\label{lemma:VI.6}
For any $\rho,\sigma\in\M_{*,\ge0}$,
\[
\liminf_{n\to\infty}{1\over n}D_\alpha^{\meas,\hs}(\rho^{\otimes n}\|\sigma^{\otimes n})
\ge D_\alpha^{\mathrm{P}}(\rho\|\sigma).
\]
\end{lemma}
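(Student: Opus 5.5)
The plan is to turn the lower bound \eqref{(VI.174)} on the measured hockey stick divergences into an upper bound on $Q_\alpha^{\meas,\hs}(\rho^{\otimes n}\|\sigma^{\otimes n})$. This suffices because, for $\alpha\in(0,1)$, we have $D_\alpha^{\meas,\hs}=\frac{1}{\alpha-1}\log Q_\alpha^{\meas,\hs}$, so a lower bound on the divergence is the same as an upper bound on the $\alpha$-quantity. The argument mirrors the proof of \eqref{(VI.169)} in Theorem \ref{theorem:VI.2}, with the sandwiched quantities replaced by the Petz-type quantities $Q_s(\rho\|\sigma)=\tr(h_\rho^s h_\sigma^{1-s})$. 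These are multiplicative under tensor powers: $Q_s(\rho^{\otimes n}\|\sigma^{\otimes n})=Q_s(\rho\|\sigma)^n$.

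First I will rewrite $Q_\alpha^{\meas,\hs}$ in terms of the mixed error $\beta_{\mathrm{mix},t}(\rho\|\sigma):=\rho(\egy)-(\rho-t\sigma)_+(\egy)=t\sigma(\egy)-(\rho-t\sigma)_-(\egy)$, which is the von Neumann analogue of Corollary \ref{cor:measured hsq expressions2}. Concretely, I claim
\[
Q_\alpha^{\meas,\hs}(\rho\|\sigma)=\alpha(1-\alpha)\int_0^{+\infty}t^{\alpha-2}\beta_{\mathrm{mix},t}(\rho\|\sigma)\,dt .
\]
To obtain this, I will start from \eqref{D-f-q-hs} with $a=1$. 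On $(1,+\infty)$ I substitute $(\rho-t\sigma)_+(\egy)=\rho(\egy)-\beta_{\mathrm{mix},t}$, and on $(0,1]$ I substitute $(\rho-t\sigma)_-(\egy)=t\sigma(\egy)-\beta_{\mathrm{mix},t}$. The elementary integrals $\alpha(\alpha-1)\int_1^\infty t^{\alpha-2}\,dt=-\alpha$ and $\alpha(\alpha-1)\int_0^1 t^{\alpha-1}\,dt=\alpha-1$ then exactly cancel the boundary terms $\alpha\rho(\egy)+(1-\alpha)\sigma(\egy)$. This computation is routine, and all integrals are finite because $0\le\beta_{\mathrm{mix},t}\le\min\{\rho(\egy),t\sigma(\egy)\}$.

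Next I bound the integrand. By \eqref{(VI.174)}, for every $s\in[0,1]$ we have $\beta_{\mathrm{mix},t}(\rho\|\sigma)\le t^{1-s}Q_s(\rho\|\sigma)$. I fix $0<\alpha'<\alpha<\alpha''<1$ and apply this bound with $s=\alpha'$ on $(0,1)$ and with $s=\alpha''$ on $(1,+\infty)$, for the pair $\rho_n:=\rho^{\otimes n}$, $\sigma_n:=\sigma^{\otimes n}$. This gives
\[
Q_\alpha^{\meas,\hs}(\rho_n\|\sigma_n)\le\alpha(1-\alpha)\Bigl[\tfrac{1}{\alpha-\alpha'}Q_{\alpha'}(\rho\|\sigma)^n+\tfrac{1}{\alpha''-\alpha}Q_{\alpha''}(\rho\|\sigma)^n\Bigr]\le C\max\{Q_{\alpha'},Q_{\alpha''}\}^n ,
\]
where $C$ does not depend on $n$. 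Taking logarithms and dividing by $n(\alpha-1)<0$ reverses the inequality, and letting $n\to\infty$ gives
\[
\liminf_n\tfrac1n D_\alpha^{\meas,\hs}(\rho_n\|\sigma_n)\ge\min\{D^{\mathrm P}_{\alpha'}(\rho\|\sigma),D^{\mathrm P}_{\alpha''}(\rho\|\sigma)\}.
\]
If some $Q_s(\rho\|\sigma)=0$, the corresponding right-hand side is $+\infty$ and the same argument still applies.

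Finally I let $\alpha'\nearrow\alpha$ and $\alpha''\searrow\alpha$. The step I expect to be the main point to check is the continuity of $s\mapsto D_s^{\mathrm P}(\rho\|\sigma)$ at $\alpha$. I will argue it from the log-convexity of $s\mapsto Q_s(\rho\|\sigma)=\|\Delta_{\rho,\sigma}^{s/2}h_\rho^{1/2}\|^2$ on $[0,1]$; this function is finite there, with $Q_s\le\rho(\egy)^s\sigma(\egy)^{1-s}$. Log-convexity gives continuity of $\log Q_s$ on $(0,1)$ whenever $Q_s>0$. It also gives that $Q_s$ is either identically zero on $(0,1)$ or strictly positive throughout, so in the degenerate case both sides of the claimed inequality are $+\infty$. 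Apart from this, one must make sure that \eqref{(VI.174)} is applied to $\rho_n,\sigma_n$ in the von Neumann algebra $\M^{\otimes n}$ and that the multiplicativity of $Q_s$ holds there, which is standard.
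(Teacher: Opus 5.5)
Your proof is correct and is essentially the paper's argument. Your mixed-error representation is a repackaging of the symmetric form \eqref{(VI.162)}, and with $\alpha'=\alpha-\ep$, $\alpha''=\alpha+\ep$ your bound is exactly the paper's $Q_\alpha^{\meas,\hs}(\rho\|\sigma)\le\frac{\alpha(1-\alpha)}{\ep}\{Q_{\alpha+\ep}(\rho\|\sigma)+Q_{\alpha-\ep}(\rho\|\sigma)\}$ obtained from Ogata's inequality \eqref{(VI.174)}, followed by tensorization and continuity of $s\mapsto Q_s(\rho\|\sigma)$. One cosmetic slip: the intermediate bound you actually get is $\frac{1}{\alpha-1}\log\max\{Q_{\alpha'}(\rho\|\sigma),Q_{\alpha''}(\rho\|\sigma)\}$ rather than $\min\{D^{\mathrm P}_{\alpha'},D^{\mathrm P}_{\alpha''}\}$, since the prefactor stays $1/(\alpha-1)$; this is harmless because the final limit only uses continuity of $s\mapsto Q_s(\rho\|\sigma)$ at $\alpha$.
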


\begin{proof}
By \eqref{(VI.162)} we have
\begin{align*}
&Q_\alpha^{\meas,\hs}(\rho\|\sigma) \\
&\quad\le\alpha\rho(\egy)+(1-\alpha)\sigma(\egy)
+\alpha(\alpha-1)\int_1^{+\infty}\{t^{\alpha-2}\rho(\egy)+t^{-1-\alpha}\sigma(\egy)\}\,dt \\
&\qquad+\alpha(1-\alpha)\int_1^{+\infty}
\bigl\{t^{\alpha-1-s}Q_s(\rho\|\sigma)+t^{-\alpha-s'}Q_{s'}(\sigma\|\rho)\bigr\}\,dt \\
&\quad=\alpha\rho(\egy)+(1-\alpha)\sigma(\egy)
+\alpha(\alpha-1)\Bigl\{{\rho(\egy)\over1-\alpha}+{\sigma(\egy)\over\alpha}\Bigr\} \\
&\qquad+\alpha(1-\alpha)\int_1^{+\infty}
\bigl\{t^{\alpha-1-s}Q_s(\rho\|\sigma)+t^{-\alpha-s'}Q_{1-s'}(\rho\|\sigma)\bigr\}\,dt \\
&\quad=\alpha(1-\alpha)\int_1^{+\infty}
\bigl\{t^{\alpha-1-s}Q_s(\rho\|\sigma)+t^{-\alpha-s'}Q_{1-s'}(\rho\|\sigma)\bigr\}\,dt,
\end{align*}
where the inequality above is due to \eqref{(VI.174)}, and $s,s'$ can be arbitrary in $(0,1)$. Now let
$s:=\alpha+\ep$ and $s':=1-\alpha+\ep$, where $0<\ep<\min\{\alpha,1-\alpha\}$. Then the above inequality
becomes
\[
Q_\alpha^\meas(\rho\|\sigma)\le{\alpha(1-\alpha)\over\ep}
\{Q_{\alpha+\ep}(\rho|\sigma)+Q_{\alpha-\ep}(\rho\|\sigma)\}.
\]
Replacing $\rho,\sigma$ with $\rho^{\otimes n},\sigma^{\otimes n}$ for any $n\in\bN$ we have
\begin{align*}
\liminf_{n\to\infty}{1\over n}D_\alpha^{\meas,\hs}(\rho^{\otimes n},\sigma^{\otimes n})
&\ge\liminf_{n\to\infty}{1\over n}{1\over\alpha-1}\log
\bigl\{Q_{\alpha+\ep}(\rho^{\otimes n}|\sigma^{\otimes n})
+Q_{\alpha-\ep}(\rho^{\otimes n}\|\sigma^{\otimes n})\bigr\} \\
&\ge\liminf_{n\to\infty}{1\over n}{1\over\alpha-1}\log
\Bigl[2\bigl(\max\bigl\{Q_{\alpha+\ep}(\rho\|\sigma),Q_{\alpha-\ep}(\rho\|\sigma)\bigr\}\bigr)^n\Bigr] \\
&={1\over\alpha-1}\log\max\{Q_{\alpha+\ep}(\rho\|\sigma),Q_{\alpha-\ep}(\rho\|\sigma)\}.
\end{align*}
Since $\alpha\mapsto Q_\alpha(\rho\|\sigma)$ is continuous on $(0,1)$, the assertion follows by letting
$\ep\searrow0$.
\end{proof}

Next, to prove the opposite inequality
\begin{align}\label{(VI.175)}
\limsup_{n\to\infty}{1\over n}D_\alpha^{\meas,\hs}(\rho^{\otimes n}\|\sigma^{\otimes n})
\le D_\alpha(\rho\|\sigma),
\end{align}
we assume first that $\M$ is a finite von Neumann algebra with a faithful normal finite trace $\tau$. Let
$L^1(\M,\tau)$ and $L^2(\M,\tau)$ be the $L^1$- and $L^2$-spaces with respect to $\tau$. Then the standard
form of $\M$ is given as $(\M,L^2(\M,\tau),J=\,^*,L^2(\M,\tau)_+)$, and $\M_{*,\ge0}$ is identified with
$L^1(\M,\tau)_+$. Moreover, note that the space $\widetilde\M$ of $\tau$-measurable operators is the set of
all densely defined closed operators affiliated with $\M$. See \cite{Hiai2021} for $\tau$-measurable operators
and the $L^p$-spaces with respect to $\tau$.

Let $\rho,\sigma\in L^1(\M,\tau)_+$, whose spectral decompositions are written as
\[
\rho=\int_{[0,+\infty)}x\,dE(x),\qquad\sigma=\int_{[0,+\infty)}y\,dF(y).
\]
The left multiplication $L_\rho$ by $\rho$ and the right multiplication $R_\sigma$ by $\sigma$ are positive
self-adjoint operators on $L^2(\M,\tau)$, whose spectral decompositions are
\[
L_\rho=\int_{[0,+\infty)}x\,dL_{E(x)},\qquad R_\sigma=\int_{[0,+\infty)}y\,dR_{F(y)},
\]
with the commuting spectral measures $L_{E(\cdot)}$ and $R_{F(\cdot)}$ on the Borel space
$\bigl([0,+\infty),\B_{[0,+\infty)}\bigr)$. Then we have the product spectral measure $G=E\times F$ on the
Borel space $\bigl([0,+\infty)^2,\B_{[0,+\infty)^2}\bigr)$ such that
\[
G(S\times T)=L_{E(S)}R_{F(T)}\ \,(\mbox{projection on $L^2(\M,\tau)$}),\qquad S,T\in\B_{[0,+\infty)}.
\]
Define a probability measure $\mu$ on $\bigl([0,+\infty)^2,\B_{[0,+\infty)^2}\bigr)$ by
$\mu(\cdot):=\tau(G(\cdot)\egy)$; then we have a probability measure
$\bigl([0,+\infty)^2,\B_{[0,+\infty)^2},\mu\bigr)$ so that
\begin{align}\label{(VI.176)}
\mu(S\times T)=\tau(E(S)F(T))=\|E(S)F(T)\|_2^2=\|G(S\times T)\egy\|_2^2,\qquad S,T\in\B_{[0,+\infty)}.
\end{align}
Let $p(x,y):=x$ and $q(x,y):=y$ on $[0,+\infty)^2$; then $p\,d\mu$ and $q\,d\mu$ are regarded as the
counterparts of the Nussbaum--Szko\l a distributions for $\rho,\sigma$ in the finite von Neumann algebra setting.
In fact, we have
\begin{align*}
\int_{[0,+\infty)^2}p\,d\mu&=\int_{[0,\infty)^2}x\,d\tau(E(x)F(y))
=\int_{[0,+\infty)}x\,d\tau(E(x))=\tau(\rho), \\
\int_{[0,+\infty)^2}q\,d\mu&=\int_{[0,\infty)^2}y\,d\tau(E(x)F(y))
=\int_{[0,+\infty)}y\,d\tau(F(y))=\tau(\sigma),
\end{align*}
and we have the following:

\begin{lemma}\label{lemma:VI.7}
For every $f\in(\bR^{(0,+\infty)})_{\conv}$,
\[
D_f^{\mathrm{P}}(\rho\|\sigma)=D_f(p\,d\mu\|q\,d\mu).
\]
\end{lemma}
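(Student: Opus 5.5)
We will compare the definition \eqref{standard S_f} of $D_f^{\mathrm{P}}(\rho\|\sigma)$, in terms of the spectral measure of $\Delta_{\rho,\sigma}$ and the vector $h_\sigma^{1/2}=\sigma^{1/2}$, with the classical $f$-divergence of $p\,d\mu$ and $q\,d\mu$, computed through the perspective function as in Lemma \ref{lemma:cl fdiv def2} with $\mu$ as dominating measure. The goal is an explicit description of the measure $d\|E_{\rho,\sigma}(x)\sigma^{1/2}\|^2$ and of the two boundary terms $\sigma(\{\rho=0\})$ and $\rho(\{\sigma=0\})$ in terms of $\mu$.

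\textbf{Step 1: the relative modular operator.} In the tracial setting, $\Delta_{\rho,\sigma}=L_\rho R_{\sigma^{-1}}$, where $\sigma^{-1}$ is the generalized inverse on $\{\sigma>0\}$. More precisely,
\[
\Delta_{\rho,\sigma}=\int_{[0,+\infty)\times(0,+\infty)}\frac{x}{y}\,dG(x,y),
\]
and it vanishes on the range of $R_{F(\{0\})}$. This is the von Neumann algebra analogue of the finite-dimensional formula stated for $\B(\hil)$ in Section \ref{sec:V.A}. To justify it, I would verify the defining relation $\Delta_{\rho,\sigma}^{1/2}(a\sigma^{1/2})=\rho^{1/2}a$ on the core $\{a\sigma^{1/2}:a\in\M\}$, and use the support-projection conventions. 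This identification of $\Delta_{\rho,\sigma}$ as a function of the commuting pair $(L_\rho,R_\sigma)$ is the main technical obstacle. First I would reduce to bounded, invertible $\rho,\sigma$ by cutting with spectral projections $E([1/n,n])$ and $F([1/n,n])$, and then pass to the limit in the strong resolvent sense.

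\textbf{Step 2: the spectral measure.} For any Borel $\phi\ge0$ on $(0,+\infty)$,
\[
\int\phi\,d\|E_{\rho,\sigma}(\cdot)\sigma^{1/2}\|^2=\int_{y>0}\phi(x/y)\,d\|G(x,y)\sigma^{1/2}\|^2 .
\]
Since $\sigma^{1/2}=R_{\sigma^{1/2}}\egy$ and $G$ commutes with $R_\sigma$,
\[
d\|G(x,y)\sigma^{1/2}\|^2=y\,d\|G(x,y)\egy\|_2^2=y\,d\mu(x,y),
\]
using \eqref{(VI.176)}. It follows that the first term of \eqref{standard S_f} equals
\[
\int_{\{x>0,\,y>0\}}f(x/y)\,y\,d\mu ,
\]
which is exactly the term $D_f^{++}(p\,d\mu\|q\,d\mu)$ in \eqref{eq:cl fdiv def2}. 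The restriction to $x>0$ appears because the point $0$ of the spectrum of $\Delta_{\rho,\sigma}$ is excluded in \eqref{standard S_f}.

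\textbf{Step 3: the boundary terms.} I would match these using
\[
\sigma(\{\rho=0\})=\tau(\sigma E(\{0\}))=\int_{\{x=0\}}y\,d\mu
=(q\,d\mu)\bigl(\{p=0\}\bigr),
\]
and similarly $\rho(\{\sigma=0\})=(p\,d\mu)(\{q=0\})$. These also follow from \eqref{(VI.176)} via integration against $y$ and $x$. Since $f'(+\infty)=\tilde f(0^+)$ by Lemma \ref{lemma:convex f tilde limit}, comparing \eqref{standard S_f} with \eqref{eq:cl fdiv def2} term by term gives the claim. Well-definedness, in the sense of no $\infty-\infty$, follows after subtracting an affine function from $f$ so that $f\ge0$, exactly as in Lemma \ref{lemma:cl fdiv def}; both sides are affine-covariant in the same way.
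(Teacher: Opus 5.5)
Your proposal is correct and follows the paper's proof essentially step by step. As there, you write $\Delta_{\rho,\sigma}=L_\rho R_{\sigma^{-1}}=\int_{\{x\ge0,\,y>0\}}\frac{x}{y}\,dG(x,y)$, push $G$ forward under $(x,y)\mapsto x/y$ to get $\int_{(0,+\infty)}f\,d\|E_{\rho,\sigma}(\cdot)\sigma^{1/2}\|_2^2=\int_{\{x>0,\,y>0\}}y f(x/y)\,d\mu$, and match the boundary terms via $\tau(\sigma\{\rho=0\})=\int_{\{p=0\}}q\,d\mu$ and $\tau(\rho\{\sigma=0\})=\int_{\{q=0\}}p\,d\mu$.

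The only difference is that the paper does not prove the identification $\Delta_{\rho,\sigma}=L_\rho R_{\sigma^{-1}}$, which you flag as the main obstacle, but simply cites \cite[Theorem 10.7]{Hiai2021} for it. If you do verify it yourself, note that for non-faithful $\sigma$ the correct relation is $\Delta_{\rho,\sigma}^{1/2}(a\sigma^{1/2})=\rho^{1/2}a\,s(\sigma)$, with $s(\sigma)$ the support projection, rather than $\rho^{1/2}a$.
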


\begin{proof}
As easily shown by, e.g., \cite[Theorem 10.7]{Hiai2021}, note that the modular operator $\Delta_{\rho,\sigma}$
is $L_\rho R_{\sigma^{-1}}$ in the present situation, where $\sigma^{-1}$ is the generalized inverse of $\sigma$.
Hence we have
\[
\Delta_{\rho,\sigma}=L_\rho R_{\sigma^{-1}}=\int_{\{x\ge0,y>0\}}{x\over y}\,dL_{E(x)}R_{F(y)}
=\int_{\{x\ge0,y>0\}}{x\over y}\,dG(x,y).
\]
This implies that the spectral resolution of $\Delta_{\rho,\sigma}$ is given by
\[
E_{\rho,\sigma}(t)=G(R_t),\quad\mbox{where}\quad
R_t:=\Bigl\{(x,y)\in[0,+\infty)\times(0,+\infty):{x\over y}\le t\Bigr\},\ \ t\ge0.
\]
In other word, the spectral measure $E_{\rho,\sigma}$ on $\bigl([0,+\infty),\B_{[0,+\infty)}\bigr)$ is the
push-forward of $G$ restricted on $[0,+\infty)\times(0,+\infty)$ by the map $(x,y)\mapsto{x\over y}$. Hence one
can easily see that
\begin{align*}
\int_{\{t>0\}}f(t)\,d\|E_{\rho,\sigma}(t)\sigma^{1/2}\|_2^2
&=\int_{\{x>0,y>0\}}f\Bigl({x\over y}\Bigr)\,d\|G(x,y)\sigma^{1/2}\|_2^2 \\
&=\int_{\{x>0,y>0\}}f\Bigl({x\over y}\Bigr)\,d\langle\sigma^{1/2},E(x)\sigma^{1/2}F(y)\rangle_2 \\
&=\int_{\{x>0,y>0\}}f\Bigl({x\over y}\Bigr)\,d\tau(E(x)\sigma F(y)) \\
&=\int_{\{x>0,y>0\}}yf\Bigl({x\over y}\Bigr)\,d\mu(x,y).
\end{align*}
Moreover, since
\begin{align*}
\int_{\{p=0\}}q\,d\mu=\int_{\{x=0\}}y\,d\tau(E(x)F(y))=\tau(\sigma\{\rho=0\}), \\
\int_{\{q=0\}}p\,d\mu=\int_{\{y=0\}}x\,d\tau(E(x)F(y))=\tau(\rho\{\sigma=0\}),
\end{align*}
we have
\begin{align*}
D_f^{\mathrm{P}}(\rho\|\sigma)&=\int_{\{t>0\}}f(t)\,d\|E_{\rho,\sigma}(t)\sigma^{1/2}\|_2^2
+f(0^+)\tau(\sigma\{\rho=0\})+f'(+\infty)\tau(\rho\{\sigma=0\}) \\
&=\int_{\{p>0,q>0\}}qf\Bigl({p\over q}\Bigr)\,d\mu
+f(0^+)\int_{\{p=0\}}q\,d\mu+f'(+\infty)\int_{\{q=0\}}p\,d\mu \\
&=D_f(p\,d\mu\|q\,d\mu),
\end{align*}
as asserted.
\end{proof}

In particular, for any $\alpha\in(0,1)$ we have
\[
D_\alpha(\rho\|\sigma)=D_\alpha(p\,d\mu\|q\,d\mu),
\]
which will be used below. Note that this can be shown in a simper way as follows: Since
$f_\alpha(0^+)=f'(+\infty)=0$ for $f_\alpha(x):=x^\alpha$,
\begin{align}
Q_\alpha(p\,d\mu\|q\,d\mu)&=\int_{\{p,q>0\}}p^\alpha q^{1-\alpha}\,d\mu
=\int_{\{x,y>0\}}x^\alpha y^{1-\alpha}\,d\tau(E(x)F(y)) \nonumber\\
&=\tau\biggl(\int_{(0,+\infty)}x^\alpha\,dE(x)\int_{(0,+\infty)}y^{1-\alpha}\,dF(y)\biggr) \nonumber\\
&=\tau(\rho^\alpha\sigma^{1-\alpha})=Q_\alpha(\rho\|\sigma). \label{(VI.177)}
\end{align}

\begin{remark}\label{remark:VI.8}
In a recent paper \cite{AnAn2026} the Nussbaum--Szko\l a distributions have been defined in the semi-finite von
Neumann algebra setting and the equality of the quantum $f$-divergence with the classical $f$-divergence for
those classical distributions has been shown. Our discussion in the finite von Neumann algebra setting is
considerably simpler than that in \cite{AnAn2026}.
\end{remark}

The next lemma is the counterpart of the main ingredient in the proof of \cite{NSz} and of
\cite[Proposition 2]{ANSzV} in the finite von Neumann algebra setting.

\begin{lemma}\label{lemma:VI.9}
For any $\lambda\in(0,1)$ we have
\begin{align}\label{(VI.178)}
\inf_P\bigl\{\lambda\tau(\rho P)+(1-\lambda)\tau(\sigma(\egy-P))\bigr\}
\ge{1\over2}\inf_\phi\int_{[0,+\infty)^2}\bigl\{\lambda p\phi+(1-\lambda)q(1-\phi)\}\,d\mu,
\end{align}
where $\inf_P$ is taken over all projections $P$ in $\M$ and $\inf_\phi$ is over all Borel functions $\phi$ with
$0\le\phi\le1$.
\end{lemma}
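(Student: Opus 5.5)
The plan is to adapt the classical Nussbaum--Szko\l a argument, with matrix elements $|\langle a_i|P|b_j\rangle|^2$ replaced by measures built from the product spectral measure $G=E\times F$ on $L^2(\M,\tau)$.

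First, fix a projection $P\in\M$ and define two positive Borel measures on $[0,+\infty)^2$:
\[
\nu_P(\cdot):=\|G(\cdot)P\|_2^2,\qquad \nu_{P^\perp}(\cdot):=\|G(\cdot)(\egy-P)\|_2^2 .
\]
Here $P,\egy-P\in\M\subset L^2(\M,\tau)$, since $\tau$ is finite. These are countably additive because $G$ is a projection-valued measure. On rectangles they are given by $\nu_P(S\times T)=\|E(S)PF(T)\|_2^2$, so they play the role of $|\langle a_i|P|b_j\rangle|^2$.

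Second, identify the two error terms as integrals against these measures. Summing over $T$, the marginal of $\nu_P$ in $x$ is $S\mapsto\|E(S)P\|_2^2=\tau(PE(S)P)$. Hence
\[
\int x\,d\nu_P=\int x\,d\tau(PE(x)P)=\tau(P\rho P)=\tau(\rho P).
\]
Symmetrically, the $y$-marginal of $\nu_{P^\perp}$ is $T\mapsto\|(\egy-P)F(T)\|_2^2=\tau((\egy-P)F(T)(\egy-P))$. This gives $\int y\,d\nu_{P^\perp}=\tau(\sigma(\egy-P))$. Since $\rho,\sigma$ are only $\tau$-measurable, I would verify these identities for bounded truncations $\rho E([0,n])$ and $\sigma F([0,n])$, and then pass to the limit by monotone convergence; all integrands are non-negative.

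Third, compare $\nu_P+\nu_{P^\perp}$ with $\mu$. For every Borel set $A$, $G(A)\egy=G(A)P+G(A)(\egy-P)$, and $\|a+b\|^2\le 2(\|a\|^2+\|b\|^2)$. Therefore
\[
\nu_P(A)+\nu_{P^\perp}(A)\ge\tfrac12\mu(A).
\]
Now choose a dominating measure, for instance $\nu_P+\nu_{P^\perp}+\mu$, and take Radon--Nikodym densities. Pointwise,
\[
\lambda p\,\frac{d\nu_P}{d\eta}+(1-\lambda)q\,\frac{d\nu_{P^\perp}}{d\eta}\ \ge\ \min\{\lambda p,(1-\lambda)q\}\,\Bigl(\frac{d\nu_P}{d\eta}+\frac{d\nu_{P^\perp}}{d\eta}\Bigr).
\]
Combining this with the previous step yields
\[
\lambda\tau(\rho P)+(1-\lambda)\tau(\sigma(\egy-P))\ \ge\ \tfrac12\int\min\{\lambda p,(1-\lambda)q\}\,d\mu .
\]
The right-hand side equals $\tfrac12\inf_\phi\int\{\lambda p\phi+(1-\lambda)q(1-\phi)\}\,d\mu$, since the infimum is attained at $\phi=\egy_{\{\lambda p\le(1-\lambda)q\}}$. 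Taking the infimum over $P$ then gives \eqref{(VI.178)}.

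The main technical point I expect is the second step: checking rigorously that the marginals of $\nu_P$ and $\nu_{P^\perp}$ reproduce $\tau(\rho P)$ and $\tau(\sigma(\egy-P))$ when $\rho,\sigma$ are unbounded. This requires $L_{E(S)}$ and $R_{F(T)}$ to commute, and $\sum_T\|E(S)PF(T)\|_2^2=\|E(S)P\|_2^2$ to hold; both follow from $G$ being a product spectral measure. After that, one needs the truncation and monotone convergence argument. The remaining steps are elementary.
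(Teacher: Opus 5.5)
Your proof is correct and follows the same route as the paper. Both arguments rest on the inequality $\|G(A)P\|_2^2+\|G(A)(\egy-P)\|_2^2\ge\frac12\|G(A)\egy\|_2^2=\frac12\mu(A)$, combined with $\lambda x\,a+(1-\lambda)y\,b\ge\min\{\lambda x,(1-\lambda)y\}(a+b)$, and differ only in bookkeeping. The paper applies this rectangle by rectangle inside partition sums that approximate the integrals, whereas you use countable additivity of $G$ to obtain the setwise bound $\nu_P+\nu_{P^\perp}\ge\frac12\mu$ on all Borel sets and then integrate directly; this is a slightly cleaner way to pass to the integrals and sidesteps the endpoint choices in the paper's Riemann-type sums.
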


\begin{proof}
It is obvious that $\inf_\phi$ in the RHS is equal to
\[
\int_{[0,+\infty)^2}\min\{\lambda p,(1-\lambda)q\}\,d\mu.
\]
Hence we need to show that for any projection $P\in\M$,
\begin{align}\label{(VI.179)}
\lambda\tau(\rho P)+(1-\lambda)\tau(\sigma(\egy-P))
\ge{1\over2}\int_{[0,+\infty)^2}\min\{\lambda p,(1-\lambda)q\}\,d\mu.
\end{align}
Consider finite partitions
\begin{align*}
\Delta_x&:\ 0=x_0<x_1<x_2<\dots<x_m<+\infty, \\
\Delta_y&:\ 0=y_0<y_1<y_2<\dots<y_n<+\infty.
\end{align*}
Note that
\begin{align*}
&\lambda\tau(\rho P)+(1-\lambda)\tau(\sigma(\egy-P)) \\
&\quad=\int_{[0,+\infty)}\lambda x\,d\tau(F(x)P)+\int_{[0,+\infty)}(1-\lambda)y\,d\tau((\egy-P)F(y)) \\
&\quad=\int_{[0,+\infty)^2}\lambda x\,d\|G(x,y)P\|_2^2+\int_{[0,+\infty)^2}(1-\lambda)y\,d\|G(x,y)(\egy-P)\|_2^2 \\
&\quad=\sup_{\Delta_x,\Delta_y}\sum_{i=1}^m\sum_{j=1}^n
\bigl\{\lambda x_{i-1}\|G(R_{ij})P\|_2^2+(1-\lambda)y_{j-1}\|G(R_{ij})(\egy-P)\|_2^2\bigr\},
\end{align*}
where $R_{ij}:=[x_{i-1},x_i)\times[y_{j-1},y_j)$. On the other hand,
\[
\int_{[0,+\infty)^2}\min\{\lambda p,(1-\lambda)q\}\,d\mu
=\sup_{\Delta_x,\Delta_y}\sum_{i=1}^m\sum_{j=1}^n\min\{\lambda x_i,(1-\lambda)y_j\}\mu(R_{ij}).
\]
Hence, to prove \eqref{(VI.179)}, it suffices to show that
\[
x\|G(R)P\|_2^2+y\|G(R)(\egy-P)\|_2^2\ge{1\over2}\min\{x,y\}\mu(R)
\]
for any $x,y\in[0,+\infty)$ and any rectangle $R=S\times T$ with $R,S\in\B_{[0,+\infty)}$. But this is easy as
\begin{align*}
x\|G(R)P\|_2^2+y\|G(R)(\egy-P)\|_2^2
&\ge\min\{x,y\}(\|G(R)P\|_2^2+\|G(R)(\egy-P)\|_2^2) \\
&\ge\min\{x,y\}{1\over2}(\|G(R)P\|_2+\|G(R)(\i-P)\|_2)^2 \\
&\ge{1\over2}\min\{x,y\}\|G(R)P+G(R)(\egy-P)\|_2^2 \\
&={1\over2}\min\{x,y\}\|G(R)\egy\|_2^2={1\over2}\min\{x,y\}\mu(R)
\end{align*}
thanks to \eqref{(VI.176)}.
\end{proof}

The following corollary and lemma are shown similarly to \cite[Sec.~3.1.2]{HircheTomamichel_integral}.

\begin{cor}\label{cor:VI.10}
For any $t>0$,
\[
D_{(\id-t)_+}^\meas(\rho\|\sigma)\le{1\over2}\,\tau(\rho)+{1\over2}D_{(\id-t)_+}(p\,d\mu\|q\,d\mu).
\]
\end{cor}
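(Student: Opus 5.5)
We derive this from Lemma \ref{lemma:VI.9} by rewriting both sides in terms of mixed error probabilities and then rescaling. On the quantum side, the Neyman--Pearson lemma in $\M$ gives
\[
(\rho-t\sigma)_+(\egy)=\tau(\rho)-\inf_P\{\tau(\rho P)+t\tau(\sigma(\egy-P))\},
\]
with the infimum over projections $P\in\M$. This holds because $(\rho-t\sigma)(Q)$ over projections $Q=\egy-P$ is maximized by $Q=\{\rho-t\sigma>0\}$. On the classical side, the analogous identity is
\[
D_{(\id-t)_+}(p\,d\mu\|q\,d\mu)=\int p\,d\mu-\int\min\{p,tq\}\,d\mu=\tau(\rho)-\int\min\{p,tq\}\,d\mu,
\]
where we use $\int p\,d\mu=\tau(\rho)$, computed just before Lemma \ref{lemma:VI.7}.

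Next, choose $\lambda:=1/(1+t)\in(0,1)$, so that $(1-\lambda)/\lambda=t$. Lemma \ref{lemma:VI.9} with this $\lambda$, after dividing both sides by $\lambda$, gives
\[
\inf_P\{\tau(\rho P)+t\tau(\sigma(\egy-P))\}\ge\frac12\int\min\{p,tq\}\,d\mu.
\]
Here we use that the infimum over Borel $\phi$ on the right of \eqref{(VI.178)} equals $\int\min\{\lambda p,(1-\lambda)q\}\,d\mu$, as observed in its proof.

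Combining the two identities with this inequality yields
\[
(\rho-t\sigma)_+(\egy)\le\tau(\rho)-\frac12\int\min\{p,tq\}\,d\mu=\tau(\rho)-\frac12\bigl(\tau(\rho)-D_{(\id-t)_+}(p\,d\mu\|q\,d\mu)\bigr),
\]
which is exactly the claimed bound $\frac12\tau(\rho)+\frac12D_{(\id-t)_+}(p\,d\mu\|q\,d\mu)$.

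The only delicate point is the first identity. We need that restricting to projections rather than general tests loses nothing, and that the support projection of $(\rho-t\sigma)_+$ lies in $\M$; both follow from the Jordan decomposition recalled at the start of Section \ref{sec:V}.
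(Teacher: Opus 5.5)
Your proof is correct and follows essentially the same route as the paper. Both apply Lemma~\ref{lemma:VI.9} with $(1-\lambda)/\lambda=t$, rewrite the left side as $\lambda\bigl(\tau(\rho)-D_{(\id-t)_+}^\meas(\rho\|\sigma)\bigr)$, and rewrite the right side via $\int\min\{\lambda p,(1-\lambda)q\}\,d\mu=\lambda\bigl(\tau(\rho)-D_{(\id-t)_+}(p\,d\mu\|q\,d\mu)\bigr)$. The paper works with general $\lambda$, expresses the quantum infimum as $-\lambda\tau(\rho-t\sigma)_-$, and substitutes $t=(1-\lambda)/\lambda$ at the end; fixing $\lambda=1/(1+t)$ first and dividing by $\lambda$, as you do, is only a cosmetic difference.
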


\begin{proof}
The LHS of \eqref{(VI.178)} is
\begin{align*}
&(1-\lambda)\tau(\sigma)+\inf_P\tau((\lambda\rho-(1-\lambda)\sigma)P) \\
&\quad=(1-\lambda)\tau(\sigma)-\lambda\tau\Bigl(\rho-{1-\lambda\over\lambda}\,\sigma\Bigr)_- \\
&\quad=(1-\lambda)\tau(\sigma)-\lambda\Bigl[\tau\Bigl(\rho-{1-\lambda\over\lambda}\,\sigma\Bigr)_+
-\tau\Bigl(\rho-{1-\lambda\over\lambda}\,\sigma\Bigr)\Bigr] \\
&\quad=\lambda\tau(\rho)-\lambda D_{(\id-{1-\lambda\over\lambda})_+}^\meas(\rho\|\sigma),
\end{align*}
and similarly the RHS of \eqref{(VI.178)} is
\[
{\lambda\over2}\tau(\rho)-{\lambda\over2}\,D_{(\id-{1-\lambda\over\lambda})_+}(p\,d\mu\|q\,d\mu).
\]
Hence \eqref{(VI.178)} implies that
\[
\lambda\tau(\rho)-\lambda D_{(\id-{1-\lambda\over\lambda})_+}^\meas(\rho\|\sigma)
\ge{\lambda\over2}\tau(\rho)-{\lambda\over2}\,D_{(\id-{1-\lambda\over\lambda})_+}(p\,d\mu\|q\,d\mu),
\]
that is,
\[
D_{(\id-{1-\lambda\over\lambda})_+}^\meas(\rho\|\sigma)
\le{1\over2}\,\tau(\rho)+{1\over2}\,D_{(\id-{1-\lambda\over\lambda})_+}(p\,d\mu\|q\,d\mu).
\]
Letting $t={1-\lambda\over\lambda}$ gives the asserted inequality.
\end{proof}

\begin{lemma}\label{lemma:VI.11}
Assume that $\M$ is a finite von Neumann algebra with a faithful normal finite trace $\tau$, and let
$\rho,\sigma\in\M_{*,\ge0}$. Then for any $\alpha\in(0,1)$,
\[
D_\alpha^{\meas,\hs}(\rho\|\sigma)\le D_\alpha^{\mathrm{P}}(\rho\|\sigma)+{1\over1-\alpha}\log 2.
\]
\end{lemma}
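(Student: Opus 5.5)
The plan is to insert the bound of Corollary \ref{cor:VI.10} into the integral representation of $Q_\alpha^{\meas,\hs}$ and then recognize the classical Rényi quantity of the Nussbaum--Szko\l a type pair $(p\,d\mu,q\,d\mu)$. Since $\alpha\in(0,1)$, we have $\alpha-1<0$. The claim is therefore equivalent to the lower bound
\[
Q_\alpha^{\meas,\hs}(\rho\|\sigma)\ge \tfrac12 Q_\alpha(p\,d\mu\|q\,d\mu),
\]
after which \eqref{(VI.177)} (equivalently Lemma \ref{lemma:VI.7}) identifies the right-hand side with $\tfrac12 Q_\alpha(\rho\|\sigma)$. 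Taking $\frac{1}{\alpha-1}\log$ then gives $D_\alpha^{\meas,\hs}\le D_\alpha^{\mathrm P}+\frac{1}{1-\alpha}\log 2$.

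To obtain the lower bound, I would use a representation of $Q_\alpha^{\meas,\hs}$ in which the hockey stick terms carry a negative weight, so that an upper bound on $D^\meas_{(\id-t)_+}$ turns into a lower bound on $Q$. Taking the concave function $f=x^\alpha$ in the analogue of Remark \ref{rem:a=0}, we have $f(0^+)=0$ and $f'(0^+)=+\infty$, so that limit is not available directly. Instead I would work with $a\to+\infty$. Using $f(a)\sigma(\egy)+f'(a^+)(\rho-a\sigma)(\egy)=\alpha a^{\alpha-1}\rho(\egy)+(1-\alpha)a^\alpha\sigma(\egy)$, one obtains
\begin{align*}
Q_\alpha^{\meas,\hs}(\rho\|\sigma)
&=\alpha a^{\alpha-1}\rho(\egy)+(1-\alpha)a^\alpha\sigma(\egy)
-\alpha(1-\alpha)\int_0^a t^{\alpha-2}D^\meas_{(\id-t)_-}(\rho\|\sigma)\,dt\\
&\quad-\alpha(1-\alpha)\int_a^\infty t^{\alpha-2}D^\meas_{(\id-t)_+}(\rho\|\sigma)\,dt .
\end{align*}
Rewriting $D_-=D_+-(\rho-t\sigma)(\egy)$ and integrating the linear part explicitly, the boundary terms cancel in the limit $a\to\infty$, since $t^{\alpha-1}\to0$ and $\int_0^a t^{\alpha-1}\,dt$ is finite. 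This leaves
\[
Q_\alpha^{\meas,\hs}(\rho\|\sigma)=\alpha(1-\alpha)\int_0^\infty t^{\alpha-2}\bigl[\rho(\egy)-D^\meas_{(\id-t)_+}(\rho\|\sigma)\bigr]\,dt .
\]
This is exactly $\alpha(1-\alpha)\int t^{\alpha-2}\beta_{\mathrm{mix},t}\,dt$ with $\beta_{\mathrm{mix},t}=\rho(\egy)-(\rho-t\sigma)_+(\egy)$, i.e.\ the von Neumann analogue of Corollary \ref{cor:measured hsq expressions2}. The integrand is non-negative and bounded by $\min\{\rho(\egy),t\sigma(\egy)\}$, so the integral converges. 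The same identity holds for the classical pair $(p\,d\mu,q\,d\mu)$, with total masses $\tau(\rho)$ and $\tau(\sigma)$.

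Now apply Corollary \ref{cor:VI.10}:
\[
\rho(\egy)-D^\meas_{(\id-t)_+}(\rho\|\sigma)\ge \tfrac12\bigl[\tau(\rho)-D_{(\id-t)_+}(p\,d\mu\|q\,d\mu)\bigr].
\]
Integrating against the positive weight $\alpha(1-\alpha)t^{\alpha-2}$ and using the classical version of the formula above (for instance Corollary \ref{cor:classical fdiv repr0} together with the same manipulation) gives $Q_\alpha^{\meas,\hs}(\rho\|\sigma)\ge\frac12 Q_\alpha(p\,d\mu\|q\,d\mu)$, which is the desired bound.

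The main obstacle is the rigorous justification of the $a\to\infty$ representation in the von Neumann setting. This requires the boundedness $D^\meas_{(\id-t)_+}\le\rho(\egy)$ and $D^\meas_{(\id-t)_-}\le t\sigma(\egy)$, plus dominated convergence, to exchange limits. One must also check that no $+\infty-\infty$ arises: the integrand is a difference $\rho(\egy)-D_+\ge0$, so the combined integrand should be kept rather than splitting the two pieces. If the limit is delicate, I would instead fix $a=1$ in \eqref{(VI.162)} and apply the corresponding bound for $D_-$ obtained from Corollary \ref{cor:VI.10} with the roles of $\rho$ and $\sigma$ swapped via the symmetry $D_-^\meas(\rho\|\sigma)=tD^\meas_{(\id-t^{-1})_+}(\sigma\|\rho)$.
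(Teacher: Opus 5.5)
Your argument is correct and is essentially the paper's: both insert the bound of Corollary~\ref{cor:VI.10} into an integral representation of $Q_\alpha^{\meas,\hs}$ in which the hockey stick terms carry the negative weight $\alpha(\alpha-1)t^{\alpha-2}$, and then identify the resulting classical quantity with $\frac12 Q_\alpha(\rho\|\sigma)$ via \eqref{(VI.177)}.

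The only difference is which representation is used.
\begin{itemize}
\item The paper uses the $a=1$ form \eqref{(VI.162)}. It contains both $D^\meas_{(\id-t)_+}(\rho\|\sigma)$ and $D^\meas_{(\id-t)_+}(\sigma\|\rho)$, so it needs Corollary~\ref{cor:VI.10} a second time with $\rho,\sigma$ swapped, followed by a cancellation of the constant terms. This is exactly your fallback.
\item Your mixed-error form $\alpha(1-\alpha)\int_0^{+\infty}t^{\alpha-2}\bigl[\rho(\egy)-D^\meas_{(\id-t)_+}(\rho\|\sigma)\bigr]\,dt$ needs only one application, with the bound holding pointwise in $t$.
\end{itemize}
Also, no limit $a\to+\infty$ is actually required. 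After substituting $D_-=D_+-(\rho-t\sigma)(\egy)$ and keeping the combined integrand, the $a$-dependent terms cancel identically for every $a$.
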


\begin{proof}
By \eqref{(VI.162)} in the classical case we have
\begin{align*}
Q_\alpha(p\,d\mu\|q\,d\mu)&=\alpha\tau(\rho)+(1-\alpha)\tau(\sigma) \\
&\quad+\alpha(\alpha-1)\int_1^{+\infty}\bigl\{t^{\alpha-2}D_{(\id-t)_+}(p\,d\mu\|q\,d\mu)
+t^{1-\alpha}D_{(\id-t)_+}(q\,d\mu\|p\,d\mu)\bigr\}\,dt.
\end{align*}
From \eqref{(VI.162)} and Corollary \ref{cor:VI.10} together with the above expression it follows that
\begin{align*}
Q_\alpha^{\meas,\hs}(\rho\|\sigma)
&\ge\alpha\tau(\rho)+(1-\alpha)\tau(\sigma) \\
&\quad+\alpha(\alpha-1)\int_1^{+\infty}\Bigl[
t^{\alpha-2}\Bigl({1\over2}\,\tau(\rho)+{1\over2}\,D_{\id-t)_+}^\meas(p\,d\mu\|q\,d\mu)\Bigr) \\
&\hskip3.5cm
+t^{-1-\alpha}\Bigl({1\over2}\tau(\sigma)+{1\over2}\,D_{(\id-t)_+}^\meas(q\,d\mu\|p\,d\mu)\Bigr)\Bigr]\,dt \\
&={1\over2}\{\alpha\tau(\rho)+(1-\alpha)\tau(\sigma)\}
+{\alpha(\alpha-1)\over2}\int_1^{+\infty}\bigl\{t^{\alpha-2}\tau(\rho)+t^{-1-\alpha}\tau(\sigma)\bigr\}\,dt \\
&\qquad+{1\over2}\,Q_\alpha(p\,d\mu\|q\,d\mu) \\
&={1\over2}\{\alpha\tau(\rho)+(1-\alpha)\tau(\sigma)\}
+{\alpha(\alpha-1)\over2}\Bigl\{{\tau(\rho)\over1-\alpha}+{\tau(\sigma)\over\alpha}\Bigr\}
+{1\over2}\,Q_\alpha(p\,d\mu\|q\,d\mu) \\
&={1\over2}\,Q_\alpha(\rho\|\sigma).
\end{align*}
thanks to \eqref{(VI.177)}. If If $\rho(\egy)\le\sigma(\egy)$, then we have
\[
Q_\alpha^{\meas,\hs}(\rho\|\sigma)\ge{1\over2}\,Q_\alpha(\rho\|\sigma)
\]
which yields the asserted inequality.
\end{proof}

\begin{lemma}\label{lemma:VI.12}
The assertion of Lemma \ref{lemma:VI.11} holds in the general von Neumann algebra $\M$.
\end{lemma}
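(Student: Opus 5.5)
The plan is to reduce the general case to the finite case, using Haagerup's reduction theorem from Section \ref{sec:VI.A} together with the martingale convergence results of Section \ref{sec:V.E}. The target inequality is equivalent (for $\alpha\in(0,1)$, where $\frac{1}{\alpha-1}<0$) to
\[
Q_\alpha^{\meas,\hs}(\rho\|\sigma)\ge\tfrac12\,Q_\alpha(\rho\|\sigma),
\]
under the same normalization ($\rho(\egy)\le\sigma(\egy)$) used at the end of the proof of Lemma \ref{lemma:VI.11}. We therefore work with the $Q$-quantities throughout. Note that $Q_\alpha^{\meas,\hs}=D_{f_\alpha}^{\meas,\hs}$ with $f_\alpha$ concave on $(0,+\infty)$.

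\emph{Step 1: reduce to the $\sigma$-finite case.} Let $e$ be the support of $\rho+\sigma$. Both $Q_\alpha^{\meas,\hs}$ and the Petz quantity $Q_\alpha$ are unchanged when $\rho,\sigma$ are regarded as functionals on $e\M e$, exactly as in the proof of Corollary \ref{cor:V.19}. Hence we may assume $\M$ is $\sigma$-finite.

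\emph{Step 2: pass to the crossed product.} Form $\hat\M=\M\rtimes_{\sigma^\omega}G$ and set $\hat\rho=\rho\circ E_\M$, $\hat\sigma=\sigma\circ E_\M$.
\begin{itemize}
\item The map $E_\M$ is a normal unital completely positive map, and the inclusion $\M\hookrightarrow\hat\M$ restricts $\hat\rho$ back to $\rho$. Monotonicity of $D^\meas_{(\id-t)_\pm}$ in both directions therefore gives $D^\meas_{(\id-t)_\pm}(\hat\rho\|\hat\sigma)=D^\meas_{(\id-t)_\pm}(\rho\|\sigma)$ for every $t$, hence $Q_\alpha^{\meas,\hs}(\hat\rho\|\hat\sigma)=Q_\alpha^{\meas,\hs}(\rho\|\sigma)$.
\item By \eqref{(VI.165)}, $Q_\alpha(\rho\|\sigma)=Q_\alpha(\hat\rho\|\hat\sigma)=\lim_n Q_\alpha(\hat\rho_n\|\hat\sigma_n)$, where $\hat\rho_n:=\hat\rho|_{\M_n}$ and $\hat\sigma_n:=\hat\sigma|_{\M_n}$. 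Here $Q_\alpha$ decreases to its limit, since $D_\alpha^{\mathrm P}$ increases and $\alpha-1<0$.
\end{itemize}

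\emph{Step 3: apply the finite case and pass to the limit.} Each $\M_n$ is finite with a faithful normal tracial state, and the total masses $\hat\rho_n(\egy)=\rho(\egy)$, $\hat\sigma_n(\egy)=\sigma(\egy)$ are preserved. The proof of Lemma \ref{lemma:VI.11} therefore gives
\[
Q_\alpha^{\meas,\hs}(\hat\rho_n\|\hat\sigma_n)\ge\tfrac12\,Q_\alpha(\hat\rho_n\|\hat\sigma_n)\quad\text{for all } n.
\]
Now let $n\to\infty$. Theorem \ref{theorem:V.16} applies to $f_\alpha$: assumptions (A) and (B) hold for the measured family, $\hat\M$ is $\sigma$-finite, and $(\bigcup_n\M_n)''=\hat\M$. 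Applying it to the convex function $-f_\alpha$ shows $Q_\alpha^{\meas,\hs}(\hat\rho_n\|\hat\sigma_n)\searrow Q_\alpha^{\meas,\hs}(\hat\rho\|\hat\sigma)$. Combining with Step 2,
\[
Q_\alpha^{\meas,\hs}(\rho\|\sigma)=\lim_n Q_\alpha^{\meas,\hs}(\hat\rho_n\|\hat\sigma_n)\ge\tfrac12\lim_n Q_\alpha(\hat\rho_n\|\hat\sigma_n)=\tfrac12\,Q_\alpha(\rho\|\sigma),
\]
which is the claim.

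\emph{Main obstacle.} The delicate point is the application of Theorem \ref{theorem:V.16} to the concave $f_\alpha$. That theorem is stated for convex $f$, so one must check that $-f_\alpha$ produces finite quantities: they are bounded by $\rho(\egy)+\sigma(\egy)$, since $D^\meas_{(\id-t)_-}\le t\sigma(\egy)$ makes the integrals converge for $\alpha\in(0,1)$. With finiteness in hand, monotone convergence for $-Q$ gives the decreasing limit. If this fails to go through cleanly, the fallback is to use the joint norm lower semicontinuity of $D_{-f_\alpha}^{\meas,\hs}$ (Proposition \ref{prop:V.12}) together with $\hat\rho\circ E_{\M_n}\to\hat\rho$ in norm and monotonicity under $E_{\M_n}$, exactly as in the proof of Theorem \ref{theorem:V.16}. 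This gives $\limsup_n Q_\alpha^{\meas,\hs}(\hat\rho_n\|\hat\sigma_n)\le Q_\alpha^{\meas,\hs}(\hat\rho\|\hat\sigma)$, which is precisely the direction needed.
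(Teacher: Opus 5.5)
Your proposal is correct and follows the paper's proof essentially step for step: you reduce to the $\sigma$-finite case via the support of $\rho+\sigma$, apply the finite-algebra bound on each $\M_n$ from Haagerup's reduction, get convergence of the Petz quantities from \eqref{(VI.165)}, use martingale convergence for the measured hockey stick quantity, and obtain invariance under $E_\M$ from the DPI. Your justification of the martingale step, applying Theorem~\ref{theorem:V.16} to the convex function $-f_\alpha$ after checking finiteness, is if anything the more precise reference; the paper cites Proposition~\ref{prop:V.17} for that step, even though its martingale part concerns a different family of subalgebras.
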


\begin{proof}
We may assume that $\M$ is $\sigma$-finite, since both $D_\alpha^{\meas,\hs}(\rho\|\sigma)$ and
$D_\alpha^{\mathrm{P}}(\rho\|\sigma)$ are the same as those of $\rho,\sigma$ restricted on $e\M e$,
respectively, where $e$ is the support projection of $\rho+\sigma$. We use the Haagerup reduction theorem.
In the notations in Sec.~0.1, Lemma \ref{lemma:VI.11} implies that
\[
D_\alpha^{\meas,\hs}(\hat\rho|_{\M_n}\|\hat\sigma|_{\M_n})
\le D_\alpha^{\mathrm{P}}(\hat\rho|_{\M_n}\|\hat\sigma|_{\M_n})+{1\over1-\alpha}\log 2
\]
for all $n\in\bN$. By \eqref{(VI.165)} we have
$D_\alpha^{\mathrm{P}}(\hat\rho|_{\M_n}\|\hat\sigma|_{\M_n})\nearrow D_\alpha^{\mathrm{P}}(\rho\|\sigma)$.
By the martingale convergence in Proposition \ref{prop:V.17} we have
\[
Q_\alpha^{\meas,\hs}(\hat\rho|_{\M_n}\|\hat\sigma|_{\M_n})\searrow
Q_\alpha^{\meas,\hs}(\hat\rho\|\hat\sigma)
\]
so that
\[
D_\alpha^{\meas,\hs}(\hat\rho|_{\M_n}\|\hat\sigma|_{\M_n})\nearrow
D_\alpha^{\meas,\hs}(\hat\rho\|\hat\sigma).
\]
By the DPI of $D_\alpha^{\meas,\hs}$ it further follows that
\[
D_\alpha^{\meas,\hs}(\hat\rho\|\hat\sigma)=D_\alpha^{\meas,\hs}(\rho\|\sigma).
\]
Hence the result follows.
\end{proof}

\begin{prop}\label{prop:VI.13}
Let $\M$ be a general von Neumann algebra and $\rho,\sigma\in\M_{+,\ge0}$. If $\rho(\egy)\le\sigma(\egy)$, then
for any $\alpha\in(0,1)$, $\overline D_\alpha^{\meas,\hs}(\rho\|\sigma)=
\lim_{n\to\infty}{1\over n}\,D_\alpha^{\meas,\hs}(\rho^{\otimes n}\|\sigma^{\otimes n})$ exists and
\[
\overline D_\alpha^{\meas,\hs}(\rho\|\sigma)=D_\alpha(\rho\|\sigma).
\]
\end{prop}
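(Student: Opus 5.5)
The plan is to squeeze $\frac1n D_\alpha^{\meas,\hs}(\rho^{\otimes n}\|\sigma^{\otimes n})$ between a lower bound from Lemma \ref{lemma:VI.6} and an upper bound from Lemma \ref{lemma:VI.12}, both applied to the tensor powers $\rho_n:=\rho^{\otimes n}$, $\sigma_n:=\sigma^{\otimes n}$ in $\M^{\otimes n}$. Here $D_\alpha(\rho\|\sigma)$ means the Petz-type $D_\alpha^{\mathrm{P}}(\rho\|\sigma)$.

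For the lower bound, Lemma \ref{lemma:VI.6} gives directly
\[
\liminf_{n\to\infty}\frac1n D_\alpha^{\meas,\hs}(\rho_n\|\sigma_n)\ge D_\alpha^{\mathrm{P}}(\rho\|\sigma).
\]
No hypothesis on the masses is needed for this step.

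For the upper bound, I first check the hypothesis of Lemma \ref{lemma:VI.12}. The assumption $\rho(\egy)\le\sigma(\egy)$ gives $\rho_n(\egy)=\rho(\egy)^n\le\sigma(\egy)^n=\sigma_n(\egy)$. This mass condition is exactly what was used at the end of the proof of Lemma \ref{lemma:VI.11}, and hence of Lemma \ref{lemma:VI.12}. Applying Lemma \ref{lemma:VI.12} in $\M^{\otimes n}$ yields
\[
D_\alpha^{\meas,\hs}(\rho_n\|\sigma_n)\le D_\alpha^{\mathrm{P}}(\rho_n\|\sigma_n)+\frac{1}{1-\alpha}\log2 .
\]
The Petz divergence is additive, since $Q_\alpha^{\mathrm{P}}(\rho_n\|\sigma_n)=Q_\alpha^{\mathrm{P}}(\rho\|\sigma)^n$. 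In the von Neumann setting this follows because the relative modular operator of a tensor product is the tensor product of the relative modular operators, and $h_{\rho\otimes\rho}^{1/2}=h_\rho^{1/2}\otimes h_\rho^{1/2}$. Therefore $D_\alpha^{\mathrm{P}}(\rho_n\|\sigma_n)=nD_\alpha^{\mathrm{P}}(\rho\|\sigma)$. Dividing by $n$ and letting $n\to\infty$, the additive constant $\frac{\log 2}{1-\alpha}$ disappears, which gives \eqref{(VI.175)}.

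Combining the two bounds shows that the limit exists and equals $D_\alpha(\rho\|\sigma)$. I would also treat the degenerate case $Q_\alpha(\rho\|\sigma)=0$, i.e.\ $D_\alpha=+\infty$, separately. In that case the upper bound is vacuous and the lower bound already gives $+\infty$.

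I expect the main obstacle to be the bookkeeping inside Lemma \ref{lemma:VI.11}, which is where the normalization enters, and the requirement that Lemma \ref{lemma:VI.12} be valid for the possibly non-$\sigma$-finite algebra $\M^{\otimes n}$. For the latter, I would restrict to the support projection of $\rho_n+\sigma_n$, which is $\sigma$-finite, as in that proof. The tensor-additivity of $Q_\alpha^{\mathrm{P}}$ in general von Neumann algebras is standard and I would cite it.
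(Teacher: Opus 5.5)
Your proposal is correct and is essentially the paper's proof. The paper applies Lemma~\ref{lemma:VI.12} to $\rho^{\otimes n},\sigma^{\otimes n}$ and uses additivity of $D_\alpha^{\mathrm{P}}$ to get $\frac1n D_\alpha^{\meas,\hs}(\rho^{\otimes n}\|\sigma^{\otimes n})\le D_\alpha^{\mathrm{P}}(\rho\|\sigma)+\frac{\log 2}{n(1-\alpha)}$, then combines this with the lower bound of Lemma~\ref{lemma:VI.6}. Your extra checks (the mass condition $\rho(\egy)^n\le\sigma(\egy)^n$ surviving tensoring, tensor additivity of $Q_\alpha^{\mathrm{P}}$, and the degenerate case $Q_\alpha(\rho\|\sigma)=0$) are details the paper leaves implicit.
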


\begin{proof}
Since Lemma \ref{lemma:VI.12} gives
\[
{1\over n}\,D_\alpha^{\meas,\hs}(\rho^{\otimes n}\|\sigma^{\otimes n})
\le D_\alpha^{\mathrm{P}}(\rho\|\sigma)+{1\over n}{1\over1-\alpha}\log 2
\]
for all $n\in\bN$, we have \eqref{(VI.175)}. The result follows from Lemma \ref{lemma:VI.6} and \eqref{(VI.175)}.
\end{proof}

\section{Equality cases between $D_f^\meas(\rho\|\sigma)$, $D_f^{\meas,\hs}(\rho\|\sigma)$,
$D_f^{\max,\hs}(\rho\|\sigma)$ and $D_f^{\max}(\rho\|\sigma)$}\label{sec:VII}

We have the ordering \eqref{hs fdiv order vN} for any $f\in(\bR^{(0,+\infty)})_{\conv}$ and
$\rho,\sigma\in\M_{*,\ge0}$, and all of the above inequalities become equality if $\rho,\sigma$ commute
(see \cite[Definition A.57]{Hiai2021Div} on the commutativity of $\rho,\sigma$ in the von Neumann algebra
setting). In this notes we examine the converse direction: Does the equality case of an inequality in the
above imply that $\rho,\sigma$ commute?

\subsection{Case $D_f^\meas(\rho\|\sigma)=D_f^{\meas,\hs}(\rho\|\sigma)$}\label{sec:VII.A}

First, let us consider the finite dimensional case for simplicity. Let $\rho,\sigma\in\B(\hil)_{\ge0}$ where
$d:=\dim\hil<+\infty$, and let $f:(0,+\infty)\to\bR$ be a convex function such that $\supp(df')=(0,+\infty)$.
Assume that
\begin{align}\label{(VII.180)}
D_f^\meas(\rho\|\sigma)=D_f^{\meas,\hs}(\rho\|\sigma)<+\infty.
\end{align}
By \cite[Proposition 4.17]{HiaiMosonyi2017} there exists a measurement $\fM=(M_i)_{i=1}^n$ (where we can
let $n\le d^2$) such that
\[
D_f^\meas(\rho\|\sigma)=D_f(\fM(\rho)\|\fM(\sigma))
\ \biggl(=\sum_{i=1}^n(\Tr\,M_i\sigma)f\Bigl({\Tr\,M_i\rho\over\Tr\,M_i\sigma}\Bigr)\biggr),
\]
so that we have
\begin{align}
D_f^\meas(\rho\|\sigma)&=f(1)\Tr\,\sigma+f'(1^+)\Tr(\rho-\sigma) \nonumber\\
&\quad+\int_{(0,1]}D_{(\id-t)_-}(\fM(\rho)\|\fM(\sigma))\,df'(t)
+\int_{(1,+\infty)}D_{(\id-t)_+}(\fM(\rho)\|\fM(\sigma))\,df'(t). \label{(VII.181)}
\end{align}
On the other hand,
\begin{align}
D_f^{\meas,\hs}(\rho\|\sigma)&=f(1)\Tr\,\sigma+f'(1^+)\Tr(\rho-\sigma) \nonumber\\
&\quad+\int_{(0,1]}D_{(\id-t)_-}^\meas(\rho\|\sigma)\,df'(t)
+\int_{(1,+\infty)}D_{(\id-t)_+}^\meas(\rho\|\sigma)\,df'(t)<+\infty. \label{(VII.182)}
\end{align}
Since $D_{(\id-t)_\pm}(\fM(\rho)\|\fM(\sigma))\le D_{(\id-t)_\pm}^\meas(\rho\|\sigma)$ for all $t\in(0,+\infty)$.
we have
\[
\begin{cases}D_{(\id-t)_-}(\fM(\rho)\|\fM(\sigma))=D_{(\id-t)_-}^\meas(\rho\|\sigma)
& \mbox{for $df'$-a.e.\ $t\in(0,1]$}, \\
D_{(\id-t)_+}(\fM(\rho)\|\fM(\sigma))=D_{(\id-t)_-}^\meas(\rho\|\sigma)
& \mbox{for $df'$-a.e.\ $t\in(1,+\infty)$}.
\end{cases}
\]
From \eqref{eq:hsp def3}--\eqref{eq:hsp meas} these imply that
$\|\fM(\rho)-t\fM(\sigma)\|_1=\|\rho-t\sigma\|_1$ for a.e.\ $df'$-a.e.\ $t\in(0,+\infty)$. From the assumption
on the support of $df'$ and the continuity of $t\mapsto\|\fM(\rho)-\fM(\sigma)\|_1$ and
$t\mapsto\|\rho-t\sigma\|_1$ on $(0,+\infty)$ we have
\[
\|\fM(\rho)-t\fM(\sigma)\|_1=\|\rho-t\sigma\|_1\quad\mbox{for all $t\in(0,+\infty)$},
\]
from which it is easy to see that for any $\lambda_i\in(0,1)$, $i=1,2$,
\begin{align*}
&\|\{\lambda_1\fM(\rho)+(1-\lambda_1)\fM(\sigma)\}-t\{\lambda_2\fM(\rho)+(1-\lambda_2)\fM(\sigma)\}\|_1 \\
&\quad=\|\{\lambda_1\rho+(1-\lambda_1)\sigma\}-t\{\lambda_2\rho+(1-\lambda_2)\sigma\}\|_1
\quad\mbox{for all $t\in(0,+\infty)$}.
\end{align*}
This implies by Theorem \ref{theorem:V.21} that
\[
D(\fM(\lambda_1\rho+(1-\lambda_1)\sigma)\|\fM(\lambda_2\rho+(1-\lambda_2)\sigma))
=D(\lambda_1\rho+(1-\lambda_1)\sigma\|\lambda_2\rho+(1-\lambda_2)\sigma),
\]
so that $\lambda_1\rho+(1-\lambda_2)\sigma$ and $\lambda_2\rho+(1-\lambda_2)\sigma$ commute by
\cite[Theorem 7.10]{Hiai2021Div}. Hence $\rho,\sigma$ commute.

Next, we consider the von Neumann algebra case.

\begin{thm}\label{theorem:VII.1}
Assume that $\M$ is an injective von Neumann algebra, and let $\rho,\sigma\in\M_{*,\ge0}$. If
$D_f^\meas(\rho\|\sigma)=D_f^{\meas,\hs}(\rho\|\sigma)<+\infty$ for some $f\in(\bR^{(0,+\infty)})_{\conv}$ with
$\supp(df')=(0,+\infty)$, then $\rho,\sigma$ commute, i.e., $h_\rho h_\sigma=h_\sigma h_\rho$ (see
\cite[Definition A.57]{Hiai2021Div}).
\end{thm}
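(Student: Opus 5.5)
The plan is to follow the finite-dimensional argument given just before the statement, replacing its two inputs: the existence of an optimal finite measurement, and the finite-dimensional commutativity criterion. The main new difficulty is that in a general (injective) von Neumann algebra the supremum defining $D_f^\meas(\rho\|\sigma)$ need not be attained. So instead of a single optimal measurement, I will work with a sequence of finite-outcome measurements $\fM_k$ satisfying
\[
D_f(\fM_k(\rho)\|\fM_k(\sigma))\nearrow D_f^\meas(\rho\|\sigma).
\]
By Corollary \ref{cor:classical fdiv repr0}, each term decomposes with $a=1$ into hockey stick integrals. The DPI gives $D_{(\id-t)_\pm}(\fM_k(\rho)\|\fM_k(\sigma))\le D_{(\id-t)_\pm}^\meas(\rho\|\sigma)$ pointwise in $t$.

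Next I would use the finiteness assumption together with \eqref{D-f-q-hs}. The nonnegative deficits
\[
g_k(t):=D_{(\id-t)_\pm}^\meas(\rho\|\sigma)-D_{(\id-t)_\pm}(\fM_k(\rho)\|\fM_k(\sigma))
\]
satisfy $\int g_k\,df'\to0$. Passing to a subsequence, $g_k\to0$ for $df'$-a.e.\ $t$. Since each $g_k$ is continuous and bounded on compacts, and $\supp(df')=(0,+\infty)$, I want $g_k(t)\to0$ for every $t$. A cleaner way to get this: each $t\mapsto D_{(\id-t)_+}(\fM_k(\rho)\|\fM_k(\sigma))$ is convex and nonincreasing, so a pointwise limit on a dense set extends to all $t$ by convexity and monotonicity. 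The conclusion is
\[
\lim_k\|\fM_k(\rho)-t\fM_k(\sigma)\|=\|\rho-t\sigma\|\quad\text{for all } t>0.
\]

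The obstacle is that, with no single channel achieving equality, Theorem \ref{theorem:V.21} cannot be applied directly. The route I would try first avoids it. Use the observation, as in the finite-dimensional argument, that for $\lambda_1,\lambda_2\in(0,1)$ the norms of $\omega_1-t\omega_2$, with $\omega_i:=\lambda_i\rho+(1-\lambda_i)\sigma$, are positive combinations $c\,\|\rho-s\sigma\|$ for suitable $c,s$ (or involve $\|\sigma-s\rho\|$). Hence the measured hockey stick divergences of $(\fM_k(\omega_1),\fM_k(\omega_2))$ also converge to those of $(\omega_1,\omega_2)$ for all $t$. They are uniformly controlled, since the $\omega_i$ are mutually dominated and $D_{\max}(\omega_1\|\omega_2)<\infty$. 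Then, by dominated convergence in \eqref{D-f-q-hs} with $f=x\log x$ and the integration range restricted as in Remark \ref{rem:hsd def}, I get
\[
\lim_k D(\fM_k(\omega_1)\|\fM_k(\omega_2))=D_{x\log x}^{\meas,\hs}(\omega_1\|\omega_2)=D(\omega_1\|\omega_2),
\]
where the last equality is Corollary \ref{cor:V.19}, using injectivity. Since the left side is at most $D^\meas(\omega_1\|\omega_2)\le D(\omega_1\|\omega_2)$, it follows that $D^\meas(\omega_1\|\omega_2)=D(\omega_1\|\omega_2)<+\infty$.

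Finally I would invoke the known characterization (\cite[Theorem 7.10]{Hiai2021Div} or its measured-relative-entropy analogue in the von Neumann setting): equality $D^\meas(\omega_1\|\omega_2)=D(\omega_1\|\omega_2)<\infty$ forces $h_{\omega_1}h_{\omega_2}=h_{\omega_2}h_{\omega_1}$. Choosing two distinct $\lambda$'s, linear algebra then recovers $h_\rho h_\sigma=h_\sigma h_\rho$, since $h_\rho,h_\sigma$ are linear combinations of $h_{\omega_1},h_{\omega_2}$.

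The step I expect to be hardest is justifying this last characterization for the \emph{measured} relative entropy (rather than equality under a fixed channel) in the von Neumann algebra setting. If it is not directly available, the fallback is to extract a limiting reversing structure from the $\fM_k$: weak*-compactness of unital positive maps into $\ell^\infty$-type algebras, realized on a commutative algebra, would produce a single channel to a commutative algebra preserving $D(\omega_1\|\omega_2)$. Applying Theorem \ref{theorem:V.21} to that channel gives reversibility through a commutative algebra, which yields commutativity.
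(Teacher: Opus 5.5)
Your proposal is correct and follows essentially the paper's own proof. The shared steps are: an approximating sequence of measurements; $df'$-a.e.\ convergence of the hockey stick deficits, upgraded to all $t$ by convexity (Lemma~\ref{lemma:VII.2}); transfer to a mutually dominated pair of convex combinations of $\rho,\sigma$; dominated convergence for $x\log x$ together with Corollary~\ref{cor:V.19}; and finally \cite[Theorem 7.10]{Hiai2021Div}. The differences are only cosmetic: the paper reaches the specific pair $\frac{\rho+\sigma}{2},\frac{3\rho+\sigma}{4}$ via \eqref{D-meas++} and the symmetry $t\leftrightarrow t^{-1}$ instead of your direct affine rescaling, and it applies \cite[Theorem 7.10]{Hiai2021Div} directly to the convergence $D(\fM_n(\omega_1)\|\fM_n(\omega_2))\to D(\omega_1\|\omega_2)$, so your fallback argument is not needed.
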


\begin{proof}
Note that there exists a sequence $\{\fM_n\}$ of measurements (possibly with finite outcomes) in $\M$ such
that $D_f(\fM_n(\rho)\|\fM_n(\sigma))\to D_f^\meas(\rho\|\sigma)$ as $n\to\infty$ (see
\cite[Sec.~5.1]{Hiai2021Div}). Then the assumption says that
$D_f(\fM_n(\rho)\|\fM_n(\sigma))\to D_f^{\meas,\hs}(\rho\|\sigma)<+\infty$ as $n\to\infty$, so that
\begin{align*}
&\int_{(0,1]}D_{(\id-t)_-}(\fM_n(\rho)\|\fM_n(\sigma))\,df'(t)
+\int_{(1,+\infty)}D_{(\id-t)_+}(\fM_n(\rho)\|\fM_n(\sigma))\,df'(t) \\
&\quad\to\int_{(0,1]}D_{(\id-t)_-}^\meas(\rho\|\sigma)\,df'(t)
+\int_{(1,+\infty)}D_{(\id-t)_+}^\meas(\rho\|\sigma)\,df'(t)<+\infty.
\end{align*}
Since $D_{(\id-t)_\pm}(\fM_n(\rho)\|\fM_n(\sigma))\le D_{(\id-t)_\pm}^\meas(\rho\|\sigma)$, this means that
\begin{align*}
&1_{(0,1]}(t)D_{(\id-t)_-}(\fM_n(\rho)\|\fM_n(\sigma))
+1_{(1,+\infty)}(t)D_{(\id-t)_+}(\fM_n(\rho)\|\fM_n(\sigma)) \\
&\quad\to1_{(0,1]}(t)D_{(\id-t)_-}^\meas(\rho\|\sigma)
+1_{(1,+\infty)}(t)D_{(\id-t)_+}^\meas(\rho\|\sigma)
\end{align*}
in the $L^1$-norm on $L^1((0,+\infty),df')$. Hence, by taking a subsequence of $\{\fM_n\}$ we may assume
that
\[
\begin{cases}
D_{(\id-t)_-}(\fM_n(\rho)\|\fM_n(\sigma)\to D_{(\id-t)_-}^\meas(\rho\|\sigma)
& \mbox{for $df'$-a.e.\ $t\in(0,1]$}, \\
D_{(\id-t)_+}(\fM_n(\rho)\|\fM_n(\sigma)\to D_{(\id-t)_+}^\meas(\rho\|\sigma)
& \mbox{for $df'$-a.e.\ $t\in(1,+\infty)$},
\end{cases}
\]
which implies that
\[
D_{(\id-t)_\pm}(\fM_n(\rho)\|\fM_n(\sigma))\to D_{(\id-t)_\pm}^\meas(\rho\|\sigma)
\quad\mbox{for $df'$-a.e.\ $t\in(0,+\infty)$}.
\]
Since both $D_{(\id-t)_+}(\fM_n(\rho)\|\fM_n(\sigma))$ and $D_{(\id-t)_-}(\fM_n(\rho)\|\fM_n(\sigma))$ are
convex in $t\in(0,+\infty)$, we have, as shown in Lemma \ref{lemma:VII.2} below just for completeness,
\begin{align}\label{(VII.183)}
D_{(\id-t)_\pm}(\fM_n(\rho)\|\fM_n(\sigma))\to D_{(\id-t)_\pm}^\meas(\rho\|\sigma),\qquad t\in(0,+\infty).
\end{align}
which implies in view of \eqref{D-meas++} that
\begin{align}\label{(VII.184)}
D_{(\id-t)_\pm}\Bigl(\fM_n(\rho)\Big\|\fM_n\Bigl({\rho+\sigma\over2}\Bigr)\Bigr)
\to D_{(\id-t)_\pm}^\meas\Bigl(\rho\Big\|{\rho+\sigma\over2}\Bigr),\qquad t\in(0,+\infty).
\end{align}
Since
\begin{align*}
D_{(\id-t)_\pm}(\fM_n(\rho)\|\fM_n(\sigma))&=tD_{(\id-t^{-1})_\mp}(\fM_n(\sigma)\|\fM_n(\rho)), \\
D_{(\id-t)_\pm}^\meas(\rho\|\sigma)&=tD_{(\id-t^{-1})_\mp}^\meas(\sigma\|\rho)
\end{align*}
for all $t\in(0,+\infty)$, it follows from \eqref{(VII.184)} that
\[
D_{(\id-t)_\pm}\Bigl(\fM_n\Bigl({\rho+\sigma\over2}\Bigr)\Big\|\fM_n(\rho)\Bigr)
\to D_{(\id-t)_\pm}^\meas\Bigl({\rho+\sigma\over2}\Big\|\rho\Bigr),\qquad t\in(0,+\infty).
\]
which implies as in \eqref{(VII.183)}--\eqref{(VII.184)} above that
\[
D_{(\id-t)_\pm}\Bigl(\fM_n\Bigl({\rho+\sigma\over2}\Bigr)\Big\|\fM_n\Bigl({3\rho+\sigma\over4}\Bigr)\Bigr)
\to D_{(\id-t)_\pm}^\meas\Bigl({\rho+\sigma\over2}\Big\|{3\rho+\sigma\over4}\Bigr),\qquad t\in(0,+\infty).
\]
Note that the functions $t\mapsto D_{(\id-t)_-}\bigl(\fM_n\bigl({\rho+\sigma\over2}\bigr)\big\|
\fM_n\bigl({3\rho+\sigma\over4}\bigr)\bigr)$ are supported on $[a,+\infty)$ with some $a>0$ and uniformly
bounded there, and $t\mapsto D_{(\id-t)_+}\bigl(\fM_n\bigl({\rho+\sigma\over2}\bigr)\big\|
\fM_n\bigl({3\rho+\sigma\over4}\bigr)\bigr)$ are supported on $(0,b]$ with some $b<+\infty$ and uniformly
bounded there. Hence by the Lebesgue convergence theorem we have
\[
D_{x\log x}\Bigl(\fM_n\Bigl({\rho+\sigma\over2}\Bigr)\Big\|\fM_n\Bigl({3\rho+\sigma\over4}\Bigr)\Bigr)
\to D_{x\log x}^\meas\Bigl({\rho+\sigma\over2}\Big\|{3\rho+\sigma\over4}\Bigr),
\]
which means by Corollary \ref{cor:V.19} that
\[
D\Bigl(\fM_n\Bigl({\rho+\sigma\over2}\Bigr)\Big\|\fM_n\Bigl({3\rho+\sigma\over4}\Bigr)\Bigr)
\to D\Bigl({\rho+\sigma\over2}\Big\|{3\rho+\sigma\over4}\Bigr).
\]
Thus \cite[Theorem 7.10]{Hiai2021Div} implies that ${\rho+\sigma\over2}$ and ${3\rho+\sigma\over4}$ commute,
so that $\rho,\sigma$ commute.
\end{proof}

\begin{lemma}\label{lemma:VII.2}
Let $\{\phi_n\}$ be a sequence of $\bR$-valued convex functions on $(0,+\infty)$. If there exists a dense subset
$S$ of $(0,+\infty)$ such that $\lim_{n\to\infty}\phi_n(s)$ exists in $\bR$ for all $s\in S$, then $\phi_n(t)$
converges for all $x\in(0,+\infty)$.
\end{lemma}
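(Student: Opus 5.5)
The plan is to use the standard three-slope inequality for convex functions: a convex function is squeezed between secant lines through nearby points.

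Fix $t\in(0,+\infty)$ and $\ep>0$ small. By density of $S$, choose $s_1<s_2<t<s_3<s_4$ in $S\cap(0,+\infty)$, all within $\ep$ of $t$.

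\textbf{Upper bound.} Convexity of $\phi_n$ on $[s_2,s_3]$ gives
\[
\phi_n(t)\le \tfrac{s_3-t}{s_3-s_2}\phi_n(s_2)+\tfrac{t-s_2}{s_3-s_2}\phi_n(s_3).
\]
Hence $\limsup_n\phi_n(t)\le \tfrac{s_3-t}{s_3-s_2}L(s_2)+\tfrac{t-s_2}{s_3-s_2}L(s_3)$, where $L(s):=\lim_n\phi_n(s)$ for $s\in S$.

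\textbf{Lower bound.} Extrapolate the chord through $s_1,s_2$; convexity says $\phi_n$ lies above it outside $[s_1,s_2]$:
\[
\phi_n(t)\ge \phi_n(s_2)+\tfrac{\phi_n(s_2)-\phi_n(s_1)}{s_2-s_1}(t-s_2).
\]
Likewise, using the chord through $s_3,s_4$ extended leftwards,
\[
\phi_n(t)\ge \phi_n(s_3)-\tfrac{\phi_n(s_4)-\phi_n(s_3)}{s_4-s_3}(s_3-t).
\]
Passing to the limit gives a lower bound for $\liminf_n\phi_n(t)$ in terms of the values of $L$ at the $s_i$.

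\textbf{Closing the gap.} The limit function $L$ is convex on $S$, i.e.\ it satisfies the three-slope inequalities among points of $S$. So its difference quotients are monotone and locally bounded: on $S\cap[t-1,t+1]$, say, they are bounded by the quotients over a fixed outer pair of points of $S$. From this, $L$ is uniformly continuous on $S$ near $t$, with $|L(s)-L(s')|\le C|s-s'|$. Choose the $s_i$ with spacings comparable (e.g.\ $s_2-s_1\approx s_4-s_3\approx \ep$ and $|s_2-t|,|s_3-t|\le\ep^2$). Then both the upper and the lower bound differ from $L(s_2)$ by $O(\ep)$, since the slopes are bounded by $C$ and multiplied by distances $O(\ep)$. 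Thus $\limsup_n\phi_n(t)-\liminf_n\phi_n(t)=O(\ep)$, and letting $\ep\searrow0$ shows that $\phi_n(t)$ converges, to a finite real limit.

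The main obstacle is the uniform slope bound $C$, which must come from the pointwise limits alone rather than from uniform control of the $\phi_n$. The way around it is to fix two outer points $a<t-1$ and $b>t+1$ in $S$ together with inner reference points, and apply the three-slope inequality to each $\phi_n$. The slopes of $\phi_n$ on subintervals near $t$ are then sandwiched between secant slopes over fixed pairs of points of $S$, and those secant slopes converge. This gives bounds uniform for large $n$, which is exactly what the argument above needs.
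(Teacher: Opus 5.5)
Your argument is correct and uses the same mechanism as the paper's proof: the three-slope inequality, applied with fixed points of $S$ on either side of $t$, turns convergence at finitely many points of $S$ into a local Lipschitz bound. The difference is in packaging. The paper derives a uniform-in-$n$ bound $|\phi_m(t)-\phi_m(s)|\le\beta|t-s|$ for $s\in S$ near $t$ and then runs a Cauchy ($3\ep$) argument. You instead squeeze $\limsup_n\phi_n(t)$ and $\liminf_n\phi_n(t)$ between chord bounds and close the gap with the Lipschitz constant $C$ of the limit $L$ on $S$. In fact the gap equals $(t-s_2)\bigl[\tfrac{L(s_3)-L(s_2)}{s_3-s_2}-\tfrac{L(s_2)-L(s_1)}{s_2-s_1}\bigr]\le 2C(t-s_2)$, so your particular spacing choice is not needed.
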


\begin{proof}
Let $t\in(0,+\infty)$ be arbitrary. Choose $s_i\in S$, $1\le i\le4$, such that $s_1<s_2<t<s_3<s_4$. Since
$\lim_{n\to\infty}\phi_n(s_i)\in\bR$ for $1\le i\le4$, it follows that
\[
\beta:=\sup_{n\in\bN}\biggl\{\Big|{\phi_n(s_1)-\phi_n(s_2)\over s_1-s_2}\Big|,
\Big|{\phi_n(s_3)-\phi_n(s_4)\over s_3-s_4}\Big|\biggr\}<+\infty.
\]
For any $\ep>0$ choose an $s\in S$ such that $s\in(s_2,s_3)$ and $\beta|t-s|<\ep$. Then for each
$m,n\in\bN$ we have
\[
|\phi_m(t)-\phi_n(t)|\le|\phi_m(t)-\phi_m(s)|+|\phi_m(s)-\phi_n(s)|+|\phi_n(s)-\phi_n(t)|.
\]
By convexity we furthermore have
\[
{\phi_m(s_1)-\phi_m(s_2)\over s_1-s_2}\le{\phi_m(t)-\phi_m(s)\over t-s}
\le{\phi_m(s_3)-\phi_m(s_4)\over s_3-s_4}.
\]
Hence $\big|{\phi_m(t)-\phi_m(s)\over t-s}\big|\le\beta$ so that $|\phi_m(t)-\phi_m(s)|\le\ep$ and similarly
$|\phi_n(s)-\phi_n(t)|\le\ep$. Moreover, there exists an $n_0\in\bN$ such that $|\phi_m(s)-\phi_n(s)|\le\ep$
for all $m,n\ge n_0$. Therefore, $|\phi_m(t)-\phi_n(t)|\le3\ep$ for all $m,n\ge n_0$, implying that $\{\phi_n(t)\}$
is Cauchy, and the assertion follows.
\end{proof}

\subsection{Case $D_f^{\meas,\hs}(\rho\|\sigma)=D_f^{\max}(\rho\|\sigma)$}\label{sec:VII.B}

Let $\M$ be an injective von Neumann algebra, and  for simplicity let $f$ be a non-linear operator convex
function. Then $D_f^{\max}(\rho\|\sigma)$ can be defined in the method of operator perspectives (see
\cite[Chap.~4]{Hiai2021Div}). For each $\rho,\sigma\in\M_*^+$ we have a reverse test $(p,q,\Gamma)$ such
that $D_f^{\max}(\rho\|\sigma)=D_f(p\|q)$. (In fact, we can take $\Gamma:L^1([0,1],\nu)\to\M_*$ with a finite
Borel measure $\nu$, $p(x)=x$ and $q(x)=1-x$ for $x\in[0,1]$.) Now, assume that
\[
D_f^{\meas,\hs}(\rho\|\sigma)=D_f^{\max}(\rho\|\sigma)<+\infty.
\]
Then we have
\begin{align*}
&\int_{(0,1]}D_{(\id-t)_-}^\meas(\rho\|\sigma)\,df'(t)+\int_{(1,+\infty)}D_{(\id-t)_+}^\meas(\rho\|\sigma)\,d'(t) \\
&\quad=\int_{(0,1]}D_{(\id-t)_-}(p\|q)\,df'(t)+\int_{(1,+\infty)}D_{(\id-t)_+}(p\|q)\,df'(t)<+\infty.
\end{align*}
Since $D_{(\id-t)_\pm}^\meas(\rho\|\sigma)\le D_{(\id-t)_\pm}(p\|q)$ for all $t\in(0,+\infty)$, it follows that
$D_{(\id-t)_-}^\meas(\rho\|\sigma)=D_{(\id-t)_-}(p\|q)$ for $df'$-a.e.\ $t\in(0,1]$ and
$D_{(\id-t)_+}^\meas(\rho\|\sigma)=D_{(\id-t)_+}(p\|q)$ for $df'$-a.e.\ $t\in(1,+\infty)$. Since $f$ is non-linear
and operator convex, note that $\supp(df')=(0,+\infty)$. Hence we have $\|\rho-t\sigma\|=\|p-tq\|_1$ for all
$t\in(0,+\infty)$. Hence by Theorem \ref{theorem:V.21} it follows that $\Gamma$ is reversible for $p,q$, so that
there is a positive trace-preserving map $\Lambda:\M_*\to L^1([0,1],\nu)$ such that $\Lambda(\rho)=p$ and
$\Lambda(\sigma)=q$. Therefore,
\[
\Gamma\circ\Lambda(\alpha\rho+\beta\sigma)=\alpha\rho+\beta\sigma
\]
for any $\alpha,\beta\ge0$. By \cite[Theorem 7.10]{Hiai2021Div} this shows that $\rho,\sigma$ commute.

The result is written in the following:

\begin{thm}\label{theorem:VII.3}
Assume that $\M$ is injective, and let $\rho,\sigma\in\M_{*,\ge0}$. If
$D_f^{\meas,\hs}(\rho\|\sigma)=D_f^{\max}(\rho\|\sigma)<+\infty$ for some non-linear operator convex function
$f$ on $(0,+\infty)$, then $\rho,\sigma$ commute.
\end{thm}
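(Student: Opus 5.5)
The plan is to follow the argument sketched just before the statement, made precise in four steps.

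\emph{Step 1: an optimal reverse test.} Since $f$ is non-linear and operator convex, I will realize $D_f^{\max}(\rho\|\sigma)$ through the operator-perspective formalism of \cite[Chap.~4]{Hiai2021Div}. Concretely, I take $\nu$ to be the finite Borel measure on $[0,1]$ obtained from the spectral measure of the relevant relative operator, with $p(x)=x$ and $q(x)=1-x$, together with a positive trace-preserving $\Gamma:L^1([0,1],\nu)\to\M_*$. This gives a reverse test $(p,q,\Gamma)\in\mathrm{RT}(\rho,\sigma)$ with $D_f^{\max}(\rho\|\sigma)=D_f(p\|q)$. I will treat the existence of such an attaining reverse test as a known input from that reference.

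\emph{Step 2: equality of the hockey stick terms.} Write both $D_f^{\meas,\hs}(\rho\|\sigma)$ and $D_f(p\|q)$ through \eqref{D-f-q-hs} with $a=1$. For the classical side this is Corollary \ref{cor:classical fdiv repr0}. The affine terms coincide, because $p(\egy)=\rho(\egy)$ and $q(\egy)=\sigma(\egy)$ by trace preservation. Finiteness therefore forces equality of the two integral parts. The integrands are ordered, $D^\meas_{(\id-t)_\pm}(\rho\|\sigma)\le D_{(\id-t)_\pm}(p\|q)$, by monotonicity of the measured quantities under $\Gamma$. Hence
\[
D^\meas_{(\id-t)_-}(\rho\|\sigma)=D_{(\id-t)_-}(p\|q)\quad\text{$df'$-a.e.\ on $(0,1]$},
\]
\[
D^\meas_{(\id-t)_+}(\rho\|\sigma)=D_{(\id-t)_+}(p\|q)\quad\text{$df'$-a.e.\ on $(1,+\infty)$}.
\]
Adding $\pm(\rho-t\sigma)(\egy)$, which equals $\pm(p-tq)(\egy)$, turns both statements into $\|\rho-t\sigma\|=\|p-tq\|_1$ for $df'$-a.e.\ $t$.

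\emph{Step 3: extension to all $t$.} A non-linear operator convex $f$ has the integral representation $f''(t)=\int 2(s+t)^{-3}\,d\mu(s)+\dots$, whose density is strictly positive on $(0,+\infty)$. Hence $\supp(df')=(0,+\infty)$. Both sides of the norm identity are continuous in $t$, so the identity holds for every $t>0$.

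\emph{Step 4: reversibility and commutativity.} Apply Theorem \ref{theorem:V.21}, condition (iii)$\Rightarrow$(i), to the map $\Gamma$ and the pair $p,q$. Here $L^\infty$ is commutative, hence injective, and $\M$ is injective by assumption. The map $\Gamma$ is positive from a commutative algebra, hence completely positive, so it meets the 2-positivity hypothesis. This yields a trace-preserving 2-positive $\Lambda:\M_*\to L^1([0,1],\nu)$ with $\Lambda(\rho)=p$ and $\Lambda(\sigma)=q$. Then $\Gamma\circ\Lambda$ fixes $\rho$ and $\sigma$, and it factors through a commutative algebra. Using the characterization of commutativity via recoverability through classical channels, \cite[Theorem 7.10]{Hiai2021Div}, I conclude that $h_\rho h_\sigma=h_\sigma h_\rho$.

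The main obstacle is Step 1. Existence of an exactly attaining reverse test in the von Neumann setting must be justified from the operator-perspective construction, including the case where $\rho$ and $\sigma$ have different supports. A secondary point is to check that the hypotheses of Theorem \ref{theorem:V.21} (injectivity and 2-positivity) hold for $\Gamma$.
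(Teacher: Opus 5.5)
Your proposal is correct and follows the paper's argument essentially step by step. The steps are: an attaining reverse test $(p,q,\Gamma)$ from the operator-perspective construction; $df'$-a.e.\ equality of the hockey stick integrands, upgraded to all $t$ via $\supp(df')=(0,+\infty)$; Theorem~\ref{theorem:V.21} to reverse $\Gamma$; and \cite[Theorem 7.10]{Hiai2021Div} to conclude commutativity. Your extra checks, namely complete positivity of $\Gamma$ and injectivity of $L^\infty$, verify hypotheses of Theorem~\ref{theorem:V.21} that the paper leaves implicit. Like the paper, you take the existence of the attaining reverse test as an input from the literature.
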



\begin{remark}\label{remark:VII.4}
Note that the above proof works when $D_f^{\meas,\hs}(\rho\|\sigma)=D_f^{\max}(\rho\|\sigma)<+\infty$ for
some $f\in(\bR^{(0,+\infty)})_{\conv}$ and there exists a minimal reverse test $(p,q,\Gamma)$ so that
$D_f^{\max}(\rho\|\sigma)=D_f(p\|q)$. However, it seems unknown if a minimal reverse test for 
$D_f^{\max}(\rho\|\sigma)$ exists for general convex functions $f$, even in the finite dimensional case.
\end{remark}

\begin{remark}\label{remark:VII.5}
We make a speculation as to how we can more directly prove Theorem \ref{theorem:VII.3} in the finite
dimensional case. Let $\rho,\sigma\in\B(\hil)_{\ge0}$ where $d:=\dim\hil$, and $f$ be a non-linear operator
convex function on $(0,+\infty)$. We have a minimal reverse test $(p,q,\Gamma)$ with
$\Gamma:\bC^d\to\B(\hil)$ for which $D_f^{\max}(\rho\|\sigma)=D_f(p\|q)$. If
$D_f^{\meas,\hs}(\rho\|\sigma)=D_f^{\max}(\rho\|\sigma)<+\infty$, then we have
$D_{(\id-t)_\pm}^\meas(\rho\|\sigma)=D_{(\id-t)_\pm}(p\|q)$ for all $t\in(0,+\infty)$, which implies that
\begin{align}
\Tr(\rho-t\sigma)_+&=\sum_{i=1}^d(p(i)-tq(i))_+, \label{(VII.185)}\\
\|\rho-t\sigma\|_1&=\sum_{i=1}^d|p(i)-tq(i)| \label{(VII.186)}
\end{align}
for all $t\in(0,+\infty)$. It is obvious that the RHSs of \eqref{(VII.185)} and \eqref{(VII.186)} are piecewise
linear functions of $t\in(0,+\infty)$. Thus, the result follows when we can prove that if
$\rho\sigma\ne\sigma\rho$ then the function $t\mapsto\Tr(\rho-t\sigma)_+$ on $(0,+\infty)$ is not piecewise
linear. This can easily be verified in the qubit case, while it does not seem easy 
\end{remark}

\subsection{Case $D_f^{\meas,\hs}(\rho\|\sigma)=D_f^{\max,\hs}(\rho\|\sigma)$?}\label{sec:VII.C}

This is the most interesting equality case. To discuss this equality case, the following observation seems
useful, which is an extension of Matsumoto's argument in \cite[Sec.~8.2]{Matsumoto_newfdiv} to the
von Neumann algebra case.

\begin{lemma}\label{lemma:VII.6}
For any $\rho,\sigma\in\M_{*,\ge0}$ the following conditions are equivalent:
\begin{itemize}
\item[(i)] $D_{(\id-1)_+}^\meas(\rho\|\sigma)=D_{(\id-1)_+}^{\max}(\rho\|\sigma)$;
\item[(ii)] $\max\{\eta(\egy):\eta\in\M_{*,\sa},\,\eta\le\rho,\,\eta\le\sigma\}
=\max\{\eta(\egy):\eta\in\M_{*,\ge0},\,\eta\le\rho,\,\eta\le\sigma\}$;
\item[(iii)] $\rho-(\rho-\sigma)_+\ge0$.
\end{itemize}
\end{lemma}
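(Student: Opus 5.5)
We prove (i)$\iff$(ii) and (ii)$\iff$(iii) separately, using Proposition \ref{prop:V.1} to rewrite both sides of (i) in terms of maximization problems. By Proposition \ref{prop:V.1} with $t=1$, $D_{(\id-1)_+}^{\max}(\rho\|\sigma)=\rho(\egy)-\max\{\eta(\egy):\eta\in\M_{*,\ge0},\,\eta\le\rho,\,\eta\le\sigma\}$.

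For the measured side we establish the self-adjoint analogue
\[
D_{(\id-1)_+}^\meas(\rho\|\sigma)=\rho(\egy)-\max\{\eta(\egy):\eta\in\M_{*,\sa},\,\eta\le\rho,\,\eta\le\sigma\},
\]
and then (i)$\iff$(ii) is immediate. To prove this identity:
\begin{itemize}
\item Take $\eta_0:=\rho-(\rho-\sigma)_+$. Then $\eta_0\le\rho$, and $\eta_0=\sigma+(\rho-\sigma)-(\rho-\sigma)_+=\sigma-(\rho-\sigma)_-\le\sigma$. Also $\eta_0(\egy)=\rho(\egy)-D_{(\id-1)_+}^\meas(\rho\|\sigma)$.
\item Conversely, let $\eta\in\M_{*,\sa}$ with $\eta\le\rho$ and $\eta\le\sigma$, and set $P:=\{\rho-\sigma>0\}$. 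Then
\[
\eta(\egy)=\eta(P)+\eta(P^\perp)\le\sigma(P)+\rho(P^\perp)=\rho(\egy)-(\rho-\sigma)(P)=\rho(\egy)-(\rho-\sigma)_+(\egy).
\]
\end{itemize}
So $\eta_0$ is a maximizer and the max is attained. This step is the only place where some care is needed: both $P$ and $P^\perp$ are positive elements of $\M$, which is what makes the inequalities $\eta\le\rho,\eta\le\sigma$ usable.

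For (ii)$\iff$(iii), note that $\eta_0$ is a self-adjoint maximizer. If (iii) holds, then $\eta_0\ge0$, so it is admissible in the positive problem and the two maxima coincide.

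Conversely, suppose (ii) holds, and let $\eta\ge0$ be a positive maximizer (it exists by the compactness argument in Proposition \ref{prop:V.1}), so $\eta(\egy)=\eta_0(\egy)$. We claim $\eta=\eta_0$, which gives $\eta_0\ge0$, i.e.\ (iii). Equality in the chain above forces $(\sigma-\eta)(P)=0$ and $(\rho-\eta)(P^\perp)=0$. Since $\sigma-\eta\ge0$ and $\rho-\eta\ge0$, vanishing on a projection yields (via Cauchy--Schwarz) $(\sigma-\eta)(\cdot)=(\sigma-\eta)(P^\perp\cdot P^\perp)$ and $(\rho-\eta)(\cdot)=(\rho-\eta)(P\cdot P)$.

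Then
\[
\rho-\sigma=(\rho-\eta)-(\sigma-\eta)=(\rho-\eta)(P\cdot P)-(\sigma-\eta)(P^\perp\cdot P^\perp)
\]
is a decomposition of $\rho-\sigma$ into a positive part supported on $P$ and a negative part supported on $P^\perp$, with these supports orthogonal. By uniqueness of the Jordan decomposition, $(\rho-\sigma)_+=\rho-\eta$. Hence $\eta=\rho-(\rho-\sigma)_+=\eta_0\ge0$, proving (iii).
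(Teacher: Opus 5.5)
Your proof is correct and follows essentially the same route as the paper: you identify $\rho-(\rho-\sigma)_+$ as the unique maximizer of the self-adjoint problem, obtain (i)$\iff$(ii) from Proposition \ref{prop:V.1}, and deduce (ii)$\iff$(iii) from that uniqueness; the paper does the same after substituting $\zeta=\rho-\eta$, minimizing $\zeta(\egy)$ over $\zeta\ge0$, $\zeta\ge\rho-\sigma$. The only cosmetic difference is in the uniqueness step: you derive it from uniqueness of the Jordan decomposition, whereas the paper compresses by $e=\{\rho-\sigma\ge0\}$ to get $\zeta\ge(\rho-\sigma)_+$ with equal total mass.
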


\begin{proof}
Note that
\begin{align}
&\sup\{\eta(\egy):\eta\in\M_{*,\sa},\,\eta\le\rho,\,\eta\le\sigma\} \label{(VII.187)}\\
&\quad=\sup\{\eta(\egy)-(\rho-\eta)(\egy):\eta\in\M_{*,\sa},\,\rho-\eta\ge0,\,\rho-\eta\ge\rho-\sigma\} \nonumber\\
&\quad=\rho(\egy)-\inf\{\zeta(\egy):\zeta\in\M_{*,\ge0},\,\zeta\ge\rho-\sigma\}. \label{(VII.188)}
\end{align}
Obviously $(\rho-\sigma)_+\ge\rho-\sigma$. If $\zeta\in\M_{*,\ge0}$ and $\zeta\ge\rho-\sigma$, then with
$e:=\{\rho-\sigma\ge0\}$,
\begin{align}\label{(VII.189)}
\zeta(\egy)\ge\zeta(e)\ge(\rho-\sigma)(e)=(\rho-\sigma)_+(\egy).
\end{align}
Hence the infimum in \eqref{(VII.188)} is $(\rho-\sigma)(\egy)$ and it is attained by $\zeta=(\rho-\sigma)_+$. Assume
that $\zeta\in\M_{*,\ge0}$ satisfies $\zeta\ge\rho-\sigma$ and $\zeta(\egy)=(\rho-\sigma)_+(\egy)$. Then
\eqref{(VII.189)} implies that $\zeta(\egy)=\zeta(e)=(\rho-\sigma)_+(\egy)$ so that
\[
\zeta=\zeta(e\cdot e)\ge(\rho-\sigma)(e\cdot e)=(\rho-\sigma)_+.
\]
Hence $\zeta=(\rho-\sigma)_+$, so that $\zeta=(\rho-\sigma)_+$ is a unique minimizer for the infimum in
\eqref{(VII.188)}. We have seen that the LHS of the equality in (ii) is
$\rho(\egy)-D_{(\id-1)_+}^\meas(\rho\|\sigma)$, while the RHS is
$\rho(\egy)-D_{(\id-1)_+}^{\max}(\rho\|\sigma)$ by \eqref{D-max+}. Thus (i)\,$\iff$\,(ii) holds.
Since $\eta=\rho-(\rho-\sigma)_+$ is a unique maximizer of the LHS in (ii), we have (ii)\,$\iff$\,(iii) as well.
\end{proof}

By Lemma \ref{lemma:VII.6} we have:

\begin{prop}\label{prop:VII.7}
Let $\rho,\sigma\in\M_{*,\ge0}$. Then the following conditions are equivalent:
\begin{itemize}
\item[(i)] $D_f^{\meas,\hs}(\rho\|\sigma)=D_f^{\max,\hs}(\rho\|\sigma)$ for all $f\in(\bR^{(0,+\infty)})_{\conv}$;
\item[(ii)] $D_f^{\meas,\hs}(\rho\|\sigma)=D_f^{\max,\hs}(\rho\|\sigma)<+\infty$ for some
$f\in(\bR^{(0,+\infty)})_{\conv}$ with $\supp(df')=(0,+\infty)$.
\item[(iii)] $D_{(\id-t)_+}^\meas(\rho\|\sigma)=D_{(\id-t)_+}^{\max}(\rho\|\sigma)$ for all $t\in(0,+\infty)$;
\item[(iv)] $\max\{\eta(\egy):\eta\in\M_{*,\sa},\,\eta\le\rho,\,\eta\le t\sigma\}
=\max\{\eta(\egy):\eta\in\M_{*,\ge0},\,\eta\le\rho,\,\eta\le t\sigma\}$ for all $t\in(0,+\infty)$;
\item[(v)] $\rho-(\rho-t\sigma)_+\ge0$ for all $t\in(0,+\infty)$.
\end{itemize}
\end{prop}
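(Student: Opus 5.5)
The plan is to run a cycle (i)$\implies$(ii)$\implies$(iii) and then reduce (iii)$\iff$(iv)$\iff$(v) to Lemma \ref{lemma:VII.6} applied to the pair $(\rho,t\sigma)$.

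\emph{(i)$\implies$(ii).} Take $f(x):=(1+x)^{-1}$, as in the proof of Theorem \ref{theorem:V.21}. There $df'$ is equivalent to Lebesgue measure on $(0,+\infty)$, so $\supp(df')=(0,+\infty)$. Finiteness of $D_f^{\max,\hs}(\rho\|\sigma)$ follows from the canonical bounds: with $a=1$, the integrand on $(0,1]$ is at most $t\sigma(\egy)$ and the one on $(1,+\infty)$ is at most $\rho(\egy)$, while $f''(t)=2(1+t)^{-3}$ is integrable against both.

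\emph{(ii)$\implies$(iii).} This is the argument used for (v)$\implies$(ii) in Theorem \ref{theorem:V.21}. Write both sides via \eqref{D-f-q-hs} with a common $a$; the affine terms $f(a)\sigma(\egy)+f'(a^+)(\rho-a\sigma)(\egy)$ coincide. Since $D^{\meas}_{(\id-t)_\pm}\le D^{\max}_{(\id-t)_\pm}$ pointwise and everything is finite, we get $D^{\meas}_{(\id-t)_-}=D^{\max}_{(\id-t)_-}$ for $df'$-a.e.\ $t\in(0,a]$ and $D^{\meas}_{(\id-t)_+}=D^{\max}_{(\id-t)_+}$ for $df'$-a.e.\ $t\in(a,+\infty)$. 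Because $D_{(\id-t)_+}-D_{(\id-t)_-}=(\rho-t\sigma)(\egy)$ for both families, the $+$ equality holds $df'$-a.e.\ on all of $(0,+\infty)$, hence on a dense set. To pass to all $t$ we need continuity in $t$.
\begin{itemize}
\item $t\mapsto D^{\meas}_{(\id-t)_+}$ is convex, hence continuous.
\item $t\mapsto D^{\max}_{(\id-t)_+}=\rho(\egy)-F_t(\rho\|\sigma)$ needs care. $F_t$ is increasing and concave in $t$: if $\eta_i\le\rho$ and $\eta_i\le t_i\sigma$, then $\lambda\eta_1+(1-\lambda)\eta_2$ is admissible at $\lambda t_1+(1-\lambda)t_2$. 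So $F_t$ is continuous on $(0,+\infty)$.
\end{itemize}
This continuity step is the one I expect to need the most care, since it does not follow from monotonicity alone; concavity of $F_t$ is what I will use. With both sides continuous, (iii) follows.

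\emph{(iii)$\implies$(i).} Substituting (iii) and the resulting $-$ equality into \eqref{D-f-q-hs} gives (i) directly.

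\emph{(iii)$\iff$(iv)$\iff$(v).} Fix $t$ and apply Lemma \ref{lemma:VII.6} to $(\rho,t\sigma)$. The only point to check is that $D^{\meas}_{(\id-t)_+}(\rho\|\sigma)=(\rho-t\sigma)_+(\egy)=D^{\meas}_{(\id-1)_+}(\rho\|t\sigma)$ and likewise $D^{\max}_{(\id-t)_+}(\rho\|\sigma)=D^{\max}_{(\id-1)_+}(\rho\|t\sigma)$ by \eqref{D-max+}. Lemma \ref{lemma:VII.6} then gives, for each $t$, the equivalence of the three conditions, with $\sigma$ replaced by $t\sigma$, which are exactly (iii), (iv) and (v) at that $t$. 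Quantifying over all $t$ completes the equivalence.
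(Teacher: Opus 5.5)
Your proposal is correct and follows the paper's route: the paper obtains the proposition directly from Lemma \ref{lemma:VII.6} applied to $(\rho,t\sigma)$, with (i)$\iff$(ii)$\iff$(iii) handled by the same $df'$-a.e.\ equality argument as in (v)$\implies$(ii) of Theorem \ref{theorem:V.21}. Your concavity argument for continuity of $t\mapsto D^{\max}_{(\id-t)_+}(\rho\|\sigma)$ is valid, though it is not needed: monotonicity of $D^{\max}_{(\id-t)_+}$ in $t$, continuity of $D^{\meas}_{(\id-t)_+}$, and the pointwise inequality $D^{\meas}_{(\id-t)_+}\le D^{\max}_{(\id-t)_+}$ already give equality at every $t$.
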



\begin{example}\label{example:VII.8}
Here we give an explicit qubit example showing that condition (v) does not imply $\rho\sigma=\sigma\rho$.
Let $\sigma:=I_2$ and $\rho:=\begin{bmatrix}1&1/2\\egy/2&1\end{bmatrix}$. The diagonalization of $\rho$ is
\[
\rho=\begin{bmatrix}{1\over\sqrt2}&{1\over\sqrt2}\\-{1\over\sqrt2}&{1\over\sqrt2}\end{bmatrix}
\begin{bmatrix}{1\over2}&0\\0&{3\over2}\end{bmatrix}
\begin{bmatrix}{1\over\sqrt2}&-{1\over\sqrt2}\\{1\over\sqrt2}&{1\over\sqrt2}\end{bmatrix}.
\]
Therefore, for each $t\in[1/2,3/2]$,
\begin{align*}
\rho-(\rho-t\sigma)_+
&=\begin{bmatrix}{1\over\sqrt2}&{1\over\sqrt2}\\-{1\over\sqrt2}&{1\over\sqrt2}\end{bmatrix}
\begin{bmatrix}{1\over2}-\bigl({1\over2}-t\bigr)_+&0\\0&{3\over2}-\bigl({3\over2}-t\bigr)_+\end{bmatrix}
\begin{bmatrix}{1\over\sqrt2}&-{1\over\sqrt2}\\{1\over\sqrt2}&{1\over\sqrt2}\end{bmatrix} \\
&=\begin{bmatrix}{1\over\sqrt2}&{1\over\sqrt2}\\-{1\over\sqrt2}&{1\over\sqrt2}\end{bmatrix}
\begin{bmatrix}{1\over2}&0\\0&t\end{bmatrix}
\begin{bmatrix}{1\over\sqrt2}&-{1\over\sqrt2}\\{1\over\sqrt2}&{1\over\sqrt2}\end{bmatrix}
=\begin{bmatrix}{t\over2}+{1\over4}&{t\over2}-{1\over4}\\{t\over2}-{1\over4}&{t\over2}+{1\over4}
\end{bmatrix}.
\end{align*}
Since $\det(\rho-(\rho-t\sigma)_+)=t/2>0$, it follows that $\rho-(\rho-t\sigma)_+>0$ for all $t\in[1/2,3/2]$. Also,
for $0<t\le1/2$, $\rho-t\sigma\ge0$ so that $\rho-(\rho-t\sigma)_+=t\sigma>0$. For $t\ge3/2$, since
$\rho-t\sigma\le0$ so that $\rho-(\rho-t\sigma)_+=\rho>0$. Hence $\rho-(\rho-t\sigma)_+>0$ for all $t>0$. So,
there is a $\delta>0$ such that $\rho-(\rho-t\sigma)_+\ge\delta I_2$ for all $t\ge1/2$. For $\ep\in(0,1)$ let
$\sigma_\ep:=\begin{bmatrix}1&0\\0&1-\ep\end{bmatrix}$. Then it is readily verified that
$(\rho-t\sigma_\ep)_+\to(\rho-t\sigma)_+$ as $\ep\searrow0$ uniformly for $t\in[1/2,2]$. Hence we can
choose an $\ep>0$ such that $\rho\le2\sigma_\ep$ and $\rho-(\rho-t\sigma_\ep)_+\ge{\delta\over2}\,I_2$
for all $t\in[1/2,2]$. Since $\rho\ge{1\over2}\,\sigma\ge t\sigma_\ep$ for $0<t\le1/2$ and
$\rho\le2\sigma_\ep\le t\sigma_\ep$ for $t\ge2$, we have $\rho-(\rho-t\sigma_\ep)_+>0$ for all $t>0$. But
$\rho\sigma_\ep\ne\sigma_\ep\rho$. We can modify the above argument to see that for each commuting
$\rho_0,\sigma_0>0$ in $\B(\hil)$ there exists an $\ep>0$ such that for any $\rho,\sigma\in\B(\hil)_{\ge0}$
with $\|\rho-\rho_0\|<\ep$ and $\|\sigma-\sigma_0\|<\ep$ we have $\rho-(\rho-t\sigma)_+>0$ for all $t>0$.
Thus, a non-commuting pair $(\rho,\sigma)$ satisfying (v) exists near $(\rho_0,\sigma_0)$.
\end{example}

\begin{example}\label{example:VII.9}
Here we show that if at least one of qubit $\rho$ and $\sigma$ is rank $1$, then condition (v) implies
$\rho\sigma=\sigma\rho$. Since
\begin{align*}
\rho-(\rho-t\sigma)_+&=\rho-t(t^{-1}\rho-\sigma)_+=\rho-t(\sigma-t^{-1}\rho)_- \\
&=t\{\sigma-(\sigma-t^{-1}\rho)_+\},
\end{align*}
we may assume that $\sigma$ is rank 1, so we may let $\sigma=\begin{bmatrix}1&0\\0&0\end{bmatrix}$.
Let us show that if $\rho\sigma\ne\sigma\rho$ then (v) fails to hold. Without loss of generality (by multiplying a
positive constant to $\rho$), we may assume that
\begin{align}\label{(VII.190)}
\rho=\begin{bmatrix}1&c\\\overline c&b\end{bmatrix},\qquad b\ge|c|^2>0,
\end{align}
and show that $\rho-(\rho-\sigma)_+\not\ge0$. Since
$\rho-\sigma=\begin{bmatrix}0&c\\\overline c&b\end{bmatrix}$, the eigenvalues of $\rho-\sigma$ are
\begin{align}\label{(VII.191)}
\lambda_{\pm}={b\pm\sqrt{b^2+4|c|^2}\over2}
\end{align}
so that $\lambda_-<0<\lambda_+$. The unital eigenvector for $\lambda_+$ is
\[
u_+={1\over\sqrt{|c|^2+\lambda_+^2}}\begin{bmatrix}c\\\lambda_+\end{bmatrix}
\]
so that
\[
(\rho-\sigma)_+=\lambda_+\pr{u_+}
={\lambda_+\over|c|^2+\lambda_+^2}
\begin{bmatrix}|c|^2&c\lambda_+\\\overline c\lambda_+&\lambda_+^2\end{bmatrix}.
\]
Therefore,
\[
\rho-(\rho-\sigma)_+=\begin{bmatrix}1&c\\\overline c&b\end{bmatrix}
-{\lambda_+\over|c|^2+\lambda_+^2}
\begin{bmatrix}|c|^2&c\lambda_+\\\overline c\lambda_+&\lambda_+^2\end{bmatrix}
=:\begin{bmatrix}a_{11}&a_{12}\\\overline{a_{12}}&a_{22}\end{bmatrix}
\]
where
\[
\begin{cases}
a_{11}=1-{|c|^2\lambda_+\over|c|^2+\lambda_+^2}
={|c|^2(1-\lambda_+)+\lambda_+^2\over|c|^2+\lambda_+^2}, \\
a_{22}=b-{\lambda_+^3\over|c|^2+\lambda_+^2}
={|c|^2b+b\lambda_+^2-\lambda_+^3\over|c|^2+\lambda_+^2}, \\
a_{12}=c-{c\lambda_+^2\over|c|^2+\lambda_+^2}={c|c|^2\over|c|^2+\lambda_+^2}.
\end{cases}
\]
Hence we have
\[
\det(\rho-(\rho-\sigma)_+)
={\{|c|^2(1-\lambda_+)+\lambda_+^2\}\{|c|^2b+b\lambda_+^2-\lambda_+^3\}-|c|^6\over
(|c|^2+\lambda_+^2)^2}.
\]
Since $|c|^2=\lambda_+^2-b\lambda_+$ by \eqref{(VII.191)}, we compute the above numerator as follows:
\begin{align*}
&\{(\lambda_+^2-b\lambda_+)(1-\lambda_+)+\lambda_+^2\}
\{(\lambda_+^2-b\lambda_+)b+b\lambda_+^2-\lambda_+^3\}-(\lambda_+^2-b\lambda_+)^3 \\
&\quad=\lambda_+^2\bigl[\{(\lambda_+-b)(1-\lambda_+)+\lambda_+\}
\{(\lambda_+-b)b+b\lambda_+-\lambda_+^2\}-(\lambda_+-b)^2(\lambda_+^2-b\lambda_+)\bigr] \\
&\quad=\lambda_+^2\bigl[-(\lambda_+-b-\lambda_+^2+b\lambda_++\lambda_+)(\lambda_+-b)^2
-(\lambda_+-b)^2(\lambda_+^2-b\lambda_+)\bigr] \\
&\quad=\lambda_+^2(\lambda_+-b)^2(-2\lambda_++b).
\end{align*}
Since $\lambda_+>b$ by \eqref{(VII.191)} and also $-2\lambda_++b<-b<0$ by \eqref{(VII.190)}, we have
$\det(\rho-(\rho-\sigma)_+)<0$ so that $\rho-(\rho-\sigma)_+\not\ge0$. By taking $\sigma+\ep I_2$ with the
above $\sigma$ and a small $\ep>0$ and the above $\rho$ with $b>|c|^2$, we find a non-commuting
$\rho,\sigma>0$ for which condition (v) does not hold.
\end{example}

The mechanism behind condition (v) is therefore metric -- governed by how large $\rho-t\sigma$ is relative
to how non-degenerate $\rho$ and $t\sigma$ are -- rather than algebraic. Thus, characterizing pairs
$(\rho,\sigma)\in\B(\hil)_{\ge0}$ satisfying the conditions of Proposition \ref{prop:VII.7} seems quite
complicated even in the qubit case. Moreover, we leave the last equality case
$D_f^{\max,\hs}(\rho\|\sigma)=D_f^{\max}(\rho\|\sigma)$ open.

\section*{Acknowledgments}

MM and MT were supported by the Ministry of Education, Singapore, through
grant T2EP20124-0005.
MM was also partially supported by the National Research, Development and Innovation Office
of Hungary (NKFIH) via the research grants K 146380 and EXCELLENCE 151342, and by the
Ministry of Culture and Innovation and the National Research, Development and Innovation
Office within the Quantum Information National Laboratory of Hungary (Grant No. 2022-2.1.1-
NL-2022-00004). The authors are grateful to Ludovico Lami for the opportunity to present part of this work at the 
YPetz conference \url{https://ypetz.filippo.info/}.

\appendix

\section{Radon-Nikodym derivatives}

We will use the notation
\begin{align*}
A\symdiff B:=(A\setminus B)\cup(B\setminus A)
\end{align*}
for the symmetric difference of two sets $A$ and $B$.
We will repeatedly use the following simple observation:

\begin{lemma}\label{lemma:zero RN}
Let $\mu,\nu$ be positive measures on a measurable space $(\Omega,\F)$ such that $\nu\ll\mu$. Then 
\begin{align}\label{eq:zero RN}
\nu\bz\left\{\frac{d\nu}{d\mu}=0\right\}\jz=0,\ds\ds\ds\ds
\nu\bz\left\{\frac{d\nu}{d\mu}>0\right\}\jz=\nu(\Omega).
\end{align}
\end{lemma}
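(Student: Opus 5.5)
Both identities come from the defining property of the Radon--Nikodym derivative, namely $\nu(A)=\int_A\frac{d\nu}{d\mu}\,d\mu$ for every $A\in\F$, applied with a suitable choice of $A$.

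For the first identity, I would fix a version $h$ of $\frac{d\nu}{d\mu}$. It is $\F$-measurable and takes values in $[0,+\infty]$, or in $[0,+\infty)$ after modification on a $\mu$-null set when $\nu$ is $\sigma$-finite. Put $A:=\{h=0\}\in\F$. Then
\begin{align*}
\nu(A)=\int_A h\,d\mu=\int_A 0\,d\mu=0 .
\end{align*}
This uses only the convention $0\cdot\infty=0$ in the Lebesgue integral, so it holds even if $\mu(A)=+\infty$.

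For the second identity, I would use that $h\ge 0$ everywhere, so $\Omega$ is the disjoint union of $\{h=0\}$ and $\{h>0\}$. Additivity of $\nu$ together with the first identity gives $\nu(\{h>0\})=\nu(\Omega)-0$ when $\nu(\Omega)<+\infty$. To avoid subtracting infinities, it is better to write directly
\begin{align*}
\nu(\Omega)=\nu(\{h=0\})+\nu(\{h>0\})=\nu(\{h>0\}).
\end{align*}

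The only point needing care is that the statement does not depend on the chosen version of $\frac{d\nu}{d\mu}$. Two versions $h,h'$ agree outside a $\mu$-null set $N$. Since $\nu\ll\mu$, we have $\nu(N)=0$, and therefore $\{h=0\}\symdiff\{h'=0\}\subseteq N$ is $\nu$-null. So both identities hold for any version. I do not expect a genuine obstacle; the main thing to check is that the measurability and $[0,+\infty]$-valuedness of the derivative are all that is used.
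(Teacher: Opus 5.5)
Your proof is correct and is essentially the paper's argument: the paper likewise writes $\nu(N)=\int_\Omega \mathbf{1}_N\frac{d\nu}{d\mu}\,d\mu=0$ for $N=\{d\nu/d\mu=0\}$, and treats the second identity as a reformulation. Your last paragraph can be dropped, since the computation $\nu(\{h=0\})=\int_{\{h=0\}}h\,d\mu=0$ applies verbatim to every version $h$. Its claim that two versions agree $\mu$-a.e.\ also needs an extra hypothesis, such as $\sigma$-finiteness of $\mu$ or finiteness of $\nu$.
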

\begin{proof}
With $N:=\{d\nu/d\mu=0\}$, we have
\begin{align*}
\nu(N)=\int_{\Omega}\underbrace{\egy_N\frac{d\nu}{d\mu}}_{\equiv 0}\,d\mu=0,
\end{align*}
proving the first equality in \eqref{eq:zero RN}, and the second equality is just a reformulation.
\end{proof}

\begin{lemma}\label{lemma:error welldef}
Let $P,Q$ be finite positive measures on a measurable space $(\Omega,\F)$, let 
$F:\,[0,+\infty)^2\to\bR$ be measurable and positive homogeneous of order $\kappa>0$,
and let $J$ be any of the intervals $(-\infty,0)$, $(-\infty,0]$,
$(0,+\infty)$, $[0,+\infty)$, $[0]$.     
For any positive measure $\mu$ on $(\Omega,\F)$ such that $P,Q\ll\mu$, 
and any representatives of the Radon-Nikodym derivatives
$dP/d\mu$, $dQ/d\mu$, $dP/d(P+Q)$, $dQ/d(P+Q)$,  
\begin{align}
&(P+Q)\bz\left\{F\bz\frac{dP}{d\mu},\frac{dQ}{d\mu}\jz\in J\right\}
\symdiff\left\{F\bz\frac{dP}{d(P+Q)},\frac{dQ}{d(P+Q)}\jz\in J\right\}\jz=0.
\label{eq:classial error welldef1}
\end{align}
Moreover, for any measurable homogeneous function $G:\,[0,+\infty)^2\to\bR\cup\{+\infty\}$,
\begin{align}\label{eq:classial error welldef2}
\int_{\left\{F\bz\frac{dP}{d\mu},\frac{dQ}{d\mu}\jz\in J\right\}}G\bz\frac{dP}{d\mu},\frac{dQ}{d\mu}\jz\,d\mu
=
\int_{\left\{F\bz\frac{dP}{d(P+Q)},\frac{dQ}{d(P+Q)}\jz\in J\right\}}G\bz\frac{dP}{d(P+Q)},\frac{dQ}{d(P+Q)}\jz\,d(P+Q).
\end{align}
\end{lemma}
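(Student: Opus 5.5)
The plan is to reduce everything to the common dominating measure $\nu:=P+Q$ and the chain rule for Radon--Nikodym derivatives. Since $P,Q\ll\nu\ll\mu$, set $g:=d\nu/d\mu$. Then $dP/d\mu=(dP/d\nu)\,g$ and $dQ/d\mu=(dQ/d\nu)\,g$ hold $\mu$-a.e., for any choice of representatives. Altering representatives changes the functions only on $\mu$-null sets, hence on $\nu$-null sets. So it suffices to prove both claims for one fixed choice of representatives satisfying these identities everywhere; a $\mu$-null discrepancy contributes nothing to either side.

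For \eqref{eq:classial error welldef1}, split $\Omega$ into $\{g>0\}$ and $\{g=0\}$.

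On $\{g>0\}$, homogeneity of order $\kappa$ gives
\[
F\bz\frac{dP}{d\mu},\frac{dQ}{d\mu}\jz=g^{\kappa}\,F\bz\frac{dP}{d\nu},\frac{dQ}{d\nu}\jz .
\]
Since $g^\kappa>0$, membership of the two sides in each of the listed intervals $J$ coincides; each such $J$ is a cone, invariant under positive scaling. So the symmetric difference is contained in $\{g=0\}$.

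On $\{g=0\}$, we have $\nu(\{g=0\})=\int_{\{g=0\}}g\,d\mu=0$; this is Lemma \ref{lemma:zero RN} applied to $\nu\ll\mu$. Therefore $(P+Q)$ of the symmetric difference is $0$.

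For \eqref{eq:classial error welldef2}, I read ``homogeneous'' for $G$ as positively homogeneous of degree $1$; that is the case needed, e.g.\ for perspectives and for $(x-ty)_\pm$. Again split into $\{g>0\}$ and $\{g=0\}$.

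On $\{g=0\}$, both densities $dP/d\mu$ and $dQ/d\mu$ vanish a.e. The left integrand is then $G(0,0)$. Homogeneity forces $G(0,0)=\lambda G(0,0)$ for all $\lambda>0$, so $G(0,0)\in\{0,+\infty\}$. In the degenerate case $+\infty$ one must assume or argue $G(0,0)=0$. I would impose the convention $G(0,0)=0$, consistent with the perspective convention $0\cdot\infty=0$ used in \eqref{persp extension}; this is the step I expect to need care. Under that convention the left integral over $\{g=0\}$ vanishes, and the right side gives no contribution there because $\nu(\{g=0\})=0$.

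On $\{g>0\}$, by the first part the domains of integration agree up to a $\nu$-null set. Degree-$1$ homogeneity gives
\[
G\bz\frac{dP}{d\mu},\frac{dQ}{d\mu}\jz=g\,G\bz\frac{dP}{d\nu},\frac{dQ}{d\nu}\jz ,
\]
so
\[
\int G\bz\frac{dP}{d\mu},\frac{dQ}{d\mu}\jz\,d\mu=\int G\bz\frac{dP}{d\nu},\frac{dQ}{d\nu}\jz\,g\,d\mu=\int G\bz\frac{dP}{d\nu},\frac{dQ}{d\nu}\jz\,d\nu .
\]
The integrals are understood as in Lemma \ref{lemma:cl fdiv def}, i.e.\ whenever defined, for example after splitting $G=G_+-G_-$ and applying the identity to each part. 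Combining the two regions gives the claim.
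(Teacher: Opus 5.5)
Your proof is correct and follows essentially the same route as the paper. Both arguments pass to $\nu=P+Q$, use $dP/d\mu=(dP/d\nu)(d\nu/d\mu)$ off a $\mu$-null set, transfer membership in the cone $J$ via the homogeneity of $F$ on $\{d\nu/d\mu>0\}$, discard $\{d\nu/d\mu=0\}$ as $\nu$-null by Lemma \ref{lemma:zero RN}, and obtain the integral identity from degree-one homogeneity of $G$ together with $d\nu=(d\nu/d\mu)\,d\mu$. Your explicit convention $G(0,0)=0$ is a genuine point rather than a gap: the paper's ``third equality'' uses it tacitly on $\{d\nu/d\mu=0\}$, it holds automatically when $G$ is real-valued, and without it the identity can fail when $\mu(\{d\nu/d\mu=0\})>0$.
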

\begin{proof} 
Let $\nu:=P+Q$, and fix some representative of the Radon-Nikodym derivative $d\nu/d\mu$; 
then there exists a set $E\in\F$ with $\mu(E)=0$ (and hence $\nu(E)=0$), such that 
\begin{align}\label{eq:error welldef proof0}
\frac{dP}{d\mu}(\omega)=\frac{dP}{d\nu}(\omega)\frac{d\nu}{d\mu}(\omega),
\ds\ds\ds
\frac{dQ}{d\mu}(\omega)=\frac{dQ}{d\nu}(\omega)\frac{d\nu}{d\mu}(\omega),
\ds\ds\omega\in\Omega\setminus E,
\end{align}
and hence
\begin{align}\label{eq:error welldef proof1}
F\bz\frac{dP}{d\mu}(\omega),\frac{dQ}{d\mu}(\omega)\jz
=
F\bz\frac{dP}{d\nu}(\omega)\frac{d\nu}{d\mu}(\omega),\frac{dQ}{d\nu}(\omega)\frac{d\nu}{d\mu}(\omega)\jz
=
F\bz\frac{dP}{d\nu}(\omega),\frac{dQ}{d\nu}(\omega)\jz
\bz\frac{d\nu}{d\mu}(\omega)\jz^{\kappa},
\ds\ds\omega\in\Omega\setminus E.
\end{align}
Let $N:=\{\omega\in\Omega:\,(d\nu/d\mu)(\omega)=0\}$; then  
\begin{align}\label{eq:error welldef proof2}
F\bz\frac{dP}{d\mu}(\omega),\frac{dQ}{d\mu}(\omega)\jz\in J\ds\iff\ds
F\bz\frac{dP}{d\nu}(\omega),\frac{dQ}{d\nu}(\omega)\jz\in J,\ds\ds\ds \omega\in\Omega\setminus(N\cup E),
\end{align}
according to \eqref{eq:error welldef proof1}.
Since $\nu(N)=0$ by Lemma \ref{lemma:zero RN}, 
we get that the set of $\omega$ where \eqref{eq:error welldef proof2} does not hold is of $\nu$-measure $0$,
completing the proof of \eqref{eq:classial error welldef1}.

The  proof of \eqref{eq:classial error welldef2} follows as
\begin{align}
\int_{\left\{F\bz\frac{dP}{d\mu},\frac{dQ}{d\mu}\jz\in J\right\}}G\bz\frac{dP}{d\mu},\frac{dQ}{d\mu}\jz\,d\mu
&=
\int_{\left\{F\bz\frac{dP}{d\mu},\frac{dQ}{d\mu}\jz\in J\right\}\setminus E}G\bz\frac{dP}{d\mu},\frac{dQ}{d\mu}\jz\,d\mu\\
&=
\int_{\left\{F\bz\frac{dP}{d\mu},\frac{dQ}{d\mu}\jz\in J\right\}\setminus E}
G\bz\frac{dP}{d\nu}\frac{d\nu}{d\mu},\frac{dQ}{d\nu}\frac{d\nu}{d\mu}\jz\,d\mu\\
&=
\int_{\left\{F\bz\frac{dP}{d\mu},\frac{dQ}{d\mu}\jz\in J\right\}\setminus E}
G\bz\frac{dP}{d\nu},\frac{dQ}{d\nu}\jz\,\underbrace{\frac{d\nu}{d\mu}d\mu}_{=d\nu}\\
&=
\int_{\left\{F\bz\frac{dP}{d\mu},\frac{dQ}{d\mu}\jz\in J\right\}\setminus (N\cup E)}
G\bz\frac{dP}{d\nu},\frac{dQ}{d\nu}\jz\,d\nu\\
&=
\int_{\left\{F\bz\frac{dP}{d\nu},\frac{dQ}{d\nu}\jz\in J\right\}\setminus (N\cup E)}
G\bz\frac{dP}{d\nu},\frac{dQ}{d\nu}\jz\,d\nu\\
&=
\int_{\left\{F\bz\frac{dP}{d\nu},\frac{dQ}{d\nu}\jz\in J\right\}}
G\bz\frac{dP}{d\nu},\frac{dQ}{d\nu}\jz\,d\nu,
\end{align}
where the first equality follows from $\mu(E)=0$, 
the second one from \eqref{eq:error welldef proof0},
the third one from the positive homogeneity of $G$,
the fourth one from $\nu(N)=0$,
the fifth one from \eqref{eq:error welldef proof2},
and the sixth one from $\nu(N\cup E)=0$.
\end{proof}

\begin{lemma}\label{lemma:RN composition}
Let $\mu$ be a signed measure on the Borel sets of $(0,+\infty)$ 
that is absolutely continuous to the Lebesgue measure with Radon-Nikodym derivative 
$h=\frac{d\mu}{d\lambda}$. For any continuously differentiable strictly decreasing function 
$g:\,(0,+\infty)\to(0,+\infty)$, the push-forward measure 
$\mu\circ g:=(g\inv)_*\mu$ is absolutely continuous to the Lebesgue measure with Radon-Nikodym derivative 
\begin{align}\label{eq:RN composition}
\frac{d(\mu\circ g)}{d\lambda}=(-g')\bz\frac{d\mu}{d\lambda}\circ g\jz.
\end{align}
\end{lemma}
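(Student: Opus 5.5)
The plan is to verify directly, for every Borel set $A\subseteq(0,+\infty)$, the change-of-variables identity
\begin{align*}
(\mu\circ g)(A)=\mu(g(A))=\int_{g(A)}h\,d\lambda=\int_A (-g')(s)\,h(g(s))\,ds ,
\end{align*}
where $(\mu\circ g)(A)=((g\inv)_*\mu)(A)=\mu(\{x:\,g\inv(x)\in A\})=\mu(g(A))$. This is the same convention as $(df'\circ\invv)(A)=df'(\{x:1/x\in A\})$ in Lemma \ref{lemma:dftilde}. Once the identity holds for all Borel $A$, absolute continuity with respect to $\lambda$ and the formula \eqref{eq:RN composition} for the Radon-Nikodym derivative follow at once.

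First I would note some preliminaries. A continuous, strictly decreasing $g$ is a homeomorphism of $(0,+\infty)$ onto the open interval $g((0,+\infty))$, so $g(A)$ is Borel whenever $A$ is. Splitting $h=h_+-h_-$ (equivalently $\mu=\mu_+-\mu_-$), it suffices to treat $h\ge 0$, i.e., a positive measure; the signed case then follows by subtraction wherever the quantities are finite.

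Next I would handle intervals. For $0<a<b<+\infty$ we have $g((a,b])=[g(b),g(a))$. Take first $h=\egy_{(c,d]}$, or more generally $h$ continuous. Then the identity $\int_{g(b)}^{g(a)}h(u)\,du=\int_a^b h(g(s))(-g'(s))\,ds$ is the classical substitution $u=g(s)$, justified because $g$ is $C^1$ (for an indicator $h$, apply it with the fundamental theorem of calculus to the primitive of $h$, which is Lipschitz). By linearity and monotone convergence, approximating $h\ge0$ from below by simple functions built on intervals and then by general nonnegative Borel functions, the identity holds for all Borel $h\ge0$ and all intervals $(a,b]$.

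Then I would extend from intervals to all Borel sets. Both $A\mapsto\mu(g(A))$ and $A\mapsto\int_A(-g')(h\circ g)\,d\lambda$ are positive measures on $(0,+\infty)$. The first is a measure because $g$ is a bijection onto its range and so preserves disjoint unions. The two measures agree on the $\pi$-system of intervals $(a,b]$, which generates the Borel $\sigma$-algebra. The only delicate point, which I expect to be the main obstacle, is $\sigma$-finiteness, needed to invoke the $\pi$–$\lambda$ uniqueness theorem. I would handle it by restricting to compact subintervals $[1/n,n]$: on $[1/n,n]$ the image $g([1/n,n])$ is compact, so it suffices that $h$ be locally integrable. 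This is implicit in $\mu$ being a (locally finite) signed measure with density $h$. Uniqueness then gives equality on each $[1/n,n]$, and letting $n\to\infty$ gives equality on all of $(0,+\infty)$. Finally, subtracting the results for $h_\pm$ yields \eqref{eq:RN composition}.
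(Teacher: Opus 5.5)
Your strategy is workable, but it takes a different route from the paper, and one step needs repair. The paper never handles $h$ on intervals. It rewrites $(\mu\circ g)(A)=\int\egy_{g(A)}h\,d\lambda$ using the abstract transfer formula $\int\psi\circ g\inv\,d\lambda=\int\psi\,d\bigl((g\inv)_*\lambda\bigr)$ with $\psi=\egy_A\cdot(h\circ g)$. It then identifies $(g\inv)_*\lambda=(-g')\,\lambda$ from $\lambda(g((a,b)))=g(a)-g(b)=-\int_{(a,b)}g'\,d\lambda$. So it proves only the case $h\equiv 1$, which is a single uniqueness argument between two locally finite measures. The transfer formula then handles arbitrary $h$ and arbitrary Borel $A$ at once, and no integrability of $h$ is needed. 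Your approach ties $h$ to the intervals, so you need two extension steps. Your interval substitution itself is fine.

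The flaw is in your second step. ``Approximating $h\ge0$ from below by simple functions built on intervals'' does not reach all Borel $h$. For example, the indicator of a closed nowhere dense set of positive Lebesgue measure dominates no nonzero step function. So monotone convergence alone cannot take you from intervals to Borel sets.

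What you need is a Dynkin-system argument in the $h$-variable. Fix $0<a<b<+\infty$. The maps $B\mapsto\lambda(B\cap g((a,b]))$ and $B\mapsto\int_{(a,b]}(\egy_B\circ g)(-g')\,d\lambda$ are finite measures with equal total mass that agree on intervals, hence on all Borel $B$. Simple functions and monotone convergence then give the identity for every Borel $h\ge0$.

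After this repair your third step is unnecessary. Apply the identity to $h_\pm\egy_{g(A)}$, use $\egy_{g(A)}\circ g=\egy_A$ (by injectivity of $g$), and let $(a,b]\nearrow(0,+\infty)$ by monotone convergence. This also removes your extra assumption that $h$ is locally integrable. That assumption is not among the lemma's hypotheses: e.g.\ $|x-1|^{-2}\,dx$ has a density but is not locally finite. It does, however, hold for the absolutely continuous part of $df'$, which is where the paper uses the lemma.
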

\begin{proof}
Let $h:=d\mu/d\lambda$. For any Borel set $A\subseteq(0,+\infty)$, 
\begin{align}\label{eq:RN composition proof1}
(\mu\circ g)(A)&=\int\egy_{g(A)}h\,d\lambda=\int(\egy_A\circ g\inv)((h\circ g)\circ g\inv)\,d\lambda
=
\int\egy_A\cdot(h\circ g)\,d(g\inv)_*\lambda. 
\end{align}
Note that for any $0<a<b$, 
\begin{align*}
(\lambda\circ g)((a,b))=\lambda\bz(g(b),g(a))\jz=g(a)-g(b)=-\int_{(a,b)}g'\,d\lambda,
\end{align*}
whence $\lambda\circ g\ll\lambda$ with 
\begin{align*}
\frac{d(\lambda\circ g)}{d\lambda}=-g'.
\end{align*}
Writing this back into \eqref{eq:RN composition proof1}, we get 
\begin{align*}
(\mu\circ g)(A)&=
\int\egy_A\cdot(h\circ g)(-g')\,d\lambda,
\end{align*}
proving \eqref{eq:RN composition}.
\end{proof}

\section{Differentiability}

\begin{lemma}\label{lemma:onesided derivative}
Let $a\in[0,+\infty)$, $b,x_0\in\bR$, and $f$ be a function defined on an interval
containing $b-ax_0$ and taking values in a Banach space $X$.
\begin{enumerate}
\item\label{right derivative}
Assume that $f$ is defined on 
$(b-ax_0-\delta,b-ax_0]$ for some $\delta>0$ and $\partial^-f(b-ax_0)$ exists.
Then $x\mapsto f(b-ax)$ has a right derivative at $b-ax_0$, given by 
\begin{align*}
\partial^+ f(b-ax)\big\vert_{x=x_0}=
(-a)\partial^-f(b-ax_0).
\end{align*}

\item\label{left derivative}
Assume that $f$ is defined on 
$[b-ax_0,b-ax_0+\delta)$ for some $\delta>0$ and $\partial^+f(b-ax_0)$ exists.
Then $x\mapsto f(b-ax)$ has a left derivative at $b-ax_0$, given by 
\begin{align*}
\partial^- f(b-ax)\big\vert_{x=x_0}=
(-a)\partial^+f(b-ax_0).
\end{align*}
\end{enumerate}
\end{lemma}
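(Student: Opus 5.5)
The plan is a direct computation with one-sided difference quotients, after reducing \ref{left derivative} to \ref{right derivative} by a sign change. Set $y_0:=b-ax_0$. The case $a=0$ is trivial: $x\mapsto f(b)$ is constant, so both one-sided derivatives vanish and agree with $(-a)\partial^\mp f(y_0)=0$. So assume $a>0$.

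For \ref{right derivative}, take $h>0$ small enough that $ah<\delta$. Then $b-a(x_0+h)=y_0-ah\in(y_0-\delta,y_0]$, so $f$ is defined there, and
\begin{align*}
\frac{f(b-a(x_0+h))-f(b-ax_0)}{h}=(-a)\,\frac{f(y_0-ah)-f(y_0)}{-ah}.
\end{align*}
As $h\searrow 0$, the quantity $k:=-ah$ tends to $0$ from below. By the definition of the left derivative, the quotient $\frac{f(y_0+k)-f(y_0)}{k}$ converges in the norm of $X$ to $\partial^-f(y_0)$. Since scalar multiplication by $-a$ is continuous on $X$, the right derivative of $x\mapsto f(b-ax)$ at $x_0$ exists and equals $(-a)\partial^-f(y_0)$. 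The one point to be careful about is that $h\mapsto -ah$ is a bijective monotone reparametrization of a one-sided neighbourhood of $0$, so the limit in $k$ transfers directly to the limit in $h$.

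For \ref{left derivative}, the argument is symmetric. Take $h<0$ with $|ah|<\delta$. Then $y_0-ah\in[y_0,y_0+\delta)$, so $f$ is defined there, and $k=-ah\searrow 0$ as $h\nearrow 0$. The same identity of difference quotients then gives the left derivative $(-a)\partial^+f(y_0)$.

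There is no real obstacle here. The only things to track are which one-sided derivative of $f$ is involved, since the sign flip of $-a$ swaps left and right, and the domain condition needed for the difference quotients to make sense.
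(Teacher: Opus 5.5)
Your proof is correct and follows the paper's argument: dispose of $a=0$, then substitute $k=-ah$ (the paper's $\eta=-a\varepsilon$) so that the one-sided limit in $x$ becomes the opposite one-sided limit of $f$'s difference quotient. The paper writes out only part (i) and says (ii) is analogous, which matches your symmetric treatment.
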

\begin{proof}
We only prove \ref{right derivative}, as the proof of 
\ref{left derivative} goes exactly the same way.
The case $a=0$ is trivial, so we assume $a\in(0,+\infty)$. 
By definition, 
\begin{align*}
\partial^+ f(b-ax)\big\vert_{x=x_0}
&=
\lim_{0<\ep\searrow 0}\frac{1}{\ep}\bz f(b-a(x_0+\ep))-f(b-ax_0)\jz\\
&=
\lim_{0<\ep\searrow 0}(-a)\frac{1}{-a\ep}\bz f(b-ax_0-a\ep)-f(b-ax_0)\jz\\
&=
(-a)\lim_{0>\eta\nearrow 0}\frac{1}{\eta}\bz f(b-ax_0+\eta)-f(b-ax_0)\jz\\
&=
(-a)\partial^-f(b-ax_0).
\end{align*}
\end{proof}

\begin{cor}\label{cor:hs derivative}
Let $a\in[0,+\infty)$ and $b\in\bR$. For any $x_0\in\bR$,
\begin{align*}
&\exists\s\partial^+(b-ax)_+\big\vert_{x=x_0}=
\left\{\begin{array}{ll}
(-a)\cdot 0=0,&b-ax_0\in(-\infty,0],\\
(-a)\cdot 1=-a,&b-ax_0\in(0,+\infty)
\end{array}
\right\}
=
(-a)\egy_{(0,+\infty)}(b-ax_0),\\
&\exists\s\partial^-(b-ax)_+\big\vert_{x=x_0}=
\left\{\begin{array}{ll}
(-a)\cdot 0=0,&b-ax_0\in(-\infty,0),\\
(-a)\cdot 1=-a,&b-ax_0\in[0,+\infty)
\end{array}
\right\}
=
(-a)\egy_{[0,+\infty)}(b-ax_0).
\end{align*}
\end{cor}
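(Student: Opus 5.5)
Differentiating $\phi(x)=(b-ax)_+$ reduces to Lemma \ref{lemma:onesided derivative} applied with $X=\bR$ and $f(u):=u_+$. So the plan is to compute the one-sided derivatives of $u\mapsto u_+$ at every point and then substitute them.

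First I record the one-sided derivatives of $f(u)=u_+$ at $u_0\in\bR$. If $u_0<0$ then $f$ vanishes near $u_0$, so both one-sided derivatives are $0$. If $u_0>0$ then $f(u)=u$ near $u_0$, so both are $1$. At $u_0=0$ we have $\partial^-f(0)=\lim_{\eta\nearrow 0}(0-0)/\eta=0$ and $\partial^+f(0)=\lim_{\eta\searrow0}\eta/\eta=1$. In compact form,
\begin{align*}
\partial^-f(u_0)=\egy_{(0,+\infty)}(u_0),\qquad \partial^+f(u_0)=\egy_{[0,+\infty)}(u_0),\qquad u_0\in\bR.
\end{align*}
Since $f$ is defined on all of $\bR$, the domain hypotheses of Lemma \ref{lemma:onesided derivative} hold automatically.

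Next I apply part \ref{right derivative} of Lemma \ref{lemma:onesided derivative} at $u_0=b-ax_0$. It gives $\partial^+(b-ax)_+\vert_{x=x_0}=(-a)\,\partial^-f(b-ax_0)=(-a)\egy_{(0,+\infty)}(b-ax_0)$, which is the first claimed formula, including the case split according to whether $b-ax_0\le 0$ or $b-ax_0>0$. Part \ref{left derivative} similarly gives $\partial^-(b-ax)_+\vert_{x=x_0}=(-a)\,\partial^+f(b-ax_0)=(-a)\egy_{[0,+\infty)}(b-ax_0)$, which is the second formula.

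The case $a=0$ is covered by the lemma itself, since both sides are then $0$. There is no genuine obstacle here. The only point needing care is to keep track of the left/right swap caused by the negative slope $-a$: the right derivative in $x$ picks up the left derivative of $u_+$, which explains why the boundary point $b-ax_0=0$ is excluded in the first formula and included in the second.
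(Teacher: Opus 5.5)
Your proof is correct and is essentially the paper's argument, which also obtains the corollary in one line by applying Lemma \ref{lemma:onesided derivative} to $f(u)=u_+$. Your explicit computation of the one-sided derivatives $\partial^-f=\egy_{(0,+\infty)}$ and $\partial^+f=\egy_{[0,+\infty)}$, and your tracking of the left/right swap at $b-ax_0=0$, fill in details the paper leaves implicit.
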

\begin{proof}
Follows immediately from Lemma \ref{lemma:onesided derivative} applied to the function
$f(x):=x_+:=x\egy_{[0,+\infty)}(x)$.
\end{proof}

\begin{lemma}\label{lemma:integral partial derivative}
Let $(\Omega,\F,\mu)$ be a measure space, and let $f:\,\Omega\times J\to\bR$ be 
a function, where
$J$ is a non-degenerate interval.
Assume that for all $x\in J$, $f(\valt,x)\in L^1(\mu)$.
\begin{enumerate}
\item\label{integral partial1}
If 
$J=[x_0,x_0+\delta)$ for some $x_0\in\bR$ and $\delta>0$,
and there exists a $g\in L^1(\mu)$ such that 
$\sup_{x\in[x_0,x_0+\delta)}|\partial_2^+f(\omega,x)|\le g(\omega)$, $\omega\in\Omega$,
then 
\begin{align*}
\exists\s\partial_2^{+}\int_{\Omega}f(\omega,x)\,d\mu(\omega)\Bigg\vert_{x=x_0}
=
\int_{\Omega}\partial_2^+f(\omega,x)\big\vert_{x=x_0}\,d\mu(\omega).
\end{align*}
\item\label{integral partial2}
If 
$J=(x_0-\delta,x_0]$ for some $x_0\in\bR$ and $\delta>0$,
and there exists a $g\in L^1(\mu)$ such that 
$\sup_{x\in(x_0-\delta,x_0]}|\partial_2^-f(\omega,x)|\le g(\omega)$, $\omega\in\Omega$,
then 
\begin{align*}
\exists\s\partial_2^{-}\int_{\Omega}f(\omega,x)\,d\mu(\omega)\Bigg\vert_{x=x_0}
=
\int_{\Omega}\partial_2^-f(\omega,x)\big\vert_{x=x_0}\,d\mu(\omega).
\end{align*}
\end{enumerate}
\end{lemma}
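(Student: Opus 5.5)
The plan is to reduce the statement to the definition of one-sided derivatives and the dominated convergence theorem. I only treat part \ref{integral partial1}; part \ref{integral partial2} is the mirror image, handled by replacing $\ep\searrow 0$ with $\ep\nearrow 0$ and right with left derivatives.

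First, for each $\omega\in\Omega$ and $\ep\in(0,\delta)$, consider the difference quotient
\begin{align*}
q_\ep(\omega):=\frac{1}{\ep}\bz f(\omega,x_0+\ep)-f(\omega,x_0)\jz .
\end{align*}
By linearity of the integral (each $f(\valt,x)\in L^1(\mu)$), $\frac{1}{\ep}\bz\int f(\omega,x_0+\ep)\,d\mu-\int f(\omega,x_0)\,d\mu\jz=\int q_\ep\,d\mu$. The hypothesis that $\partial_2^+f(\omega,x)$ exists (and is bounded by $g(\omega)$) for every $x\in[x_0,x_0+\delta)$ gives the pointwise convergence $q_\ep(\omega)\to\partial_2^+f(\omega,x_0)$ as $\ep\searrow 0$.

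The key step, and the main obstacle, is a dominating bound $|q_\ep(\omega)|\le g(\omega)$. I would derive it from a mean-value inequality for functions with a bounded right derivative: if a real function $\phi$ on $[x_0,x_0+\ep]$ has a right derivative everywhere on $[x_0,x_0+\ep)$ with $|\partial^+\phi|\le M$, then $|\phi(x_0+\ep)-\phi(x_0)|\le M\ep$. This is a standard lemma (e.g.\ via the function $\phi(x)-(M+\eta)(x-x_0)$ and a sup-argument on the set where it is $\le\phi(x_0)+\eta$). One subtlety must be addressed: the classical version also needs continuity of $\phi$. Here continuity follows, since existence of a finite right derivative gives right continuity. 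Left continuity is not immediate. I would handle it by proving the lemma in the right-continuous form directly. Let $s:=\sup\{x\in[x_0,x_0+\ep]: \phi(y)-\phi(x_0)\le(M+\eta)(y-x_0)+\eta \ \forall y\in[x_0,x]\}$, and use right-derivative bounds to push beyond any $s<x_0+\ep$. If left continuity at $s$ genuinely fails, I would restrict to the usage in the paper, where the functions are convex, and hence continuous, in $x$. If this cannot be settled in general, I would add a continuity assumption in $x$.

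With the domination in hand, dominated convergence along any sequence $\ep_n\searrow0$ gives $\int q_{\ep_n}\,d\mu\to\int\partial_2^+f(\omega,x_0)\,d\mu(\omega)$, and this limit is finite since the integrand is bounded by $g$. Because the limit is independent of the sequence, the right derivative of $x\mapsto\int f(\omega,x)\,d\mu$ at $x_0$ exists and equals the stated integral. Measurability of $\omega\mapsto\partial_2^+f(\omega,x_0)$ follows from its being a pointwise limit of the measurable $q_{1/n}$.
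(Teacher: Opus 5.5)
Your skeleton (difference quotients $q_\ep$, the bound $|q_\ep|\le g$, dominated convergence) is the paper's. However, the domination step cannot be obtained the way you try first, and your closing hedge can be settled: the ``right-continuous form'' of the mean-value inequality is false, and so is the lemma as literally stated. The right-continuous step function $\mathbf{1}_{[c,+\infty)}$ has right derivative $0$ everywhere but is not constant. Your sup-argument fails exactly there: the inequality holds on $[x_0,s)$ but not at $s=c$, and right-derivative information cannot recover it. Letting the jump point depend on $\omega$ gives a counterexample to the lemma itself. Take $\Omega=[0,1]$ with Lebesgue measure, $x_0=0$, $\delta=1$, and $f(\omega,x):=\mathbf{1}_{[\omega,+\infty)}(x)$. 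Each $f(\cdot,x)$ is integrable and $\partial_2^+f(\omega,x)=0$ for all $\omega,x$, so $g\equiv 0$ works. Yet $\int_{[0,1]}f(\omega,x)\,d\omega=x$ has right derivative $1\neq 0$ at $x_0$. The mirror example $f(\omega,x):=\mathbf{1}_{(-\omega,+\infty)}(x)$ on $(-1,0]$ defeats part 2 in the same way. So some regularity of $f(\omega,\cdot)$ has to be added; no proof can avoid it.

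The paper's proof adds exactly such a hypothesis, silently. It begins with ``convexity of $f(\omega,\cdot)$'', which is not among the lemma's hypotheses but holds in the lemma's only application, $f(\omega,t)=\bigl(\frac{dP}{d\mu}(\omega)-t\frac{dQ}{d\mu}(\omega)\bigr)_+$. It then gets the domination from $f(\omega,x)=f(\omega,x_0)+\int_{x_0}^x\partial_2^+f(\omega,s)\,ds$. Your fallback is therefore the right fix. If you assume continuity of $f(\omega,\cdot)$ on $J$, your Dini-type mean-value inequality is valid and your argument is complete. Continuity is implied by convexity together with finiteness of $\partial_2^+f(\omega,x_0)$, so it is a slightly weaker requirement than the paper's. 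With that assumption, your proof differs from the paper's only in how the bound $|q_\ep|\le g$ is derived.
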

\begin{proof}
We only prove \ref{integral partial1}, as the proof of \ref{integral partial2} goes exactly the same way.
Convexity of $f(\omega,\valt)$ yields that for every $\omega\in\Omega$, 
$\partial_2^+f(\omega,x)$ exists at $x\in[x_0,x_0+\delta)$, and
$f(\omega,x)=f(\omega,x_0)+\int_{x_0}^x\partial_2^+f(\omega,s)\,ds$, whence
\begin{align}\label{integral partial derivative proof1}
\abs{\frac{f(\omega,x)-f(\omega,x_0)}{x-x_0}}\le \sup_{x\in[x_0,x_0+\delta)}|\partial_2^+f(\omega,x)|\le g(\omega).
\end{align}
Thus, for any sequence $x_0<x_n\searrow x_0$,
\begin{align*}
&\frac{1}{x_n-x_0}\bz\int_{\Omega}f(\omega,x_n)\,d\mu(\omega)-\int_{\Omega}f(\omega,x_0)\,d\mu(\omega)\jz
=
\int_{\Omega}\underbrace{\frac{f(\omega,x_n)-f(\omega,x_0)}{x_n-x_0}}_{\to \partial_2^+f(\omega,x_0)}\,d\mu(\omega)\\
&\xrightarrow[n\to+\infty]{}
\int_{\Omega}\partial_2^+f(\omega,x)\big\vert_{x=x_0}\,d\mu(\omega),
\end{align*}
where the convergence in the second line follows by the Lebesgue dominated convergence theorem due to 
\eqref{integral partial derivative proof1}.
\end{proof}

\section{Joint lower semicontinuity of the Petz-type $f$-divergences in the finite dimensional case}

\begin{proposition}\label{prop:C.1}
Assume that $\hil$ is finite dimensional. Then for every $f\in(\bR^{(0,+\infty)})_{\mathrm{conv}}$ the Petz-type
divergence $D_f^{\mathrm{P}}(\rho\|\sigma)$ is jointly lower semicontinuous in $\rho,\sigma\in\B(\hil)\p$.
\end{proposition}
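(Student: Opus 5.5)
By Proposition \ref{prop:V.2} (valid in finite dimension, where $D_f^{\mathrm{P}}=D_f^{\mathrm{P},\hs}$) and Proposition \ref{prop:V.12}, joint lower semicontinuity of $D_f^{\mathrm{P}}$ for every convex $f$ follows once we know that each Petz-type hockey stick divergence $\hsp{t}{\petz}(\rho\|\sigma)$, $t\in(0,+\infty)$, is jointly lower semicontinuous on $\B(\hil)\p\times\B(\hil)\p$. (In finite dimension all the topologies coincide, so norm versus $\sigma(\M_*,\M)$ does not matter.) So the whole task reduces to the primitive functions $(x-t)_+$.

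For fixed $t>0$, I would write, following Example \ref{eq:Petz hs}, the representation through the $\ep$-regularization of Example \ref{ex:Petz fdiv}. Set $\rho_\ep:=\rho+\ep I$ and $\sigma_\ep:=\sigma+\ep I$, and
\[
g_\ep(\rho,\sigma):=\sum_{r,s}(r-ts)_+\Tr P^{\rho_\ep}_rP^{\sigma_\ep}_s=\Tr \sigma_\ep^{1/2}h\bz\Delta_{\rho_\ep,\sigma_\ep}\jz(\sigma_\ep^{1/2}),
\]
with $h(x):=(x-t)_+$. For positive definite arguments, $\Delta_{\rho_\ep,\sigma_\ep}=L_{\rho_\ep}R_{\sigma_\ep^{-1}}$ depends continuously on $(\rho,\sigma)$, and $h$ is continuous. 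Hence $g_\ep$ is jointly continuous in $(\rho,\sigma)$ for each fixed $\ep>0$ (continuous functional calculus of a continuously varying self-adjoint operator). The key claim is then that
\[
\hsp{t}{\petz}(\rho\|\sigma)=\sup_{\ep>0}g_\ep(\rho,\sigma)=\lim_{\ep\searrow0}g_\ep(\rho,\sigma)
\]
possibly after a harmless modification. A supremum of continuous functions is lower semicontinuous, which would finish the proof.

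Establishing the monotonicity in $\ep$ is the main obstacle. In the Nussbaum--Szko\l a picture the value is a classical $h$-divergence of $(\nsd{\rho_\ep}{\sigma_\ep},\nsdd{\rho_\ep}{\sigma_\ep})$, which is not obviously monotone as $\ep\searrow0$.

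I would first try to avoid monotonicity altogether. The limit $\lim_{\ep\searrow0}g_\ep(\rho,\sigma)$ equals $\hsp{t}{\petz}(\rho\|\sigma)$, since eigenvalues and spectral projections of $\rho_\ep,\sigma_\ep$ converge nicely because they share eigenprojections with $\rho,\sigma$. Joint convexity of $(x,y)\mapsto(x-ty)_+$ is not enough for the full quantum claim. Instead I would use a direct argument via eigenvalue continuity.

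Take $(\rho_n,\sigma_n)\to(\rho,\sigma)$. After passing to a subsequence, the spectral data converge in the following sense. The eigenvalues converge. For each eigenvalue $r$ of $\rho$, the sum of the spectral projections of $\rho_n$ over eigenvalues in a small neighbourhood of $r$ converges to $P^\rho_r$, and similarly for $\sigma$. Clusters of eigenvalues of $\rho_n$ near $r$ merge.

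Then
\[
\hsp{t}{\petz}(\rho_n\|\sigma_n)=\sum_{r',s'}(r'-ts')_+\Tr P^{\rho_n}_{r'}P^{\sigma_n}_{s'}.
\]
Grouping terms by the clusters near $(r,s)$, each group tends to $(r-ts)_+\Tr P^\rho_rP^\sigma_s$, since $(x,y)\mapsto(x-ty)_+$ is continuous and the weights $\Tr P^{\rho_n}_{r'}P^{\sigma_n}_{s'}\ge0$ summed over a cluster converge to $\Tr P^\rho_rP^\sigma_s$.

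This actually gives continuity of $\hsp{t}{\petz}$, not merely lower semicontinuity, because $(x-ty)_+$ is finite and continuous on all of $[0,+\infty)^2$; the zero eigenvalues cause no trouble. The only delicate point is rigorous control of the cluster projections, which follows from the Riesz projection/contour argument already used in Lemma \ref{lemma:pospr limit}. With continuity of each $\hsp{t}{\petz}$ in hand, Proposition \ref{prop:V.12} yields the statement.
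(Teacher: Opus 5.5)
Your final argument is correct and is essentially the paper's proof. Both reduce, via Propositions \ref{prop:V.2} and \ref{prop:V.12}, to the Petz-type hockey stick divergences, and then show that $D^{\mathrm{P}}_{(\id-t)_+}$ is in fact jointly continuous from the representation $\sum_{r,s}(r-ts)_+\Tr P^\rho_rP^\sigma_s$. The only difference is how the spectral data are controlled: the paper extracts convergent subsequences of ordered eigenbases in the compact set of orthonormal bases, using that the formula does not depend on the choice of eigenbases, whereas you use convergence of Riesz cluster projections; the $\ep$-regularization detour in your middle paragraphs is unnecessary and can be dropped.
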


\begin{proof}
We may show for the Petz-type hockey stick divergence $D_{(\id-t)_+}^{\mathrm{P}}(\rho\|\sigma)$.
Let $d:=\dim\hil$. We write each $\rho\in\B(\hil)\p$ in the form
\[
\rho=\sum_{i=1}^d\lambda_i^\rho|v_i^\rho\rangle\langle v_i^\rho|,
\]
where $\lambda_1^\rho\ge\dots\ge\lambda_d^\rho$ are the eigenvalues of $\rho$ with multiplicities in the
non-increasing order and $\{v_1^\rho,\dots,v_d^\rho\}$ is an orthonormal basis of $\hil$ consisting of the
corresponding eigen vectors. Then one can represent $D_{(\id-t)_+}^{\mathrm{P}}(\rho\|\sigma)$ as
\[
D_{(\id-t)_+}^{\mathrm{P}}(\rho\|\sigma)
=\sum_{i,j=1}^d(\lambda_i^\rho-t\lambda_j^\sigma)_+|\langle v_i^\rho,v_j^\sigma\rangle|^2
\]
independently of the choice of $\{v_i^\rho\}$ and $\{v_i^\sigma\}$. Now, let
$\rho,\sigma,\rho_n,\sigma_n\in\B(\hil)\p$, $n\in\bN$, be such that $\|\rho_n-\rho\|\to0$ and
$\|\sigma_n-\sigma\|\to0$. It is well known that $(\lambda_i^{\rho_n})\to(\lambda_i^\rho)$ and
$(\lambda_i^{\sigma_n})\to(\lambda_i^\sigma)$ as $n\to\infty$. Let $\mathrm{ONB}(\hil)$ denote the set of
all orthonormal bases $\{v_i\}_{i=1}^d$ of $\hil$, which becomes a compact set with respect to the metirc
$d(\{v_i\},\{w_i\}):=\sum_{i=1}^d\|v_i-w_i\|$ for $\{v_i\},\{w_i\}\in\mathrm{ONB}(\hil)$. For any limit point
$\alpha$ of $\{D_{(\id-t)_+}^{\mathrm{P}}(\rho_n\|\sigma_n)\}_{n=1}^\infty$ one can choose a subsequence
$\{n_k\}$ of $\{n\}$ such that $D_{(\id-t)_+}^{\mathrm{P}}(\rho_{n_k}\|\sigma_{n_k})\to\alpha$ and
$\{v_i^{\rho_{n_k}}\}\to\{v_i\}$, $\{v_i^{\sigma_{n_k}}\}\to\{w_i\}$ in the above metric for some
$\{v_i\},\{w_i\}\in\mathrm{ONB}(\hil)$. Then we have
\[
\rho=\lim_k\rho_{n_k}=\lim_k\sum_{i=1}^d\lambda_i^{\rho_{n_k}}v_i^{\rho_{n_k}}
=\sum_{i=1}^d\lambda_i^\rho v_i,
\]
and similarly $\sigma=\sum_{i=1}^d\lambda_i^\sigma w_i$. Therefore,
\begin{align*}
\alpha&=\lim_kD_{(\id-t)_+}^{\mathrm{P}}(\rho_{n_k}\|\sigma_{n_k})
=\lim_k\sum_{i,j=1}^d(\lambda_i^{\rho_{n_k}}-t\lambda_j^{\sigma_{n_k}})_+
|\langle v_i^{\rho_{n_k}},v_j^{\sigma_{n_k}}\rangle|^2 \\
&=\sum_{i,j=1}^d(\lambda_i^\rho-t\lambda_j^\sigma)_+|\langle v_i,w_j\rangle|^2
=D_{(\id-t)_+}^{\mathrm{P}}(\rho\|\sigma),
\end{align*}
so that $D_{(\id-t)_+}^{\mathrm{P}}(\rho\|\sigma)$ is a unique limit point of
$\{D_{(\id-t)_+}^{\mathrm{P}}(\rho_n\|\sigma_n)\}$ and the assertion follows.
\end{proof}

\bibliography{bibliography251125}

\begin{thebibliography}{10}

\bibitem{AccardiCecchini1982}
L.~Accardi and C.~Cecchini.
\newblock Conditional expectations in von neumann algebras and a theorem of
  takesaki.
\newblock {\em J. Funct. Anal.}, 45:245--273, 1982.

\bibitem{AliSilvey}
S.~M. Ali and S.~D. Silvey.
\newblock A general class of coefficients of divergence of one distribution
  from another.
\newblock {\em Journal of the Royal Statistical Society: Series B
  (Methodological)}, 1966.

\bibitem{AnAn2026}
T.~Anastasiadis and G.~Androulakis.
\newblock Quantum $f$-divergences via {N}ussbaum-{S}zko\l a distributions in
  semifinite von {N}eumann algebras.
\newblock 2026.
\newblock arXiv:2604.19853.

\bibitem{Araki_relentrII}
H.~Araki.
\newblock Relative entropy of states of von {Neumann} algebras ii.
\newblock {\em Publ. Res. Inst. Math. Sci.}, 13(3):173--192, 1977.

\bibitem{ANSzV}
K.~M.~R. Audenaert, M.~Nussbaum, A.~Szkola, and F.~Verstraete.
\newblock Asymptotic error rates in quantum hypothesis testing.
\newblock {\em Communications in Mathematical Physics}, 279:251--283, 2008.

\bibitem{AD}
Koenraad M.~R. Audenaert and Nilanjana Datta.
\newblock $\alpha$-$z$-relative {R}enyi entropies.
\newblock {\em J.~Math.~Phys.}, 56:022202, 2015.

\bibitem{BHT_fdiv}
Salman Beigi, Christoph Hirche, and Marco Tomamichel.
\newblock Some properties and applications of the new quantum f-divergences.
\newblock arXiv:2501.03799, 2025.

\bibitem{BS}
V.~P. Belavkin and P.~Staszewski.
\newblock {$C^\ast$}-algebraic generalization of relative entropy and entropy.
\newblock {\em Ann. Inst. H. Poincar\'e Phys. Th\'eor.}, 37:51--58, 1982.

\bibitem{Bernstein_matrix}
D.~S. Bernstein.
\newblock {\em Matrix mathematics: theory, facts, and formulas}.
\newblock Princeton University Press, 2009.

\bibitem{BST}
Mario Berta, Volkher~B. Scholz, and Marco Tomamichel.
\newblock R\'enyi divergences as weighted non-commutative vector-valued
  ${L}_p$-spaces.
\newblock {\em Ann. Henri Poincar\'e}, 19:1843--1867, 2018.
\newblock arXiv:1608.05317.

\bibitem{Csiszar-fdiv}
Imre Csisz\'ar.
\newblock Information type measure of difference of probability distributions
  and indirect observations.
\newblock {\em Studia Sci.~Math.~Hungar.}, 2:299--318, 1967.

\bibitem{FawziFawzi2021}
Hamza Fawzi and Omar Fawzi.
\newblock Defining quantum divergences via convex optimization.
\newblock {\em {Quantum}}, 5:387, January 2021.

\bibitem{Frenkel_integral}
P\'eter~E. Frenkel.
\newblock Integral formula for quantum relative entropy implies data processing
  inequality.
\newblock {\em Quantum}, 7:1102, 2023.
\newblock arXiv:2208.12194.

\bibitem{Hiai2021}
F.~Hiai.
\newblock {\em {L}ectures on {S}elected {T}opics in von {N}eumann {A}lgebras}.
\newblock EMS Press, Berlin, 2021.

\bibitem{Hiai2021Div}
F.~Hiai.
\newblock {\em {Q}uantum $f$-{D}ivergences in von {N}eumann {A}lgebras:
  {R}eversibility of {Q}uantum {O}perations}.
\newblock Mathematical Physics Studies. Springer, Singapore, 2021.

\bibitem{HiaiKosaki}
F.~Hiai and H.~Kosaki.
\newblock Connections of unbounded operators and some related topics: von
  {N}eumann algebra case.
\newblock {\em Internat. J. Math.}, 32:2150024, 88 pp., 2021.

\bibitem{HiaiTsukada1984}
F.~Hiai and M.~Tsukada.
\newblock Strong martingale convergence of generalized conditional expectations
  on von neumann algebras.
\newblock {\em Trans. Amer. Math. Soc.}, 282:791--798, 1984.

\bibitem{Hiai_fdiv_standard}
Fumio Hiai.
\newblock Quantum $f$-divergences in von {N}eumann algebras {I}. {S}tandard
  $f$-divergences.
\newblock {\em J. Math. Phys.}, 59:102202, 2018.

\bibitem{Hiai_fdiv_maximal}
Fumio Hiai.
\newblock Quantum $f$-divergences in von {N}eumann algebras {II}. {M}aximal
  $f$-divergences.
\newblock {\em J. Math. Phys.}, 60:012203, 2019.

\bibitem{HiaiJencova2024}
Fumio Hiai and Anna Jen{\v c}ov\'a.
\newblock $\alpha$-$z$-{R}\'enyi divergences in von {N}eumann algebras: data
  processing inequality, reversibility, and monotonicity properties in
  $\alpha$, $z$.
\newblock {\em Commun. Math. Phys.}, 405:271, 2024.

\bibitem{HiaiMosonyi2017}
Fumio Hiai and Mil\'an Mosonyi.
\newblock Different quantum $f$-divergences and the reversibility of quantum
  operations.
\newblock {\em Reviews in Mathematical Physics}, 29(7):1750023, 2017.

\bibitem{sc_vN}
Fumio Hiai and Mil\'an Mosonyi.
\newblock Quantum {R\'enyi} divergences and the strong converse exponent of
  state discrimination in operator algebras.
\newblock {\em Annales Henri Poincar\'e}, 24:1681--1724, 2023.
\newblock arXiv:2110.07320.

\bibitem{HircheTomamichel_integral}
Christoph Hirche and Marco Tomamichel.
\newblock Quantum {R}\'enyi and $f$-divergences from integral representations.
\newblock {\em Commun. Math. Phys.}, 405(208), 2024.
\newblock arXiv:2306.12343.

\bibitem{Jencova_NCLp}
A.~Jen{\v c}ov{\'a}.
\newblock {R}\'enyi relative entropies and noncommutative ${L}_p$-spaces.
\newblock {\em Ann. Henri Poincar\'e}, 19:2513--2542, 2018.
\newblock arXiv:1609.08462.

\bibitem{Jencova_NCLpII}
A.~Jen{\v c}ov{\'a}.
\newblock {R}\'enyi relative entropies and noncommutative ${L}_p$-spaces {II}.
\newblock {\em Annales Henri Poincar\'e}, 22:3235--3254, 2021.
\newblock arXiv:1707.00047.

\bibitem{Jencova2023}
Anna Jen{\v c}ov\'a.
\newblock Recoverability of quantum channels via hypothesis testing.
\newblock {\em Letters in Mathematical Physics}, 114(1), 2024.
\newblock arXiv: 2303.11707.

\bibitem{Kadison1968}
R.~V. Kadison.
\newblock Strong continuity of operator functions.
\newblock {\em Pacific J. Math.}, 26:121--129, 1968.

\bibitem{Kosaki_conti}
H.~Kosaki.
\newblock On the continuity of the map $\varphi\to|\varphi|$ from the predual
  of a ${W}^*$-algebra.
\newblock {\em J. Funct. Anal.}, 59:123--131, 1984.

\bibitem{Kossmann}
Gereon Kossmann, Ren\'e Schwonnek, Po-Chieh Liu, and Hao-Chung Cheng.
\newblock Device-independent quantum key distribution in the commuting operator
  framework.
\newblock arXiv:2607.03579.

\bibitem{LiuHircheCheng2025}
Po-Chieh Liu, Christoph Hirche, and Hao-Chung Cheng.
\newblock Layer cake representations for quantum divergences.
\newblock arXiv:2507.07065, 2025.

\bibitem{Matsumoto_newfdiv}
K.~Matsumoto.
\newblock A new quantum version of $f$-divergence.
\newblock In {\em Nagoya Winter Workshop 2015: Reality and Measurement in
  Algebraic Quantum Theory}, pages 229--273, 2018.

\bibitem{mosonyi2022geometric}
Mil\'an Mosonyi, Gergely Bunth, and P{\'e}ter Vrana.
\newblock Geometric relative entropies and barycentric {R\'enyi} divergences.
\newblock {\em Linear Algebra and its Applications}, 699:159--276, 2024.
\newblock arXiv:2207.14282.

\bibitem{HiaiMosonyi_testmeasured}
Mil\'an Mosonyi and Fumio Hiai.
\newblock Test-measured r\'enyi divergences.
\newblock {\em IEEE Transactions on Information Theory}, 69(2):1074--1092,
  February 2023.
\newblock arXiv:2201.05477.

\bibitem{MO}
Mil\'an Mosonyi and Tomohiro Ogawa.
\newblock Quantum hypothesis testing and the operational interpretation of the
  quantum {R\'enyi} relative entropies.
\newblock {\em Communications in Mathematical Physics}, 334(3):1617--1648,
  2015.

\bibitem{Renyi_new}
Martin {M\"uller}-Lennert, Fr\'ed\'eric Dupuis, Oleg Szehr, Serge Fehr, and
  Marco Tomamichel.
\newblock On quantum {R\'enyi} entropies: A new generalization and some
  properties.
\newblock {\em Journal of Mathematical Physics}, 54(12):122203, December 2013.

\bibitem{NH}
Hiroshi Nagaoka and Masahito Hayashi.
\newblock An information-spectrum approach to classical and quantum hypothesis
  testing for simple hypotheses.
\newblock {\em IEEE Transactions on Information Theory}, 53(2):534--549,
  February 2007.

\bibitem{Nelson1974}
E.~Nelson.
\newblock {N}otes on non-commutative integration.
\newblock {\em J. Funct. Anal.}, 15:103--116, 1974.

\bibitem{NSz}
Michael Nussbaum and Arleta~Szko\l a.
\newblock The {C}hernoff lower bound for symmetric quantum hypothesis testing.
\newblock {\em The Annals of Statistics}, 37:1040--1057, 2009.

\bibitem{FGR2026}
L.~Gao O.~Fawzi and M.~Rahaman.
\newblock Asymptotic equipartition theorems in von {N}eumann algebras.
\newblock {\em Ann. Henri Poincar\'e}, 27:95--141, 2026.

\bibitem{Ogata2011}
Yoshiko Ogata.
\newblock A generalization of powers--st\o rmer inequality.
\newblock {\em Lett. Math. Phys.}, 97:339--346, 2011.

\bibitem{P85}
D.~Petz.
\newblock Quasi-entropies for states of a von {N}eumann algebra.
\newblock {\em Publ. Res. Inst. Math. Sci.}, 21:781--800, 1985.

\bibitem{P86}
D\'enes Petz.
\newblock Quasi-entropies for finite quantum systems.
\newblock {\em Reports in Mathematical Physics}, 23:57--65, 1986.

\bibitem{Petz1986}
D\'enes Petz.
\newblock Sufficient subalgebras and the relative entropy of states of a von
  {N}eumann algebra.
\newblock {\em Commun. Math. Phys.}, 105:123--131, 1986.

\bibitem{Petz1988}
D\'enes Petz.
\newblock Sufficiency of channels over von {Neumann} algebras.
\newblock {\em Quart.~J.~Math.~Oxford Ser.~(2)}, 153:97--108, 1988.

\bibitem{daSilva2026}
G.~Lechner R.C.~da Silva, M.B.~Fr{\"o}b and L.~Sangaletti.
\newblock Integral representations of $f$-divergences for general von {N}eumann
  algebras.
\newblock 2026.
\newblock arXiv:2607.05195.

\bibitem{Renyi}
Alfr\'ed R\'enyi.
\newblock On measures of entropy and information.
\newblock In {\em Proc.~4th Berkeley Sympos.~Math.~Statist.~and Prob.},
  volume~I, pages 547--561. Univ. California Press, Berkeley, California, 1961.

\bibitem{Sason_Verdu_fdiv2016}
Igal Sason and Sergio Verdú.
\newblock $f$-divergence inequalities.
\newblock {\em IEEE Transactions on Information Theory}, 62(11):5973--6006,
  2016.

\bibitem{Stratila1981}
S.~Str{\u a}til{\u a}.
\newblock {\em {M}odular {T}heory in {O}perator {A}lgebras}.
\newblock Editura Academiei and Abacus Press, Tunbridge Wells, 1981.

\bibitem{takesaki1973}
M.~Takesaki.
\newblock Duality for crossed products and the structure of von {N}eumann
  algebras of type iii.
\newblock {\em Acta Math.}, 131:249–310, 1973.

\bibitem{Takesaki}
M.~Takesaki.
\newblock {\em Theory of operator algebras I}, volume 124 of {\em Encyclopaedia
  of Mathematical Sciences}.
\newblock Springer-Verlag, 2002.

\bibitem{Terp1981}
M.~Terp.
\newblock ${L}^p$ spaces associated with von {N}eumann algebras.
\newblock {N}otes, {C}openhagen {U}niversity, 1981.

\bibitem{Umegaki}
H.~Umegaki.
\newblock Conditional expectation in an operator algebraalgebra, {IV}: Entropy
  and information.
\newblock {\em Kodai Math.~Sem.~Rep.}, 14:59--85, 1962.

\bibitem{vLuWi2026}
L.~van Luijk and H.~Wilming.
\newblock Sufficiency and {P}etz recovery for positive maps.
\newblock 2026.
\newblock arXiv:2604.08380.

\bibitem{WWY}
Mark~M. Wilde, Andreas Winter, and Dong Yang.
\newblock Strong converse for the classical capacity of entanglement-breaking
  and {Hadamard} channels via a sandwiched {R\'enyi} relative entropy.
\newblock {\em Communications in Mathematical Physics}, 331(2):593--622,
  October 2014.

\end{thebibliography}

\end{document}